\let\vec\undefined
\theoremstyle{definition}  %
\newtheorem{claim}{Claim}
\newtheorem{lemma}{Lemma}
\newtheorem{corollary}{Corollary}
\newtheorem{proposition}{Proposition}
\newtheorem{fact}{Fact}
\theoremstyle{plain}
\newtheorem{remark}{Remark}
\newtheorem{theorem}{Theorem}
\newtheorem{definition}{Definition}
\numberwithin{claim}{section}
\numberwithin{fact}{section}
\numberwithin{lemma}{section}
\numberwithin{proposition}{section}
\numberwithin{theorem}{section}
\numberwithin{corollary}{section}
\numberwithin{definition}{section}
\xpatchcmd{\proof}{\itshape}{\normalfont\proofnameformat}{}{}
\newcommand{\proofnameformat}{\bfseries}
\newcommand{\neutralize}[1]{\expandafter\let\csname c@#1\endcsname\count@}
\newcommand{\pref}[1]{\prettyref{#1}}
\newcommand{\savehyperref}[2]{\texorpdfstring{\hyperref[#1]{#2}}{#2}}
\def\1{\bm{1}}
\def\vzero{{\bm{0}}}
\def\vone{{\bm{1}}}
\def\va{{\bm{a}}}
\def\vc{{\bm{c}}}
\def\vd{{\bm{d}}}
\def\ve{{\bm{e}}}
\def\vr{{\bm{r}}}
\def\vv{{\bm{v}}}
\def\vw{{\bm{w}}}
\def\vx{{\bm{x}}}
\def\vy{{\bm{y}}}
\def\vz{{\bm{z}}}
\DeclareMathAlphabet{\mathsfit}{\encodingdefault}{\sfdefault}{m}{sl}
\SetMathAlphabet{\mathsfit}{bold}{\encodingdefault}{\sfdefault}{bx}{n}
\newcommand{\calA}{\ensuremath{\mathcal{A}}}
\newcommand{\calB}{\ensuremath{\mathcal{B}}}
\newcommand{\calD}{\ensuremath{\mathcal{D}}}
\newcommand{\calG}{\ensuremath{\mathcal{G}}}
\newcommand{\calL}{\ensuremath{\mathcal{L}}}
\newcommand{\calN}{\ensuremath{\mathcal{N}}}
\newcommand{\calS}{\ensuremath{\mathcal{S}}}
\newcommand{\calX}{\ensuremath{\mathcal{X}}}
\newcommand{\calY}{\ensuremath{\mathcal{Y}}}
\newcommand{\calZ}{\ensuremath{\mathcal{Z}}}
\let\E\undefined
\let\R\undefined
\newcommand{\E}{\mathbb{E}}
\newcommand{\R}{\mathbb{R}}
\renewcommand{\bar}[1]{\overline{#1}}
\newcommand{\tstar}{t^{\star}}
\newcommand{\step}[1]{^{(#1)} }
\newcommand{\px}{\bm{x}}
\newcommand{\py}{\bm{y}}
\newcommand{\policy}[1]{\boldsymbol{\pi}_{{#1}}}
\newcommand{\adv}{\text{adv}}
\newcommand{\norm}[1]{\left\| #1 \right\|}
\newcommand{\vzeta}{\boldsymbol{\zeta} }
\newcommand{\vlambda}{\boldsymbol{\lambda} }
\newcommand{\vxi}{\boldsymbol{\xi} }
\newcommand{\vrho}{\boldsymbol{\rho} }
\newcommand{\vpsi}{\boldsymbol{\psi} }
\newcommand{\vomega}{\boldsymbol{\omega} }
\DeclareMathOperator*{\argmin}{arg\,min}
\DeclareMathOperator{\dist}{dist}
\newcommand{\prox}{\mathrm{prox}}
\crefname{claim}{claim}{claims}
\crefname{fact}{fact}{facts}
\crefname{line}{Lines}{lines}
\newcommand*\Let[2]{\State #1 $\gets$ #2}
\algrenewcommand\algorithmicrequire{\textbf{Input:}}
\algrenewcommand\algorithmicensure{\textbf{Output:}}
\newcommand{\projA}[1]{\mathrm{Proj}_{#1}}
\newcommand{\projB}[2]{\mathrm{Proj}_{#1}\left( #2 \right)}
\NewDocumentCommand\proj{ m g }{
  \IfNoValueTF{#2}{\projA{#1}}{\projB{#1}{#2}}
}
\newcommand{\team}{\text{team}}
\newcommand{\hatpolicy}{\hat{\boldsymbol{\pi}}}
\newcommand{\GDmax}{$\textsc{GDmax}$\xspace}
\newcommand{\IPGmax}{$\textsc{IPGmax}$\xspace}
\newcommand{\advnashpolicy}{\textup{\texttt{AdvNashPolicy}}\xspace}
\newcommand{\CLS}{\ComplexityFont{CLS}}
\newcommand{\tlambda}{\tilde{\lambda}}
\newcommand{\tpsi}{\tilde{\psi}}
\newcommand{\tomega}{\tilde{\omega}}
\newcommand{\tzeta}{\tilde{\zeta}}
\date{}
\newcommand{\defeq}{\coloneqq}
\newcommand{\vec}[1]{\bm{#1}}
\newcommand{\mat}[1]{\mathbf{#1}}
\DeclareMathOperator{\pr}{\mathbb{P}}
\newcommand{\declarecolor}[2]{\definecolor{#1}{RGB}{#2}\expandafter\newcommand\csname #1\endcsname[1]{\textcolor{#1}{##1}}}
\title{Efficiently Computing Nash Equilibria in Adversarial Team Markov Games}
\author{Fivos Kalogiannis\\\
University of California, Irvine
\and
Ioannis Anagnostides \\
Carnegie Mellon University \\
\and
Ioannis Panageas\footnote{Correspondence to \texttt{ipanagea@ics.uci.edu}.} \\
University of California, Irvine
\and
Emmanouil-Vasileios Vlatakis-Gkaragkounis \\
Columbia University
\and
Vaggos Chatziafratis \\
University of California, Santa Cruz
\and
Stelios Stavroulakis \\
University of California, Irvine
}
\newcounter{qst}
\crefname{qst}{Question}{Questions}
\begin{document}

\maketitle

\begin{abstract}
    Computing Nash equilibrium policies is a central problem in multi-agent reinforcement learning that has received extensive attention both in theory and in practice. However, in light of computational intractability barriers in general-sum games, provable guarantees have been thus far either limited to fully competitive or cooperative scenarios or impose strong assumptions that are difficult to meet in most practical applications.
    
    In this work, we depart from those prior results by investigating infinite-horizon \emph{adversarial team Markov games}, a natural and well-motivated class of games in which a team of identically-interested players---in the absence of any explicit coordination or communication---is competing against an adversarial player. This setting allows for a unifying treatment of zero-sum Markov games and Markov potential games, and serves as a step to model more realistic strategic interactions that feature both competing and cooperative interests. Our main contribution is the first algorithm for computing stationary $\epsilon$-approximate Nash equilibria in adversarial team Markov games with computational complexity that is polynomial in all the natural parameters of the game, as well as $1/\epsilon$.
    
    The proposed algorithm is particularly natural and practical, and it is based on performing independent policy gradient steps for each player in the team, in tandem with best responses from the side of the adversary; in turn, the policy for the adversary is then obtained by solving a carefully constructed linear program. Our analysis leverages non-standard techniques to establish the KKT optimality conditions for a nonlinear program with nonconvex constraints, thereby leading to a natural interpretation of the induced Lagrange multipliers. Along the way, we significantly extend an important characterization of optimal policies in adversarial (normal-form) team games due to Von Stengel and Koller (GEB `97).
\end{abstract}

\pagenumbering{gobble}

\clearpage

\tableofcontents

\clearpage

\pagenumbering{arabic}

\section{Introduction}

Multi-agent reinforcement learning (MARL) offers a principled framework for analyzing competitive interactions in dynamic and stateful environments in which agents' actions affect both the state of the world and the rewards of the other players. Strategic reasoning in such complex multi-agent settings has been guided by game-theoretic principles, leading to many recent landmark results in benchmark domains in AI~\citep{Bowling15:Limit,Silver17:Mastering,Vinyals19:Grandmaster,Mora17:DeepStack,Brown19:Superhuman,Brown18:Superhuman,Brown20:Combining,Perolat22:Mastering}. Most of these remarkable advances rely on scalable and often decentralized algorithms for computing \emph{Nash equilibria}~\citep{nash1951non}---a standard game-theoretic notion of rationality---in two-player zero-sum games.

Nevertheless, while single-agent RL has enjoyed rapid theoretical progress over the last few years (\textit{e.g.}, see \citep{Jin18:Is,Agarwal20:Optimality,Li21:Settling,Luo19:Algorithmic,Sidford18:Near}, and references therein), a comprehensive understanding of the multi-agent landscape still remains elusive. Indeed, provable guarantees for efficiently computing Nash equilibria have been thus far limited to either fully competitive settings, such as two-player zero-sum games~\citep{daskalakis2020independent,Wei21:Last,Sayin21:Decentralized,Cen21:Fast,Sayin20:Fictitious,Condon93:On}, or environments in which agents are endeavoring to coordinate towards a common global objective~\citep{Claus98:The,Wang02:Reinforcement,leonardos2021global,ding2022independent,zhang2021gradient,Chen22:Convergence,Maheshari22:Independent,Fox22:Independent}.

However, many real-world applications feature both shared and competing interests between the agents. Efficient algorithms for computing Nash equilibria in such settings are much more scarce, and typically impose restrictive assumptions that are difficult to meet in most applications~\citep{Hu03:Nash,Bowling00:Convergence}. In fact, even in \emph{stateless} two-player (normal-form) games, computing approximate Nash equilibria is computationally intractable~\citep{Daskalakis09:The,Rubinstein17:Settling,Chen09:Settling,Etessami10:On}---subject to well-believed complexity-theoretic assumptions. As a result, it is common to investigate equilibrium concepts that are more permissive than Nash equilibria, such as \emph{coarse correlated equilibria (CCE)}~\citep{Aumann74:Subjectivity,Moulin78:Strategically}. Unfortunately, recent work has established strong lower bounds for computing even approximate (stationary) CCEs in turn-based stochastic two-player games~\citep{Daskalakis22:The,jin2022complexity}. Those negative results raise the following central question:
\\[3mm]
\begin{tikzpicture}
    \node[text width=15cm,align=center,inner sep=0pt] at (0,0) {\emph{Are there natural multi-agent environments incorporating both competing and shared interests for which we can establish efficient algorithms for computing (stationary) Nash equilibria?}};
    \node at (7.8cm,0) {\refstepcounter{qst}\label{qst:1}($\bigstar$)};
    \node at (-8cm, 0) {};
\end{tikzpicture}
\\[2mm]
Our work makes concrete progress in this fundamental direction. Specifically, we establish the first efficient algorithm leading to Nash equilibria in \emph{adversarial team Markov games}, a well-motivated and natural multi-agent setting in which a team of agents with a common objective is facing a competing adversary.

\subsection{Our Results}

Before we state our main result, let us first briefly introduce the setting of adversarial team Markov games; a more precise description is deferred to \Cref{sec:Markov-ATGs}. To address \Cref{qst:1}, we study an infinite-horizon Markov (stochastic) game with a finite state space $\mathcal{S}$ in which a team of agents $\calN_A \defeq [n]$ with a common objective function is competing against a \emph{single} adversary with opposing interests. Every agent $k \in [n]$ has a (finite) set of available actions $\mathcal{A}_k$, while $\mathcal{B}$ represents the adversary's set of actions. We will also let $\gamma \in [0,1)$ be the \emph{discounting factor}. Our goal will be to compute an (approximate) Nash equilibrium; that is, a strategy profile so that neither player can improve via a unilateral deviation (see \cref{def:Nash}). In this context, our main contribution is the first polynomial time algorithm for computing Nash equilibria in adversarial team Markov games:

\begin{theorem}[Informal]
\label{thm:informal}
There is an algorithm (\IPGmax) that, for any $\epsilon > 0$, computes an $\epsilon$-approximate stationary Nash equilibrium policy profile in adversarial team Markov games, and runs in time 
$$\poly\left(|\calS|, \sum_{k = 1}^n |\calA_k|+ |\calB|, \frac{1}{1 - \gamma}, \frac{1}{\epsilon} \right).$$
\end{theorem}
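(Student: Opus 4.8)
The plan is to reduce computing a stationary Nash equilibrium to two subtasks: (i) finding an approximate first-order stationary point of the team's ``best-response value'' over the product-policy polytope, and (ii) extracting from such a point a compatible stationary policy for the adversary via a linear program. Write $\Pi_k$ for the polytope of stationary policies of team member $k$, set $\Pi_{\team}\defeq\Pi_1\times\cdots\times\Pi_n$, and define
\[
  V(\pi_{\team})\;\defeq\;\min_{\pi_{\adv}}\ \E_{s_0\sim\rho}\!\left[V^{(\pi_{\team},\pi_{\adv})}(s_0)\right]
\]
for a fixed initial distribution $\rho$ of full support (mix with the uniform distribution if needed). The feasible set $\Pi_{\team}$ is a polytope, but $V$ is in general \emph{nonconcave}, since the joint team action distribution $\pi_{\team}(\cdot\mid s)=\bigotimes_k\pi_k(\cdot\mid s)$ is multilinear, hence a nonconvex function of $(\pi_1,\dots,\pi_n)$. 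This multilinearity is exactly what blocks a direct appeal to zero-sum or convex-optimization machinery and is the source of the difficulty.

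The algorithm \IPGmax runs projected gradient ascent on $V$ over $\Pi_{\team}$, performing one independent step per team member. Each gradient query is polynomial time: the adversary's best response to the current $\pi_{\team}$ is the optimal policy of a finite MDP and is obtained by an LP, and then $\nabla_{\pi_k}V(\pi_{\team})$ is read off from the $Q$-functions and state-occupancy measures of the induced Markov chain (using Danskin's theorem to differentiate through the inner $\min$). I would first establish that $V$ is $L$-smooth on $\Pi_{\team}$ with $L=\poly\left(|\calS|,\sum_{k}|\calA_k|+|\calB|,\tfrac1{1-\gamma}\right)$ --- the only delicate point being smoothness of a best-response value in a Markov game, which follows from bounding the sensitivity of occupancy measures to policy perturbations. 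The standard analysis of projected gradient ascent on a smooth (nonconvex) objective over a polytope then produces an $\epsilon_1$-approximate first-order stationary point $\pi_{\team}^\star$ --- i.e.\ $\max_{\hat\pi_{\team}\in\Pi_{\team}}\langle\hat\pi_{\team}-\pi_{\team}^\star,\nabla V(\pi_{\team}^\star)\rangle\le\epsilon_1$ --- within $\poly(L,1/\epsilon_1)$ iterations.

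The crux is to show that any such $\pi_{\team}^\star$, \emph{paired with an appropriately chosen adversary best response $\pi_{\adv}^\star$}, forms an $\epsilon$-approximate stationary Nash equilibrium. For each team member $k$, holding $(\pi_{-k}^\star,\pi_{\adv}^\star)$ fixed turns the deviation problem into a single-agent MDP, and the \emph{gradient-domination} property of direct policy parametrizations converts first-order stationarity of $V$ in the $\pi_k$ block --- together with a bounded distribution-mismatch coefficient, which is where the full-support $\rho$ is used --- into the statement that $\pi_k^\star$ is an $O\!\big(\tfrac{1}{1-\gamma}\|d^{\star}/\rho\|_\infty\,\epsilon_1\big)$-best response, \emph{provided $\pi_{\adv}^\star$ is the best response at which the gradient of $V$ was taken}. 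The subtlety is that the adversary's best response need not be unique: Danskin's theorem only guarantees that $0$ lies in the projected convex hull of $\{\nabla_{\pi_{\team}}u(\pi_{\team}^\star,\pi_{\adv})\}$ over best responses $\pi_{\adv}$, and we must pin down a \emph{single stationary} $\pi_{\adv}^\star$ that works simultaneously across all $n$ blocks. To this end I would write the KKT system of the nonconvex program $\max_{\pi_{\team}\in\Pi_{\team}}V(\pi_{\team})$ --- equivalently, of its lift carrying the bilinear constraints $\pi_{\team}(\cdot\mid s)=\bigotimes_k\pi_k(\cdot\mid s)$ and the adversary-value constraints --- and identify the Lagrange multipliers with the team members' advantage functions and with a convex combination of optimal adversary policies. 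These conditions collapse to a polynomially sized \emph{linear program} whose feasible points are precisely the stationary adversary policies under which every team member best-responds; its feasibility at $\pi_{\team}^\star$ is the content of an extension, to $n$-player infinite-horizon Markov games, of the characterization of optimal team policies due to Von Stengel and Koller, which I would prove by LP duality over the adversary's best-response polytope --- noting that a stationary policy playing, state by state, any mixture of optimal actions is again optimal, so this set is convex in the relevant sense. Since standard constraint qualifications can fail for the bilinear constraints, the passage to the KKT conditions must go through a weaker qualification, namely the \ahucq.

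It then remains to choose $\epsilon_1$ polynomially smaller than $\epsilon$, so that after propagating through gradient domination, the distribution-mismatch factor, and the (exactly solved) linear program producing $\pi_{\adv}^\star$, the pair $(\pi_{\team}^\star,\pi_{\adv}^\star)$ is an $\epsilon$-approximate stationary Nash equilibrium; the adversary's incentive constraint is immediate because $\pi_{\adv}^\star$ is an exact best response to $\pi_{\team}^\star$. Charging $\poly\left(|\calS|,\sum_k|\calA_k|+|\calB|,\tfrac1{1-\gamma}\right)$ per iteration, times $\poly(L,1/\epsilon)$ iterations, plus one final LP, gives the claimed running time. I expect the genuine obstacle to be the part of the previous step that upgrades ``some convex combination of adversary best responses annihilates the team gradient'' into an \emph{explicit polynomial-size LP returning a single stationary adversary policy} under which all $n$ team members are simultaneously (approximately) best-responding: this is precisely where the nonstandard KKT analysis with nonconvex constraints and the generalization of the Von Stengel and Koller characterization are both essential, whereas the remaining ingredients are standard single-agent policy-gradient and MDP-linear-programming technology together with bookkeeping of polynomial factors.
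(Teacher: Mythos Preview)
Your high-level architecture matches the paper's: run independent projected gradient steps for the team against an adversary best response, reach an approximately stationary point of $\phi(\vx)\defeq\max_{\vy}V_\rho(\vx,\vy)$, then extract a compatible stationary adversary policy from the KKT multipliers of a nonlinear program via a polynomial-size LP, invoking the Arrow--Hurwicz--Uzawa qualification because the constraint set is nonconvex. You also correctly pinpoint the crux as the ``single adversary policy that works for all team blocks'' step, i.e.\ the Von Stengel--Koller extension.

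There is, however, a genuine gap in your convergence argument. You assert that $\phi$ (your $V$) is $L$-smooth and then invoke ``the standard analysis of projected gradient ascent on a smooth nonconvex objective.'' This is false: the best-response value is generically nondifferentiable wherever the adversary's best response is nonunique---precisely the situation you yourself highlight two paragraphs later when appealing to Danskin. What holds is that $\phi$ is $\ell$-\emph{weakly convex} (because $V_\rho$ is $\ell$-smooth jointly). The paper therefore measures progress on the Moreau envelope $\phi_{1/(2\ell)}$ and shows that \IPGmax reaches an $\epsilon$-\emph{nearly stationary} point in the sense $\|\hat\vx-\prox_{\phi/(2\ell)}(\hat\vx)\|\le\epsilon$, following \citet{lin2020gradient}; this is not the same as your ``small projected gradient'' criterion, and the standard smooth analysis does not apply. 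This distinction then propagates: the KKT conditions are established not at $\hat\vx$ but at the proximal point $\tilde\vx\defeq\prox_{\phi/(2\ell)}(\hat\vx)$, which is shown to be a \emph{global} minimizer of a quadratically regularized program \eqref{prog:xinlp}; the LP one actually solves is a relaxed version with constraints written at $(\hat\vx,\hat\vv)$, whose feasibility is certified by the KKT multipliers at $(\tilde\vx,\tilde\vv)$, the $O(\epsilon)$ slack absorbing $\|\hat\vx-\tilde\vx\|\le\epsilon$ via Lipschitz bounds. Without the Moreau-envelope detour and the regularized NLP you have no point at which KKT is even known to hold, so your plan to ``write the KKT system'' at $\pi_{\team}^\star$ is not grounded.

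A secondary point: the paper explicitly avoids the gradient-domination route you sketch, noting that it ``does not hold in a team-wise sense.'' Your reasoning---apply single-agent gradient domination in block $k$ once $\pi_{\adv}^\star$ is fixed---requires that $\nabla_{\vx_k}V_\rho(\hat\vx,\pi_{\adv}^\star)$ has small projected norm for \emph{that particular} $\pi_{\adv}^\star$, but near-stationarity of $\phi$ only controls a convex combination over the best-response set, and you have not explained why the LP's output lies in that combination in the right way. The paper sidesteps this entirely: the LP constraint \eqref{eq:pertconstr2} directly encodes, for every $(k,s,a)$, a $Q$-value-type inequality derived from first-order stationarity of the Lagrangian at $\tilde\vx$, and the proof that any feasible $\vlambda$ yields an approximate NE is a direct Bellman manipulation through $(\mat I-\gamma\pr)^{-1}$ together with the identification of $\sum_b\lambda(s,b)$ with a visitation measure---no gradient-domination step is used.
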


A few remarks are in order. First, our guarantee significantly extends and unifies prior results that only applied to either \emph{two-player} zero-sum Markov games or to \emph{Markov potential games}; both of those settings can be cast as special cases of adversarial team Markov games (see \Cref{sec:Markov}). Further, the complexity of our algorithm, specified in \Cref{thm:informal}, scales only with $\sum_{k \in \mathcal{N}_{\calA}} |\mathcal{A}_k|$ instead of $\prod_{k \in \mathcal{N}_{\calA}} |\mathcal{A}_k|$, bypassing what is often referred to as the \emph{curse of multi-agents}~\citep{jin2021v}. Indeed, viewing the team as a single ``meta-player'' would induce an action space of size $\prod_{k \in \mathcal{N}_{\calA}} |\mathcal{A}_k|$, which is \emph{exponential} even if each agent in the team has only two actions. In fact, our algorithm operates without requiring any (explicit) form of coordination or communication between the members of the team (beyond the structure of the game itself), a feature that has been motivated in a series of practical applications~\citep{VonStengel97:Team}. Namely, scenarios in which communication or coordination between the members of the team is either overly expensive, or even infeasible; for a more in depth discussion regarding this point we refer to the work of~\citep{schulman2017duality}.


\subsection{Overview of Techniques}

To establish \Cref{thm:informal}, we propose a natural and decentraliezd algorithm we refer to as \emph{Independent Policy GradientMax} (\IPGmax). \IPGmax works in turns. First, each player in the team performs one independent policy gradient step on their value function with an appropriately selected learning rate $\eta > 0$. In turn, the adversary best responds to the current policy of the team. This exchange is repeated for a sufficiently large number of iterations $T$. Finally, \IPGmax includes an auxiliary subroutine, namely $\texttt{AdvNashPolicy}()$, which computes the Nash policy of the adversary; this will be justified by \Cref{prop:extend} we describe below.

Our analysis builds on the techniques of~\citet{lin2020gradient}---developed for the saddle-point problem $\min_{\Vec{x} \in \calX} \max_{\Vec{y} \in \calY} f(\Vec{x}, \Vec{y})$---for characterizing \GDmax. Specifically, \GDmax consists of performing gradient descent steps, but on the function $\phi(\Vec{x}) \defeq \max_{\py \in \mathcal{Y}} f(\vec{x},\py)$. \citet{lin2020gradient} showed that \GDmax converges to a point $\big( \hat{\px},\py^*(\hat \px) \big)$ such that $\hat{\px}$ is an approximate first-order stationary point of the \emph{Moreau envelope} (see \Cref{def:moreau}) of $\phi(\vec{x})$, while $\py^*(\hat\px)$ is a best response to $\hat{\px}$. Now if $f(\px, \cdot)$ is \emph{strongly-concave}, one can show (by Dantzig's theorem) that $\big(\hat{\px},\py^*(\px)\big)$ is an approximate first-order stationary point of $f$. However, our setting introduces further challenges since the value function $V_{\vrho}(\policy{\team},\policy{\adv})$ is nonconvex-nonconcave.

For this reason, we take a more refined approach. We first show in \Cref{lem:ipgmax-convergence-lemma} that \IPGmax is guaranteed to converge to a policy profile $\big(\hatpolicy_{\team}, \cdot \big)$ such that $\hatpolicy_{\team}$ is an $\epsilon$-nearly stationary point of $\max_{\policy{\adv}} V_\rho( \policy{\team}, \policy{\adv})$. Then, the next key step and the crux in the analysis is to show that $\hatpolicy_{\team}$ \emph{can be extended to an $O(\epsilon)$-approximate Nash equilibrium policy:}

\begin{proposition}[Informal]
    \label{prop:extend}
If $\hatpolicy_{\team}$ is an $\epsilon$-nearly stationary point of $\max_{\policy{\adv}} V_\rho( \policy{\team},\policy{\adv})$, there exists a policy for the adversary $\hatpolicy_\adv$ so that $(\hatpolicy_\team,\hatpolicy_\adv)$ is an $O(\epsilon)$-approximate Nash equilibrium policy.
\end{proposition}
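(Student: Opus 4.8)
The plan is to convert the first‑order information contained in the near‑stationarity of $\hatpolicy_\team$ for $\phi(\policy{\team}) \defeq \max_{\policy{\adv}} V_\rho(\policy{\team},\policy{\adv})$ into an approximate KKT certificate for the team's minimax program, and then to read the adversary's Nash policy off the associated Lagrange multipliers, in the spirit of Von Stengel and Koller's characterization of optimal policies in normal‑form adversarial team games. (Throughout, the team minimizes $V_\rho$ and the adversary maximizes it.) First, since each map $\policy{\team}\mapsto V_\rho(\policy{\team},\policy{\adv})$ is $L$‑smooth for some $L=\poly(1/(1-\gamma))$, the function $\phi$ is $L$‑weakly convex, so $\epsilon$‑near‑stationarity in the Moreau‑envelope sense yields a point at distance $O(\epsilon/L)$ from $\hatpolicy_\team$ at which $\mathbf{0}$ lies within $O(\epsilon)$ of the sum of the Clarke subdifferential of $\phi$ and the normal cone of the team's product‑of‑simplices policy set; at the cost of an extra polynomial factor in $\epsilon$ in the eventual Nash gap we rename this point $\hatpolicy_\team$. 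By Danskin's theorem the subdifferential of $\phi$ at $\hatpolicy_\team$ equals $\conv\{\nabla_{\policy{\team}} V_\rho(\hatpolicy_\team,\policy{\adv}) : \policy{\adv}\in \mathrm{BR}(\hatpolicy_\team)\}$, where $\mathrm{BR}(\hatpolicy_\team)\defeq\argmax_{\policy{\adv}} V_\rho(\hatpolicy_\team,\cdot)$; hence there is a distribution $\lambda$ supported on $\mathrm{BR}(\hatpolicy_\team)$ whose averaged gradient certifies, for every team member $k$ and every feasible replacement policy $\policy{k}'$, that $\E_{\policy{\adv}\sim\lambda}\langle \hatpolicy_k - \policy{k}',\, \nabla_{\policy{k}} V_\rho(\hatpolicy_\team,\policy{\adv})\rangle \le O(\epsilon)$. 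These $\lambda$ are the Lagrange multipliers of the statement, and it is them that we reorganize into $\hatpolicy_\adv$.

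Next I would construct $\hatpolicy_\adv$ as the solution of a carefully constructed linear program, whose variables are (a suitable reparametrization of) the adversary's policy together with auxiliary value variables, one per team member and state. Its constraints encode (i) that $\hatpolicy_\adv$ is a best response to $\hatpolicy_\team$, i.e. lies in the best‑response polytope $\mathrm{BR}(\hatpolicy_\team)$ — the optimal‑policy set of the single‑agent MDP the adversary faces once $\hatpolicy_\team$ is fixed, a convex set closed under per‑state mixing — and (ii) that against $(\hatpolicy_{-k},\hatpolicy_\adv)$ each team member's current policy $\hatpolicy_k$ is approximately optimal, phrased through the advantage‑type quantities $\langle \hatpolicy_k - \policy{k}',\, \nabla_{\policy{k}} V_\rho(\hatpolicy_\team,\policy{\adv})\rangle$ (which, by the single‑agent gradient‑domination lemma, control member $k$'s potential value gain up to a distribution‑mismatch factor) in a form that is linear in the adversary's variables. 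This is the Markov‑game generalization of the LP that Von Stengel and Koller use to recover the adversary's Nash strategy from the team's optimal policy in normal‑form games. Feasibility of this LP is exactly what the multipliers $\lambda$ from Step 1 certify: no team member has a direction of first‑order improvement against the mixture $\lambda$, and an LP‑duality/minimax argument then turns this into a single best‑response policy $\hatpolicy_\adv$ against which no team member has such a direction up to slack $O(\epsilon)$. Unlike Von Stengel and Koller, who lean on \emph{global} maxmin optimality of the team's strategy, we have only \emph{first‑order} stationarity, and Step 1 is precisely what makes this weaker input suffice.

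It then remains to verify the two Nash inequalities. For the adversary, evaluating the team‑member constraints at the trivial choice $\policy{k}'=\hatpolicy_k$ gives slack $0$, so feasibility of $\hatpolicy_\adv$ forces $V_\rho(\hatpolicy_\team,\hatpolicy_\adv)\ge v^\star - O(\epsilon)$ where $v^\star\defeq\phi(\hatpolicy_\team)=\max_{\policy{\adv}} V_\rho(\hatpolicy_\team,\policy{\adv})$; since trivially $V_\rho(\hatpolicy_\team,\policy{\adv}')\le v^\star$ for every $\policy{\adv}'$, the adversary cannot gain more than $O(\epsilon)$. For a team member $k$, the constraints bound $\langle \hatpolicy_k - \policy{k}',\, \nabla_{\policy{k}} V_\rho(\hatpolicy_\team,\hatpolicy_\adv)\rangle$ for \emph{every} replacement $\policy{k}'$, and the gradient‑domination lemma applied to the single‑agent MDP that member $k$ faces against $(\hatpolicy_{-k},\hatpolicy_\adv)$ upgrades this to the genuine value bound $V_\rho(\policy{k}',\hatpolicy_{-k},\hatpolicy_\adv)\ge V_\rho(\hatpolicy_\team,\hatpolicy_\adv) - O(\epsilon)$, i.e. no unilateral team deviation gains more than $O(\epsilon)$. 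Combining the two bounds, and rescaling $\epsilon$ at the outset to absorb the accumulated polynomial factors (the smoothness constant, $1/(1-\gamma)$, and the distribution‑mismatch coefficients), yields that $(\hatpolicy_\team,\hatpolicy_\adv)$ is an $O(\epsilon)$‑approximate stationary Nash equilibrium.

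The hard part will be the second step: in the Markov setting $V_\rho(\cdot,\policy{\adv})$ is nonconvex in the team's policy and $V_\rho(\hatpolicy_\team,\cdot)$ is nonconvex in the adversary's, so neither the team's minimax program nor the auxiliary LP is convex in the obvious variables, no constraint qualification is available a priori, and the objects that were multilinear in Von Stengel and Koller's normal‑form analysis no longer are. One must therefore (i) set up the KKT system of the team's nonlinear program with its nonconvex value‑constraints and show it is approximately satisfied at $\hatpolicy_\team$ via the Moreau‑envelope route (this is where the weak convexity of $\phi$ and the Danskin/subdifferential calculus enter), (ii) choose the linear surrogate for the team members' incentives so that it is at once an honest upper bound, through gradient domination, for every member's possible value gain \emph{and} can be expressed linearly over (the right parametrization of) the adversary's best‑response polytope, and (iii) identify the resulting multipliers $\lambda$ with a single \emph{stationary} adversary policy $\hatpolicy_\adv$. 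Everything else — the smoothness and gradient‑domination estimates, and the bookkeeping of the $\poly(1/(1-\gamma))$ and mismatch constants — is routine.
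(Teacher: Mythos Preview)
Your high-level architecture---extract multiplier-like data from the near-stationarity of $\hat{\px}$, feed it into a linear program, and read off $\hat{\py}$---matches the paper's. But the execution diverges at the step you yourself flag as hard, and there is a real gap there.

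The paper does \emph{not} go through Danskin. It sets up a nonlinear program with variables $(\vx,\vv)$---team policy \emph{and} the adversary's value vector---whose constraints are the Bellman inequalities $r(s,\vx,b)+\gamma\sum_{s'}\pr(s'|s,\vx,b)v(s')\le v(s)$, and whose objective is $\vrho^\top\vv+\ell\|\vx-\hat{\vx}\|^2$. The quadratic term pins the global optimum at the proximal point $(\tilde{\vx},\tilde{\vv})$. Because the constraints are bilinear in $(\vx,\vv)$, existence of KKT multipliers is not automatic; the paper establishes the Arrow--Hurwicz--Uzawa constraint qualification (using crucially that rewards are strictly positive) to get nonnegative multipliers $\tilde\lambda(s,b)$. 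Stationarity in $\vv$ forces $\tilde\lambda$ to satisfy the flow-conservation identity of a state--action occupation measure, and normalizing it per state yields $\hat{\py}$. Stationarity in $x_{k,s,a}$ yields, for every $(k,s,a)$, an ``advantage'' inequality that is \emph{linear in $\tilde\lambda$} precisely because $\tilde{\vv}$ is held fixed.

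This is where your proposal is underspecified. You say the LP variables are ``a suitable reparametrization of the adversary's policy together with auxiliary value variables, one per team member and state.'' If value vectors are variables, the constraints are bilinear and you lose linearity; in the paper's LP the only variables are $\lambda(s,b)$, and $\hat{\vv}$ enters as \emph{data}. Likewise, your Danskin step produces a distribution over \emph{policies} in $\mathrm{BR}(\hat{\px})$, but $\nabla_{\vx}V_\rho(\hat{\vx},\cdot)$ is not linear in the policy, so the ``LP-duality/minimax argument'' you invoke to collapse the mixture to a single $\hat{\py}$ is not supplied. (There is a salvage: on $\mathrm{BR}(\hat{\px})$ the value vector is constant, so $\partial V_\rho/\partial x_{k,s,a}$ becomes linear in the \emph{state--action occupation measure} $\mu(s,b)=d_\rho^{\hat{\px},\vy}(s)\,y_{s,b}$, and averaging those measures does give a single $\hat{\py}$. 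But this is exactly the parametrization you need to name, and you do not.)

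Two smaller differences. The paper never uses gradient domination: once $\hat{\py}$ is fixed, the team-member bound follows from the LP constraint by a direct Bellman-recursion argument (divide by $\sum_b\lambda(s,b)\ge\rho(s)$ and iterate), with the mismatch coefficient $D$ entering there. And rather than ``renaming'' the proximal point as $\hat{\px}$, the paper keeps the KKT analysis at $(\tilde{\vx},\tilde{\vv})$ but writes the final LP in terms of the algorithm's actual output $(\hat{\vx},\hat{\vv})$, relaxing every constraint by an explicit $O(\epsilon)$ slack and using Lipschitz continuity to show $\tilde\lambda$ remains feasible; this is what guarantees the Nash equilibrium is at $\hat{\vx}$ itself.
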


In the special case of normal-form games, a weaker extension theorem for \emph{team-maxmin equilibria} was established in the seminal work of~\citet[pp. 315]{von1997team}. In particular, that result was derived by employing fairly standard linear programming techniques, analyzing the structural properties of the (exact) \emph{global optimum} of a suitable LP. On the other hand, our more general setting introduces several new challenges, not least due to the nonconvexity-nonconcavity of the objective function.

Indeed, our analysis leverages more refined techniques from nonlinear programming. More precisely, while we make use of standard policy gradient properties, similar to the single-agent MDP setting~\citep{agarwal2021theory,xiao2022convergence},  
our analysis does not rely on the so-called \emph{gradient-dominance} property~\citep{bhandari2019global}, as that property does not hold in a team-wise sense. Instead, inspired by an alternative proof of Shapley's theorem~\citep{shapley1953stochastic} for two-person zero-sum Markov games~\citep[Chapter 3]{filar2012competitive}, we employ mathematical programming. One of the central challenges is that the induced nonlinear program has a nonconvex set of constraints. As such, even the existence of (nonnegative) Lagrange multipliers satisfying the KKT conditions is not guaranteed, thereby necessitating more refined analysis techniques.

To this end, we employ the \emph{Arrow-Hurwiz-Uzawa constraint qualification} (\Cref{theorem:AHU}) in order to establish that the local optima are contained in the set of KKT points (\Cref{corollary:KKT-sat}). Then, we leverage the structure of adversarial team Markov games to characterize the induced Lagrange multipliers, showing that a subset of these can be used to establish \Cref{prop:extend}; incidentally, this also leads to an efficient algorithm for computing a (near-)optimal policy of the adversary. Finally, we also remark that controlling the approximation error---an inherent barrier under policy gradient methods---in \Cref{prop:extend} turns out to be challenging. We bypass this issue by constructing ``relaxed'' programs that incorporate some imprecision in the constraints. A more detailed overview of our algorithm and the analysis is given in \Cref{sec:mainn}.

\section{Preliminaries}
\label{sec:prelims}

In this section, we introduce the relevant background and our notation. \Cref{sec:Markov-ATGs} describes adversarial team Markov games. \Cref{sec:pol} then defines some key concepts from multi-agent MDPs, while \Cref{sec:Markov} describes a generalization of adversarial team Markov games, beyond identically-interested team players, allowing for a richer structure in the utilities of the team---namely, adversarial Markov potential games.

\paragraph{Notation.} We let $[n] \defeq \{1, \dots, n\}$. We use superscripts to denote the (discrete) time index, and subscripts to index the players. We use boldface for vectors and matrices; scalars will be denoted by lightface variables. We denote by $\|\cdot\| \defeq \|\cdot\|_2$ the Euclidean norm. For simplicity in the exposition, we may sometimes use the $O(\cdot)$ notation to suppress dependencies that are polynomial in the natural parameters of the game; precise statements are given in the Appendix. For the convenience of the reader, a comprehensive overview of our notation is given in \ref{table:notation}.

\subsection{Adversarial Team Markov Games}
    \label{sec:Markov-ATGs}
    An \emph{adversarial team Markov game} (or an adversarial team \emph{stochastic} game) is the Markov game extension of static, normal-form adversarial team games~\citep{von1997team}. The game is assumed to take place in an infinite-horizon discounted setting in which a team of identically-interested agents gain what the adversary loses. Formally, the game $\calG$ is represented by a tuple $\calG = (\calS, \calN, \calA, \calB,  r, \pr, \gamma, \rho )$ whose components are defined as follows.
    
    \begin{itemize}[topsep=0pt,leftmargin=5ex]        \setlength{\itemsep}{1pt}
        \item $\calS$ is a finite and nonempty set of \emph{states}, with cardinality $S \defeq | \calS |$;
        \item $\calN$ is the set of players, partitioned into a set of $n$ team agents $\calN_A \defeq [n]$ and a single \emph{adversary}
        \item $\calA_k$ is the action space of each player in the team $k \in [n]$, so that $\calA \defeq \bigtimes_{k \in [n]} \calA_k$, while $\calB$ is the action space of the adversary. We also let $A_k \defeq |\calA_k |$ and $B \defeq | \calB|$;\footnote{To ease the notation, and without any essential loss of generality, we will assume throughout that the action space does not depend on the state.}
        \item $r : \calS \times \calA \times  \calB \to (0,1)$ is the (deterministic) instantaneous \emph{reward function}\footnote{Assuming that the reward is positive is without any loss of generality (see \Cref{claim:stratequiv}).} representing the (normalized) payoff of the adversary, so that for any $(s, \vec{a}, b) \in \calS \times \calA \times \calB$,
        \begin{equation}
            \label{eq:adv}
        r(s, \vec{a}, b) + \sum_{k =1}^n r_k(s, \vec{a}, b) = 0,
        \end{equation}
        and for any $k \in [n]$,
        \begin{equation}
            \label{eq:ident-int}
        r_k(s, \vec{a}, b) = r_{\team}(s, \vec{a}, b).
        \end{equation}
        \item $\pr:  \calS \times \calA \times \calB \to \Delta(\calS)$ is the \emph{transition probability function}, so that $\pr( s'| s, \vec{a}, b )$ denotes the probability of transitioning to state $s' \in \calS$ when the current state is $s \in \calS$ under the action profile $(\vec{a}, b) \in \calA \times \calB$;
        \item $\gamma \in [0, 1)$ is the \emph{discount factor}; and
        \item $\vrho \in \Delta(\calS)$ is the \emph{initial state distribution} over the state space. We will assume that $\vec{\rho}$ is full-support, meaning that $\rho(s) > 0$ for all $s \in \calS$.
    \end{itemize}
    
    In other words, an adversarial team Markov game is a subclass of general-sum infinite-horizon multi-agent discounted MDPs under the restriction that all but a single (adversarial) player have identical interests (see~\eqref{eq:ident-int}), and the game is globally zero-sum---in the sense of \eqref{eq:adv}. As we point out in \Cref{sec:Markov}, \eqref{eq:ident-int} can be relaxed in order to capture \emph{(adversarial) Markov potential games} (\Cref{def:potential}), without qualitatively altering our results.
    
    \subsection{Policies, Value Function, and Nash Equilibria}
    \label{sec:pol}

    \paragraph{Policies.} A \emph{stationary}---that is, time-invariant---policy $\policy{k}$ for an agent $k$ is a function mapping a given state to a distribution over available actions, $\policy{k} : \calS \ni s \mapsto \policy{k}(\cdot|s) \in \Delta( \calA_k)$. We will say that $\policy{k}$ is \emph{deterministic} if for every state there is some action that is selected with probability $1$ under policy $\policy{k}$. For convenience, we will let $\Pi_\team : \calS \rightarrow \Delta(\calA)$ and $\Pi_\adv : \calS \rightarrow \Delta(\calB)$ denote the policy space for the team and the adversary respectively. We may also write $\Pi: \calS \rightarrow \Delta(\calA) \times \Delta(\calB)$ to denote the joint policy space of all agents.
    
    \paragraph{Direct Parametrization.} Throughout this paper we will assume that players employ \emph{direct policy parametrization}. That is, for each player $k \in [n]$, we let $\calX_k \defeq \Delta(\calA_k)^S$ and $\policy{k} = \px_{k}$ so that $x_{k,s,a} = \policy{k}(a|s)$. Similarly, for the adversary, we let $\calY \defeq \Delta(\calB)^S$ and $\policy{\adv} = \py$ so that $y_{s,a} = \policy{\adv}(a|s)$. (Extending our results to other policy parameterizations, such as soft-max~\citep{agarwal2021theory}, is left for future work.)
    


    \paragraph{Value Function.} The \emph{value function} $V_s : \Pi \ni (\policy{1}, \dots, \policy{n}, \policy{\adv} ) \mapsto \R$ is defined as the expected cumulative discounted reward received by the adversary under the joint policy $(\policy{\team}, \policy{\adv}) \in \Pi$ and the initial state $s \in \calS$, where $\policy{\team} \defeq (\policy{1}, \dots, \policy{n})$. In symbols,
    \begin{equation}
        V_{s}(\policy{\team}, \policy{\adv}) \defeq \E_{(\policy{\team}, \policy{\adv})}\left[ \sum_{t=0}^{\infty} \gamma^t r(s^{(t)}, \vec{a}^{(t)}, b^{(t)})  \big| s_0 = s \right],
        \label{eq:value-func-def}
    \end{equation}
    where the expectation is taken over the trajectory distribution induced by $\policy{\team}$ and $\policy{\adv}$. When the initial state is drawn from a distribution $\vec{\rho}$, the value function takes the form $V_{\vec{\rho}}(\policy{\team}, \policy{\adv}) \defeq \E_{s \sim \vec{\rho} } \Big[ V_s(\policy{\team}, \policy{\adv}) \Big]$. 

    \paragraph{Nash Equilibrium.} Our main goal is to compute a joint policy profile that is an (approximate) \emph{Nash equilibrium}, a standard equilibrium concept in game theory formalized below.
    
    \begin{definition}[Nash equilibrium]
        \label{def:Nash}
        A joint policy profile  $\big( \policy{\team}^\star, \policy{\adv}^\star \big) \in \Pi$ is an $\varepsilon$-approximate Nash equilibrium, for $\epsilon \geq 0$, if
    \begin{align}
        \setlength\arraycolsep{3pt}
        \left\{
        \begin{array}{rlll}
            V_{\vec{\rho}}  ( \policy{\team}^\star, \policy{\adv}^\star) & \leq
            V_{\vec{\rho}} ( (\policy{k}', \policy{-k}^\star), \policy{\adv}^\star  \big) + \varepsilon, 
            & \forall k \in [n], \forall{\policy{k}' } \in \Pi_k,
            \\
            V_{\vec{\rho}} ( \policy{\team}^\star, \policy{\adv}^\star) & \geq
            V_{\vec{\rho}} ( \policy{\team}^\star, \policy{\adv}') - \varepsilon, 
            &\forall \policy{\adv}' \in \Pi_{\adv}.
            \end{array}
        \right.
    \end{align}
    \end{definition}
    That is, a joint policy profile is an (approximate) Nash equilibrium if no unilateral deviation from a player can result in a non-negligible---more than additive $\epsilon$---improvement for that player. Nash equilibria always exist in multi-agent stochastic games~\citep{Fink64:Equilibrium}; our main result implies an (efficient) constructive proof of that fact for the special case of adversarial team Markov games.
    
    \subsection{Adversarial Markov Potential Games}
        \label{sec:Markov}
        
       A recent line of work has extended the fundamental class of potential normal-form games~\citep{monderer1996potential} to \emph{Markov potential games}~\citep{marden2012state,macua2018learning,leonardos2021global,ding2022independent,zhang2021gradient,Chen22:Convergence,Maheshari22:Independent,Fox22:Independent}. Importantly, our results readily carry over even if players in the team are not necessarily identically interested, but instead, there is some underlying potential function for the team; we will refer to such games as \emph{adversarial Markov potential games}, formally introduced below.
    \begin{definition}
        \label{def:potential}
        An adversarial Markov potential game $\calG = (\calS, \calN, \calA, \calB, \{r_k\}_{k \in [n]}, \pr, \gamma, \rho )$ is a multi-agent discounted MDP that shares all the properties of adversarial team Markov games (\Cref{sec:Markov-ATGs}), with the exception that \eqref{eq:ident-int} is relaxed in that there exists a potential function $\Phi_s, ~\forall s \in \calS  $, such that for any $\policy{\adv} \in \Pi_{\adv}$,
        \begin{equation}
            \Phi_s ( \policy{k}, \policy{-k}; \policy{\adv}) - \Phi_s(\policy{k}', \policy{-k}; \policy{\adv}) = V_{ k,s}( \policy{k}, \policy{-k}; \policy{\adv}) - V_{ k,s} (\policy{k}', \policy{-k}; \policy{\adv}),
        \end{equation}
        for every agent $k \in [n]$, every state $s \in \calS$, and all policies $\policy{k}, \policy{k'} \in \Pi_k$ and $\policy{-k} \in \Pi_{-k}$.
    \end{definition}

\section{Main Result}
\label{sec:mainn}

In this section, we sketch the main pieces required in the proof of our main result, \Cref{thm:informal}. We begin by describing our algorithm in \Cref{sec:algo}. Next, in \Cref{sec:IPGmax}, we characterize the strategy $\hat{\vx} \in \calX$ for the team returned by \IPGmax, while \Cref{sec:extendibility-poly-time} completes the proof by establishing that $\hat{\vx}$ can be efficiently extended to an approximate Nash equilibrium. The formal proof of \Cref{thm:informal} is deferred to the Appendix.

\subsection{Our Algorithm}
\label{sec:algo}

In this subsection, we describe in more detail \IPGmax, our algorithm for computing $\epsilon$-approximate Nash equilibria in adversarial team Markov games (\Cref{alg:ipg-max}). \IPGmax takes as input a precision parameter $\epsilon > 0$ (Line \ref{line:prec}) and an initial strategy for the team $(\vx_1^{(0)}, \dots, \vx_n^{(0)}) = \vx^{(0)} \in \calX \defeq \bigtimes_{k=1}^n \calX_k$ (Line \ref{line:initstrar}). The algorithm then proceeds in two phases:
\begin{itemize}
    \item In the first phase the team players are performing independent policy gradient steps (Line \ref{line:grad}) with learning rate $\eta$, as defined in Line \ref{line:lr}, while the adversary is then best responding to their strategy (Line \ref{line:bestres}). This process is repeated for $T$ iterations, with $T$ as defined in Line~\ref{line:T}. We note that $\proj{}{\cdot}$ in Line~\ref{line:grad} stands for the Euclidean projection, ensuring that each player selects a valid strategy. The first phase is completed in Line~\ref{line:setx}, where we set $\hat{\vx}$ according to the iterate at time $\tstar$, for some $0 \leq \tstar \leq T -1$. As we explain in \Cref{sec:IPGmax}, selecting uniformly at random is a practical and theoretically sound way of setting $\tstar$.
    \item In the second phase we are fixing the strategy of the team $\hat{\vx} \in \calX$, and the main goal is to determine a strategy $\hat{\vy} \in \calY$ so that $(\hat{\vx}, \hat{\vy})$ is an $O(\epsilon)$-approximate Nash equilibrium. This is accomplished in the subroutine \texttt{AdvNashPolicy}$(\hat{\vx})$, which consists of solving a linear program---from the perspective of the adversary---that has polynomial size. Our analysis of the second phase of \IPGmax can be found in \Cref{sec:extendibility-poly-time}.
\end{itemize}

\label{sec:main}
\begin{algorithm}
  \caption{Independent Policy GradientMax (\IPGmax) \label{alg:ipg-max}}
  \begin{algorithmic}[1]
   \State Precision $\epsilon > 0$ \label{line:prec}
   \State Initial Strategy $\vx^{(0)} \in \calX$ \label{line:initstrar}
   \State Learning rate $\eta \defeq \frac{ \epsilon^2 (1 - \gamma )^9}{ 32 S^4 D^2  \left( \sum_{k=1}^n A_k + B \right)^3 } $ \label{line:lr}
   \State Number of iterations $T \defeq    \frac{ 512 S^8 D^4  \left( \sum_{k=1}^n A_k + B \right)^4 }{ \epsilon^4 (1-\gamma)^{12} } $ \label{line:T}
    \For{$t \gets 1,2, \dots, T$}
      \Let{$\vy^{(t)}$}{$ \arg\max_{\vy \in \calY} V_{\vrho} \big( \vx^{(t-1)}, \vy \big)$} \label{line:bestres}
      \Let{$\vx_k^{(t)}$}{$
                            \proj{\calX_k}{
                                \vx_k^{(t-1)}- \eta \nabla_{\vx_k}  V_{\vrho}\big( \vx^{(t-1)}, \vy^{(t)} \big)
                                }
                        $} \label{line:grad} \Comment{for all agents $i \in [n]$}
      \EndFor
      \Let{$\hat{\vx}$}{$\vx^{(\tstar)}$} \label{line:setx}
      \Let{$\hat{\py}$}{$\texttt{AdvNashPolicy}(\hat{\px})$} \Comment{defined in \Cref{alg:adv-nash-policy}}
      \State \Return{$(\hat{\px},\hat{\py})$}
  \end{algorithmic}
\end{algorithm}

\subsection{Analyzing Independent Policy GradientMax}
\label{sec:IPGmax}

In this subsection, we establish that \IPGmax finds an $\epsilon$-nearly stationary point $\hat{\vx}$ of $\phi(\Vec{x}) \defeq \max_{\vy \in \calY} V_{\vrho}(\vx, \vy)$ in a number of iterations $T$ that is polynomial in the natural parameters of the game, as well as $1/\epsilon$; this is formalized in \Cref{lem:ipgmax-convergence-lemma}. 

First, we note the by-now standard property that the value function $V_{\vrho}$ is $L$-Lipschitz continuous and $\ell$-smooth, where $L \defeq \frac{\sqrt{\sum_{k=1}^n A_k +B}}{(1-\gamma)^2}$ and $\ell \defeq \frac{2 \left(\sum_{k=1}^n A_{k}+B\right)}{(1-\gamma)^3}$ (\Cref{lem:smoothness}). An important observation for the analysis is that \IPGmax is essentially performing gradient descent steps on $\phi(\vx)$. However, the challenge is that $\phi(\vx)$ is not necessarily differentiable; thus, our analysis relies on the \emph{Moreau envelope} of $\phi$, defined as follows.

\begin{definition}[Moreau Envelope]\label{def:moreau} 
    Let $\phi(\vx) \defeq \max_{\vy \in \calY} V_{\vrho}(\vx, \vy)$. For any $0 < \lambda < \frac{1}{\ell}$ the Moreau envelope $\phi_\lambda$ of $\phi$ is defined as 
\begin{equation}
    \label{eq:Moreau}
    \phi_{\lambda}(\px) \defeq \min_{\px' \in \calX} \left\{ \phi(\px') + \frac{1}{2\lambda}\norm{\px-\px'}^2 \right\}.
\end{equation}
We will let $\lambda \defeq \frac{1}{2\ell}$.
\end{definition}

Crucially, the Moreau envelope $\phi_\lambda$, as introduced in \eqref{eq:Moreau}, is $\ell$-strongly convex; this follows immediately from the fact that $\phi(\vx)$ is \emph{$\ell$-weakly convex}, in the sense that $\phi(\vx) + \frac{\ell}{2} \|\vx\|^2$ is convex (see \Cref{lemma:max-weakly}). A related notion that will be useful to measure the progress of \IPGmax is the \emph{proximal mapping} of a function $f$, defined as $\prox_f : \calX \ni \vx \mapsto \argmin_{x' \in \calX} \left\{ f(\vx') + \frac{1}{2} \|\vx' - \vx\|^2 \right\}$; the proximal of $\phi/(2\ell)$ is well-defined since $\phi$ is $\ell$-weakly convex (\Cref{prop:welldefined}). We are now ready to state the convergence guarantee of \IPGmax.

\begin{restatable}{proposition}{ipgmx}
  \label{lem:ipgmax-convergence-lemma}
Consider any $\epsilon > 0$. If $\eta = 2\epsilon^2  ( 1-\gamma)$ and  $T=\frac{(1-\gamma)^4}{8 \epsilon^4 (\sum_{k=1}^n A_k+B)^2}$, there exists an iterate $\tstar$, with $0 \leq \tstar \leq T -1 $, such that $\norm{\vx^{(\tstar)}-\tilde{\vx}^{(\tstar)}}_2 \leq \epsilon$, where $\tilde{\vx}^{(\tstar)} \defeq \prox_{\phi/(2\ell)}(\vx^{(\tstar)})$.
\end{restatable}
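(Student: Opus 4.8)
The plan is to recognize the first phase of \IPGmax (\Cref{alg:ipg-max}) as projected subgradient descent, run on the convex compact set $\calX$, applied to the nonsmooth but $\ell$-weakly convex function $\phi(\vx) \defeq \max_{\vy \in \calY} V_{\vrho}(\vx, \vy)$, and then to deploy a Moreau-envelope potential argument in the spirit of \citet{lin2020gradient}. The key structural observation is that, since the adversary best responds exactly, $\vy^{(t)} \in \argmax_{\vy \in \calY} V_{\vrho}(\vx^{(t-1)}, \vy)$, and hence by Danskin's theorem the vector $g^{(t)} \defeq \nabla_{\vx} V_{\vrho}(\vx^{(t-1)}, \vy^{(t)})$ lies in the Clarke subdifferential $\partial \phi(\vx^{(t-1)}) = \conv\{ \nabla_{\vx} V_{\vrho}(\vx^{(t-1)}, \vy) : \vy \in \argmax_{\vy} V_{\vrho}(\vx^{(t-1)}, \vy) \}$. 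Thus the team update $\vx^{(t)} = \proj{\calX}{\vx^{(t-1)} - \eta g^{(t)}}$ is precisely one step of projected subgradient descent on $\phi$ over $\calX$, with $\norm{g^{(t)}} \le L$ because $V_{\vrho}$ is $L$-Lipschitz (\Cref{lem:smoothness}). Since $\phi$ is $\ell$-weakly convex (\Cref{lemma:max-weakly}), the Moreau envelope $\phi_{\lambda}$ with $\lambda = \tfrac{1}{2\ell}$ (\Cref{def:moreau}) is well-defined and $\ell$-strongly convex, and the proximal point $\tilde{\vx}^{(t)} \defeq \prox_{\phi/(2\ell)}(\vx^{(t)})$ is its unique minimizer, lying in $\calX$ (\Cref{prop:welldefined}).

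The second step is the one-step descent estimate for the potential $t \mapsto \phi_{\lambda}(\vx^{(t)})$. From the definition of the envelope, $\phi_{\lambda}(\vx^{(t)}) \le \phi(\tilde{\vx}^{(t-1)}) + \ell\norm{\vx^{(t)} - \tilde{\vx}^{(t-1)}}^2$; bounding $\norm{\vx^{(t)} - \tilde{\vx}^{(t-1)}}^2 \le \norm{\vx^{(t-1)} - \eta g^{(t)} - \tilde{\vx}^{(t-1)}}^2$ by nonexpansiveness of the Euclidean projection (using $\tilde{\vx}^{(t-1)} \in \calX$), expanding the square, controlling the inner product $\langle g^{(t)}, \tilde{\vx}^{(t-1)} - \vx^{(t-1)}\rangle$ via the weak-convexity subgradient inequality at $\vx^{(t-1)}$, and invoking the envelope identity $\phi(\vx^{(t-1)}) - \phi(\tilde{\vx}^{(t-1)}) \ge \ell\norm{\vx^{(t-1)} - \tilde{\vx}^{(t-1)}}^2$ (immediate from optimality of the prox point), one collects terms into $\phi_{\lambda}(\vx^{(t-1)}) = \phi(\tilde{\vx}^{(t-1)}) + \ell\norm{\vx^{(t-1)} - \tilde{\vx}^{(t-1)}}^2$ and obtains
$$\eta \ell^2 \norm{\vx^{(t-1)} - \tilde{\vx}^{(t-1)}}^2 \;\le\; \phi_{\lambda}(\vx^{(t-1)}) - \phi_{\lambda}(\vx^{(t)}) + \eta^2 \ell L^2 .$$

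Telescoping over $t = 1, \dots, T$ and using $\phi_{\lambda}(\vx^{(0)}) - \phi_{\lambda}(\vx^{(T)}) \le \max_{\vx}\phi(\vx) - \min_{\vx}\phi(\vx) \le \tfrac{1}{1-\gamma}$ (since $r \in (0,1)$ gives $V_{\vrho} \in [0, \tfrac{1}{1-\gamma}]$, so $\phi$ has range of length at most $\tfrac{1}{1-\gamma}$) yields
$$\min_{0 \le t \le T-1} \norm{\vx^{(t)} - \tilde{\vx}^{(t)}}^2 \;\le\; \frac{1}{\eta \ell^2 T (1-\gamma)} + \frac{\eta L^2}{\ell}.$$
Substituting $L = \tfrac{\sqrt{\sum_{k=1}^n A_k + B}}{(1-\gamma)^2}$ and $\ell = \tfrac{2(\sum_{k=1}^n A_k + B)}{(1-\gamma)^3}$ (so $\tfrac{\ell}{L^2} = \Theta(1-\gamma)$ and $\tfrac{L^2}{\ell^3} = \Theta\!\big(\tfrac{(1-\gamma)^5}{(\sum_k A_k + B)^2}\big)$), the choice $\eta = \Theta(\epsilon^2(1-\gamma))$ forces the second term to be $O(\epsilon^2)$ and $T = \Theta\!\big(\tfrac{(1-\gamma)^4}{\epsilon^4 (\sum_k A_k + B)^2}\big)$ forces the first term to be $O(\epsilon^2)$; taking $\tstar$ to be the minimizing index then gives $\norm{\vx^{(\tstar)} - \tilde{\vx}^{(\tstar)}} \le \epsilon$. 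Since $\tilde{\vx}^{(t)}$ is not available to the algorithm, in practice one samples $\tstar$ uniformly from $\{0,\dots,T-1\}$: the telescoped sum shows the average of $\norm{\vx^{(t)} - \tilde{\vx}^{(t)}}^2$ is $O(\epsilon^2)$, so Markov's inequality delivers the guarantee with constant probability, and in particular the existential claim holds deterministically because the minimum is at most the average.

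The main obstacle is conceptual rather than computational: even though $V_{\vrho}(\policy{\team}, \policy{\adv})$ is nonconvex--nonconcave and $\phi$ is nondifferentiable, one must recognize that exact best responses from the adversary make \IPGmax a bona fide projected subgradient method on a weakly convex function, so that the Moreau-envelope/proximal-point machinery applies and near-stationarity in the sense of $\norm{\vx - \prox_{\phi/(2\ell)}(\vx)}$ is the right target; and then one must carry the several Lipschitz and smoothness constants of \Cref{lem:smoothness} through the descent inequality so that the advertised bounds on $\eta$ and $T$---polynomial in $1/\epsilon$, $1/(1-\gamma)$, and $\sum_{k=1}^n A_k + B$---emerge with explicit constants. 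The remaining ingredients (projection nonexpansiveness, the weak-convexity inequality, the envelope identities, and the telescoping) are routine once this framing is fixed.
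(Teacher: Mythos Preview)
Your proposal is correct and follows essentially the same Moreau-envelope descent argument as the paper (both in the spirit of \citet{lin2020gradient}). Two minor discrepancies: (i) the paper bounds the cross term $\langle \nabla_{\vx} V_{\vrho}(\vx^{(t)},\vy^{(t+1)}), \tilde{\vx}^{(t)} - \vx^{(t)}\rangle$ directly from the $\ell$-smoothness of $V_{\vrho}$ together with $V_{\vrho}(\vx^{(t)},\vy^{(t+1)}) = \phi(\vx^{(t)})$, rather than going through Danskin and a Clarke-subgradient inequality---this yields exactly the same bound but sidesteps the need to justify Danskin in a nonconvex setting; and (ii) the paper uses the $\ell$-strong-convexity inequality $\phi(\vx^{(t)}) - \phi(\tilde{\vx}^{(t)}) - \tfrac{\ell}{2}\norm{\vx^{(t)}-\tilde{\vx}^{(t)}}^2 \ge \ell\norm{\vx^{(t)}-\tilde{\vx}^{(t)}}^2$ rather than the weaker optimality bound $\phi(\vx^{(t)}) - \phi(\tilde{\vx}^{(t)}) \ge \ell\norm{\vx^{(t)}-\tilde{\vx}^{(t)}}^2$, which recovers the exact constants $\eta = 2\epsilon^2(1-\gamma)$ and $T = \tfrac{(1-\gamma)^4}{8\epsilon^4(\sum_k A_k + B)^2}$ in the statement (your version gives $\norm{\vx^{(\tstar)}-\tilde{\vx}^{(\tstar)}} \le \sqrt{2}\,\epsilon$ with those parameters, consistent with your $\Theta(\cdot)$ framing but not the precise claim).
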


The proof relies on the techniques of~\citet{lin2020gradient}, and it is deferred to \Cref{sec:convergence}. The main takeaway is that $O(1/\epsilon^4)$ iterations suffice in order to reach an $\epsilon$-nearly stationary point of $\phi$---in the sense that it is $\epsilon$-far in $\ell_2$ distance from its proximal point. A delicate issue here is that \Cref{lem:ipgmax-convergence-lemma} only gives a best-iterate guarantee, and identifying that iterate might introduce a substantial computational overhead. To address this, we also show in \Cref{cor:highprob} that by randomly selecting $\lceil \log(1/\delta)\rceil$ iterates over the $T$ repetitions of \IPGmax, we are guaranteed to recover an $\epsilon$-nearly stationary point with probability at least $1 - \delta$, for any $\delta > 0$.


\subsection{Efficient Extension to Nash Equilibria}
\label{sec:extendibility-poly-time}

In this subsection, we establish that any $\epsilon$-nearly stationary point $\hat{\vx}$ of $\phi$, can be \emph{extended} to an $O(\epsilon)$-approximate Nash equilibrium $(\hat{\vx}, \hat{\vy})$ for any adversarial team Markov game, where $\hat{\vy} \in \calY$ is the strategy for the adversary. Further, we show that $\hat{\vy}$ can be computed in polynomial time through a carefully constructed linear program. This ``extendibility'' argument significantly extends a seminal characterization of~\citet{von1997team}, and it is the crux in the analysis towards establishing our main result, \Cref{thm:informal}.

To this end, the techniques we leverage are more involved compared to~\citep{von1997team}, and revolve around nonlinear programming. Specifically, in the spirit of~\citep[Chapter 3]{filar2012competitive}, the starting point of our argument is the following nonlinear program with variables $(\vx,\vv) \in \calX \times \R^S$:

\begin{subequations}
        \makeatletter
        \def\@currentlabel{\text{Q-}\mathrm{NLP}}
        \makeatother
        \renewcommand{\theequation}{$Q$\arabic{equation}}
        \begin{tagblock}[tagname={$\text{Q-}\mathrm{NLP}$},content={\label{prog:xinlp}}]
        \begin{alignat}{3}
         & \mathrlap{ \min ~ \sum_{s \in \calS} \rho(s) v(s) + \ell
         \|\px - \hat{\px} \|^2} \label{eq:xiobj} \\
        \label{eq:xibr}
        & \text{s.t.}~ & r(s, \px, b) + \gamma \sum_{s' \in \calS} \pr(s' | s, \px, b) v(s')  \leq v(s), & \quad  \forall  (s,b) \in \calS \times \calB;    \label{eq:xibr} \\
         & & \px_{k,s}^\top \vone  = 1, & \quad \forall (k,s) \in [n] \times \calS; \text{ and} \label{eq:xieqone} 
         \\
         & & x_{k,s,a}  \geq 0, & \quad \forall k \in [n], (s,a) \in \calS \times \calA_k.
         \label{eq:xinonneg}
        \end{alignat}
        \end{tagblock}
    \end{subequations}
    Here, we have overloaded notation so that $r(s,\vx, b) \defeq \E_{\Vec{a} \sim \vx_s}[r(s, \Vec{a}, b]$ and $\pr(s' | s, \vx, b)) \defeq \E_{\Vec{a} \sim \vx_s}[\pr(s'|s, \vec{a}, b)]$. For a fixed strategy $\vx \in \calX$ for the team, this program describes the (discounted) MDP faced by the adversary. A central challenge in this formulation lies in the nonconvexity-nonconcavity of the constraint functions, witnessed by the multilinear constraint~\eqref{eq:xibr}. Importantly, unlike standard MDP formulations, we have incorporated a quadratic regularizer in the objective function; this term ensures the following property.
    
    \begin{proposition}
    For any fixed $\vx \in \calX$, there is a unique optimal solution $\vv^\star$ to \eqref{prog:xinlp}. Further, if $\tilde{\vx} \defeq \prox_{\phi/(2\ell)}(\hat{\vx})$, and $\tilde{\vv} \in \R^S$ is the corresponding optimal, then $(\tilde{\vx}, \tilde{\vv})$ is the global optimum of \eqref{prog:xinlp}.
    \end{proposition}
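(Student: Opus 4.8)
The plan is to observe that, once the team strategy $\vx$ is held fixed, \eqref{prog:xinlp} collapses to the textbook linear-programming formulation of the optimal value function of the discounted MDP faced by the adversary, with the additional term $\ell\|\vx-\hat{\vx}\|^2$ acting as a constant. I would first dispatch the fixed-$\vx$ uniqueness claim. Let $\mathcal{T}_{\vx}$ denote the adversary's Bellman optimality operator, $(\mathcal{T}_{\vx} v)(s) \defeq \max_{b \in \calB}\{ r(s, \vx, b) + \gamma \sum_{s' \in \calS} \pr(s' | s, \vx, b)\, v(s')\}$, so that constraint~\eqref{eq:xibr} is exactly the entrywise inequality $v \geq \mathcal{T}_{\vx} v$. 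Since $\mathcal{T}_{\vx}$ is monotone and a $\gamma$-contraction in $\|\cdot\|_\infty$, its unique fixed point $v^{\star}_{\vx}$ is feasible, and iterating $v \geq \mathcal{T}_{\vx} v \geq \mathcal{T}_{\vx}^2 v \geq \cdots \to v^{\star}_{\vx}$ shows that every feasible $v$ dominates $v^{\star}_{\vx}$ entrywise. Because $\vrho$ is full-support, $\sum_s \rho(s) v(s) \geq \sum_s \rho(s) v^{\star}_{\vx}(s)$ for all feasible $v$, with equality only if $v = v^{\star}_{\vx}$; hence $\vv^{\star} = v^{\star}_{\vx}$ is the unique optimal solution for that fixed $\vx$, which is the first claim.

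Next I would identify the optimal value of this inner minimization with the quantity $\phi(\vx)$. By classical MDP theory a single deterministic stationary adversary policy attains $v^{\star}_{\vx}(s) = \max_{\vy \in \calY} V_{s}(\vx, \vy)$ simultaneously for all $s \in \calS$, so $\sum_{s} \rho(s) v^{\star}_{\vx}(s) = \max_{\vy \in \calY} V_{\vrho}(\vx, \vy) = \phi(\vx)$. Minimizing the objective of \eqref{prog:xinlp} first over $\vv$ (using the previous paragraph) and then over $\vx$ therefore yields
\[
\min_{(\vx, \vv) \,\text{feasible}} \Big\{ \sum_{s \in \calS} \rho(s) v(s) + \ell\|\vx - \hat{\vx}\|^2 \Big\} \;=\; \min_{\vx \in \calX} \big\{ \phi(\vx) + \ell\|\vx - \hat{\vx}\|^2 \big\}.
\]
By \Cref{lemma:max-weakly} the function $\phi$ is $\ell$-weakly convex, so $\vx \mapsto \phi(\vx) + \ell\|\vx - \hat{\vx}\|^2$ is strongly convex and has a unique minimizer over the compact set $\calX$, which is by definition $\prox_{\phi/(2\ell)}(\hat{\vx}) = \tilde{\vx}$ (well-definedness being exactly \Cref{prop:welldefined}). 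Pairing $\tilde{\vx}$ with the unique optimal $\vv$ from the first step, namely $\tilde{\vv} = v^{\star}_{\tilde{\vx}}$, gives the unique global optimum $(\tilde{\vx}, \tilde{\vv})$ of \eqref{prog:xinlp}.

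I do not anticipate a genuine obstacle here; the one point that needs to be stated with care is the bridge to classical MDP theory in the second step, namely that the LP value $\sum_s \rho(s) v^{\star}_{\vx}(s)$ equals $\max_{\vy \in \calY} V_{\vrho}(\vx, \vy)$ — i.e. that restricting the adversary to stationary Markov policies is without loss and that a single policy is optimal for every initial state at once — which should be invoked with a precise reference rather than asserted. A secondary subtlety worth flagging explicitly is that the full-support assumption on $\vrho$ is essential for uniqueness of $\vv^{\star}$: if $\rho(s) = 0$ for a state $s$ unreachable from the support, the value $v(s)$ is left unpinned by the objective and the optimal $\vv$ is no longer unique.
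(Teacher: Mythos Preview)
Your proposal is correct and follows essentially the same route as the paper: reduce the fixed-$\vx$ problem to the standard single-agent MDP linear program, identify its unique optimal value with $\phi(\vx)$, and then recognize the outer minimization over $\vx$ as the proximal map defining $\tilde{\vx}$. The only cosmetic difference is that the paper establishes the identity $\vrho^\top \vv^\star = \phi(\vx)$ via LP duality and the state-action occupancy correspondence (citing \citet[Theorem~6.9.1]{puterman2014markov}), whereas you invoke the Bellman-operator fixed-point characterization directly; both are standard and yield the same conclusion.
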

    
    The uniqueness of the associated value vector is a consequence of Bellman's optimality equation, while the optimality of the proximal point follows by realizing that \eqref{prog:xinlp} is an equivalent formulation of the proximal mapping. These steps are formalized in~\Cref{sec:nlp+phi}. Having established the optimality of $(\tilde{\vx}, \tilde{\vv})$, the next step is to show the existence of nonnegative Lagrange multipliers satisfying the KKT conditions (recall \Cref{def:KKT}); this is non-trivial due to the nonconvexity of the feasibility set of \eqref{prog:xinlp}.
    
    To do so, we leverage the so-called \emph{Arrow-Hurwicz-Uzawa constraint qualification} (\Cref{theorem:AHU})---a form of ``regularity condition'' for a nonconvex program. Indeed, in \Cref{lemma:AHU} we show that any feasible point of \eqref{prog:xinlp} satisfies that constraint qualification, thereby implying the existence of nonnegative Lagrange multipliers satisfying the KKT conditions for any local optimum (\Cref{corollary:KKT-sat}), and in particular for $(\tilde{\vx}, \tilde{\vv})$:
    
    \begin{proposition}
        \label{prop:info-kkt}
    There exist nonnegative Lagrange multipliers satisfying the KKT conditions at $(\tilde{\vx}, \tilde{\vv})$.
    \end{proposition}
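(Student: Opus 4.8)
The plan is to obtain the KKT multipliers by invoking the Arrow--Hurwicz--Uzawa (AHU) constraint qualification, \Cref{theorem:AHU}. Because the feasible region of \eqref{prog:xinlp} is nonconvex---the constraint functions in \eqref{eq:xibr} are multilinear in $\vx$ and bilinear across $\vx$ and $\vv$---ordinary convex duality does not apply, and the existence of nonnegative KKT multipliers at a local optimum is not automatic; the AHU constraint qualification is precisely a regularity condition under which this conclusion does hold (the implication ``AHU at every feasible point $\Rightarrow$ local optima are KKT points'' is \Cref{corollary:KKT-sat}). Since the proposition established just above identifies $(\tilde\vx,\tilde\vv)$ as the global optimum, hence a local optimum, of \eqref{prog:xinlp}, it suffices to check the AHU constraint qualification there. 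In fact I would verify it at \emph{every} feasible point---this uniform statement is \Cref{lemma:AHU}---which is both no harder and convenient for reuse.

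So the substance is \Cref{lemma:AHU}. Fix an arbitrary feasible $(\vx,\vv)\in\calX\times\R^{S}$ and split the constraints of \eqref{prog:xinlp} into the affine ones---the simplex equalities \eqref{eq:xieqone} and the nonnegativity inequalities \eqref{eq:xinonneg}, whose gradients are supported entirely on the $\vx$-block---and the nonlinear Bellman inequalities \eqref{eq:xibr}. The key observation is that a single direction certifies the condition for all of them at once, namely $\vd\defeq(\vzero,\vone)$: leave $\vx$ fixed and raise every value coordinate at unit rate. Writing $g_{s,b}(\vx,\vv)\defeq r(s,\vx,b)+\gamma\sum_{s'\in\calS}\pr(s'\mid s,\vx,b)\,v(s')-v(s)$ for the function appearing in \eqref{eq:xibr}, one has $\partial g_{s,b}/\partial v(s')=\gamma\,\pr(s'\mid s,\vx,b)-\mathbbm{1}[s'=s]$, so, since the $\vx$-block of $\vd$ vanishes,
\[
\nabla g_{s,b}(\vx,\vv)^{\top}\vd=\sum_{s'\in\calS}\bigl(\gamma\,\pr(s'\mid s,\vx,b)-\mathbbm{1}[s'=s]\bigr)=\gamma-1<0,
\]
using only $\gamma\in[0,1)$ and $\sum_{s'\in\calS}\pr(s'\mid s,\vx,b)=1$. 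Thus $\vd$ is a strict descent direction for \emph{every} Bellman constraint, in particular for each active one---exactly what \Cref{theorem:AHU} demands of the (non-pseudoconcave) nonlinear constraints, regardless of which ones are active. For the affine constraints, $\nabla(\cdot)^{\top}\vd=0$ since their gradients lie in the $\vx$-block, so the equalities are respected and any active nonnegativity constraint satisfies $0\le 0$; moreover the simplex-constraint gradients have pairwise disjoint supports and are therefore linearly independent, which covers any version of \Cref{theorem:AHU} that also requires this. Hence the AHU constraint qualification holds at the arbitrary feasible point $(\vx,\vv)$.

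The one place one must be slightly careful is the interaction of $\vd$ with the combinatorial pattern of which nonnegativity constraints are active at $(\vx,\vv)$; this is exactly why the $\vx$-block of $\vd$ is taken to be zero, which makes the argument oblivious to that pattern while still leaving $\vd$ a feasible direction for the linearized affine constraints (the $\vv$-block of \eqref{prog:xinlp} is unconstrained, and it is the discounting $\gamma<1$ that supplies the strict decrease needed on the Bellman side). With \Cref{lemma:AHU} in hand, \Cref{corollary:KKT-sat} applies at the local optimum $(\tilde\vx,\tilde\vv)$ and yields nonnegative Lagrange multipliers satisfying the KKT system of \Cref{def:KKT}, establishing \Cref{prop:info-kkt}. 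I do not expect a genuine obstacle here: the real conceptual point is recognizing that the nonconvex feasible set forces us through a constraint qualification at all, and the pleasant feature of this program is that $\gamma<1$ hands us a uniform strictly-feasible direction for the hard (Bellman) constraints, so the verification reduces to a one-line gradient computation.
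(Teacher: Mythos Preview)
Your high-level strategy is exactly the paper's: verify the AHU constraint qualification at every feasible point (\Cref{lemma:AHU}) and then invoke \Cref{corollary:KKT-sat} at the global optimum $(\tilde\vx,\tilde\vv)$. Where you differ is in the certifying direction. The paper takes $\vw=(\vzero,-\vv)$ and argues, via feasibility of $(\vx,\vv)$ in \eqref{eq:xibr}, that the inner product equals $v(s)-\gamma\sum_{s'}\pr(s'|s,\vx,b)v(s')\ge r(s,\vx,b)>0$, which is precisely where the paper's standing assumption that rewards are strictly positive is used. You instead take the constant direction $(\vzero,\vone)$ and get the inner product $\gamma-1$, which is strictly negative for every Bellman constraint using only $\gamma<1$; this is simpler and does not need the positive-reward assumption. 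One small caveat: as stated in the paper, \Cref{theorem:AHU} asks for $\vw^\top\nabla g_i>0$ on the nonconcave active constraints, so to match that sign convention you should use $-\vd=(\vzero,-\vone)$, giving $1-\gamma>0$; the affine constraints still yield $0$, so the weak inequality for the concave block is satisfied as well. With that trivial sign flip your argument goes through and is arguably cleaner than the paper's, since it works uniformly over all Bellman constraints without invoking feasibility or positivity of $r$.
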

    
    Now the upshot is that a subset of those Lagrange multipliers $\tilde{\vlambda} \in \R^{S \times B}$ can be used to establish the extendability of $\hat{\vx}$ to a Nash equilibrium. Indeed, our next step makes this explicit: We construct a linear program whose sole goal is to identify such multipliers, which in turn will allow us to efficiently compute an admissible strategy for the adversary $\hat{\vy}$. However, determining $\tilde{\vlambda}$ exactly seems too ambitious. For one, \IPGmax only granted us access to $\hat{\vx}$, but not to $\tilde{\vx}$. On the other hand, the Lagrange multipliers $\tilde{\vlambda}$ are induced by $(\tilde{\vx}, \tilde{\vv})$. To address this, the constraints of our linear program are phrased in terms of $(\hat{\vx}, \hat{\vv})$, instead of $(\tilde{\vx}, \tilde{\vv})$, while to guarantee feasibility we appropriately relax all the constraints of the linear program; this relaxation does not introduce too much error since $\|\hat{\vx} - \tilde{\vx}\| \leq \epsilon$ (\Cref{lem:ipgmax-convergence-lemma}), and the underlying constraint functions are Lipschitz continuous---with constants that depend favorably on the game $\calG$; we formalize that in \Cref{lem:lpadv-feasible}. This leads us to our main theorem, summarized below (see \Cref{theorem:approx-Nash} for a precise statement).
    
    \begin{theorem}
        \label{theorem:main}
    Let $\hat{\vx}$ be an $\epsilon$-nearly stationary point of $\phi$. There exist a linear program, \eqref{prog:lpadv}, such that:
    \begin{enumerate}[(i)]
        \item It has size that is polynomial in $\calG$, and all the coefficients depend on the (single-agent) MDP faced by the adversary when the team is playing a fixed strategy $\hat{\vx}$; and
        \item It is always feasible, and any solution induces a strategy $\hat{\vy}$ such that $(\hat{\vx}, \hat{\vy})$ is an $O(\epsilon)$-approximate Nash equilibrium.
    \end{enumerate}
    \end{theorem}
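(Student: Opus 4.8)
The plan is to build the linear program \eqref{prog:lpadv} by extracting, from the KKT system of \eqref{prog:xinlp} guaranteed by \Cref{prop:info-kkt}, exactly those conditions that involve the multipliers $\tilde{\vlambda} \in \R^{S \times B}$ attached to the Bellman-type constraints \eqref{eq:xibr}, and to show that these conditions alone pin down an adversary policy witnessing the Nash property. First I would write out the Lagrangian of \eqref{prog:xinlp} at the optimum $(\tilde{\vx}, \tilde{\vv})$: stationarity in $\vv$ yields, for each state $s$, a linear equation in $\tilde{\vlambda}$ together with the discount structure of $\pr$, which should say precisely that $\sum_b \tilde{\lambda}_{s,b} = $ (a discounted occupancy-type weight) and is strictly positive because $\vrho$ is full-support and $\gamma < 1$; stationarity in each $\vx_{k,s}$ together with complementary slackness on \eqref{eq:xinonneg} yields that, after normalizing $\hat{\vy}_s \defeq \tilde{\vlambda}_{s,\cdot}/\|\tilde{\vlambda}_{s,\cdot}\|_1$, the team action $\hat{\vx}_{k,s}$ is a best response (in the one-shot $Q$-value sense) against $\hat{\vy}_s$ for every $k$, using the identical-interest condition \eqref{eq:ident-int} to reduce the $n$ coupled conditions to a single scalar inequality per team player. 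This is the exact generalization of the Von Stengel--Koller argument: their LP duality is replaced by the KKT multipliers of the nonconvex program, and the quadratic regularizer $\ell\|\vx-\hat\vx\|^2$ contributes only a gradient term of norm $O(\ell\epsilon)$ to the stationarity conditions, which is absorbed into the approximation slack.

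Next I would define \eqref{prog:lpadv} as the feasibility LP in variables $(\vlambda, \vv) \in \R^{S\times B}_{\geq 0}\times\R^S$ consisting of: the $\vv$-stationarity equalities; the $\vx$-stationarity inequalities evaluated at $\hat{\vx}$ rather than $\tilde{\vx}$ (so all coefficients are functions of the adversary's single-agent MDP under the fixed team strategy $\hat{\vx}$, giving part (i)); the primal feasibility constraints \eqref{eq:xibr} at $\hat{\vx}$; and normalization $\sum_b\lambda_{s,b}>0$. Every one of these is relaxed additively by a quantity of order $O(\mathrm{poly}(\calG)\cdot\epsilon)$. Feasibility (part (ii), first half) follows from \Cref{lem:ipgmax-convergence-lemma} and \Cref{prop:info-kkt}: the true multipliers $\tilde{\vlambda}$ and value $\tilde{\vv}$ satisfy all constraints exactly at $(\tilde{\vx},\tilde{\vv})$, and since $\|\hat{\vx}-\tilde{\vx}\|\leq\epsilon$ and the constraint functions in \eqref{prog:xinlp} are Lipschitz with game-dependent constants, plugging $\hat{\vx}$ in violates each constraint by at most the chosen relaxation margin; this is exactly \Cref{lem:lpadv-feasible}. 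For the Nash guarantee, I take any LP solution $(\hat{\vlambda},\hat{\vv})$, set $\hat{\vy}_s \defeq \hat{\vlambda}_{s,\cdot}/\|\hat{\vlambda}_{s,\cdot}\|_1$, and verify both inequalities of \Cref{def:Nash}: the adversary side because $\hat{\vv}$ approximately satisfies the Bellman inequalities \eqref{eq:xibr}, so $V_{\vrho}(\hat{\vx},\hat{\vy})\geq\sum_s\rho(s)\hat v(s)-O(\epsilon)\geq V_{\vrho}(\hat{\vx},\policy{\adv}')-O(\epsilon)$ for every $\policy{\adv}'$ after unrolling the discounted recursion (the $1/(1-\gamma)$ horizon factor multiplies the per-state slack, which is why the relaxation margins must carry the right powers of $1-\gamma$, matching the learning rate and iteration count in \Cref{alg:ipg-max}); and the team side because the relaxed $\vx$-stationarity plus complementary slackness make each $\hat{\vx}_{k,s}$ an $O(\epsilon)$-best response against $\hat{\vy}$, which by the identical-interest structure and a standard performance-difference/telescoping argument over the induced Markov chain lifts to $V_{\vrho}(\hat{\vx},\hat{\vy})\leq V_{\vrho}((\policy{k}',\hat{\vx}_{-k}),\hat{\vy})+O(\epsilon)$ for all $k$ and all $\policy{k}'$.

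The main obstacle I anticipate is the team-side (minimization) inequality: unlike the adversary, who genuinely faces a single-agent MDP against fixed $\hat{\vx}$, the team players lack a gradient-dominance / potential certificate in a team-wise sense, so a local first-order condition (the KKT stationarity in $\vx$) does not automatically give a global best-response bound. The resolution is the content of the extendibility argument — one must show that the particular multipliers $\tilde{\vlambda}$ arising from the \emph{value-function} program \eqref{prog:xinlp} (as opposed to an arbitrary stationary point) encode a policy against which the team's one-shot incentives and its dynamic incentives coincide up to $O(\epsilon)$, i.e. that stationarity of the Moreau proximal point is enough. Concretely I expect to need: (a) a careful choice of which KKT equations go into \eqref{prog:lpadv} so that the normalized $\hat{\vy}$ is simultaneously a near-optimal adversary response and a near-equalizer across the team's deviations; (b) tight bookkeeping of the Lipschitz constants of $r(s,\cdot,b)$, $\pr(s'|s,\cdot,b)$ and of $V_{\vrho}$ (the $L$ and $\ell$ from \Cref{lem:smoothness}) so every relaxation margin is genuinely $O(\epsilon)$ with the stated polynomial dependence on $S$, $\sum_k A_k + B$, and $1/(1-\gamma)$; and (c) converting the $\ell_2$-closeness $\|\hat{\vx}-\tilde{\vx}\|\leq\epsilon$ into closeness of the induced discounted state-visitation distributions, which is where the full-support assumption on $\vrho$ and a mismatch-coefficient bound enter. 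Once (a)–(c) are in place, parts (i) and (ii) of \Cref{theorem:main} follow by assembling \Cref{prop:info-kkt}, \Cref{lem:lpadv-feasible}, and the two performance-difference estimates; the precise constants are then recorded in \Cref{theorem:approx-Nash}.
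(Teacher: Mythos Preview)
Your high-level plan matches the paper's: extract from the KKT system of \eqref{prog:xinlp} the conditions involving the multipliers $\tilde{\vlambda}$ attached to \eqref{eq:xibr}, relax them by $O(\epsilon)$ using Lipschitz continuity and $\|\hat{\vx}-\tilde{\vx}\|\leq\epsilon$, and show that any feasible $\vlambda$ induces an adversary policy $\hat{\vy}$ making $(\hat{\vx},\hat{\vy})$ an approximate Nash equilibrium. The team-side obstacle you flag, and the elimination of the simplex multipliers $\omega,\psi,\zeta$ via complementary slackness on \eqref{eq:xinonneg}, are exactly the content of \Cref{claim:important-relation} and \Cref{lemma:idealfeas}.

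There are, however, two concrete problems with your proposed program. First, you take $\vv$ as a free variable alongside $\vlambda$, but the $\vx$-stationarity inequalities you want to include are of the form $\sum_b \lambda(s,b)\bigl[r(s,\cdot,b)+\gamma\sum_{s'}\pr(s'|s,\cdot,b)v(s')-v(s)\bigr]\geq -O(\epsilon)$, which is \emph{bilinear} in $(\vlambda,\vv)$---so this is no longer an LP. The paper avoids this by first computing $\hat{\vv}$ as the adversary's optimal value vector against the fixed team strategy $\hat{\vx}$ (a single-agent MDP, solved by a separate LP) and then treating $\hat{\vv}$ as \emph{data} in \eqref{prog:lpadv}, leaving only $\vlambda$ as the variable. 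Second, your constraint list omits the relaxed complementary-slackness conditions $\lambda(s,b)\bigl[r(s,\hat{\vx},b)+\gamma\sum_{s'}\pr(s'|s,\hat{\vx},b)\hat v(s')-\hat v(s)\bigr]\in[-c_2\epsilon,c_2\epsilon]$, which are \eqref{eq:pertconstr3}--\eqref{eq:pertconstr4} in the paper. These are precisely what force \emph{any} feasible $\vlambda$ (not just the KKT witness $\tilde{\vlambda}$) to induce a $\hat{\vy}$ whose value is close to $\hat{\vv}$; your claimed inequality $V_{\vrho}(\hat{\vx},\hat{\vy})\geq\vrho^\top\hat{\vv}-O(\epsilon)$ does not follow from primal feasibility plus flow conservation alone---those give only the opposite direction by weak duality. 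Relatedly, the paper does not put the $\vv$-stationarity (flow-conservation) equalities into \eqref{prog:lpadv} at all; they are used only via \Cref{proposition:Lan-vis} to derive the bounds $\rho(s)\leq\sum_b\lambda(s,b)\leq 1/(1-\gamma)$ (constraints \eqref{eq:pertconstr5}--\eqref{eq:pertconstr6}), which also replaces your strict-inequality normalization $\sum_b\lambda_{s,b}>0$ with a legitimate LP constraint.
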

    
    The proof of this theorem carefully leverages the structure of adversarial team Markov games, along with the KKT conditions we previously established in~\Cref{prop:info-kkt}. The algorithm for computing the policy for the adversary is summarized in \Cref{alg:adv-nash-policy} of \Cref{sec:extendibility}. A delicate issue with \Cref{theorem:main}, and in particular with the solution of \eqref{prog:lpadv}, is whether one can indeed \emph{efficiently simulate} the environment faced by the adversary. Indeed, in the absence of any structure, determining the coefficients of the linear program could scale exponentially with the number of players; this is related to a well-known issue in computational game theory, revolving around the exponential blow-up of the input space as the number of players increases~\citep{Papadimitriou08:Computing}. As is standard, we bypass this by assuming access to natural oracles that ensure we can efficiently simulate the environment faced by the adversary (\Cref{remark:oracle}).
    


\section{Further Related Work}
\label{sec:related}

In this section, we highlight certain key lines of work that relate to our results in the context of adversarial team Markov games. We stress that the related literature on multi-agent reinforcement learning (MARL) is too vast to even attempt to faithfully cover here. For some excellent recent overviews of the area, we refer the interested reader to~\citep{yang2020overview,zhang2021multi} and the extensive lists of references therein.

\paragraph{Team Games.}

The study of team games has been a prolific topic of research in economic theory and group decision theory for many decades; see, \textit{e.g.}, \citep{Marschak55:Elements,Groves73:Incentives,Radner62:Team,Ho72:Team}. A more modern key reference point to our work is the seminal paper of~\citet{von1997team} that introduced the notion of \emph{team-maxmin equilibrium (TME)} in the context of normal-form games. A TME profile is a mixed strategy for each team member so that the minimal expected team payoff over all possible responses of the adversary---who potentially knows the play of the team---is the maximum possible. While TME's enjoy a number of compelling properties, being the optimal equilibria for the team given the lack of coordination, they suffer from computational intractability even in $3$-player team games~\citep{hansen2008approximability,Borgs10:The}.\footnote{\citet{hansen2008approximability,Borgs10:The} establish $\FNP$-hardness and inapproximability for general $3$-player games, but their argument readily applies to $3$-player team games as well.} Nevertheless, practical algorithms have been recently proposed and studied for computing them in multiplayer games~\citep{Converging20:Zhang,Zhang20:Computing,Basilico17:Team}. It is worth pointing out that team equilibria are also useful for extensive-form two-player zero-sum games where one of the players has \emph{imperfect recall}~\citep{Piccione97:On}.

The intractability of TME has motivated the study of a relaxed equilibrium concept that incorporates a \emph{correlation device}~\citep{Farina18:Ex,celli2018computational,Basilico17:Team,zhang2020,Zhang21:Team,Zhang22:Team,Carminati22:A}; namely, \emph{TMECor}. In TMECor players are allowed to select \emph{correlated strategies}. Despite the many compelling aspects of TMECor as a solution concept in team games, even \emph{ex ante} coordination or correlated randomization---beyond the structure of the game itself---can be overly expensive or even infeasible in many applications~\citep{von1997team}. Further, even TMECor is $\NP$-hard to compute (in the worst-case) for \emph{imperfect-information} extensive-form games (EFGs)~\citep{Chu01:On}, although fixed-parameter-tractable ($\FPT$) algorithms have recently emerged for natural classes of EFGs~\citep{Zhang21:Team,Zhang22:Team}.

On the other hand, the computational aspects of the standard Nash equilibrium (NE) in adversarial team games is not well-understood, even in normal-form games. In fact, it is worth pointing out that Von Neumann's celebrated \emph{minimax theorem}~\citep{vonNeumannMorgenstern+2007} does not apply in team games, rendering traditional techniques employed in two-player zero-sum games of little use. Indeed, \citet{schulman2017duality} provided a precise characterization of the \emph{duality gap} between the two teams based on the natural parameters of the problem, while \citet{kalogiannis2021teamwork} showed that standard no-regret learning dynamics such as gradient descent and optimistic Hedge could fail to stabilize to mixed NE even in binary-action adversarial team games. Finally, we should also point out that although from a complexity-theoretic standpoint our main result (\Cref{thm:informal}) establishes a \emph{fully polynomial time approximate scheme (FPTAS)}, since the dependence on the approximation error $\epsilon$ is $\poly(1/\epsilon)$, an improvement to $\poly(\log(1/\epsilon))$ is precluded even in normal-form games unless $\CLS \subseteq \P$ (an unlikely event); this follows since adversarial team games immediately capture potential games~\citep{kalogiannis2021teamwork}, wherein computing mixed Nash equilibria is known to be complete for the class $\CLS = \PPAD \cap \PLS$~\citep{Babichenko21:Settling}.


\paragraph{Multi-agent RL.} Computing Nash equilibria has been a central endeavor in multi-agent RL. While some algorithms have been proposed, perhaps most notably the Nash-Q algorithm~\citep{Hu98:Multiagent,Hu03:Nash}, convergence to Nash equilibria is only guaranteed under severe restrictions on the game. More broadly, the long-term behavior of independent policy gradient methods~\citep{Schulman15:Trust} is still not well-understood. Before all else, from the impossibility result of \citeauthor{hart2003uncoupled}, universal convergence to Nash equilibria is precluded even for normal-form games; this is aligned with the computational intractability ($\PPAD$-completeness) of Nash equilibria even in two-player general-sum games~\citep{Daskalakis09:The,Chen09:Settling}. Surprisingly, recent work has also established hardness results in turn-based stochastic games, rendering even the weaker notion of (stationary) CCEs intractable~\citep{Daskalakis22:The,jin2022complexity}.

As a result, the existing literature has inevitably focused on specific classes of games, such as Markov potential games~\citep{leonardos2021global,ding2022independent,zhang2021gradient,Chen22:Convergence,Maheshari22:Independent,Fox22:Independent} or two-player zero-sum Markov games~\citep{daskalakis2020independent,Wei21:Last,Sayin21:Decentralized,Cen21:Fast,Sayin20:Fictitious}. As we pointed out earlier, adversarial Markov team games can unify and extend those settings (\Cref{sec:Markov}). More broadly, identifying multi-agent settings for which Nash equilibria are provably efficiently computable is recognized as an important open problem in the literature (see, \textit{e.g.}, \citep{daskalakis2020independent}), boiling down to one of the main research question of this paper (\Cref{qst:1}). We also remark that certain guarantees for convergence to Nash equilibria have been recently obtained in a class of symmetric games~\citep{Emmons22:For}---including symmetric team games. Finally, weaker solution concepts relaxing either the Markovian or the stationarity properties have also recently attracted attention~\citep{Daskalakis22:The,jin2021v}.

\section{Conclusions}
    \label{sec:Conclusion}
    Our main contribution in this paper is the first polynomial algorithm for computing (stationary) Nash equilibria in adversarial team Markov games, an important class of games in which a team of uncoordinated but identically-interested players is competing against an adversarial player. We argued that this setting serves as a step to model more realistic multi-agent applications that feature both competing and cooperative interests. One caveat of our main algorithm ($\IPGmax$) is that it requires a separate subroutine for computing the optimal policy of the adversary. It is plausible that a carefully designed two-timescale policy gradient method can efficiently reach to a Nash equilibrium, towards fully model-free algorithms for solving adversarial team Markov games. Exploring different solution concepts--- beyond Nash equilibria---could also be a fruitful avenue for the future. Indeed, allowing some limited form of correlation between the players in the team could lead to more efficient algorithms; whether that form of coordination is justified (arguably) depends to a large extent on the application at hand. Finally, returning to \Cref{qst:1}, a more ambitious agenda revolves around understanding the fundamental structure of games for which computing Nash equilibria is provably computationally tractable.

\section*{Acknowledgments}
Ioannis Anagnostides is grateful to Gabriele Farina and Brian H. Zhang for helpful discussions. Ioannis Panageas would like to acknowledge a start-up grant. Part of this project was done while he was a visiting research scientist at the Simons Institute for the Theory of Computing for the program ``Learning and Games''. Vaggos Chatziafratis was supported by a start-up grant of UC Santa Cruz, the Foundations of Data Science Institute (FODSI) fellowship at MIT and Northeastern, and part of this work was carried out at the Simons Institute for the Theory of Computing. Emmanouil V. Vlatakis-Gkaragkounis is grateful for financial support by the Google-Simons Fellowship, Pancretan Association of America and Simons Collaboration on Algorithms and Geometry. This project was completed while he was a visiting research fellow at the Simons Institute for the Theory of Computing. Additionally, he would like to acknowledge the following series of NSF-CCF grants under the numbers 1763970/2107187/1563155/1814873.
\bibliography{main}

\begin{thebibliography}{98}
\providecommand{\natexlab}[1]{#1}
\providecommand{\url}[1]{\texttt{#1}}
\expandafter\ifx\csname urlstyle\endcsname\relax
  \providecommand{\doi}[1]{doi: #1}\else
  \providecommand{\doi}{doi: \begingroup \urlstyle{rm}\Url}\fi

\bibitem[Agarwal et~al.(2020)Agarwal, Kakade, Lee, and
  Mahajan]{Agarwal20:Optimality}
Alekh Agarwal, Sham~M. Kakade, Jason~D. Lee, and Gaurav Mahajan.
\newblock Optimality and approximation with policy gradient methods in markov
  decision processes.
\newblock In \emph{Conference on Learning Theory, {COLT} 2020, 9-12 July 2020},
  volume 125 of \emph{Proceedings of Machine Learning Research}, pages 64--66.
  {PMLR}, 2020.

\bibitem[Agarwal et~al.(2021)Agarwal, Kakade, Lee, and
  Mahajan]{agarwal2021theory}
Alekh Agarwal, Sham~M Kakade, Jason~D Lee, and Gaurav Mahajan.
\newblock On the theory of policy gradient methods: Optimality, approximation,
  and distribution shift.
\newblock \emph{Journal of Machine Learning Research}, 22\penalty0
  (98):\penalty0 1--76, 2021.

\bibitem[Arrow et~al.(1961)Arrow, Hurwicz, and Uzawa]{arrow1961constraint}
Kenneth~J Arrow, Leonid Hurwicz, and Hirofumi Uzawa.
\newblock Constraint qualifications in maximization problems.
\newblock \emph{Naval Research Logistics Quarterly}, 8\penalty0 (2):\penalty0
  175--191, 1961.

\bibitem[Aubin(1981)]{aubin1981locally}
Jean-Pierre Aubin.
\newblock Locally lipschitz cooperative games.
\newblock \emph{Journal of Mathematical Economics}, 8\penalty0 (3):\penalty0
  241--262, 1981.

\bibitem[Aumann(1974)]{Aumann74:Subjectivity}
Robert Aumann.
\newblock Subjectivity and correlation in randomized strategies.
\newblock \emph{Journal of Mathematical Economics}, 1:\penalty0 67--96, 1974.

\bibitem[Babichenko and Rubinstein(2021)]{Babichenko21:Settling}
Yakov Babichenko and Aviad Rubinstein.
\newblock Settling the complexity of nash equilibrium in congestion games.
\newblock In \emph{{STOC} '21: 53rd Annual {ACM} {SIGACT} Symposium on Theory
  of Computing, 2021}, pages 1426--1437. {ACM}, 2021.
\newblock \doi{10.1145/3406325.3451039}.

\bibitem[Basilico et~al.(2017)Basilico, Celli, Nittis, and
  Gatti]{Basilico17:Team}
Nicola Basilico, Andrea Celli, Giuseppe~De Nittis, and Nicola Gatti.
\newblock Team-maxmin equilibrium: Efficiency bounds and algorithms.
\newblock In Satinder Singh and Shaul Markovitch, editors, \emph{Proceedings of
  the Thirty-First {AAAI} Conference on Artificial Intelligence, February 4-9,
  2017}, pages 356--362. {AAAI} Press, 2017.

\bibitem[Bazaraa et~al.(1972)Bazaraa, Goode, and Shetty]{bazaraa1972constraint}
MS~Bazaraa, JJ~Goode, and CM~Shetty.
\newblock Constraint qualifications revisited.
\newblock \emph{Management Science}, 18\penalty0 (9):\penalty0 567--573, 1972.

\bibitem[Bhandari and Russo(2019)]{bhandari2019global}
Jalaj Bhandari and Daniel Russo.
\newblock Global optimality guarantees for policy gradient methods.
\newblock \emph{arXiv preprint arXiv:1906.01786}, 2019.

\bibitem[Borgs et~al.(2010)Borgs, Chayes, Immorlica, Kalai, Mirrokni, and
  Papadimitriou]{Borgs10:The}
Christian Borgs, Jennifer~T. Chayes, Nicole Immorlica, Adam~Tauman Kalai,
  Vahab~S. Mirrokni, and Christos~H. Papadimitriou.
\newblock The myth of the folk theorem.
\newblock \emph{Games Econ. Behav.}, 70\penalty0 (1):\penalty0 34--43, 2010.
\newblock \doi{10.1016/j.geb.2009.04.016}.

\bibitem[Bowling(2000)]{Bowling00:Convergence}
Michael Bowling.
\newblock Convergence problems of general-sum multiagent reinforcement
  learning.
\newblock In \emph{In Proceedings of the Seventeenth International Conference
  on Machine Learning}, pages 89--94. Morgan Kaufmann, 2000.

\bibitem[Bowling et~al.(2015)Bowling, Burch, Johanson, and
  Tammelin]{Bowling15:Limit}
Michael Bowling, Neil Burch, Michael Johanson, and Oskari Tammelin.
\newblock Heads-up limit hold'em poker is solved.
\newblock \emph{Science}, 347\penalty0 (6218):\penalty0 145--149, 2015.
\newblock \doi{10.1126/science.1259433}.

\bibitem[Boyd et~al.(2004)Boyd, Boyd, and Vandenberghe]{boyd2004convex}
Stephen Boyd, Stephen~P Boyd, and Lieven Vandenberghe.
\newblock \emph{Convex optimization}.
\newblock Cambridge university press, 2004.

\bibitem[Brown and Sandholm(2018)]{Brown18:Superhuman}
Noam Brown and Tuomas Sandholm.
\newblock Superhuman ai for heads-up no-limit poker: Libratus beats top
  professionals.
\newblock \emph{Science}, 359\penalty0 (6374):\penalty0 418--424, 2018.
\newblock \doi{10.1126/science.aao1733}.

\bibitem[Brown and Sandholm(2019)]{Brown19:Superhuman}
Noam Brown and Tuomas Sandholm.
\newblock Superhuman ai for multiplayer poker.
\newblock \emph{Science}, 365\penalty0 (6456):\penalty0 885--890, 2019.
\newblock \doi{10.1126/science.aay2400}.

\bibitem[Brown et~al.(2020)Brown, Bakhtin, Lerer, and Gong]{Brown20:Combining}
Noam Brown, Anton Bakhtin, Adam Lerer, and Qucheng Gong.
\newblock Combining deep reinforcement learning and search for
  imperfect-information games.
\newblock In \emph{Advances in Neural Information Processing Systems 33: Annual
  Conference on Neural Information Processing Systems 2020, NeurIPS 2020},
  2020.

\bibitem[Carminati et~al.(2022)Carminati, Cacciamani, Ciccone, and
  Gatti]{Carminati22:A}
Luca Carminati, Federico Cacciamani, Marco Ciccone, and Nicola Gatti.
\newblock A marriage between adversarial team games and 2-player games:
  Enabling abstractions, no-regret learning, and subgame solving.
\newblock In \emph{International Conference on Machine Learning, {ICML} 2022},
  volume 162 of \emph{Proceedings of Machine Learning Research}, pages
  2638--2657. {PMLR}, 2022.

\bibitem[Celli and Gatti(2018)]{celli2018computational}
Andrea Celli and Nicola Gatti.
\newblock Computational results for extensive-form adversarial team games.
\newblock In \emph{Proceedings of the AAAI Conference on Artificial
  Intelligence}, volume~32, 2018.

\bibitem[Cen et~al.(2021)Cen, Wei, and Chi]{Cen21:Fast}
Shicong Cen, Yuting Wei, and Yuejie Chi.
\newblock Fast policy extragradient methods for competitive games with entropy
  regularization.
\newblock In \emph{Advances in Neural Information Processing Systems 34: Annual
  Conference on Neural Information Processing Systems 2021, NeurIPS 2021},
  pages 27952--27964, 2021.

\bibitem[Chen et~al.(2022)Chen, Zhang, and Doan]{Chen22:Convergence}
Dingyang Chen, Qi~Zhang, and Thinh~T. Doan.
\newblock Convergence and price of anarchy guarantees of the softmax policy
  gradient in markov potential games.
\newblock In \emph{Decision Awareness in Reinforcement Learning Workshop at
  ICML 2022}, 2022.

\bibitem[Chen et~al.(2009)Chen, Deng, and Teng]{Chen09:Settling}
Xi~Chen, Xiaotie Deng, and Shang{-}Hua Teng.
\newblock Settling the complexity of computing two-player nash equilibria.
\newblock \emph{J. {ACM}}, 56\penalty0 (3):\penalty0 14:1--14:57, 2009.
\newblock \doi{10.1145/1516512.1516516}.

\bibitem[Chu and Halpern(2001)]{Chu01:On}
Francis~C. Chu and Joseph~Y. Halpern.
\newblock On the np-completeness of finding an optimal strategy in games with
  common payoffs.
\newblock \emph{Int. J. Game Theory}, 30\penalty0 (1):\penalty0 99--106, 2001.
\newblock \doi{10.1007/s001820100066}.

\bibitem[Claus and Boutilier(1998)]{Claus98:The}
Caroline Claus and Craig Boutilier.
\newblock The dynamics of reinforcement learning in cooperative multiagent
  systems.
\newblock In \emph{Proceedings of the Fifteenth National Conference on
  Artificial Intelligence and Tenth Innovative Applications of Artificial
  Intelligence Conference, {AAAI} 98}, pages 746--752. {AAAI} Press / The {MIT}
  Press, 1998.

\bibitem[Condon(1993)]{Condon93:On}
Anne Condon.
\newblock On algorithms for simple stochastic games.
\newblock In \emph{Advances in Computational Complexity Theory, volume 13 of
  DIMACS Series in Discrete Mathematics and Theoretical Computer Science},
  pages 51--73. American Mathematical Society, 1993.

\bibitem[Daskalakis et~al.(2006)Daskalakis, Fabrikant, and
  Papadimitriou]{Daskalakis06:The}
Constantinos Daskalakis, Alex Fabrikant, and Christos~H. Papadimitriou.
\newblock The game world is flat: The complexity of nash equilibria in succinct
  games.
\newblock In \emph{Automata, Languages and Programming, 33rd International
  Colloquium, {ICALP} 2006}, volume 4051 of \emph{Lecture Notes in Computer
  Science}, pages 513--524. Springer, 2006.
\newblock \doi{10.1007/11786986\_45}.

\bibitem[Daskalakis et~al.(2009)Daskalakis, Goldberg, and
  Papadimitriou]{Daskalakis09:The}
Constantinos Daskalakis, Paul~W. Goldberg, and Christos~H. Papadimitriou.
\newblock The complexity of computing a nash equilibrium.
\newblock \emph{{SIAM} J. Comput.}, 39\penalty0 (1):\penalty0 195--259, 2009.
\newblock \doi{10.1137/070699652}.

\bibitem[Daskalakis et~al.(2020)Daskalakis, Foster, and
  Golowich]{daskalakis2020independent}
Constantinos Daskalakis, Dylan~J Foster, and Noah Golowich.
\newblock Independent policy gradient methods for competitive reinforcement
  learning.
\newblock \emph{Advances in neural information processing systems},
  33:\penalty0 5527--5540, 2020.

\bibitem[Daskalakis et~al.(2022)Daskalakis, Golowich, and
  Zhang]{Daskalakis22:The}
Constantinos Daskalakis, Noah Golowich, and Kaiqing Zhang.
\newblock The complexity of markov equilibrium in stochastic games.
\newblock \emph{CoRR}, abs/2204.03991, 2022.
\newblock \doi{10.48550/arXiv.2204.03991}.

\bibitem[Davis and Drusvyatskiy(2019)]{davis2019stochastic}
Damek Davis and Dmitriy Drusvyatskiy.
\newblock Stochastic model-based minimization of weakly convex functions.
\newblock \emph{SIAM Journal on Optimization}, 29\penalty0 (1):\penalty0
  207--239, 2019.

\bibitem[Ding et~al.(2022)Ding, Wei, Zhang, and
  Jovanovi{\'c}]{ding2022independent}
Dongsheng Ding, Chen-Yu Wei, Kaiqing Zhang, and Mihailo~R Jovanovi{\'c}.
\newblock Independent policy gradient for large-scale markov potential games:
  Sharper rates, function approximation, and game-agnostic convergence.
\newblock \emph{arXiv preprint arXiv:2202.04129}, 2022.

\bibitem[Drusvyatskiy and Lewis(2013)]{drusvyatskiy2013semi}
Dmitriy Drusvyatskiy and Adrian~S Lewis.
\newblock Semi-algebraic functions have small subdifferentials.
\newblock \emph{Mathematical Programming}, 140\penalty0 (1):\penalty0 5--29,
  2013.

\bibitem[Drusvyatskiy and Paquette(2019)]{drusvyatskiy2019efficiency}
Dmitriy Drusvyatskiy and Courtney Paquette.
\newblock Efficiency of minimizing compositions of convex functions and smooth
  maps.
\newblock \emph{Mathematical Programming}, 178\penalty0 (1):\penalty0 503--558,
  2019.

\bibitem[Drusvyatskiy et~al.(2015)Drusvyatskiy, Ioffe, and
  Lewis]{drusvyatskiy2015clarke}
Dmitriy Drusvyatskiy, Alexander~D Ioffe, and Adrian~S Lewis.
\newblock Clarke subgradients for directionally lipschitzian stratifiable
  functions.
\newblock \emph{Mathematics of Operations Research}, 40\penalty0 (2):\penalty0
  328--349, 2015.

\bibitem[Emmons et~al.(2022)Emmons, Oesterheld, Critch, Conitzer, and
  Russell]{Emmons22:For}
Scott Emmons, Caspar Oesterheld, Andrew Critch, Vincent Conitzer, and Stuart
  Russell.
\newblock For learning in symmetric teams, local optima are global nash
  equilibria.
\newblock In \emph{International Conference on Machine Learning, {ICML} 2022},
  volume 162 of \emph{Proceedings of Machine Learning Research}, pages
  5924--5943. {PMLR}, 2022.

\bibitem[Etessami and Yannakakis(2010)]{Etessami10:On}
Kousha Etessami and Mihalis Yannakakis.
\newblock On the complexity of nash equilibria and other fixed points.
\newblock \emph{{SIAM} J. Comput.}, 39\penalty0 (6):\penalty0 2531--2597, 2010.
\newblock \doi{10.1137/080720826}.

\bibitem[Farina et~al.(2018)Farina, Celli, Gatti, and Sandholm]{Farina18:Ex}
Gabriele Farina, Andrea Celli, Nicola Gatti, and Tuomas Sandholm.
\newblock Ex ante coordination and collusion in zero-sum multi-player
  extensive-form games.
\newblock In \emph{Advances in Neural Information Processing Systems 31: Annual
  Conference on Neural Information Processing Systems 2018, NeurIPS 2018},
  pages 9661--9671, 2018.

\bibitem[Filar and Vrieze(2012)]{filar2012competitive}
Jerzy Filar and Koos Vrieze.
\newblock \emph{Competitive Markov decision processes}.
\newblock Springer Science \& Business Media, 2012.

\bibitem[Fink(1964)]{Fink64:Equilibrium}
A.~M. Fink.
\newblock {Equilibrium in a stochastic $n$-person game}.
\newblock \emph{Journal of Science of the Hiroshima University, Series A-I
  (Mathematics)}, 28\penalty0 (1):\penalty0 89 -- 93, 1964.
\newblock \doi{10.32917/hmj/1206139508}.

\bibitem[Fox et~al.(2022)Fox, McAleer, Overman, and
  Panageas]{Fox22:Independent}
Roy Fox, Stephen~M. McAleer, Will Overman, and Ioannis Panageas.
\newblock Independent natural policy gradient always converges in markov
  potential games.
\newblock In \emph{International Conference on Artificial Intelligence and
  Statistics, {AISTATS} 2022}, volume 151 of \emph{Proceedings of Machine
  Learning Research}, pages 4414--4425. {PMLR}, 2022.

\bibitem[Giorgi et~al.(2018)]{giorgi2018guided}
Giorgio Giorgi et~al.
\newblock A guided tour in constraint qualifications for nonlinear programming
  under differentiability assumptions.
\newblock Technical report, University of Pavia, Department of Economics and
  Management, 2018.

\bibitem[Groves(1973)]{Groves73:Incentives}
Theodore Groves.
\newblock Incentives in teams.
\newblock \emph{Econometrica}, 41\penalty0 (4):\penalty0 617--631, 1973.

\bibitem[Hansen et~al.(2008)Hansen, Hansen, Miltersen, and
  S{\o}rensen]{hansen2008approximability}
Kristoffer~Arnsfelt Hansen, Thomas~Dueholm Hansen, Peter~Bro Miltersen, and
  Troels~Bjerre S{\o}rensen.
\newblock Approximability and parameterized complexity of minmax values.
\newblock In \emph{International Workshop on Internet and Network Economics},
  pages 684--695. Springer, 2008.

\bibitem[Hart and Mas-Colell(2003)]{hart2003uncoupled}
Sergiu Hart and Andreu Mas-Colell.
\newblock Uncoupled dynamics do not lead to nash equilibrium.
\newblock \emph{American Economic Review}, 93\penalty0 (5):\penalty0
  1830--1836, 2003.

\bibitem[Ho and Chu(1972)]{Ho72:Team}
Yu-Chi Ho and K'ai-Ching Chu.
\newblock Team decision theory and information structures in optimal control
  problems--part i.
\newblock \emph{IEEE Transactions on Automatic Control}, 17\penalty0
  (1):\penalty0 15--22, 1972.
\newblock \doi{10.1109/TAC.1972.1099850}.

\bibitem[Hoeffding and Wolfowitz(1958)]{Hoeffding58:Distinguishability}
Wassily Hoeffding and J.~Wolfowitz.
\newblock {Distinguishability of Sets of Distributions}.
\newblock \emph{The Annals of Mathematical Statistics}, 29\penalty0
  (3):\penalty0 700 -- 718, 1958.

\bibitem[Hu and Wellman(1998)]{Hu98:Multiagent}
Junling Hu and Michael~P. Wellman.
\newblock Multiagent reinforcement learning: Theoretical framework and an
  algorithm.
\newblock In \emph{Proceedings of the Fifteenth International Conference on
  Machine Learning}, ICML '98, page 242–250, San Francisco, CA, USA, 1998.
  Morgan Kaufmann Publishers Inc.

\bibitem[Hu and Wellman(2003)]{Hu03:Nash}
Junling Hu and Michael~P. Wellman.
\newblock Nash q-learning for general-sum stochastic games.
\newblock \emph{J. Mach. Learn. Res.}, 4:\penalty0 1039--1069, 2003.

\bibitem[Huang and von Stengel(2008)]{Huang08:Computing}
Wan Huang and Bernhard von Stengel.
\newblock Computing an extensive-form correlated equilibrium in polynomial
  time.
\newblock In \emph{Internet and Network Economics, 4th International Workshop,
  {WINE} 2008}, volume 5385 of \emph{Lecture Notes in Computer Science}, pages
  506--513. Springer, 2008.
\newblock \doi{10.1007/978-3-540-92185-1\_56}.

\bibitem[Jin et~al.(2018)Jin, Allen{-}Zhu, Bubeck, and Jordan]{Jin18:Is}
Chi Jin, Zeyuan Allen{-}Zhu, S{\'{e}}bastien Bubeck, and Michael~I. Jordan.
\newblock Is q-learning provably efficient?
\newblock In \emph{Advances in Neural Information Processing Systems 31: Annual
  Conference on Neural Information Processing Systems 2018, NeurIPS 2018},
  pages 4868--4878, 2018.

\bibitem[Jin et~al.(2020)Jin, Netrapalli, and Jordan]{Jin20:What}
Chi Jin, Praneeth Netrapalli, and Michael~I. Jordan.
\newblock What is local optimality in nonconvex-nonconcave minimax
  optimization?
\newblock In \emph{Proceedings of the 37th International Conference on Machine
  Learning, {ICML} 2020}, volume 119 of \emph{Proceedings of Machine Learning
  Research}, pages 4880--4889. {PMLR}, 2020.

\bibitem[Jin et~al.(2021)Jin, Liu, Wang, and Yu]{jin2021v}
Chi Jin, Qinghua Liu, Yuanhao Wang, and Tiancheng Yu.
\newblock V-learning--a simple, efficient, decentralized algorithm for
  multiagent rl.
\newblock \emph{arXiv preprint arXiv:2110.14555}, 2021.

\bibitem[Jin et~al.(2022)Jin, Muthukumar, and Sidford]{jin2022complexity}
Yujia Jin, Vidya Muthukumar, and Aaron Sidford.
\newblock The complexity of infinite-horizon general-sum stochastic games.
\newblock \emph{arXiv preprint arXiv:2204.04186}, 2022.

\bibitem[Kalogiannis et~al.(2021)Kalogiannis, Vlatakis-Gkaragkounis, and
  Panageas]{kalogiannis2021teamwork}
Fivos Kalogiannis, Emmanouil-Vasileios Vlatakis-Gkaragkounis, and Ioannis
  Panageas.
\newblock Teamwork makes von neumann work: Min-max optimization in two-team
  zero-sum games.
\newblock \emph{arXiv preprint arXiv:2111.04178}, 2021.

\bibitem[Leonardos et~al.(2021)Leonardos, Overman, Panageas, and
  Piliouras]{leonardos2021global}
Stefanos Leonardos, Will Overman, Ioannis Panageas, and Georgios Piliouras.
\newblock Global convergence of multi-agent policy gradient in markov potential
  games.
\newblock \emph{arXiv preprint arXiv:2106.01969}, 2021.

\bibitem[Li et~al.(2021)Li, Wang, and Yang]{Li21:Settling}
Yuanzhi Li, Ruosong Wang, and Lin~F. Yang.
\newblock Settling the horizon-dependence of sample complexity in reinforcement
  learning.
\newblock In \emph{62nd {IEEE} Annual Symposium on Foundations of Computer
  Science, {FOCS} 2021}, pages 965--976. {IEEE}, 2021.
\newblock \doi{10.1109/FOCS52979.2021.00097}.

\bibitem[Lin et~al.(2020)Lin, Jin, and Jordan]{lin2020gradient}
Tianyi Lin, Chi Jin, and Michael Jordan.
\newblock On gradient descent ascent for nonconvex-concave minimax problems.
\newblock In \emph{International Conference on Machine Learning}, pages
  6083--6093. PMLR, 2020.

\bibitem[Luo et~al.(2019)Luo, Xu, Li, Tian, Darrell, and Ma]{Luo19:Algorithmic}
Yuping Luo, Huazhe Xu, Yuanzhi Li, Yuandong Tian, Trevor Darrell, and Tengyu
  Ma.
\newblock Algorithmic framework for model-based deep reinforcement learning
  with theoretical guarantees.
\newblock In \emph{7th International Conference on Learning Representations,
  {ICLR} 2019}. OpenReview.net, 2019.

\bibitem[Macua et~al.(2018)Macua, Zazo, and Zazo]{macua2018learning}
Sergio~Valcarcel Macua, Javier Zazo, and Santiago Zazo.
\newblock Learning parametric closed-loop policies for markov potential games.
\newblock \emph{arXiv preprint arXiv:1802.00899}, 2018.

\bibitem[Maheshwari et~al.(2022)Maheshwari, Wu, Pai, and
  Sastry]{Maheshari22:Independent}
Chinmay Maheshwari, Manxi Wu, Druv Pai, and Shankar Sastry.
\newblock Independent and decentralized learning in markov potential games,
  2022.

\bibitem[Mangasarian(1994)]{mangasarian1994nonlinear}
Olvi~L Mangasarian.
\newblock \emph{Nonlinear programming}.
\newblock SIAM, 1994.

\bibitem[Marden(2012)]{marden2012state}
Jason~R Marden.
\newblock State based potential games.
\newblock \emph{Automatica}, 48\penalty0 (12):\penalty0 3075--3088, 2012.

\bibitem[Marschak(1955)]{Marschak55:Elements}
J.~Marschak.
\newblock Elements for a theory of teams.
\newblock \emph{Management Science}, 1\penalty0 (2):\penalty0 127--137, 1955.

\bibitem[Monderer and Shapley(1996)]{monderer1996potential}
Dov Monderer and Lloyd~S Shapley.
\newblock Potential games.
\newblock \emph{Games and economic behavior}, 14\penalty0 (1):\penalty0
  124--143, 1996.

\bibitem[Moravčík et~al.(2017)Moravčík, Schmid, Burch, Lisý, Morrill,
  Bard, Davis, Waugh, Johanson, and Bowling]{Mora17:DeepStack}
Matej Moravčík, Martin Schmid, Neil Burch, Viliam Lisý, Dustin Morrill,
  Nolan Bard, Trevor Davis, Kevin Waugh, Michael Johanson, and Michael Bowling.
\newblock Deepstack: Expert-level artificial intelligence in heads-up no-limit
  poker.
\newblock \emph{Science}, 356\penalty0 (6337):\penalty0 508--513, 2017.
\newblock \doi{10.1126/science.aam6960}.

\bibitem[Moulin and Vial(1978)]{Moulin78:Strategically}
H.~Moulin and J.-P. Vial.
\newblock Strategically zero-sum games: The class of games whose completely
  mixed equilibria cannot be improved upon.
\newblock \emph{International Journal of Game Theory}, 7\penalty0
  (3-4):\penalty0 201--221, 1978.

\bibitem[Murty and Kabadi(1985)]{murty1985some}
Katta~G Murty and Santosh~N Kabadi.
\newblock Some np-complete problems in quadratic and nonlinear programming.
\newblock Technical report, 1985.

\bibitem[Nash(1951)]{nash1951non}
John Nash.
\newblock Non-cooperative games.
\newblock \emph{Annals of mathematics}, pages 286--295, 1951.

\bibitem[Papadimitriou and Roughgarden(2008)]{Papadimitriou08:Computing}
Christos~H. Papadimitriou and Tim Roughgarden.
\newblock Computing correlated equilibria in multi-player games.
\newblock \emph{J. {ACM}}, 55\penalty0 (3):\penalty0 14:1--14:29, 2008.
\newblock \doi{10.1145/1379759.1379762}.

\bibitem[Perolat et~al.(2022)Perolat, de~Vylder, Hennes, Tarassov, Strub,
  de~Boer, Muller, Connor, Burch, Anthony, McAleer, Elie, Cen, Wang, Gruslys,
  Malysheva, Khan, Ozair, Timbers, Pohlen, Eccles, Rowland, Lanctot, Lespiau,
  Piot, Omidshafiei, Lockhart, Sifre, Beauguerlange, Munos, Silver, Singh,
  Hassabis, and Tuyls]{Perolat22:Mastering}
Julien Perolat, Bart de~Vylder, Daniel Hennes, Eugene Tarassov, Florian Strub,
  Vincent de~Boer, Paul Muller, Jerome~T. Connor, Neil Burch, Thomas Anthony,
  Stephen McAleer, Romuald Elie, Sarah~H. Cen, Zhe Wang, Audrunas Gruslys,
  Aleksandra Malysheva, Mina Khan, Sherjil Ozair, Finbarr Timbers, Toby Pohlen,
  Tom Eccles, Mark Rowland, Marc Lanctot, Jean-Baptiste Lespiau, Bilal Piot,
  Shayegan Omidshafiei, Edward Lockhart, Laurent Sifre, Nathalie Beauguerlange,
  Remi Munos, David Silver, Satinder Singh, Demis Hassabis, and Karl Tuyls.
\newblock Mastering the game of stratego with model-free multiagent
  reinforcement learning, 2022.

\bibitem[Piccione and Rubinstein(1997)]{Piccione97:On}
Michele Piccione and Ariel Rubinstein.
\newblock On the interpretation of decision problems with imperfect recall.
\newblock \emph{Games and Economic Behavior}, 20\penalty0 (1):\penalty0 3--24,
  1997.
\newblock \doi{https://doi.org/10.1006/game.1997.0536}.

\bibitem[Puterman(2014)]{puterman2014markov}
Martin~L Puterman.
\newblock \emph{Markov decision processes: discrete stochastic dynamic
  programming}.
\newblock John Wiley \& Sons, 2014.

\bibitem[Radner(1962)]{Radner62:Team}
R.~Radner.
\newblock {Team Decision Problems}.
\newblock \emph{The Annals of Mathematical Statistics}, 33\penalty0
  (3):\penalty0 857 -- 881, 1962.
\newblock \doi{10.1214/aoms/1177704455}.

\bibitem[Rockafellar(1970)]{rockafellar1970convex}
R~Tyrrell Rockafellar.
\newblock \emph{Convex analysis}, volume~18.
\newblock Princeton university press, 1970.

\bibitem[Rubinstein(2017)]{Rubinstein17:Settling}
Aviad Rubinstein.
\newblock Settling the complexity of computing approximate two-player nash
  equilibria.
\newblock \emph{SIGecom Exch.}, 15\penalty0 (2):\penalty0 45--49, 2017.
\newblock \doi{10.1145/3055589.3055596}.

\bibitem[Sayin et~al.(2021)Sayin, Zhang, Leslie, Basar, and
  Ozdaglar]{Sayin21:Decentralized}
Muhammed Sayin, Kaiqing Zhang, David Leslie, Tamer Basar, and Asuman Ozdaglar.
\newblock Decentralized q-learning in zero-sum markov games.
\newblock In M.~Ranzato, A.~Beygelzimer, Y.~Dauphin, P.S. Liang, and J.~Wortman
  Vaughan, editors, \emph{Advances in Neural Information Processing Systems},
  volume~34, pages 18320--18334. Curran Associates, Inc., 2021.

\bibitem[Sayin et~al.(2020)Sayin, Parise, and Ozdaglar]{Sayin20:Fictitious}
Muhammed~O Sayin, Francesca Parise, and Asuman Ozdaglar.
\newblock Fictitious play in zero-sum stochastic games.
\newblock \emph{arXiv preprint arXiv:2010.04223}, 2020.

\bibitem[Schulman et~al.(2015)Schulman, Levine, Abbeel, Jordan, and
  Moritz]{Schulman15:Trust}
John Schulman, Sergey Levine, Pieter Abbeel, Michael~I. Jordan, and Philipp
  Moritz.
\newblock Trust region policy optimization.
\newblock In \emph{Proceedings of the 32nd International Conference on Machine
  Learning, {ICML} 2015}, volume~37 of \emph{{JMLR} Workshop and Conference
  Proceedings}, pages 1889--1897. JMLR.org, 2015.

\bibitem[Schulman and Vazirani(2017)]{schulman2017duality}
Leonard Schulman and Umesh~V Vazirani.
\newblock The duality gap for two-team zero-sum games.
\newblock In \emph{8th Innovations in Theoretical Computer Science Conference
  (ITCS 2017)}. Schloss Dagstuhl-Leibniz-Zentrum fuer Informatik, 2017.

\bibitem[Shapley(1953)]{shapley1953stochastic}
Lloyd~S Shapley.
\newblock Stochastic games.
\newblock \emph{Proceedings of the national academy of sciences}, 39\penalty0
  (10):\penalty0 1095--1100, 1953.

\bibitem[Sidford et~al.(2018)Sidford, Wang, Wu, Yang, and Ye]{Sidford18:Near}
Aaron Sidford, Mengdi Wang, Xian Wu, Lin Yang, and Yinyu Ye.
\newblock Near-optimal time and sample complexities for solving markov decision
  processes with a generative model.
\newblock In \emph{Advances in Neural Information Processing Systems 31: Annual
  Conference on Neural Information Processing Systems 2018}, pages 5192--5202,
  2018.

\bibitem[Silver et~al.(2017)Silver, Schrittwieser, Simonyan, Antonoglou, Huang,
  Guez, Hubert, Baker, Lai, Bolton, Chen, Lillicrap, Hui, Sifre, van~den
  Driessche, Graepel, and Hassabis]{Silver17:Mastering}
David Silver, Julian Schrittwieser, Karen Simonyan, Ioannis Antonoglou, Aja
  Huang, Arthur Guez, Thomas Hubert, Lucas Baker, Matthew Lai, Adrian Bolton,
  Yutian Chen, Timothy~P. Lillicrap, Fan Hui, Laurent Sifre, George van~den
  Driessche, Thore Graepel, and Demis Hassabis.
\newblock Mastering the game of go without human knowledge.
\newblock \emph{Nat.}, 550\penalty0 (7676):\penalty0 354--359, 2017.
\newblock \doi{10.1038/nature24270}.

\bibitem[Vinyals et~al.(2019)Vinyals, Babuschkin, Czarnecki, Mathieu, Dudzik,
  Chung, Choi, Powell, Ewalds, Georgiev, Oh, Horgan, Kroiss, Danihelka, Huang,
  Sifre, Cai, Agapiou, Jaderberg, Vezhnevets, Leblond, Pohlen, Dalibard,
  Budden, Sulsky, Molloy, Paine, G{\"{u}}l{\c{c}}ehre, Wang, Pfaff, Wu, Ring,
  Yogatama, W{\"{u}}nsch, McKinney, Smith, Schaul, Lillicrap, Kavukcuoglu,
  Hassabis, Apps, and Silver]{Vinyals19:Grandmaster}
Oriol Vinyals, Igor Babuschkin, Wojciech~M. Czarnecki, Micha{\"{e}}l Mathieu,
  Andrew Dudzik, Junyoung Chung, David~H. Choi, Richard Powell, Timo Ewalds,
  Petko Georgiev, Junhyuk Oh, Dan Horgan, Manuel Kroiss, Ivo Danihelka, Aja
  Huang, Laurent Sifre, Trevor Cai, John~P. Agapiou, Max Jaderberg,
  Alexander~Sasha Vezhnevets, R{\'{e}}mi Leblond, Tobias Pohlen, Valentin
  Dalibard, David Budden, Yury Sulsky, James Molloy, Tom~Le Paine, {\c{C}}aglar
  G{\"{u}}l{\c{c}}ehre, Ziyu Wang, Tobias Pfaff, Yuhuai Wu, Roman Ring, Dani
  Yogatama, Dario W{\"{u}}nsch, Katrina McKinney, Oliver Smith, Tom Schaul,
  Timothy~P. Lillicrap, Koray Kavukcuoglu, Demis Hassabis, Chris Apps, and
  David Silver.
\newblock Grandmaster level in starcraft {II} using multi-agent reinforcement
  learning.
\newblock \emph{Nat.}, 575\penalty0 (7782):\penalty0 350--354, 2019.
\newblock \doi{10.1038/s41586-019-1724-z}.

\bibitem[von Neumann and Morgenstern(2007)]{vonNeumannMorgenstern+2007}
John von Neumann and Oskar Morgenstern.
\newblock \emph{Theory of Games and Economic Behavior (60th Anniversary
  Commemorative Edition)}.
\newblock Princeton University Press, 2007.
\newblock \doi{doi:10.1515/9781400829460}.

\bibitem[{von Stengel} and Koller(1997)]{VonStengel97:Team}
Bernhard {von Stengel} and Daphne Koller.
\newblock Team-maxmin equilibria.
\newblock \emph{Games and Economic Behavior}, 21\penalty0 (1):\penalty0
  309--321, 1997.
\newblock \doi{https://doi.org/10.1006/game.1997.0527}.

\bibitem[Von~Stengel and Koller(1997)]{von1997team}
Bernhard Von~Stengel and Daphne Koller.
\newblock Team-maxmin equilibria.
\newblock \emph{Games and Economic Behavior}, 21\penalty0 (1-2):\penalty0
  309--321, 1997.

\bibitem[Vrieze(1987)]{vrieze1987stochastic}
Okko~Jan Vrieze.
\newblock Stochastic games with finite state and action spaces.
\newblock \emph{CWI tracts}, 1987.

\bibitem[Wang and Sandholm(2002)]{Wang02:Reinforcement}
Xiaofeng Wang and Tuomas Sandholm.
\newblock Reinforcement learning to play an optimal nash equilibrium in team
  markov games.
\newblock In \emph{Advances in Neural Information Processing Systems 15 [Neural
  Information Processing Systems, {NIPS} 2002, December 9-14, 2002}, pages
  1571--1578. {MIT} Press, 2002.

\bibitem[Wei et~al.(2021)Wei, Lee, Zhang, and Luo]{Wei21:Last}
Chen-Yu Wei, Chung-Wei Lee, Mengxiao Zhang, and Haipeng Luo.
\newblock Last-iterate convergence of decentralized optimistic gradient
  descent/ascent in infinite-horizon competitive markov games.
\newblock In Mikhail Belkin and Samory Kpotufe, editors, \emph{Proceedings of
  Thirty Fourth Conference on Learning Theory}, volume 134 of \emph{Proceedings
  of Machine Learning Research}, pages 4259--4299. PMLR, 15--19 Aug 2021.

\bibitem[Xiao(2022)]{xiao2022convergence}
Lin Xiao.
\newblock On the convergence rates of policy gradient methods.
\newblock \emph{arXiv preprint arXiv:2201.07443}, 2022.

\bibitem[Yang and Wang(2020)]{yang2020overview}
Yaodong Yang and Jun Wang.
\newblock An overview of multi-agent reinforcement learning from game
  theoretical perspective.
\newblock \emph{arXiv preprint arXiv:2011.00583}, 2020.

\bibitem[Ye(2011)]{ye2011simplex}
Yinyu Ye.
\newblock The simplex and policy-iteration methods are strongly polynomial for
  the markov decision problem with a fixed discount rate.
\newblock \emph{Mathematics of Operations Research}, 36\penalty0 (4):\penalty0
  593--603, 2011.

\bibitem[Zhang and Sandholm(2021)]{Zhang21:Team}
Brian~Hu Zhang and Tuomas Sandholm.
\newblock Team correlated equilibria in zero-sum extensive-form games via tree
  decompositions.
\newblock \emph{CoRR}, abs/2109.05284, 2021.

\bibitem[Zhang et~al.(2022)Zhang, Farina, and Sandholm]{Zhang22:Team}
Brian~Hu Zhang, Gabriele Farina, and Tuomas Sandholm.
\newblock Team belief {DAG} form: {A} concise representation for
  team-correlated game-theoretic decision making.
\newblock \emph{CoRR}, abs/2202.00789, 2022.

\bibitem[Zhang et~al.(2021{\natexlab{a}})Zhang, Yang, and
  Ba{\c{s}}ar]{zhang2021multi}
Kaiqing Zhang, Zhuoran Yang, and Tamer Ba{\c{s}}ar.
\newblock Multi-agent reinforcement learning: A selective overview of theories
  and algorithms.
\newblock \emph{Handbook of Reinforcement Learning and Control}, pages
  321--384, 2021{\natexlab{a}}.

\bibitem[Zhang et~al.(2021{\natexlab{b}})Zhang, Ren, and Li]{zhang2021gradient}
Runyu Zhang, Zhaolin Ren, and Na~Li.
\newblock Gradient play in stochastic games: stationary points, convergence,
  and sample complexity.
\newblock \emph{arXiv preprint arXiv:2106.00198}, 2021{\natexlab{b}}.

\bibitem[Zhang and An(2020{\natexlab{a}})]{Converging20:Zhang}
Youzhi Zhang and Bo~An.
\newblock Converging to team-maxmin equilibria in zero-sum multiplayer games.
\newblock In \emph{Proceedings of the 37th International Conference on Machine
  Learning, {ICML} 2020}, volume 119, pages 11033--11043. {PMLR},
  2020{\natexlab{a}}.

\bibitem[Zhang and An(2020{\natexlab{b}})]{Zhang20:Computing}
Youzhi Zhang and Bo~An.
\newblock Computing team-maxmin equilibria in zero-sum multiplayer
  extensive-form games.
\newblock In \emph{The Thirty-Fourth {AAAI} Conference on Artificial
  Intelligence, {AAAI} 2020}, pages 2318--2325. {AAAI} Press,
  2020{\natexlab{b}}.

\bibitem[Zhang and An(2020{\natexlab{c}})]{zhang2020}
Youzhi Zhang and Bo~An.
\newblock Converging to team-maxmin equilibria in zero-sum multiplayer games.
\newblock In \emph{Proceedings of the 37th International Conference on Machine
  Learning, {ICML} 2020}, volume 119 of \emph{Proceedings of Machine Learning
  Research}, pages 11033--11043. {PMLR}, 2020{\natexlab{c}}.

\end{thebibliography}

\newpage

\appendix


\newcommand{\matlip}{{\textstyle\sqrt{\sum_{k=1}^n{A_k}}}}
\newcommand{\rewlip}{{\textstyle\sqrt{\sum_{k=1}^n{A_k}}}}
\newcommand{\maxrew}{M_r}
\newcommand{\minrew}{m_r}
\newcommand{\distht}{{\epsilon}}

\section{Additional Preliminaries}

\subsection{Background on Nonlinear Programming}
    In this subsection, we provide additional background on the theory of nonlinear programming~\citep{mangasarian1994nonlinear}.
%
For the purposes of our work,
given a constrained minimization problem, we are interested in conditions that establish the existence of (nonnegative) Lagrange multipliers that satisfy the \emph{Kaursh-Kuhn-Tucker (KKT)} conditions. In the unconstrained case, this property corresponds to the nullification of the gradient at point of a local optimum (Fermat's Theorem). In the constrained case  however, further regularity conditions have to be met with respect to the underlying feasible set; this is formalized via so-called \emph{constraint qualifications}~\citep{bazaraa1972constraint,giorgi2018guided}. For our purposes, we will use the so-called \emph{Arrow-Hurwicz-Uzawa constraint qualification}~\citep{arrow1961constraint,mangasarian1994nonlinear} (see \Cref{theorem:AHU}) to show that the set of (local) optima of a particular constrained optimization problem is contained within the set of KKT points (\Cref{lemma:AHU}).
    
We first define the \emph{nonlinear program} that encodes a \emph{constrained minimization problem}. Then, we state the Karush-Kuhn-Tucker optimality conditions for a given feasible point of the problem.
    
    \paragraph{Constrained optimization problems.}
    In a \emph{constrained optimization problem} on a Euclidean space $\R^d$, for $d \in \mathbb{N}$, we are interested in optimizing a given function $f: \R^d \to \R$ over a given nonempty set $\mathcal{D}\subseteq \R^d$. The function $f$ is called
    the \emph{objective function}, and the set $\mathcal{D}$ the \emph{constraint} or \emph{feasibility} set. Notationally, we will represent such problems by
    \begin{center}
        ``Minimize $f(\vz) $ subject to $\vz\in \mathcal{D}$'', or more compactly as ``$\min\left\{f(\vz)\medskip\ |\medskip\ \vz\in\calD\right\}$.''
    \end{center}
    A global solution to such a problem is a point $\vz^*$ in $\calD$ such that $f(\vz^*) \leq f(\vz)$ for all $\vz \in \calD$; the existence of such a solution is typically guaranteed by Weierstrass' theorem.
    %
    Relaxing the requirement of global optimality, below we clarify the meaning of a local minimum with following definition:
    \begin{definition}[Local minimum]
        Let a function $f : \R^d \rightarrow \R$ and a point $\vz_0$. A \emph{constrained local minimum} occurs at $\vz_0\in\calD$, with $\calD \subseteq \R^d$, if there exists $\delta > 0$ such that
        \begin{equation}
            f(\vz_0) \leq f(\vz), ~ \forall \vz \in \{ B(\vz_0,\delta) \} \cap \calZ,
        \end{equation}
        where $B(\vz_0,\delta) $ denotes the set of all points belonging to the open ball with radius $\delta$ and center at $\vz_0$.
    \end{definition}
    
    We now turn to study constrained optimization problems with feasible sets defined by inequality constraints. Namely, the constraint set will have the form 
    \begin{equation}
        \label{eq:setD}
      \calD = \{\vz \in U \ |\ g_i(\vz) \leq 0, \forall i=1,\ldots, m\},  
    \end{equation}
    where $U \subseteq \R^d$ is an open set in $\R^d$, and $m$ is the number of 
    the necessary inequalities to describe the feasible set $\calD$. The minimization problem can now be written as follows.
    \begin{alignat}{2}
        \begin{array}{ll}
            \min & f(\vz)  \\
            \mathrm{s.t.} & g_i(\vz) \leq 0, ~ \forall i \in [m]. 
        \end{array}
        \label{eq:minimization-problem}
        \tag{MP}
    \end{alignat}
    
    In the sequel, we say
    that an inequality constraint $ g_i(\vz) \leq 0 $ is \emph{active} at a point $\vz^*$ if the constraint holds as an equality at $\vz^*$, that is, we have $g_i(\vz^*)=0$; otherwise, it is called \emph{inactive}. Below we introduce the KKT conditions (\emph{e.g.}, see \citep[Chapter 5.5.3]{boyd2004convex}). 
    
    \begin{definition}[Karush-Kuhn-Tucker Conditions]
    \label{def:KKT}
    Suppose that $f: U \to \R$ and $g_i: U \to \R$ are differentiable functions, for any $i=1,\ldots,m$. Further, let $  \calL(\vz,\vlambda) \defeq f(\vz) + \sum_{i = 1}^m \lambda_i g_i(\vx)$ be the associated Lagrangian function. We say that a point $(\vec{z}^*, \lambda^*)$ satisfies the KKT conditions if  
        \begin{equation}
   \label{eq:KKT}
        \begin{aligned}
& &\lambda_i^* g_i(\vz^*) = 0, \quad \forall i = 1, \dots, m; & && & \text{(Complementary Slackness)}\\
& & g_i(\vz^*) \leq 0, \quad \forall i = 1, \dots, m; & && & \text{(Primal Feasibility)} \\
& & \lambda_i^* \geq 0, \quad \forall i = 1, \dots, m; \text{ and} & && & \text{(Dual Feasibility)} \\
& &\nabla_{\vz} \calL(\vz^*,\vlambda^*)=\nabla_{\vz} f(\vz^*) + \sum_{i=1}^m \lambda_i^*\nabla_{\vz} g_i(\vz^*) = \vzero. & && & \text{(First-Order Stationarity)}\\
        \end{aligned}
         \tag{KKT}
    \end{equation}
    \end{definition}

    In general, while these conditions are necessary for optimality, they are not necessarily sufficient. We also remark that for the unconstrained case, \emph{i.e.}, $\{g_i(\vz)\equiv 0\}$, the \eqref{eq:KKT} conditions generalize the necessary condition of a gradient equal to zero.


    \paragraph{The Arrow-Hurwicz-Uzawa constraint qualification.} The establish the KKT conditions under nonconvex constraints, a number of different constraint qualifications have been developed~\citep{bazaraa1972constraint,giorgi2018guided}. We recall that constraint qualifications ensure that all the local minimizers acquire a set of (nonnegative) Lagrange multipliers that (jointly) satisfy the KKT conditions (\Cref{def:KKT}). For our purposes, we will use the Arrow-Hurwicz-Uzawa constraint qualification (henceforth AHU-CQ for brevity), which is recalled below (see \citep[Ch. 7]{mangasarian1994nonlinear}). 
    
    \begin{theorem}[AHU-CQ~\citep{mangasarian1994nonlinear}]
    \label{theorem:AHU}
        Consider a constrained minimization problem with a feasibility set $\calD$ given in \eqref{eq:setD}. Further, let $\vz_0$ be a feasible point and let $A(\vz_0)$ denote the set of \emph{active constraints} at $\vz$.  We differentiate between concave $A' (\vz_0)$ and nonconcave $A''(\vz_0)$ active constraints, so that $A (\vz_0) = A' (\vz_0) \cup A'' (\vz_0)$. If there exists a vector $\vw \in \R^d$ such that
        \begin{equation}
        \label{eq:CQ}
            \begin{cases}
                    \vw^\top \nabla_{\vz} g_i(\vz_0) \geq 0, & \forall i \in A'; \text{ and}  \\
                    \vw^\top \nabla_{\vz} g_i(\vz_0) > 0, & \forall i \in A'',
            \end{cases}
        \end{equation}
        then, the Arrow-Hurwicz-Uzawa constraint qualification at point $\vz_0$ is satisfied.
    \end{theorem}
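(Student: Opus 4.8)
The statement above gives a checkable sufficient condition for the Arrow-Hurwicz-Uzawa constraint qualification to hold at the feasible point $\vz_0$, namely the existence of a single direction $\vw$ obeying the sign pattern \eqref{eq:CQ}. Since ``the qualification holds at $\vz_0$'' is of use only insofar as it certifies that the \eqref{eq:KKT} conditions are necessary there, I would prove exactly that consequence: \emph{if $\vz_0$ is a local minimizer of \eqref{eq:minimization-problem} and some $\vw$ satisfies \eqref{eq:CQ}, then there exist nonnegative multipliers $\vlambda^\star$ for which $(\vz_0,\vlambda^\star)$ satisfies \eqref{eq:KKT}.} The plan begins with a localization: for each inactive index $i\notin A(\vz_0)$ we have $g_i(\vz_0)<0$, so setting $\lambda_i^\star=0$ disposes of these vacuously (complementary slackness holds and they drop out of first-order stationarity), and it remains only to produce $\lambda_i^\star\ge0$ for $i\in A(\vz_0)=A'(\vz_0)\cup A''(\vz_0)$ with $\nabla f(\vz_0)+\sum_{i\in A(\vz_0)}\lambda_i^\star\nabla g_i(\vz_0)=\vzero$.

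The core step is to show that the \emph{linearized cone} $\calK$ --- the set of directions $\vd$ for which $\nabla g_i(\vz_0)^\top\vd$ obeys the non-strict part of \eqref{eq:CQ} for $i\in A'(\vz_0)$ and the strict part for $i\in A''(\vz_0)$ --- consists of genuinely feasible directions, in the sense that for every $\vd\in\calK$ one has $\vz_0+t\vd\in\calD$ for all sufficiently small $t>0$. I would verify this constraint by constraint. For an active \emph{concave} constraint, the first-order concavity inequality gives $g_i(\vz_0+t\vd)\le g_i(\vz_0)+t\,\nabla g_i(\vz_0)^\top\vd$ for every $t\ge0$, which stays on the feasible side along the whole ray (this also disposes of affine equality constraints, split into two affine inequalities); for an active \emph{nonconcave} constraint, a first-order Taylor expansion gives $g_i(\vz_0+t\vd)=t\,\nabla g_i(\vz_0)^\top\vd+o(t)$, and the \emph{strict} sign in \eqref{eq:CQ} is precisely what keeps this feasible once $t$ is small enough to dominate the $o(t)$ curvature term; for inactive constraints, continuity suffices. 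Note that $\vw\in\calK$, so $\calK\neq\emptyset$, and that no curved feasible arc is ever required --- straight segments do the job, which is the distinctive feature of this particular qualification.

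From here the argument is routine. Local optimality of $\vz_0$ together with feasibility of $\vz_0+t\vd$ for small $t>0$ gives $f(\vz_0+t\vd)\ge f(\vz_0)$; dividing by $t$ and letting $t\to0^+$ yields $\nabla f(\vz_0)^\top\vd\ge0$ for every $\vd\in\calK$. Since $\vd+\epsilon\vw\in\calK$ for every $\epsilon>0$ whenever $\vd$ lies in the closure of $\calK$ (the concave inequalities remain non-strict and the nonconcave ones turn strict), a limiting argument extends this to $\nabla f(\vz_0)^\top\vd\ge0$ for all $\vd$ in that closed cone. Consequently the system ``$\nabla g_i(\vz_0)^\top\vd$ feasible for the closed linearized cone, and $\nabla f(\vz_0)^\top\vd<0$'' has no solution, and Farkas' lemma (a theorem of the alternative) turns this infeasibility directly into nonnegative numbers $\lambda_i^\star$, $i\in A(\vz_0)$, with $\nabla f(\vz_0)+\sum_{i\in A(\vz_0)}\lambda_i^\star\nabla g_i(\vz_0)=\vzero$. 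Combined with the localization step, the resulting $\vlambda^\star\ge\vzero$ satisfies all four \eqref{eq:KKT} conditions at $\vz_0$: dual feasibility and complementary slackness by construction, primal feasibility by hypothesis, and first-order stationarity from Farkas.

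The step I expect to be the main obstacle is the middle one --- certifying that the linearized cone is realized by honest feasible perturbations. This is exactly where the concave/nonconcave split earns its keep: affine and concave active constraints are respected along an entire segment because their first-order model upper-bounds them, whereas a nonconcave active constraint is only safe along a direction that is \emph{strictly} improving to first order, so the step can be shrunk until second-order effects cannot undo it. A secondary technicality is the density/limiting passage from $\calK$ to its closure needed before Farkas' lemma applies; this is precisely why a \emph{single} qualifying direction $\vw$, rather than a whole qualifying cone, suffices. One final bookkeeping point: the signs in \eqref{eq:CQ} are to be read so that moving along $\vw$ points into the feasible region with respect to the nonconcave active constraints --- consistent with \eqref{eq:minimization-problem} after negating the $g_i$ if necessary, and with the Lagrangian $\calL(\vz,\vlambda)=f(\vz)+\sum_i\lambda_i g_i(\vz)$.
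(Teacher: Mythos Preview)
The paper does not prove this theorem at all: it is stated as a known result cited from \citet{mangasarian1994nonlinear}, and the downstream implication (\Cref{cor:ahu-cq}) is likewise only stated. So there is no ``paper's own proof'' to compare against; you have supplied an argument where the paper simply invokes the literature.

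Your sketch is the standard route and is essentially correct. The crucial step---showing that every direction in the linearized cone $\calK$ is realized by a genuine feasible segment---is handled exactly as it should be: concave active $g_i$ are upper-bounded by their linearization, so $g_i(\vz_0+t\vd)\le g_i(\vz_0)+t\,\nabla g_i(\vz_0)^\top\vd\le 0$ along the whole ray, while nonconcave active $g_i$ need the strict sign to absorb the $o(t)$ term. The density trick $\vd\mapsto\vd+\epsilon\vw$ to pass to the closure before applying Farkas is also the right move, and is precisely why a single qualifying $\vw$ suffices.

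Your closing remark about signs is well taken. As written, \eqref{eq:CQ} asks for $\vw^\top\nabla g_i(\vz_0)>0$ on nonconcave active constraints with the convention $g_i\le 0$, which literally says $g_i$ \emph{increases} along $\vw$---the opposite of ``pointing into the feasible region.'' The usual statement has the reversed inequalities (equivalently, replace $\vw$ by $-\vw$). The paper's application in \Cref{lemma:AHU} is consistent with its own sign choice (they take $\vw_{\vv}=-\vv$ where the standard choice would be $+\vv$), so the overall argument there survives, but your reading of \eqref{eq:CQ} as an inward-pointing condition is the correct semantic interpretation and the one your proof actually uses.
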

    
    The importance of this theorem lies in the following implication, which provides sufficient conditions for the satisfaction of the KKT conditions.
    
    \begin{corollary}
        \label{cor:ahu-cq}
        Consider a local minimum $\vz_0$ of \eqref{eq:minimization-problem}. If the Arrow-Hurwicz-Uzawa constraint qualification is satisfied at $\vz_0$, there exist (nonnegative) Lagrange multipliers satisfying the \eqref{eq:KKT} conditions of \Cref{def:KKT}.
    \end{corollary}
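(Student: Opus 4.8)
The plan is to run the classical derivation of the KKT system from a constraint qualification, via a theorem of the alternative. Write $A(\vz_0) = A'(\vz_0)\cup A''(\vz_0)$ for the active constraints at the local minimum $\vz_0$, split into the concave ones $A'$ and the nonconcave ones $A''$, and let $\vw \in \R^d$ be the vector supplied by the Arrow--Hurwicz--Uzawa condition (\Cref{theorem:AHU}). First I would record a purely first-order consequence of local optimality: no direction $\vd \in \R^d$ can satisfy $\nabla f(\vz_0)^\top \vd < 0$ together with $\nabla g_i(\vz_0)^\top \vd \le 0$ for all $i \in A'(\vz_0)$ and $\nabla g_i(\vz_0)^\top \vd < 0$ for all $i \in A''(\vz_0)$. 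Indeed, given such a $\vd$, follow the ray $\vz_0 + t\vd$: inactive constraints stay strictly negative for small $t>0$ by continuity; for $i \in A''$ a first-order Taylor expansion with the strict inequality gives $g_i(\vz_0 + t\vd) < 0$; and for $i \in A'$ concavity gives $g_i(\vz_0 + t\vd) \le g_i(\vz_0) + t\,\nabla g_i(\vz_0)^\top\vd \le 0$ for every $t > 0$. So $\vz_0 + t\vd \in \calD$ for all small $t$, yet $f(\vz_0 + t\vd) = f(\vz_0) + t\,\nabla f(\vz_0)^\top\vd + o(t) < f(\vz_0)$ --- contradicting local minimality.

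Next I would use $\vw$ to strengthen this to the statement KKT actually needs, namely that the system $\{\,\nabla f(\vz_0)^\top\vd < 0;\ \nabla g_i(\vz_0)^\top\vd \le 0\ \ \forall i \in A(\vz_0)\,\}$ is infeasible. Given a putative solution $\vd$, perturb it by a small multiple of $\vw$ so that the nonconcave active constraints become strictly relaxed: set $\vd_\epsilon \defeq \vd - \epsilon \vw$ (the sign chosen to match the orientation in \Cref{theorem:AHU}). Then for every $\epsilon > 0$ one has $\nabla g_i(\vz_0)^\top\vd_\epsilon \le 0$ on $A'$ and $\nabla g_i(\vz_0)^\top\vd_\epsilon < 0$ on $A''$, while $\nabla f(\vz_0)^\top\vd_\epsilon < 0$ survives for $\epsilon$ small enough because its leading term is strictly negative; this contradicts the first step, so no such $\vd$ exists. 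Infeasibility of that system is a statement about finitely many linear inequalities, so by Farkas' lemma there exist $\lambda_i^* \ge 0$ for $i \in A(\vz_0)$ with $\nabla f(\vz_0) + \sum_{i \in A(\vz_0)} \lambda_i^*\,\nabla g_i(\vz_0) = \vzero$. Extending by $\lambda_i^* \defeq 0$ on the inactive constraints yields $\vlambda^* \ge \vzero$ with $\nabla_{\vz}\calL(\vz_0,\vlambda^*) = \vzero$ (first-order stationarity) and $\lambda_i^* g_i(\vz_0) = 0$ for all $i$ (complementary slackness, since $\lambda_i^* = 0$ whenever $g_i(\vz_0) < 0$); primal and dual feasibility are immediate from $\vz_0 \in \calD$ and $\vlambda^* \ge \vzero$. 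This is precisely \eqref{eq:KKT}.

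I expect the one genuinely delicate step to be the second one --- bridging the ``strict on $A''$'' system to the ``weak on all of $A$'' system. This is exactly the work the Arrow--Hurwicz--Uzawa condition is meant to do: the concave active constraints can be carried along for free, because the supporting-hyperplane inequality for concave functions makes their linearized feasible directions honestly feasible for arbitrarily large steps, so the single direction $\vw$ only needs to handle the remaining, nonconcave active constraints, and all of them simultaneously. Everything else --- the Taylor estimates of the first step, the separation argument of the last --- is routine; the only point that demands care is keeping the orientations of the constraints, of the candidate descent direction, and of $\vw$ mutually consistent throughout the two perturbation arguments.
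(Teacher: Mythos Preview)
Your proof is correct and follows the classical textbook derivation (essentially the argument in Mangasarian, Ch.~7, which the paper cites). The paper itself does not supply a proof of this corollary at all---it is stated as background, with the reference to \citep{mangasarian1994nonlinear} serving in lieu of an argument---so there is nothing to compare against beyond noting that you have reconstructed the standard proof: local optimality rules out linearized descent directions that are strict on the nonconcave active constraints and weak on the concave ones; the AHU vector $\vw$ lets you perturb any putative weak-on-all-$A$ descent direction into one of the former type; and Farkas' lemma converts infeasibility of the resulting linear system into the multiplier representation. Your handling of the sign of $\vw$ (taking $\vd_\epsilon = \vd - \epsilon\vw$ to match the orientation in \Cref{theorem:AHU}) is correct given the paper's stated sign convention.
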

    
    It is important to stress that the Arrow-Hurwicz-Uzawa constraint qualification---see \eqref{eq:CQ}---does \emph{not} involve the objective function; this is the case more broadly for constraint qualifications.
\subsection{Weak Convexity, the Moreau Envelope, and Near-Stationarity}
In this subsection, we provide some necessary background on optimizing nonsmooth functions. We refer the interested reader to~\citep{davis2019stochastic} for a more complete discussion on the subject. 

Throughout this subsection, we will tacitly assume that $\calX$ and $\calY$ are nonempty, convex and compact subsets of a Euclidean space. We will also denote by $\dist$ the distance between a vector $\vx$ and $\calY$, defined as follows.
\begin{equation}
    \dist(\vx; \calY) = \min_{\vy \in \calY} \| \vx - \vy \|_2.
\end{equation}


\begin{definition}[Weak Convexity]
A function $f:\R^d \rightarrow \R$ is said to be convex if for any $\vx_1,\vx_2\in \R^d$ and any $t\in[0,1]$, it holds that $f(t \vx_1 + (1-t)\vx_2 )\leq t f(\vx_1)+ (1-t) f(\vx_2)$. Additionally, a function $f:\R^d \rightarrow \R$ is said to be $\lambda$-weakly convex if the function $f(\vx) + \frac{\lambda}{2} \| \vx \|^2$ is convex.
\end{definition}

The following corollary
is an immediate consequence of the definition of weak convexity, and the fact that the function $\frac{\lambda}{2} \|\vx\|^2$ is $\lambda$-strongly convex.

\begin{corollary}
    \label{cor:weak-conv}
    Let $f:\calX \ni \vx \mapsto \R$ be a $\lambda$-weakly convex function. Then, the function $f(\vx) + \lambda \| \vx \|^2$ is $\lambda$-strongly convex.
\end{corollary}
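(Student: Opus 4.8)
The plan is to reduce the claim to the standard characterization of strong convexity: a function $h : \calX \to \R$ is $\mu$-strongly convex precisely when $h(\vx) - \frac{\mu}{2}\norm{\vx}^2$ is convex. I would first recall (or simply invoke) this equivalence, together with two elementary facts: that $\frac{\lambda}{2}\norm{\vx}^2$ is itself $\lambda$-strongly convex, and that the sum of a convex function and a $\mu$-strongly convex function is again $\mu$-strongly convex (immediate from the definitions, since convexity and strong convexity are both stable under addition of convex terms).

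With these in hand the argument is a one-line decomposition. Writing
\[
f(\vx) + \lambda \norm{\vx}^2 = \Big( f(\vx) + \tfrac{\lambda}{2}\norm{\vx}^2 \Big) + \tfrac{\lambda}{2}\norm{\vx}^2 ,
\]
the hypothesis that $f$ is $\lambda$-weakly convex says exactly that the term in parentheses is convex, while the remaining term $\frac{\lambda}{2}\norm{\vx}^2$ is $\lambda$-strongly convex; adding a convex function to a $\lambda$-strongly convex function preserves $\lambda$-strong convexity, so $f(\vx) + \lambda\norm{\vx}^2$ is $\lambda$-strongly convex. Equivalently, one can verify this directly by noting $\big(f(\vx)+\lambda\norm{\vx}^2\big) - \frac{\lambda}{2}\norm{\vx}^2 = f(\vx) + \frac{\lambda}{2}\norm{\vx}^2$ is convex by assumption, which is the defining property of $\lambda$-strong convexity.

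There is essentially no obstacle here; the statement is, as the text says, an immediate consequence of the definition. The only point requiring a little care is bookkeeping of conventions: the excerpt defines weak convexity through the additive-quadratic form $f(\vx)+\frac{\lambda}{2}\norm{\vx}^2$ being convex, so I would state strong convexity through the matching form ($h(\vx)-\frac{\mu}{2}\norm{\vx}^2$ convex) to keep the two-line computation clean; if a reader prefers the Jensen-type inequality form, the equivalence follows by expanding $\norm{t\vx_1+(1-t)\vx_2}^2$, which I would relegate to a parenthetical remark rather than belabor.
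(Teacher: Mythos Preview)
Your proposal is correct and follows exactly the approach the paper indicates: the paper states the corollary as an immediate consequence of the definition of weak convexity together with the fact that $\frac{\lambda}{2}\|\vx\|^2$ is $\lambda$-strongly convex, which is precisely your decomposition $f(\vx)+\lambda\|\vx\|^2 = \big(f(\vx)+\tfrac{\lambda}{2}\|\vx\|^2\big) + \tfrac{\lambda}{2}\|\vx\|^2$.
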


A notion closely related to weak convexity within optimization literature is the \emph{Moreau envelope} (also knwon as Moreau-Yosida regularization). Namely, the Moreau envelope of a function is defined as follows for $\lambda > 0$.
\begin{equation}
    f_{\lambda}(\vx) \defeq \min_{\vx' \in \calX} \left\{ f(\vx') + \frac{1}{2\lambda} \|\vx - \vx' \|^2 \right\}.
\end{equation}

Moreover, when $\lambda < \frac{1}{\ell}$, with $\ell$ being the corresponding parameter of weak convex, the Moreau envelope $f_{\lambda}$ is $C^1$-smooth, and its gradient given by $\nabla f_{\lambda} = \lambda^{-1}(\vx - \prox_{\lambda f}(\vx) )$~\citep[Theorem 31.5]{rockafellar1970convex}, where $\prox_{\lambda f}(\cdot)$ is the \emph{proximal mapping}. Namely, for a convex and continuous function $f : \calX \rightarrow \R$ we define its proximal operator $\prox_{f} : \R^d \rightarrow \R^d$ as follows.
\begin{equation}
    \label{eq:prox}
    \prox_{f} (\vx) = \argmin_{\vx' \in \calX} \left\{ f(\vx') + \frac{1}{2}\|\vx - \vx' \|^2 \right\}.
\end{equation}
The point $\tilde{\vx}\defeq \prox_{f}(\vx)$ that results by applying the proximal operator~\eqref{eq:prox} on $\vx$ is called the \textit{proximal point} of $\vx$. The proximal point of the scaled function $\lambda f$ coincides with the solution of the minimization problem needed in order to determine the Moreau envelope of $f$ at $\vx$. The proximal operator of an $\ell$-weakly convex function is well-defined, as as long as $\lambda$ is sufficiently small:

\begin{proposition}[\cite{lin2020gradient}]
    \label{prop:welldefined}
    Let $\phi$ be a $\ell$-weakly convex function. Then, $\prox_{\phi/(2\ell)}(\vx)$ is well-defined.
\end{proposition}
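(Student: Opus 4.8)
The plan is to show that the optimization problem defining the proximal point,
$\prox_{\phi/(2\ell)}(\vx) = \argmin_{\vx' \in \calX}\big\{ \frac{1}{2\ell}\phi(\vx') + \frac{1}{2}\norm{\vx - \vx'}^2 \big\}$, admits a \emph{unique} minimizer in $\calX$; ``well-defined'' means precisely that this $\argmin$ is a single point. I would split the argument into an existence part (Weierstrass, using that $\calX$ is nonempty and compact) and a uniqueness part (strong convexity of the objective, which is where the weak-convexity hypothesis enters).

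For existence, I would first record that $\phi$ is continuous: since $\phi$ is $\ell$-weakly convex, the map $\vx' \mapsto \phi(\vx') + \frac{\ell}{2}\norm{\vx'}^2$ is convex and finite on $\R^d$, hence continuous, and subtracting the continuous quadratic $\frac{\ell}{2}\norm{\vx'}^2$ leaves $\phi$ continuous. Therefore the objective $h(\vx') \defeq \frac{1}{2\ell}\phi(\vx') + \frac{1}{2}\norm{\vx - \vx'}^2$ is continuous, and since $\calX$ is nonempty and compact, Weierstrass' theorem guarantees that the infimum over $\calX$ is attained.

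The key step is uniqueness, via a weak-to-strong convexity bookkeeping. Writing $\lambda \defeq \frac{1}{2\ell}$, so that $\lambda \ell = \frac12 < 1$, I would decompose $h(\vx') = \big[\lambda\phi(\vx') + \frac{\lambda\ell}{2}\norm{\vx'}^2\big] + \frac{1-\lambda\ell}{2}\norm{\vx'}^2 - \langle \vx', \vx\rangle + \frac{1}{2}\norm{\vx}^2$. The bracketed term is convex, being $\lambda > 0$ times the convex function $\vx' \mapsto \phi(\vx') + \frac{\ell}{2}\norm{\vx'}^2$ witnessing $\ell$-weak convexity of $\phi$; the remaining three terms equal $\frac{1-\lambda\ell}{2}\norm{\vx'}^2$ plus an affine function, hence form a $(1-\lambda\ell)$-strongly convex function. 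A sum of a convex function and a $(1-\lambda\ell)$-strongly convex function is $(1-\lambda\ell)$-strongly convex, so $h$ is $\frac12$-strongly convex on the convex set $\calX$; a strongly convex function on a convex set has at most one minimizer (the midpoint of two distinct minimizers would have strictly smaller value). Combined with existence, this shows $\prox_{\phi/(2\ell)}(\vx)$ is a singleton.

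There is no genuinely hard step here; the only point requiring care is the constant chasing ensuring that the coefficient $1-\lambda\ell$ of the leftover quadratic stays strictly positive — this is exactly why the statement scales $\phi$ by $1/(2\ell)$ and not by $1/\ell$, and more generally why the Moreau envelope is only defined for $\lambda < 1/\ell$ (cf.\ \Cref{def:moreau}). If one instead took $\calX = \R^d$, existence would follow from coercivity of the strongly convex $h$ rather than from compactness, but the rest of the argument is unchanged; one may alternatively invoke \Cref{cor:weak-conv} directly in place of the explicit decomposition above.
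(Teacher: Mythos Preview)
Your proof is correct. The paper does not give its own proof of this proposition --- it is stated with a citation to \cite{lin2020gradient} --- but the argument you supply is exactly the standard one, and in fact the paper's \Cref{cor:weak-conv} is precisely the strong-convexity observation you isolate (indeed, applying it to the unscaled objective $\phi(\vx') + \ell\norm{\vx-\vx'}^2 = (\phi(\vx') + \ell\norm{\vx'}^2) + \text{affine}$ gives $\ell$-strong convexity directly). Your existence step via Weierstrass is also in line with the paper's standing assumption that $\calX$ is nonempty, convex, and compact.
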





\paragraph{Minimization of weakly convex functions.}
Generally, in a minimization problem we are interested in computing minima of a function subject to constraints. If no convexity assumption holds for the objective function, even computing local minima is \NP-hard~\citep{murty1985some}. Instead, one is often interested in computing an approximate stationary point of the objective function.


More precisely, an $\epsilon$-approximate stationary point $\vx_0$ of a nondifferentiable function is a point such that $\dist(\vzero; \partial f(\vx_0)) \leq \epsilon$ where $\partial f(\vx_0)$ is the \emph{subdifferential} of $f$ at $\vx_0$ (see \citep[Sec. 2.2]{davis2019stochastic}). 
However, such a measure of stationarity for nonsmooth objective functions is so restrictive that, in fact, it can be shown as difficult as solving the optimization problem exactly---\textit{e.g.}, if we let $f(x) = |x|$ then $x=0$ is the only $\epsilon$-approximate stationary point for $\epsilon\in[0,1)$.

The alternative notion of \emph{near stationarity} for a nonsmooth function $f(\vx)$, contributed by~\citet{davis2019stochastic}, has become standard ({see Propositions 4.11 and 4.12 in \citep{lin2020gradient}}) for optimization of weakly convex functions. (For a more in depth discussion see \citep[Section 4.1]{drusvyatskiy2019efficiency}.) More precisely, we measure stationarity by means of the proximal operator:

\begin{definition}[$\epsilon$-nearly stationary point]
    \label{def:nearst}
    Let $f : \calX \rightarrow \R$ be a continuous, nonsmooth function, and some $\epsilon > 0$. We say that a point $\vx_0 \in \calX$ is $\epsilon$-nearly stationary if
    \begin{equation}
        \| \vx_0 - \tilde{\vx}_{0} \|_2 \leq \epsilon,
    \end{equation}
    where $\tilde{\vx}_0 \defeq \prox_{\lambda f}(\vx_0)$ is the proximal point of $\vx_0$.
\end{definition}

The Moreau envelope of $f$ offers a number of useful properties for the analysis of convergence to near stationarity, as formalized below. 
\begin{fact}[\citep{davis2019stochastic}]
Let $f:\calX\rightarrow\R$ be an $\ell$-weakly convex function  and $\lambda < \frac{1}{\ell}$. Further, let $\vx \in \calX$ and $\tilde{\vx}\defeq \prox_{\lambda f}(\vx)$ be its proximal point. Then,
\begin{equation}
    \left\{
    \begin{array}{rcl}
        \| \vx - \tilde{\vx} \|_2 &\leq& \lambda \| \nabla f_{\lambda} (\vx) \|; \\
        f(\tilde{\vx}) &\leq& f(\vx); \\
        \dist(\vzero ; \partial f (\vx) ) &\leq & \| \nabla f_{\lambda} (\vx)  \|.
    \end{array}
    \right.
\end{equation}
\end{fact}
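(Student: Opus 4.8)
The plan is to obtain all three inequalities as direct consequences of (i) the identity $\nabla f_\lambda(\vx) = \lambda^{-1}\big(\vx - \prox_{\lambda f}(\vx)\big)$ recorded above (valid for $\lambda < 1/\ell$, via \citep[Theorem 31.5]{rockafellar1970convex}), (ii) the fact that $\tilde{\vx}$ is \emph{the} minimizer in the definition of the Moreau envelope, and (iii) the first-order optimality conditions of that (strongly convex) minimization. Before anything else I would observe that, because $f$ is $\ell$-weakly convex and $\lambda < 1/\ell$, the objective $\vx' \mapsto f(\vx') + \tfrac{1}{2\lambda}\norm{\vx - \vx'}^2$ is $(\lambda^{-1}-\ell)$-strongly convex on the convex set $\calX$ (cf.\ \Cref{cor:weak-conv}); hence it has a unique minimizer, which is exactly $\tilde{\vx} = \prox_{\lambda f}(\vx)$, so all quantities appearing in the statement are well defined (this is also the content of \Cref{prop:welldefined}).

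For the first inequality I would simply take Euclidean norms in the gradient identity, obtaining $\norm{\vx - \tilde{\vx}}_2 = \lambda\,\norm{\nabla f_\lambda(\vx)}$ --- in fact with equality. For the second, I would evaluate the objective of the defining minimization at the feasible point $\vx' = \vx$: since $\tilde{\vx}$ is the minimizer, $f(\tilde{\vx}) + \tfrac{1}{2\lambda}\norm{\vx - \tilde{\vx}}^2 \le f(\vx) + \tfrac{1}{2\lambda}\norm{\vx - \vx}^2 = f(\vx)$, and dropping the nonnegative quadratic term gives $f(\tilde{\vx}) \le f(\vx)$. For the third, I would write the subdifferential (Fermat) condition for the minimizer $\tilde{\vx}$ of the strongly convex map $\vx' \mapsto f(\vx') + \tfrac{1}{2\lambda}\norm{\vx' - \vx}^2$ over $\calX$, namely $\vzero \in \partial f(\tilde{\vx}) + \lambda^{-1}(\tilde{\vx}-\vx) + N_{\calX}(\tilde{\vx})$, where $N_\calX$ denotes the normal cone of $\calX$; rearranging, $\nabla f_\lambda(\vx) = \lambda^{-1}(\vx - \tilde{\vx}) \in \partial f(\tilde{\vx}) + N_\calX(\tilde{\vx})$, so in the unconstrained setting (or when $f$ is taken extended-valued with domain $\calX$) we get $\nabla f_\lambda(\vx) \in \partial f(\tilde{\vx})$ and hence $\dist\big(\vzero;\partial f(\tilde{\vx})\big) \le \norm{\nabla f_\lambda(\vx)}$.

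The main obstacle is really just the bookkeeping in step (iii): one needs the subdifferential sum rule to split $\partial\big(f + \tfrac{1}{2\lambda}\norm{\cdot - \vx}^2 + \iota_\calX\big)$ at $\tilde{\vx}$, which is legitimate here because the last two summands are finite-valued smooth/convex and $f$ is finite, but it does require invoking the standard calculus for (regular/Clarke) subgradients of weakly convex functions --- so I would cite the appropriate subdifferential-calculus result (e.g.\ from \citep{davis2019stochastic}) rather than re-deriving it. I would also flag the minor point that the natural and provable form of the third bound is stated at the proximal point $\tilde{\vx}$, i.e.\ $\dist(\vzero;\partial f(\tilde{\vx})) \le \norm{\nabla f_\lambda(\vx)}$; this is precisely the quantity that makes near-stationarity (\Cref{def:nearst}) a meaningful surrogate, since $\tilde{\vx}$ is close to $\vx$ by the first inequality while being approximately stationary in the exact sense.
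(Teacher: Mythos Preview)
The paper does not actually supply a proof of this fact; it is stated with a bare citation to \citep{davis2019stochastic} and used as a black box. Your argument is the standard one and is correct: the first inequality is in fact an equality by the gradient identity $\nabla f_\lambda(\vx)=\lambda^{-1}(\vx-\tilde{\vx})$, the second follows by comparing the minimizer $\tilde{\vx}$ with the feasible point $\vx$, and the third comes from the first-order (subdifferential) optimality condition at $\tilde{\vx}$.

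You are also right to flag that the third inequality as written in the paper, $\dist(\vzero;\partial f(\vx))\le\|\nabla f_\lambda(\vx)\|$, is almost certainly a typo: the inequality that actually follows from the optimality condition---and the one appearing in \citep{davis2019stochastic}---is $\dist\big(\vzero;\partial f(\tilde{\vx})\big)\le\|\nabla f_\lambda(\vx)\|$, i.e., at the \emph{proximal} point. This is precisely what is needed to justify the notion of near stationarity used downstream, so your reading is the intended one.
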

\begin{remark}
    An $\frac{\epsilon}{\lambda}$-approximate first-order order stationary point of $f_{\lambda}$ is an $\epsilon$-near stationary point of $f$.
\end{remark}

\paragraph{Properties of the max function.} In our analysis of $\IPGmax$, we will measure progress based on the function $\phi(\vx) = \max_{\py \in \calY}f(\px, \py)$, where $f$ corresponds to the value function in our setting; using $\phi$ is fairly common in the context of min-max optimization. The following lemma points out some useful properties of $\phi$.

\begin{lemma}[\cite{lin2020gradient}]
    \label{lemma:max-weakly}
    Let $f:\calX\times\calY\rightarrow\R$ be $L$-Lipschitz and $\ell$-smooth. Then, the function $\phi(\vx) = \max_{\vy \in \calY} f(\vx, \vy) $ is
    \begin{itemize}
        \item $L$-Lipschitz continuous; and
        \item $\ell$-weakly convex.
    \end{itemize}
    \label{lem:max-weakly-convex}
\end{lemma}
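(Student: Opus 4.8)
The plan is to prove \Cref{lem:max-weakly-convex} by establishing the two bullets separately, using only the $L$-Lipschitzness and $\ell$-smoothness of $f$ together with the compactness of $\calY$, which guarantees the maximum is attained at each $\vx$. For the Lipschitz bound, I would fix $\vx_1, \vx_2 \in \calX$ and let $\vy_1 \in \argmax_{\vy} f(\vx_1, \vy)$. Then $\phi(\vx_1) = f(\vx_1, \vy_1)$, while $\phi(\vx_2) \geq f(\vx_2, \vy_1)$, so $\phi(\vx_1) - \phi(\vx_2) \leq f(\vx_1, \vy_1) - f(\vx_2, \vy_1) \leq L \|\vx_1 - \vx_2\|$ by the Lipschitz continuity of $f$ in its first argument (which follows from joint $L$-Lipschitzness). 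Swapping the roles of $\vx_1$ and $\vx_2$ gives the matching lower bound, hence $|\phi(\vx_1) - \phi(\vx_2)| \leq L\|\vx_1 - \vx_2\|$.

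For the $\ell$-weak convexity, the goal is to show that $g(\vx) \defeq \phi(\vx) + \tfrac{\ell}{2}\|\vx\|^2$ is convex. The key observation is that $\ell$-smoothness of $f$ implies that for each fixed $\vy$, the function $\vx \mapsto f(\vx, \vy) + \tfrac{\ell}{2}\|\vx\|^2$ is convex; this is a standard fact, since an $\ell$-smooth function has Hessian (or, in the nonsmooth-but-differentiable-with-Lipschitz-gradient case, a suitable monotonicity estimate on the gradient) bounded below by $-\ell I$, so adding $\tfrac{\ell}{2}\|\vx\|^2$ makes it convex. Then $g(\vx) = \max_{\vy \in \calY}\left( f(\vx, \vy) + \tfrac{\ell}{2}\|\vx\|^2 \right)$ is a pointwise maximum over $\vy$ of convex functions of $\vx$, and the pointwise supremum of a family of convex functions is convex. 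Hence $g$ is convex, which is exactly the statement that $\phi$ is $\ell$-weakly convex.

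The routine but slightly delicate point — and the closest thing to an obstacle — is justifying that $\ell$-smoothness of $f$ yields convexity of $\vx \mapsto f(\vx,\vy) + \tfrac{\ell}{2}\|\vx\|^2$ without assuming twice differentiability: one argues from the descent-lemma-style inequality $f(\vx', \vy) \geq f(\vx, \vy) + \langle \nabla_{\vx} f(\vx,\vy), \vx' - \vx\rangle - \tfrac{\ell}{2}\|\vx' - \vx\|^2$, which rearranges to show the perturbed function lies above its tangent planes, equivalently that its gradient is monotone, equivalently convexity. I would cite this as a standard consequence of smoothness (e.g., via \citep{lin2020gradient} or \citep{davis2019stochastic}) rather than reproving it. The pointwise-max-of-convex-is-convex step and the attainment of the max (from compactness of $\calY$ and continuity of $f$) are immediate and require no real work.
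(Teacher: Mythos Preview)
Your proposal is correct and follows the standard argument. The paper does not provide its own proof of this lemma; it is stated with a citation to \citet{lin2020gradient} and used as a black box, so there is nothing further to compare against.
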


\subsection{Further Background on Markov Decision Processes}

Additionally, we will need some further preliminaries on Markov decision processes (MDPs). First, the \emph{(discounted) state visitation measure} effectively measures the ``discounted'' expected amount of time
that the Markov chain---induced by fixing the players' policies---spends at a state $s$ given that it starts from an initial state $\bar{s}$. That is, every visit is multiplied by a discount factor $\gamma^t$, where $t$ is the time of the visit. We note that~\citet{agarwal2021theory} use the definition that makes it a probability measure, in the sense that for a given initial state distribution $\vrho$ the discounted state visitation distribution sums to $1$. For convenience, we will work with the unnormalized definition found in \citep[Chapter 6.10]{puterman2014markov} that instead sums to $\frac{1}{1-\gamma}$; this is the reason why we use the term \textit{measure} instead of \textit{distribution}.

\begin{definition}
    Consider an initial state distribution $\vrho\in \Delta(\calS)$ and a stationary joint policy $\policy{} \in \Pi$.
    The state visitation measure $d_{\bar{s}}^{\policy{}}$ is defined as
    \begin{equation}
        d^{\policy{} }_{\bar{s}} (s) = \sum_{t=0}^{\infty}\gamma^t \pr(s\step{t} = s |\policy{}, s\step{0} = \bar{s}).
    \end{equation}
    Further, overloading notation, we let
    \begin{equation}
        d^{\policy{} }_{\vrho} (s) = \E_{\bar{s}\sim \vrho} \left[ d^{\policy{}}_{\bar{s}} (s) \right].
    \end{equation}
\end{definition}

With a slight abuse of notation, we will also write $d^{\vx, \vy}_{\vrho} (s)$ to denote the state visitation measure induced by strategies $(\vx, \vy) \in \calX \times \calY$. 

\begin{definition}[Distribution Mismatch Coefficient]
    Let $\vrho \in \Delta(\calS)$ be a full-support distribution over states, and $\Pi$ be the joint set of policies. We define the distribution mismatch coefficient $D$ as
    \begin{equation}
        D \defeq \sup_{\policy{} \in \Pi } \left\|\frac{\vd^{\policy{}}_{\vrho} } {\vrho}\right\|_\infty,
    \end{equation}
    where $\frac{\vd^{\policy{}}_{\vrho}}{\vrho}$ denotes element-wise division.
    \label{def:mismatch}
\end{definition}

Below, for the sake of mathematical rigor, we discuss a mathematical technicality that appears repeatedly in policy gradient methods in both single- and multi-agent RL. 
{

\begin{remark}[Empty Interior]
In multi-agent Markov games the joint policy profiles of all agents belong to  $\mathcal{X}\times\mathcal{Y}=\prod_{k\in\mathcal{N}_A}\Delta(\mathcal{A}_k)^{S}\times\Delta(\mathcal{B})^{S}\subseteq \mathbb{R}^{S \times(\sum_{k\in\mathcal{N}_A} A_k + B)}$. In other words, the value function is defined on a product of orthogonal simplices, and as such on a space that does not have an interior.  Thus, a more careful discussion is necessary regarding the well-posedness of $\nabla_{\vx_k} V_{\vrho}(\vx,\vy)$,$\nabla_{\vy} V_{\vrho}(\vx,\vy)$. 

By definition,
the gradient of an arbitrary function $f$ is defined as the unique vector field whose dot product with any vector $\vv$ at each point $\vx$ is the directional derivative of $f$ along $\vv$, \text{i.e.,} $\langle \nabla f(\vx),\vv\rangle=D_{\vv} f(\vx)$ (the right-hand side is the directional derivative). 
However, the derivative at a point $\vx$ is a notion that involves a certain limit. This limit requires the existence of a topological net, \text{i.e.}, a partially-ordered infinite collection of open sets whose intersection converges to $\{\vx\}$. The latter cannot exist as $\mathcal{X}_k,\mathcal{Y}$ lack an interior. 

To alleviate this issue, we can use a standard continuity argument from Clarke's generalized gradient for semi-algebraic functions (see \citep{drusvyatskiy2015clarke} and references therein). Roughly speaking, we dilate the domain set by a $\delta$-ball and use continuity to establish that as $\delta$ approaches $0$ the gradients remain well-defined.

This approach was originally discussed for normal-form games in \citep{aubin1981locally} and for stochastic games in \citep{daskalakis2020independent}). More analytically, by the compactness of $\mathcal{X}$ and $\mathcal{Y}$ for any $\delta>0$, the outer $\delta$-parallel bodies $B(\mathcal{X},\delta)$ and 
 $B(\mathcal{Y},\delta)$ are closed, convex, and compactly contain $\mathcal{X},\mathcal{Y}$ correspondingly. Additionally, these bodies are full dimensional and contain a dense interior. Hence, the gradients can be well-defined on these extensions of the domain. Another important note is that thanks to the semi-algebraic form of the value function, $V_{\vrho}(\vx,\vy)$ defined on these full dimensional smooth manifolds is a $L+\mathrm{fun}_{Lip}(\delta)$ Lipschitz and $\ell+\mathrm{fun}_{Smooth}(\delta)$ smooth for some $\mathrm{fun}_{Smooth},\mathrm{fun}_{Lip}$ which are uniformly continuous functions on $\mathcal{X},\mathcal{Y}$ such that $\displaystyle\lim_{\delta\to 0}\mathrm{fun}_{Lip}(\delta)=\displaystyle\lim_{\delta\to 0}\mathrm{fun}_{Smooth}(\delta)=0$.
Thus, we can extend the notion of the (sub)gradient of $V_{\vrho}$ at a point $(\px,\py)\in \mathcal{X}\times\mathcal{Y}$ as the limit of the corresponding (sub)gradients of $V_{\vrho}$ defined on $B(\mathcal{X},\delta)\times
 B(\mathcal{Y},\delta)$ while $\delta\downarrow 0$. For a further extensive discussion, see \citep{drusvyatskiy2013semi}.
 
%

\end{remark}
}
\newpage
To further enhance the paper's readability, we include below a table listing our main notation.
\begin{table}[H]
    \label{table:notation}
    \caption{Notation}
    \begin{tabularx}{\textwidth}{p{0.22\textwidth}X}
    \toprule[1pt]
    \midrule[0.3pt]
  \multicolumn{2}{l}{{{Parameters of the model:}}}    \\
  \midrule

      $\calS$                   & State space\\
      $\calN$                   & Set of players\\
      $r$                       & Reward function of the adversary\\
      $n$                       & Number of players in the team \\
      $\calA_k$                 & Action space of player $k$ of the team \\
      $\calA$                   & Team's joint action space \\
      $\calB$                   & Action space of the adversary \\
      $A_k$                     & Number of actions available to player $k$ of the team \\
      $B$                       & Number of actions available to the adversary \\
      $\calX_k$                 & The set of feasible directly parameterized policies of player $k$: $\calX_k \defeq \Delta(\calA_k)^S$ \\
      $\calX$                   & The set of feasible directly parameterized policies of the team: $\calX \defeq \bigtimes_{k=1}^n \calX_k$ \\
      $\calY$                   & The set of feasible directly parameterized policies of the adversary: $\calY \defeq \Delta(\calB)^S$ \\
      $\gamma$                  & Discount factor \\
      $\pr(s'|s, \Vec{a}, b )$  & Probability of transitioning from state $s$ to $s'$ under the action profile $(\Vec{a}, b)$ \\
      $\pr( \px, \py )$         & The (row-stochastic) transition matrix of the Markov chain induced by $(\px, \py)$ \\
      $\textstyle\mat{V}(\px, \py), V_{\vrho}(\px, \py)$ & The value vector per-state, the expected value under initial distribution $\vrho$\\
  \multicolumn{2}{l}{{{Parameters:}}}    \\ 
    \midrule 
      $L$                       & Lipschitz constant of the value function $V_{\vrho}(\cdot, \cdot)$\\
      $\ell$                    & Smoothness constant of the value function $V_{\vrho}(\cdot, \cdot)$\\
      $D$                       & Distribution mismatch coefficient \\
    \multicolumn{2}{l}{{{NLP:}}}
    \\
    \midrule
      $\vrho$, $\rho(s)$        & Initial state distribution, probability that $s$ is the initial state\\
      $\vv$, $v(s)$             & value vector, value of state $s$ \\
      $r(s, \px, b)$             & Expected reward at state $s$ under $(\vx, b) \in \calX \times \calB$ \\
      $\pr( s' | s, \px, b)$       & Expected probability of transitioning to state $s'$ from $s$ under $(\vx, b) \in \calX \times \calB$ \\
      $\px_{k}$                 & Strategy of player $k$ of the team \\
      $\px_{k,s}$               & Strategy of player $k$ of the team at state $s$ \\
    \multicolumn{2}{l}{{{Additional notation:}}} 
    \\
    \midrule
      $\phi(\px)$               & Maximum of the value function w.r.t. $\px$: $\phi(\px) = \max_{\py \in \calY} V_{\vrho}(\px, \py)$ \\
      $\phi_{1/2\ell}(\px)$     & Moreau envelope of $\phi$ with parameter $\lambda \defeq \frac{1}{2\ell}$ (\Cref{def:moreau}) \\
      $\hat{\px}$               %
                                & $\epsilon$-nearly stationary point of $\phi(\cdot)$ (\Cref{def:nearst}) \\
      $\tilde{\px}$             & proximal point of $\frac{1}{2\ell} \phi$ w.r.t. $\hat{\vx}$ (\Cref{eq:prox}) \\
      \midrule[0.3pt]\bottomrule[1pt]
     \end{tabularx}
    \end{table}
    \newpage

\newcommand{\red}[1]{{\color{red}#1}}

\section{Proof of Extendibility to Nash Equilibria}
\label{sec:extendibility}

In this section, we demonstrate how a nearly stationary point $\hat{\px}$ of $\phi(\cdot) \defeq \max_{\py \in \calY} V_{\vrho}(\cdot, \py)$, returned by \IPGmax, can be extended to an approximate Nash equilibrium.


Our extension argument uses a nonlinear program that is in spirit similar to the one found in \citep[Chapter 3.9]{filar2012competitive}. But, unlike the program in~\citep[Chapter 3.9]{filar2012competitive}, ours is designed to capture adversarial team Markov games. In this context, there are two main challenges in the proof. First, even if we had an \emph{exact} stationary point of $\phi$, establishing the existence of nonnegative Lagrange multipliers that satisfy the KKT conditions is particularly challenging. This is unfortunate since it turns out that establishing the KKT conditions is crucial, and is at the heart of our extendibility argument. Indeed, the upshot is that an admissible policy for the adversary \emph{can be derived from a subset of the Lagrange multipliers}. Further, our algorithm only has access to an \emph{approximate} stationary point. As a result, our argument needs to be robust in terms of approximation errors.



To address the first issue, we consider a modified nonlinear program---namely, \eqref{prog:xinlp} introduced earlier in \Cref{sec:extendibility-poly-time}---that incorporates an additional quadratic term to the objective function. This allows us to show that the proximal point $\tilde{\px} \defeq \prox_{\phi/(2\ell)}(\hat{\px})$ is part of a global optimum for our new program. In turn, this is crucial to establish the existence of nonnegative Lagrange multipliers at that point. Moreover, we bypass the second issue we discussed above by studying a relaxed linear program, which serves as a proxy for the ideal linear program that uses knowledge of the global optimum of \eqref{prog:xinlp}. Our main argument establishes that any solution to the proxy linear program is basically as good as solving the ideal one---modulo factors that depend polynomially on the natural parameters of the game. In turn, that solution---which incidentally can be computed efficiently---induces a strategy profile $\hat{\vy} \in \calY$ so that $(\hat{\vx}, \hat{\vy})$ is an $O(\epsilon)$-approximate Nash equilibrium.

 \paragraph{Outline of the proof.} Below we sketch the main steps in our proof.

 \begin{enumerate}[(i)]
     \item In \Cref{sec:quad-nlp} we consider \eqref{prog:xinlp}, a nonlinear program that incorporates an additional quadratic term to the objective function of the natural MDP formulation~\eqref{prog:nlp}.
     \item In \Cref{sec:nlp+phi} we show that \eqref{prog:xinlp} attains a global optimum at $(\Tilde{\vx}, \tilde\vv)$ (\Cref{lemma:global}), where $\Tilde{\vx} \defeq \prox_{\phi/(2\ell)}(\hat{\vx})$ and $\Tilde{\vv}$ is the unique value vector associated with $\tilde{\vx}$ (\Cref{prop:phi-and-nlp}).
     \item In \Cref{sec:ahu-cq-sat} we show that any feasible point of \eqref{prog:xinlp} satisfies the Arrow-Hurwicz-Uzawa constraint qualification (\Cref{lemma:AHU}). In turn, this implies the existence of nonnegative Lagrange multipliers at $(\tilde{\vx}, \tilde{\vv})$ satisfying the KKT conditions (\Cref{corollary:KKT-sat}).
     \item In \Cref{sec:fosp-extend-to-ne} we introduce a linear program, namely \eqref{prog:lpadv}, to formulate the optimization problem faced by the adversary; \eqref{prog:lpadv} will be eventually used to compute an admissible policy for the adversary.
     \item In \Cref{lem:lpadv-feasible} we show that \eqref{prog:lpadv} is always feasible. This is shown by first constructing an ``ideal'' linear program~\eqref{prog:strictlpadv}, and arguing that the ideal program is feasible (\Cref{lemma:idealfeas}) using the KKT conditions. The transition to \eqref{prog:lpadv} leverages the fact that $\|\Tilde{\vx} - \hat{\vx}\| \leq \epsilon$ and the Lipschitz continuity of the underlying constraint functions to show that the introduced error is only $O(\epsilon)$.
     \item Finally, this section is culminated in \Cref{lem:lpadv-nash} and \Cref{theorem:approx-Nash}, which establish that any solution of~\eqref{prog:lpadv} induces a policy for the advesrary $\hat{\vy} \in \calY$ so that $(\hat{\vx}, \hat{\vy})$ is an $O(\epsilon)$-approximate Nash equilibrium.
  \end{enumerate}



 \subsection{The Quadratic NLP}
    \label{sec:quad-nlp}
    
    In this subsection, we describe in more detail the nonlinear program~\eqref{prog:xinlp} we introduced earlier in \Cref{sec:extendibility-poly-time}. For completeness, let us first describe the perhaps most natural nonlinear formulation used to solve the min-max problem $\min_{\px \in \calX} \max_{\py \in \calY}  V_\rho\big( \px, \py\big)$ (see \citep[Chapter 3]{filar2012competitive}), introduced below.
    
    \begin{subequations}
        \makeatletter
        \def\@currentlabel{\mathrm{NLP}_\calG}
        \makeatother
        \renewcommand{\theequation}{$\mathrm{NLP}_\calG$.\arabic{equation}}
        \begin{tagblock}[tagname={$\mathrm{NLP}_\mathcal{G}$},content={\label{prog:nlp}}]
        \begin{alignat}{3}
         &\mathrlap{ \min~ \sum_{s \in \calS} \rho(s) v(s) }   \label{eq:obj}   \\
         &\text{s.t.}~ & r(s, \px, b) + \gamma \sum_{s' \in \calS} \pr(s' | s, \px, b) v(s')  \leq v(s), &  \quad  \forall  (s, b) \in \calS \times \calB;   \label{eq:br} \\
        & & \px_{k,s}^\top \vone  = 1, & \quad \forall (k,s) \in [n] \times \calS; \text{ and} \label{eq:eqone} 
         \\
        & & x_{k,s,a}  \geq 0,  &  \quad \forall k \in [n], (s,a) \in \calS \times \calA_k.
         \label{eq:nonneg}
        \end{alignat}
        \end{tagblock}
    \end{subequations}
    The variables of this program correspond to a strategy profile for the team players $(\vx_1, \dots, \vx_n) \in \calX$, while the value vector $\vv$ captures the value at each state when the adversary is best responding. Before we proceed further, it will be useful to note that, for any $(s,b) \in \calS \times \calB$ and $s' \in \calS$, the functions $r(s,\vx,b)$ and $\pr(s'|s, \vx, b)$ are multinear in $\vx$, so that
    \begin{align}
        \left\{
        \begin{array}{lcl}
            r\big( s, (\px_k; \px_{-k}), b \big) &= & \sum_{a \in \calA_k} x_{k,s,a} r\big(s, (\ve_{k, s,a}; \px_{-k} ), b \big); \text{ and} \\
            \pr\big( s' | s, (\px_k; \px_{-k}), b \big) &=& \sum_{a \in \calA_k} x_{k,s,a} \pr\big(s' | s, (\ve_{k, s,a}; \px_{-k} ), b \big),
        \end{array}\right.
    \end{align}
    where $\ve_{k,s,a} \in \Delta(\calA_k)$ is such that its unique nonzero element corresponds to the action $a \in \calA_k$ of agent $k \in [n]$. An additional immediate consequence that will be useful in the sequel is the following property.
    \begin{gather}
        \left\{
        \begin{array}{lcl}
            \frac{\partial }{\partial x_{k,s,a}} r( s, \px, b ) &=& r\big(s, ( \ve_{k, s, a} ; \px_{-k} , b)\big); \text{ and} \\
            \frac{\partial }{\partial x_{k,s,a}} \pr( s' | s, \px, b ) &=& \pr \big(s' | s, ( \ve_{k, s, a} ; \px_{-k} ), b \big).
        \end{array}
        \right.
    \end{gather}
    Those multilinear (nonconvex-nonconcave) functions are part of the source of the complexity in our problem. We clarify that when all team players select a fixed strategy, \eqref{prog:nlp} retrieves the linear-programming formulation of the Bellman equation for the single-agent MDP~\citep{puterman2014markov}---as seen from the perspective of the adversary.
    
    Nevertheless, for our analysis it will be convenient to use a formulation that perturbs the objective function of \eqref{prog:nlp} with a quadratic term. In particular, let $\phi(\cdot) = \max_{\py \in \calY} V_{\vrho}( \cdot, \py)$ and $\hat{\px} \in \calX$ be a 
    point such that $\| \hat{\px} - \tilde{\px} \| \leq \epsilon$,
    where $\tilde{\px}\defeq \prox_{\phi/2\ell}(\hat{\px})$ is its proximal point; such a point $\hat{\vx}$ will be available after the termination of the first phase of \IPGmax, as implied by \Cref{lem:ipgmax-convergence-lemma}. Now the program we consider still has variables $(\px, \vv)$, but its objective function incorporates an additional quadratic term. This program was first introduced in \Cref{sec:extendibility-poly-time}, but we include it below for the convenience of the reader.
    
     \begin{subequations}
        \makeatletter
        \def\@currentlabel{\text{Q-}\mathrm{NLP}}
        \makeatother
        \renewcommand{\theequation}{$Q$\arabic{equation}}
        \begin{tagblock}[tagname={$\text{Q-}\mathrm{NLP}$},content={\label{prog:xxinlp}}]
        \begin{alignat}{3}
         & \mathrlap{ \min ~ \sum_{s \in \calS} \rho(s) v(s) + \ell
         \|\px - \hat{\px} \|^2} \label{eq:xiobj} \\
        \label{eq:xxibr}
        & \text{s.t.}~ & r(s, \px, b) + \gamma \sum_{s' \in \calS} \pr(s' | s, \px, b) v(s')  \leq v(s), & \quad  \forall  (s,b) \in \calS \times \calB;    \label{eq:xxibr} \\
         & & \px_{k,s}^\top \vone  = 1, & \quad \forall (k,s) \in [n] \times \calS; \text{ and}  \label{eq:xxieqone} 
         \\
         & & x_{k,s,a}  \geq 0, & \quad \forall k \in [n], (s,a) \in \calS \times \calA_k.
         \label{eq:xxinonneg}
        \end{alignat}
        \end{tagblock}
    \end{subequations}
    
 As we show in the following subsection, \eqref{prog:xinlp} attains a global minimum in the proximal point $\tilde{\px} \defeq \prox_{\phi/(2\ell)} (\hat{\px})$. First, let us point out that \eqref{prog:xinlp}---and subsequently \eqref{prog:nlp}---has nonempty feasibility set.
 
 \begin{lemma}
     The program \pref{prog:xinlp} is feasible.
     \label{lem:xinlp-feasible}
 \end{lemma}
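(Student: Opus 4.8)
The plan is to reduce everything to exhibiting a single feasible point, since nonemptiness is all that is claimed. First I would note that the feasible set of \eqref{prog:xinlp} coincides with that of \eqref{prog:nlp}: the extra term $\ell\|\vx-\hat{\vx}\|^2$ lives entirely in the objective, so it plays no role in feasibility. Thus it suffices to produce some $(\vx,\vv)\in\calX\times\R^S$ satisfying the Bellman-type inequality constraints \eqref{eq:xibr} together with the simplex constraints on $\vx$.

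For the team's part of the witness I would pick \emph{any} $\vx\in\calX$ — this is legitimate because $\calX$ is a (nonempty) product of probability simplices $\Delta(\calA_k)^S$ over the finite, nonempty action sets, so e.g.\ the uniform policies $x_{k,s,a}=1/A_k$ work and trivially satisfy $\vx_{k,s}^\top\vone=1$ and $x_{k,s,a}\geq 0$. For the value vector I would use the crude uniform bound $v(s)\defeq \tfrac{1}{1-\gamma}$ for every $s\in\calS$. To check feasibility of the Bellman inequalities, recall the overloaded notation $r(s,\vx,b)=\E_{\va\sim\vx_s}[r(s,\va,b)]$ and $\pr(s'|s,\vx,b)=\E_{\va\sim\vx_s}[\pr(s'|s,\va,b)]$; since $r$ takes values in $(0,1)$ we have $r(s,\vx,b)<1$, and by linearity of expectation $\sum_{s'\in\calS}\pr(s'|s,\vx,b)=1$ because each $\pr(\cdot|s,\va,b)$ is a distribution over $\calS$. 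Hence for every $(s,b)\in\calS\times\calB$,
\[
r(s,\vx,b)+\gamma\sum_{s'\in\calS}\pr(s'|s,\vx,b)\,v(s') < 1 + \frac{\gamma}{1-\gamma} = \frac{1}{1-\gamma} = v(s),
\]
so the constraint \eqref{eq:xibr} holds with room to spare, and $(\vx,\vv)$ is feasible for \eqref{prog:xinlp}.

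I do not expect any real obstacle: the argument is essentially a one-line sanity check, and the only (trivial) point requiring attention is that the overloaded symbols $r(s,\vx,b)$ and $\pr(s'|s,\vx,b)$ inherit boundedness and normalization from the underlying reward function and transition kernel. If one wanted a sharper certificate than the constant vector, one could instead take $\vv$ to be the optimal value vector of the single-agent $\gamma$-discounted MDP the adversary faces when the team plays the fixed $\vx$; its Bellman optimality equation yields the inequality with equality on the adversary's best-responding action and strict inequality otherwise. Either way the lemma follows.
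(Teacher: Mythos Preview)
Your proof is correct and follows essentially the same approach as the paper: fix any $\vx\in\calX$, set $v(s)=\tfrac{1}{1-\gamma}$ for all $s$, and verify the Bellman-type inequality using $r(s,\vx,b)\leq 1$ and $\sum_{s'}\pr(s'|s,\vx,b)=1$. Your write-up is slightly more detailed (spelling out why the overloaded $r$ and $\pr$ inherit the needed bounds, and noting the alternative of taking $\vv$ to be the adversary's optimal value vector), but the core argument is identical.
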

 \begin{proof}
      Let $\vx \in \calX$ be any directly parameterized policy for the team and $\vv \defeq \frac{1}{1 - \gamma} \vone$, where recall that $\vone$ is the all-ones vector (with dimension $S$). Clearly, $\px_{k,s}^\top \vone = 1$, for all $(k,s) \in [n] \times \calS$, and $x_{k,s,a} \geq 0$ for all $k \in [n], (s,b) \in \calS \times \calB$. Further, for any $(s, b)\in \calS \times \calB$, we have
     \begin{align}
         r(s, \px, b)  + \gamma \sum_{s' \in \calS}\pr(s' | s, \px, b) \frac{1}{1-\gamma} = r(s, \px, b)  + \gamma  \frac{1}{1-\gamma}  \leq 1 + \gamma \frac{1}{1-\gamma}
         \leq \frac{1}{1-\gamma}.
     \end{align}
 \end{proof}

\subsection{The Global Minimum of \texorpdfstring{\pref{prog:xinlp}}{(Q-NLP)}}
\label{sec:nlp+phi}

Here we demonstrate that \pref{prog:xinlp} attains a global minimum under $\px = \tilde{\px} \defeq \prox_{\phi/(2\ell)} (\hat{\px})$. To do so, we first show that fixing $\vx$ yields a unique optimal value vector $\vv$ such that $\vrho^\top \vv = \phi(\px)$, where recall that $\phi$ is defined as $\phi(\cdot) = \max_{\py \in \calY} V_{\vrho}(\cdot, \py)$. Next, we prove that the objective function of \pref{prog:xinlp} is lower bounded by the minimum of the function
$\Psi(\vw) = \phi(\vw) + \ell \| \vw - \hat{\px} \|^2$; the latter function is $\ell$-strongly convex, which means that it has a unique minimizer, namely $\tilde{\px}\defeq \prox_{\phi/(2\ell)}(\hat{\px})$. In turn, this implies that the objective function of \eqref{prog:xinlp} is at least $\Psi(\px)$ for any fixed $\vx \in \calX$. Finally, we conclude the proof by showing that $\tilde{\vx}$ is part of a feasible solution of \eqref{prog:xinlp}.

First, we relate the optimal vector $\vv$ that arises by fixing $\px$ in \pref{prog:xinlp} and the function $\phi(\px)$:
 
\begin{proposition}
    Suppose that $\vrho \in \Delta(\calS)$ is full support. For any $\px\in\calX$ there exists a unique optimal vector $\vv^\star$ in \pref{prog:xinlp}. Further,
    \begin{equation}
        \vrho^\top \vv^\star = \phi({\px}).
    \end{equation}
    \label{prop:phi-and-nlp}
\end{proposition}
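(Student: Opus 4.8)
The plan is to exploit that, once the team's strategy $\px \in \calX$ is held fixed, the program \eqref{prog:xinlp} degenerates into the classical linear-programming characterization of the optimal value function of the discounted MDP faced by the adversary. Concretely, for fixed $\px$ the constraints \eqref{eq:xxieqone} and \eqref{eq:xxinonneg} hold by the hypothesis $\px \in \calX$, and the regularizer $\ell\|\px-\hat\px\|^2$ is an additive constant; hence the problem reduces to minimizing $\sum_{s\in\calS}\rho(s)v(s)$ over $\vv\in\R^S$ subject to \eqref{eq:xxibr} alone. This is exactly the LP for the optimal value of the single-agent MDP in which the adversary picks $b$ and the team plays $\px$ (cf.\ \citep[Chapter 6]{puterman2014markov}).

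First I would rewrite \eqref{eq:xxibr}, for this fixed $\px$, as $v(s)\ge (\mathcal{T}_{\px}\vv)(s)$ for all $s\in\calS$, where $(\mathcal{T}_{\px}\vv)(s) \defeq \max_{b\in\calB}\{ r(s,\px,b) + \gamma\sum_{s'\in\calS}\pr(s'|s,\px,b)v(s')\}$ is the Bellman optimality operator of that MDP. Since rewards are bounded and $\gamma<1$, the operator $\mathcal{T}_{\px}$ is monotone and a $\gamma$-contraction in $\|\cdot\|_\infty$, so it has a unique fixed point $\vv^\star$, which is feasible (with \eqref{eq:xxibr} tight at a maximizing $b$ for each $s$). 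Moreover, for any feasible $\vv$ one has $\vv\ge\mathcal{T}_{\px}\vv\ge\mathcal{T}_{\px}^2\vv\ge\cdots\ge\vv^\star$ by monotonicity and convergence of value iteration, so $\vv^\star$ is the pointwise-minimal feasible vector. Because $\vrho$ is full support, $\rho(s)>0$ for every $s$, hence $\sum_s\rho(s)v(s)\ge\sum_s\rho(s)v^\star(s)$ for every feasible $\vv$, with equality if and only if $\vv=\vv^\star$; this delivers both existence and uniqueness of the optimal vector.

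It then remains to identify $\vrho^\top\vv^\star$ with $\phi(\px)=\max_{\py\in\calY}V_\vrho(\px,\py)$. For the upper bound, for any $\py\in\calY$ we have $V_\vrho(\px,\py)=\sum_s\rho(s)V_s(\px,\py)\le\sum_s\rho(s)\max_{\py'\in\calY}V_s(\px,\py')=\sum_s\rho(s)v^\star(s)$, using the standard identity $v^\star(s)=\max_{\py\in\calY}V_s(\px,\py)$ for the adversary's MDP. For the matching lower bound, classical MDP theory provides a single stationary deterministic policy $\py^\star$ (choosing a Bellman-maximizing action at each state) with $V_s(\px,\py^\star)=v^\star(s)$ simultaneously for all $s$, so $V_\vrho(\px,\py^\star)=\vrho^\top\vv^\star$. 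Combining the two bounds yields $\phi(\px)=\vrho^\top\vv^\star$.

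The whole argument is essentially bookkeeping layered on top of standard MDP/LP duality, so I do not expect a genuine obstacle; the two points that require care are (i) the full-support hypothesis, which is precisely what upgrades ``pointwise-minimal feasible vector'' to ``unique minimizer of the $\vrho$-weighted objective,'' and (ii) the observation that fixing $\px$ simultaneously trivializes the nonconvex constraints \eqref{eq:xxieqone}--\eqref{eq:xxinonneg} and freezes the quadratic term, so that nothing from the perturbation interferes with this step (its role only surfaces later, in \Cref{lemma:global}).
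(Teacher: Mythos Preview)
Your proof is correct and follows a different route than the paper's. You argue directly via the Bellman optimality operator: feasibility of $\vv$ is equivalent to $\vv\ge\mathcal{T}_{\px}\vv$, monotonicity plus $\gamma$-contraction give a unique fixed point $\vv^\star$ that is the pointwise-minimal feasible vector, and full support of $\vrho$ then pins down $\vv^\star$ as the unique minimizer; the identity $\vrho^\top\vv^\star=\phi(\px)$ follows from the existence of a single stationary greedy policy attaining $\vv^\star$ at every state. The paper instead goes through LP duality: it writes down the dual of the induced linear program in variables $\vlambda\in\R^{S\times B}$, invokes \citep[Theorem~6.9.1]{puterman2014markov} to set up a bijection between feasible $\vlambda$ and adversary policies $\py$ via discounted state--action occupancy measures, and then reads off both inequalities from strong duality. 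Your argument is more elementary and self-contained for this particular statement; the paper's dual-based route, however, introduces exactly the occupancy-measure correspondence $\vlambda\leftrightarrow\py$ that is reused downstream (notably in \Cref{proposition:Lan-vis}, where a subset of the Lagrange multipliers is identified with a visitation measure), so it serves double duty.
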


\begin{proof}
First, we observe that by fixing a feasible point $\vx \in \calX$ in \pref{prog:xinlp} we recover a linear program with variable $\vv\in \R^S$, which incidentally corresponds to the formulation of a single-agent MDP~\citep[Chapter 6]{puterman2014markov}. The reward function of this MDP is the expected reward of the adversary given that team plays $\px$, and the transition function is the expected transition function conditioned on the team playing $\px \in \calX$. Formally, we introduce this linear program below.

\begin{mini!}{}{\vrho^\top\vv}{}{}
\addConstraint{ {r}(s,\px, b)  + \gamma \sum_{s' \in \calS}{\pr}(s'|s, \px,  b)v(s') }{\leq v(s)}{,~ \forall (s,b) \in \calS \times \calB}.
\end{mini!}

We claim that the optimal solution $\vv^\star$ is unique for any given $\px \in \calX$. Indeed, this is a consequence of the fact that---when $\vrho$ is full-support---it is equivalent to the Bellman optimality equation, whose solutions can be in turn expressed as the fixed point of a contraction operator~\citep[Chapter 6.2~\&~6.4]{puterman2014markov}. Further, let us consider its dual linear program with variables $\vlambda \in \R^{S \times B}$:

\begin{maxi!}{}{\sum_{(s,b) \in \calS \times \calB} {r}(s, {\px},b) \lambda(s,b) }{}{}
\addConstraint{ \rho(\bar{s}) + \sum_{s \in \calS} \sum_{b \in \calB} \lambda(s,b)\gamma {\pr} (\bar{s}|s, {\px}, b)  - \sum_{b\in \calB}\lambda(\bar{s},b)}{= 0, }{~\forall \bar{s} \in \calS; \text{ and}}
\addConstraint{ \lambda(s,b)}{\geq 0, }{~\forall (s,b) \in \calS\times \calB}.
\end{maxi!}

The dual linear program is both feasible and bounded~\citep[Chapter 6.9]{puterman2014markov}. As such, it admits at least one optimal vector $\vlambda^\star$, with the additional property that $\sum_{b \in \calB} \lambda^\star(s,b) > 0$; the latter follows since $\vrho$ is full-support. Moreover, by~\citep[Theorem 6.9.1]{puterman2014markov}, we know that
\begin{itemize}
    \item[(i)] Any $\py\in\calY$ defines a feasible vector $\vlambda$ for the dual linear program; namely,
    \begin{equation}
        \lambda(s,b) = d^{\px, \py}_{\vrho}(s, b) \defeq \sum_{\bar{s}\in\calS} \rho(\bar{s}) \cdot \E_{ \py } \left[ \gamma^t \pr(s\step{t}=s, b\step{t}=b ~|~ \px, s\step{0} = \bar{s} ) \right].
    \end{equation}
    \item[(ii)] Any feasible vector of the dual linear program $\vlambda$ defines a feasible $\py\in\calY$; namely,
    \begin{equation}
        y_{s,b} \defeq \frac{\lambda(s,b)}{\sum_{b'\in\calB}\lambda(s,b')}, ~\forall (s,b) \in \calS \times \calB.
    \end{equation}
    Further, for any such $\py \in \calY$ it holds that $d^{\px, \py}_{\vrho}(s,b) = \lambda(s,b), ~\forall (s,b) \in \calS \times \calB$, where $d^{\px, \py}_{\vrho}(s,b)$ is the induced discounted state-action measure.
\end{itemize}

An implication of this theorem is a ``1--1'' correspondence between $\py\in\calY$ and feasible solutions $\vlambda$ of the dual program. Further, for a pair $(\vlambda, \py)$, the associated discounted state visitation measure is such that $d^{\px, \py}_{\vrho}(s) = \sum_{b\in\calB} \lambda(s,b), ~\forall s\in\calS$. Moreover, strong duality of linear programming implies that
\begin{equation}
    \vrho^\top\vv^\star = \sum_{(s,b) \in \calS\times\calB} \lambda^\star(s,b)r(s,\px,b) = \sum_{s\in\calS}d_{\vrho}^{\px,\py^\star}(s) r(s,\px,\py^\star).
\end{equation}
But, by \Cref{claim:visitation-and-value} we know that
\begin{equation}
    V_{\vrho}(\px,\py)  = \sum_{s\in\calS} d^{\px,\py}_{\vrho}(s) r(s,\px,\py).
\end{equation}
Thus, for an optimal pair $(\vlambda^\star, \py^\star)$, it holds that
\begin{equation}
    V_{\vrho}(\px,\py^\star) = \sum_{(s,b)\in\calS\times\calB}\lambda^\star(s,b) r(s,\px,b)  = \vrho^\top\vv^\star .
\end{equation}
Finally, the optimality of $\vlambda^\star$ in the dual program implies that for any correspondence pair $(\vlambda, \py)$,
\begin{align}
     \vrho^\top\vv^\star &=  \sum_{(s,b)\in\calS\times\calB}\lambda^\star(s,b) r(s,\px,b) \\ &\geq  \sum_{(s,b)\in\calS\times\calB}\lambda(s,b) r(s,\px,b)\\
     &= V_{\vrho}(\px, \py).
\end{align}
\end{proof}

\begin{lemma}
    \label{lemma:global}
    Let $\tilde{\px} \defeq \prox_{\phi/(2\ell)}(\hat{\px})$, and $\tilde{\vv}$ be the unique minimizer for \eqref{prog:xinlp} under a fixed $\px = \tilde{\vx}$. Then, $(\tilde{\px}, \tilde{\vv})$ is a global minimum of \eqref{prog:xinlp}.
\end{lemma}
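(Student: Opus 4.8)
The plan is to exhibit a global lower bound on the objective \eqref{eq:xiobj} of \eqref{prog:xinlp} and then verify that $(\tilde{\px}, \tilde{\vv})$ is feasible and attains it. Consider any feasible point $(\px, \vv)$ of \eqref{prog:xinlp}. Constraint \eqref{eq:xibr} asserts exactly that $\vv$ lies in the feasibility set of the single-agent MDP linear program faced by the adversary when the team is frozen at $\px$; and \Cref{prop:phi-and-nlp} tells us that the minimum of $\vrho^\top\vv$ over that set equals $\phi(\px)$. Hence $\vrho^\top\vv \geq \phi(\px)$, and adding the nonnegative penalty $\ell\|\px-\hat{\px}\|^2$ to both sides shows that the objective at $(\px,\vv)$ is bounded below by $\Psi(\px)$, where $\Psi(\vw) \defeq \phi(\vw) + \ell\|\vw - \hat{\px}\|^2$. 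Since this holds at every feasible point, the optimal value of \eqref{prog:xinlp} is at least $\min_{\vw \in \calX}\Psi(\vw)$.

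The next step is to identify that minimizer with $\tilde{\px}$. Since $\phi$ is $\ell$-weakly convex (\Cref{lemma:max-weakly}), \Cref{cor:weak-conv} gives that $\phi(\vw) + \ell\|\vw\|^2$ is $\ell$-strongly convex; expanding $\|\vw-\hat{\px}\|^2$ and absorbing the resulting linear and constant terms then shows that $\Psi$ is itself $\ell$-strongly convex, so it has a unique minimizer over the convex, compact set $\calX$. Because multiplying the objective by the positive constant $2\ell$ leaves the $\argmin$ unchanged, that minimizer is exactly $\argmin_{\vw\in\calX}\{\frac{1}{2\ell}\phi(\vw) + \frac{1}{2}\|\vw - \hat{\px}\|^2\} = \prox_{\phi/(2\ell)}(\hat{\px}) = \tilde{\px}$, which is well-defined by \Cref{prop:welldefined}. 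Thus $\min_{\vw\in\calX}\Psi(\vw) = \Psi(\tilde{\px})$.

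Finally I would check that $(\tilde{\px}, \tilde{\vv})$ is feasible for \eqref{prog:xinlp} and achieves this value. Feasibility of \eqref{eq:xieqone} and \eqref{eq:xinonneg} is immediate since $\tilde{\px} \in \calX$, while \eqref{eq:xibr} holds because $\tilde{\vv}$ is, by hypothesis, the minimizer of the adversary's MDP linear program at the fixed team strategy $\tilde{\px}$, hence in particular feasible for it. By \Cref{prop:phi-and-nlp} we have $\vrho^\top\tilde{\vv} = \phi(\tilde{\px})$, so the objective value at $(\tilde{\px}, \tilde{\vv})$ equals $\phi(\tilde{\px}) + \ell\|\tilde{\px}-\hat{\px}\|^2 = \Psi(\tilde{\px})$, which matches the global lower bound established above; hence $(\tilde{\px}, \tilde{\vv})$ is a global minimum of \eqref{prog:xinlp}. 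I expect the only delicate point to be the bookkeeping around the proximal point: one must ensure that $\tilde{\px}$ is genuinely the $\calX$-constrained minimizer of $\Psi$ --- which is precisely why the strong convexity of $\Psi$, rather than merely the weak convexity of $\phi$, is the property being invoked --- and that \Cref{prop:phi-and-nlp} gets used in both directions, as a lower bound for an arbitrary feasible $\vv$ and as the equality $\vrho^\top\tilde{\vv}=\phi(\tilde{\px})$ at the candidate optimum. No additional estimates are needed beyond these.
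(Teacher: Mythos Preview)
Your proposal is correct and follows essentially the same approach as the paper's proof: both establish the lower bound $\vrho^\top\vv + \ell\|\px-\hat{\px}\|^2 \geq \Psi(\px) \geq \min_{\vw\in\calX}\Psi(\vw)$ via \Cref{prop:phi-and-nlp}, identify the unique minimizer of the $\ell$-strongly convex $\Psi$ with $\tilde{\px}=\prox_{\phi/(2\ell)}(\hat{\px})$, and then verify that $(\tilde{\px},\tilde{\vv})$ is feasible and attains $\Psi(\tilde{\px})$. Your write-up is in fact slightly more explicit than the paper's about the feasibility check and the two uses of \Cref{prop:phi-and-nlp}, but the argument is the same.
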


\begin{proof}
    Consider a fixed $\vx \in \calX$. By \Cref{prop:phi-and-nlp}, we know that there is a unique optimal vector $\vv^\star$ in \eqref{prog:xinlp}, which also satisfies the equality
    \begin{equation}
            \vrho^\top\vv^\star = \max_{\py \in \calY} V_{\vrho}(\vx, \vy) = \phi(\vx).
            \label{eq:phi-nlp-eq}
    \end{equation}
    Now let us consider the function $\Psi(\vw)\defeq \phi(\vw) + \ell \| \vw - \hat{\px} \|^2$. $\Psi$ is $\ell$-strongly convex and its unique minimum value is attained at $\tilde{\px}\defeq \prox_{\phi/(2\ell)}(\hat{\px})$ (\Cref{cor:weak-conv}). By \eqref{eq:phi-nlp-eq}, it follows that for any feasible $(\px, \vv)$,
    $$\vrho^\top\vv + \ell \|\px - \hat{\px} \|^2 \geq \min_{\px \in \calX}\Psi(\px).$$ Finally, the value $\min_{\vx \in \calX} \Psi(\vx)$ is indeed attained by \eqref{prog:xinlp} when we set $\vx = \tilde{\vx}$, which is feasible for \eqref{prog:xinlp} (see \Cref{lem:xinlp-feasible} and \Cref{prop:phi-and-nlp}).
\end{proof}




\subsection{KKT Conditions for a Minimizer of \texorpdfstring{\pref{prog:xinlp}}{Q-NLP}}

 As we have shown in the previous subsection, $(\tilde{\px}, \tilde{\vv})$ is a minimum of the program
 \pref{prog:xinlp}. In this subsection, we leverage this fact to establish the existence of nonnegative Lagrange multipliers at $(\tilde{\vx}, \tilde{\vv})$ that satisfy the KKT conditions; this will be crucial for our extendibility argument. First, let us write the Lagrangian of the constrained minimization problem associated with~\eqref{prog:xinlp}:
 \begin{align}
     \calL \Big( (\px, \vv), (\vlambda, \vomega, \vpsi, \vzeta) \Big)& = 
     \vrho^\top \vv + \ell \|\px - \hat{\px} \|^2 + \sum_{(s,b)\in\calS\times\calB} \lambda(s,b) \left( r(s, \px, b) + \gamma\sum_{s' \in \calS} \pr(s'|s, \px, b)v(s') - v(s)  \right) \\
     &+ \sum_{(k,s)} \omega(k,s) \left( \px_{k,s}^\top \vone - 1 \right)
     + \sum_{(k,s)} \psi(k, s) \left( 1 - \px_{k,s}^\top \vone \right)
     + \sum_{(k,s,a)} \zeta(k,s,a) \left( -x_{k,s,a}\right), \label{eq:lag}
 \end{align}
 where 
 $$\{ \lambda(s, b) \}_{(s,b)} \cup \{ \omega(k, s) \}_{(k, s) } \cup \{ \psi(k,s) \}_{(k,s)} \cup \{ \zeta(k, s, a) \}_{(k,s,a)}$$
 are the associated Lagrange multipliers. Let us denote by $I$ set indexing the constraints of \eqref{prog:xinlp}. Before we proceed, we partition the set of constraints $I$ into $I = I_1 \cup I_2 \cup I_2' \cup I_3$, such that:

 %
    \begin{itemize}
        \item The constraints of \pref{eq:xibr}, corresponding to the subset of Lagrange multipliers $\{ \lambda(s,b)\}_{(s,b)}$, are assumed to lie in set $I_1$, so that every index $i \in I_1$ is uniquely associated with a pair $(s,b) \in \calS \times \calB$. In particular, for all $i \in I_1$, and the uniquely associated pair $(s,b) \in \calS \times \calB$, we let
            \begin{align}
                g_{i}( \px, \vv) \defeq r(s, \px, b) + \gamma \sum_{s' \in \calS} \pr(s' | s, \px, b) v(s') - v(s).
            \end{align}
            For any index $i \in I_1$, and the associated pair $(s, b) \in \calS \times \calB$, we have that
            \begin{itemize}
                \item For any $\bar{s} \in \calS$,
                    \begin{align}
                        \frac{\partial}{\partial v(\bar{s}) } g_{i}(\px, \vv) = 
                            \begin{cases}
                                \gamma \pr( \bar{s}|s, \px, b), & \text{if $\bar s \neq s$; and} \\
                                -1 + \gamma \pr( s|s, \px, b), & \text{if $\bar s = s$.} \\
                            \end{cases} ~ 
                            \label{eq:grad-constr-br-wrt-v}
                    \end{align}
                \item For any $\bar{k} \in [n], (\bar{s}, \bar{a}) \in \calS \times \calA_k$,
                    \begin{align}
                 \frac{\partial}{\partial x_{\bar{k},\bar s, \bar a}} g_{i}(\px, \vv) = 
                            \begin{cases}
                                0, & \text{if $ \bar s \neq s$; and} \\
                                r\big(s,  ( \ve_{\bar{k}, s, \bar{a}}; \px_{-\bar{k},s}) , b\big) + \gamma \sum_{s' \in \calS} \pr\big(s'|s, (\ve_{\bar{k}, s, \bar{a}}; \px_{-\bar{k}, s}) , b\big) v(s') & \text{if $\bar s = s$.} \\
                            \end{cases}                       
                    \end{align}
            \end{itemize}
        \item The constraints described by \eqref{eq:xieqone}, corresponding to the subset of Lagrange multipliers $\{ \omega(k, s) \}_{(k,s)} \cup \{\psi(k,s)\}_{(k,s)}$, are assumed to lie in the set $I_2 \cup I_2'$ as follows. Every equality constraint \eqref{eq:xieqone} is converted to a pair of inequality constraints corresponding to the sets $I_2$ and $I_2'$, respectively, so that every index $i \in I_2$ or $i \in I_2'$ is uniquely associated with a pair $(k,s) \in [n] \times \calS$. In particular, for all $i \in I_2$, and the associated pair $(k,s) \in [n] \times \calS$, we let
            \begin{align}
                g_{i}( \px, \vv) \defeq \px_{k,s}^\top \vone - 1,
            \end{align}
            and for all $i \in I_2'$
            \begin{align}
                g_{i}'( \px, \vv) \defeq 1 - \px_{k,s}^\top \vone.
            \end{align}   
            For any index $i \in I_2$ and the associated pair $(k, s) \in [n] \times \calS$, we have that
            \begin{itemize}
                \item For any $\bar{s} \in \calS$,
                    \begin{align}
                        \frac{\partial}{\partial v(\bar{s}) } g_{i}(\px,\vv) = 0.
                    \end{align}
                \item For any $\bar{k} \in [n], (\bar{s}, \bar{a}) \in \calS \times \calA$,
                    \begin{align}
                        \frac{\partial}{\partial x_{\bar k,\bar s, \bar a}}g_{i}(\px,\vv)=
                            \begin{cases}
                                1, & \text{ if $(k, s) = (\bar k , \bar s)$; and} \\
                                0, & \text{otherwise.}
                            \end{cases} 
                    \end{align}
            \end{itemize}
            For any index $i \in I_2'$ and the associated pair $(k,s) \in [n] \times \calS$, we have that
            \begin{itemize}
                \item For any $\bar{s} \in \calS$,
                    \begin{align}
                        \frac{\partial}{\partial v(\bar{s}) } g'_{i}(\px,\vv) = 0.
                    \end{align}
                \item For any $\bar{k} \in [n], (\bar{s}, \bar{a}) \in \calS \times \calA$,
                    \begin{align}
                        \frac{\partial}{\partial x_{\bar k,\bar s, \bar a}}g'_{i} (\px, \vv) = 
                            \begin{cases}
                                -1, & \text{ if $(k, s) = (\bar k , \bar s)$;} \\
                                0, & \text{otherwise.}
                            \end{cases}           
                    \end{align}
            \end{itemize}
    
        \item Finally, the constraints described by \eqref{eq:xinonneg}, corresponding to the subset of Lagrangian multipliers $\{\zeta(k, s, a)\}_{(k, s,a)}$, are assumed to lie in the set $I_3$, so that every index $i \in I_3$ is uniquely associated with a triple $(k, s, a)$. In particular, for each $i \in I_3$, and the associated triple $(k,s,a)$, we let
            \begin{align}
                g_{i}(\px, \vv) \defeq - x_{k,s,a}.
            \end{align}
            For any index $i \in I_3$ and the associated triple $(k,s, a)$, we have that
                        \begin{itemize}
                            \item For any $\bar{s} \in \calS$,
                                \begin{align}
                                    \frac{\partial}{\partial v(\bar{s}) } g_{i}(\px, \vv) = 0.
                                \end{align}
                            \item For any $\bar{k} \in [n], (\bar{s}, \bar{a}) \in \calS \times \calA$, \begin{align}
                                    \frac{\partial}{\partial x_{\bar k,\bar s, \bar a}}g_{i}(\px, \vv) =
                                        \begin{cases}
                                            -1, & \text{ if $(k, s, a) = (\bar k , \bar s, \bar a)$;} \\
                                            0, & \text{otherwise.}
                                        \end{cases}                        
                                \end{align}
                        \end{itemize}
    \end{itemize}

    We are now ready to determine the partial derivatives of the Lagrangian, as formalized below.
    \begin{claim}
    Consider the Lagrangian function $\calL$ of \eqref{prog:xinlp}, as introduced in \eqref{eq:lag}. Then, for any $\bar{s} \in \calS$, the partial derivative of $\calL$ with respect to $v(\bar{s})$ reads
    \begin{equation}
            \frac{\partial}{\partial v(\bar{s})}\calL = \rho(\bar s) + \sum_{s \in \calS} \sum_{b \in \calB}    \Big[  \lambda( s, b) \gamma  \pr\left( \bar s|s , \px, b \right) \Big]  - \sum_{b \in \calB} \lambda(\bar s, b).
            \label{eq:gradLwrtv}
    \end{equation}
    Further, for any $\bar{k} \in [n], (\bar{s}, \bar{a}) \in \calS \times \calA_k$,
    \begin{align}
        \frac{\partial}{\partial x_{\bar{k},\bar{s},\bar{a}} }\calL = 
            2\ell( x_{\bar{k},\bar{s},\bar{a}} - \hat{x}_{\bar{k},\bar{s},\bar{a}}) &+ \sum_{b \in \calB} \lambda(\bar{s}, b) \left[ r\left(\bar{s}, (\ve_{\bar{k},\bar{s},\bar{a}}; \vx_{-\bar{k}}) ,b\right) + \gamma \sum_{s \in \calS} \pr\left(s| \bar{s}, (\ve_{\bar{k},\bar{s},\bar{a}}; \vx_{-\bar{k}}), b \right) v(s) \right]
             \\ &+ \omega(\bar{k}, \bar{s}) - \psi(\bar{k}, \bar{s})  -\zeta(\bar{k}, \bar{s}, \bar{a}).
            \label{eq:gradLwrtx}
    \end{align}
    \end{claim}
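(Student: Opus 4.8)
The plan is a direct, term-by-term differentiation of the Lagrangian $\calL$ written out in \eqref{eq:lag}, treating it as a smooth function of the ambient coordinates $(\px, \vv) \in \R^{S \times (\sum_{k=1}^n A_k + B)} \times \R^S$ (the simplex constraints are carried by the multipliers, so there is no need to differentiate over a set with empty interior here; cf.\ the earlier remark). I would split $\calL$ into its five summands---the linear term $\vrho^\top \vv$, the quadratic regularizer $\ell \| \px - \hat{\px} \|^2$, the Bellman-type term $\sum_{(s,b)} \lambda(s,b)\, g_i(\px,\vv)$ (with $g_i$ as defined above for $i \in I_1$), and the normalization and nonnegativity terms weighted by $\vomega$, $\vpsi$, $\vzeta$---and differentiate each in turn, reusing the partial derivatives of the constraint functions $g_i$ already recorded.

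For $\partial \calL / \partial v(\bar{s})$: only $\vrho^\top \vv = \sum_s \rho(s) v(s)$ and the $\vlambda$-term depend on $\vv$, contributing $\rho(\bar{s})$ and $\sum_{(s,b)} \lambda(s,b)\, \partial g_i/\partial v(\bar{s})$ respectively. Substituting \eqref{eq:grad-constr-br-wrt-v}---namely $\partial g_i/\partial v(\bar{s}) = \gamma \pr(\bar{s}|s,\px,b)$ when $\bar{s} \neq s$ and $-1 + \gamma \pr(s|s,\px,b)$ when $\bar{s} = s$---the ``$-1$'' coming from the diagonal constraints assembles into $-\sum_{b} \lambda(\bar{s},b)$, while the remaining contributions assemble into $\sum_{s} \sum_{b} \lambda(s,b)\, \gamma \pr(\bar{s}|s,\px,b)$. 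Adding $\rho(\bar{s})$ gives exactly \eqref{eq:gradLwrtv}.

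For $\partial \calL / \partial x_{\bar{k},\bar{s},\bar{a}}$: the term $\vrho^\top \vv$ drops out; the quadratic regularizer $\ell \sum_{k,s,a} (x_{k,s,a} - \hat{x}_{k,s,a})^2$ contributes $2\ell(x_{\bar{k},\bar{s},\bar{a}} - \hat{x}_{\bar{k},\bar{s},\bar{a}})$; and since $\px_{k,s}^\top \vone = \sum_a x_{k,s,a}$, the $\vomega$-, $\vpsi$-, and $\vzeta$-terms contribute $\omega(\bar{k},\bar{s})$, $-\psi(\bar{k},\bar{s})$, and $-\zeta(\bar{k},\bar{s},\bar{a})$ respectively. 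For the $\vlambda$-term I would invoke the previously computed gradient of $g_i$ with respect to $x_{\bar{k},\bar{s},\bar{a}}$: by the multilinearity of $r(s,\cdot,b)$ and $\pr(s'|s,\cdot,b)$ in $\px$, this derivative vanishes for every constraint whose state is not $\bar{s}$, and for the constraints indexed by $(\bar{s},b)$ it equals $r(\bar{s},(\ve_{\bar{k},\bar{s},\bar{a}}; \px_{-\bar{k}}),b) + \gamma \sum_{s' \in \calS} \pr(s'|\bar{s},(\ve_{\bar{k},\bar{s},\bar{a}}; \px_{-\bar{k}}),b) v(s')$. Multiplying by $\lambda(\bar{s},b)$ and summing over $b \in \calB$ yields the bracketed term in \eqref{eq:gradLwrtx}, completing the identity.

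The computation is essentially bookkeeping, so I do not anticipate a genuine obstacle; the one step that warrants care---already implicit in the multilinear identities stated at the start of this subsection---is the observation that $r(s,\px,b)$ and $\pr(s'|s,\px,b)$ depend on the team's policy only through the action distributions at state $s$. This is precisely what forces the $x_{\bar{k},\bar{s},\bar{a}}$-derivatives of the constraints with state $\neq \bar{s}$ to vanish, thereby collapsing the outer sum over $\calS$ in the $\vlambda$-term. No convexity or constraint-qualification considerations enter at this stage; those are invoked only afterwards.
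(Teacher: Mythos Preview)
Your proposal is correct and follows essentially the same approach as the paper: a direct term-by-term differentiation of the Lagrangian, reusing the precomputed partial derivatives of the constraint functions $g_i$ and exploiting the multilinearity of $r$ and $\pr$ to collapse the state sum in the $\vlambda$-term. The paper's proof is organized slightly differently (objective first, then constraints) but the content is identical.
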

    
    \begin{proof}
    
    Let us first establish \eqref{eq:gradLwrtv}. Fix any $\bar{s} \in \calS$. The partial derivative of the objective function of \eqref{prog:xinlp} with respect to $v(\bar{s})$ reads
        \begin{equation}
            \frac{\partial}{\partial v(\bar{s})} \Big( \vrho^\top \vv  + \ell \|\px - \hat{\px} \|_2^2 \Big) = \rho(\bar{s}).
        \end{equation}
        Further, \eqref{eq:xibr} is the only constraint that involves the variable $v(\bar{s})$, and we previously showed that for any $i \in I_1$,
        \begin{align}
            \frac{\partial}{\partial v(\bar{s}) } g_{i}(\px, \vv) = 
                \begin{cases}
                    \gamma \pr( \bar{s}|s, \px, b), & \text{if $\bar s \neq s$;} \\
                    -1 + \gamma \pr( s|s, \px, b), & \text{if $\bar s = s$,} \\
                \end{cases}
        \end{align}
        where $(s,b) \in \calS \times \calB$ is the pair associated with index $i \in I_1$. Thus,
        \begin{align}
            \sum_{(s,b) \in \calS \times \calB} \frac{\partial}{\partial v(\bar{s})} g_i(\Vec{x}, \Vec{v}) &= \sum_{b \in \calB} \sum_{s \neq \bar{s}} \left[ \lambda(s, b) \gamma \pr(\bar{s}|s, \px, b)\right]  + \sum_{b \in \calB} \lambda(\bar{s},b) \left( -1 + \pr(\bar{s}|\bar{s}, \px, b)\right) \\
            &= \sum_{b \in \calB} \sum_{s \in \calS} \left[ \lambda(s, b) \gamma \pr(\bar{s}|s, \px, b)\right]  - \sum_{b \in \calB} \lambda(\bar{s},b).
        \end{align}
        As a result, we conclude that
        \begin{equation}
          \frac{\partial}{\partial v(\bar{s})} \calL = \rho(\bar{s}) + \sum_{s \in \calS} \sum_{b \in \calB} \lambda(s,b) \gamma \pr(\bar{s}|s,\px,b) - \sum_{b \in \calB} \lambda(\bar{s},b),
        \end{equation}
        establishing \eqref{eq:gradLwrtv}. Next, we show \eqref{eq:gradLwrtx}. We first calculate the partial derivative of the objective function:
        \begin{equation}
            \label{eq:wrtx1}
            \frac{\partial}{\partial x_{\bar{k},\bar{s},\bar{a}}} \Big( \vrho^\top \vv  + \ell\|\px - \hat{\px} \|_2^2 \Big)  = 2\ell(x_{\bar{k},\bar{s},\bar{a}} - \hat{x}_{\bar{k},\bar{s},\bar{a}} ).
        \end{equation}
        Moreover, the summation of all the partial derivatives with respect to $x_{\bar{k},\bar{s},\bar{a}}$, for a fixed triple $(\bar{k},\bar{s}, \bar{a})$, of the constraints \eqref{eq:xibr}, \eqref{eq:xieqone}, and \eqref{eq:xinonneg}, multiplied by their respective Lagrange multipliers reads
        \begin{equation}
            \label{eq:wrtx2}
           \sum_{b \in \calB} \lambda(\bar{s},b) \Big( r\big(\bar{s}, (\ve_{\bar{k},\bar{s}, \bar{a}}; \px_{-\bar{k},s}), b\big) + \gamma \sum_{s \in \calS} \pr \big(s | \bar{s}, (\ve_{\bar{k},\bar{s}, \bar{a}}; \px_{-\bar{k},s}), b\big) v(s) \Big) + \omega(\bar{k},\bar{s}) - \psi(\bar{k},\bar{s}) - \zeta(\bar{k},\bar{s},\bar{a}).
        \end{equation}
        Combining \eqref{eq:wrtx1} and \eqref{eq:wrtx2} implies \eqref{eq:gradLwrtx}, concluding the proof.
    \end{proof}

\subsubsection{Local Optima Satisfy the KKT Conditions}
\label{sec:ahu-cq-sat}

Here we will show that for $(\tilde{\px}, \tilde{\vv}) \in \calX \times \R^S$, a global minimum of \eqref{prog:xinlp}, there exist (nonnegative) Lagrange multipliers that jointly satisfy the KKT conditions. We will first argue in \Cref{lemma:AHU} below that any feasible point of \eqref{prog:xinlp} satisfies the Arrow-Hurwicz-Uzawa constraint qualification. Then, we will leverage \Cref{cor:ahu-cq} to show that any local minimizer of \eqref{prog:xinlp}---and in particular $(\tilde{\vx}, \tilde{\vv})$---attains Lagrange multipliers that satisfy the KKT conditions. The following proof is analogous to~\citep[Ch. 4.4]{vrieze1987stochastic}.

\begin{lemma}
    \label{lemma:AHU}
    Let $(\vx, \vv) \in \calX \times \R^S$ be any feasible point of \eqref{prog:xinlp}. Then, the Arrow-Hurwicz-Uzawa constraint qualification is satisfied at $(\vx, \vv)$.
\end{lemma}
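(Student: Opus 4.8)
The plan is to invoke \Cref{theorem:AHU}: it suffices to exhibit, at an arbitrary feasible point $(\vx,\vv)\in\calX\times\R^S$, a single direction $\vw$ satisfying \eqref{eq:CQ}. First I would classify the active constraints. The constraints \eqref{eq:xieqone}---each split into the two inequalities indexed by $I_2$ and $I_2'$---and \eqref{eq:xinonneg} (indexed by $I_3$) are affine in $(\vx,\vv)$, hence concave, so every such active constraint lies in $A'$. The Bellman-type constraints \eqref{eq:xibr} (indexed by $I_1$) involve the multilinear maps $\vx\mapsto r(s,\vx,b)$ and $\vx\mapsto\pr(s'|s,\vx,b)$ together with the bilinear coupling $\pr(s'|s,\vx,b)\,v(s')$, and are the only constraints that may fail to be concave; thus every active $I_1$ constraint is placed in $A''$. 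As emphasized right after \Cref{theorem:AHU}, the quadratic regularizer in the objective of \eqref{prog:xinlp} plays no role in this verification.

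The crux is the choice of $\vw$. I would take the direction that keeps the team's policy fixed and decreases every value coordinate uniformly: in the coordinates $(\{x_{k,s,a}\},\{v(s)\})$ of \eqref{prog:xinlp}, set $w_{k,s,a}=0$ for all $(k,s,a)$ and $w_s=-1$ for all $s\in\calS$, i.e.\ $\vw=(\vzero,-\vone)$. Checking \eqref{eq:CQ} then reduces to plugging this $\vw$ into the gradient formulas recorded above. For an active concave constraint in $I_2\cup I_2'\cup I_3$ the gradient is supported on the $\vx$-coordinates, so $\vw^\top\nabla g_i = 0\ge 0$; it is precisely the choice $\vw_{\vx}=\vzero$ that lets the two opposite inequalities coming from each equality in \eqref{eq:xieqone} be satisfied at the same time. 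For an active $I_1$ constraint associated with $(s,b)\in\calS\times\calB$, using \eqref{eq:grad-constr-br-wrt-v} and $\vw_{\vx}=\vzero$,
\begin{align*}
    \vw^\top\nabla g_i(\vx,\vv)
    &= \sum_{\bar s\neq s}(-1)\,\gamma\,\pr(\bar s\mid s,\vx,b) + (-1)\bigl(-1+\gamma\,\pr(s\mid s,\vx,b)\bigr)\\
    &= 1 - \gamma\sum_{s'\in\calS}\pr(s'\mid s,\vx,b)
     = 1-\gamma \;>\; 0,
\end{align*}
where the middle step uses that $\pr(\cdot\mid s,\vx,b)=\E_{\va\sim\vx_s}[\pr(\cdot\mid s,\va,b)]$ is a probability distribution on $\calS$, and the strict inequality uses $\gamma\in[0,1)$. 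This is exactly \eqref{eq:CQ}, so the Arrow--Hurwicz--Uzawa constraint qualification holds at $(\vx,\vv)$.

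The argument is short once the right $\vw$ is identified, and I do not expect a genuine obstacle beyond two points of care. First, one must recognize that the $I_1$ constraints are the only possibly non-concave ones, so they are exactly the constraints that $\vw$ must hit with a \emph{strict} inequality---and that one uniform downward shift of $\vv$ produces slack $1-\gamma$ for \emph{all} of them simultaneously; this is the same contraction margin underlying the Bellman operator, and it is where $\gamma<1$ is essential. Second, since $\calX$ is a product of simplices and hence has empty interior, one cannot move ``into the interior'' along the $\vx$-coordinates; the chosen $\vw$ sidesteps this by being tangent to every simplex (its $\vx$-part is zero). With this lemma in hand, \Cref{corollary:KKT-sat}---the existence of nonnegative Lagrange multipliers satisfying the KKT conditions at the global optimum $(\tilde\vx,\tilde\vv)$ of \Cref{lemma:global}---then follows immediately from \Cref{cor:ahu-cq}.
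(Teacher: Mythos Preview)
Your proof is correct. Both you and the paper set $\vw_{\vx}=\vzero$, which makes the affine constraints in $I_2\cup I_2'\cup I_3$ trivial, and then choose the $\vv$-component of $\vw$ so as to force a strict inequality for every active $I_1$ constraint. The difference lies in that choice: the paper takes $\vw_{\vv}=-\vv$ and computes $\vw^\top\nabla g_i = v(s)-\gamma\sum_{s'}\pr(s'|s,\vx,b)\,v(s')\ge r(s,\vx,b)>0$, using feasibility of the constraint together with the standing assumption that rewards are strictly positive; you take $\vw_{\vv}=-\vone$ and obtain $\vw^\top\nabla g_i = 1-\gamma>0$ directly from stochasticity of $\pr(\cdot\mid s,\vx,b)$ and $\gamma<1$. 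Your route is a bit more elementary, yields a uniform margin $1-\gamma$ independent of the point $(\vx,\vv)$, and does not invoke the positive-rewards hypothesis; the paper's route instead ties the slack to the reward at the active constraint. Either argument suffices for the lemma and for the downstream use in \Cref{corollary:KKT-sat}.
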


\begin{proof}
    Suppose that $A( \px, \vv) \subseteq I$ is the set of active constraints at a feasible point $(\px, \vv)$. Let us further denote by $d$ the dimension of $(\vx, \vv)$. To apply \Cref{theorem:AHU}, we have to establish the existence of a vector $\vw \in \R^d$, such that for any $i \in A( \px, \vv)$,
    \begin{align}
    \left\{\begin{array}{cl}
         \vw ^\top \nabla_{(\px,  \vv)} g_i(\px, \vv) > 0, & \text{if $g_i$ is nonconcave; and}  \\  
         \vw ^\top \nabla_{(\px,  \vv)} g_i(\px, \vv) \geq 0, & \text{if $g_i$ concave.} 
    \end{array}
    \right.
    \end{align}
    
    For convenience, we will index the entries of $\vw$ so that $\vw = ( \vw_{\vx} , \vw_{\vv} )$. For reasons that will shortly become clear, we set $\vw_{\vx} = \Vec{0}$. Now consider any active constraint $i$ (if any exists) from the set $I_2 \cup I_2' \cup I_3$. The corresponding constraint function $g_i$ is affine, and in particular concave. Further, it holds that for any $s \in \calS$,
    \begin{equation}
        \frac{\partial}{\partial v(s)} g_i(\Vec{x}, \vv) = 0.
    \end{equation}
    As a result, for our choice of vector $\vw = (\Vec{0}, \vw_{\vv})$, it immediately follows that 
    \begin{equation}
        \vw^\top \nabla_{(\px,  \vv)} g_i(\px, \vv) = 0,
    \end{equation}
    for any $i \in I_2 \cup I_2' \cup I_3$. Let us now treat (if any) active constraints $i \in I_1$. In particular, let $(s,b) \in \calS \times \calB$ be the pair associated with $i$, so that
    \begin{equation}
        g_i( \px, \vv) = r(s, \px, b ) + \gamma \sum_{s' \in \calS} \pr(s'|s, \px, b) v(s') - v(s).
    \end{equation}
    Then, 
    \begin{align}
        \vw ^\top   \nabla g_i (\px, \vv) &= \vw_{\vv}^\top \left.\nabla_{\vv} \Big[  r(s, \px, b) + \gamma \sum_{s' \in \calS} \pr(s'| s, \px, b ) \vv(s') - \vv(s) \Big] \right|_{(\px, \vv)} \\
        &= 
         \sum_{\bar{s} \neq s} w_{{v}(\bar{s})} \gamma \pr( \bar{s} | s, \px, b) + w_{{v}(s)}\Big(                                 -1 + \gamma \pr( s| s, \px, b) \Big) \\
         &= \sum_{\bar{s} \in \calS} w_{{v}(\bar{s})} \gamma \pr( \bar{s} |s, \px, b) - w_{v(s)}.
    \end{align}
    By virtue of \Cref{theorem:AHU}, it suffices to show that there exists $\vw_{\vv}$ so that for any $(s,b) \in \calS \times \calB$,
    \begin{equation}
        \gamma  \sum_{s' \in \calS} w_{{v}(s')} \pr( s' |s, \tilde{\px}, b) - w_{v(s)} > 0.
    \end{equation}
    We will show that this property holds for $\vw_{\vv} \defeq - \vv$. Indeed, since $(\vx, \vv)$ is feasible, we get that
    \begin{equation}
        \gamma  \sum_{s' \in \calS} w_{{v}(s')} \pr( s' |s, \tilde{\px}, b) - w_{v(s)} = - \gamma  \sum_{s' \in \calS} v(s') \pr( s' |s, \tilde{\px}, b) + v(s) \geq r(s, \vx, b) > 0,
    \end{equation}
    since we have assumed that $r(s, \Vec{a}, b) > 0$ for any $(\Vec{a}, b) \in \calA \times \calB$. This concludes the proof.
\end{proof}
    
Next, leveraging this lemma and \Cref{cor:ahu-cq}, we conclude that $(\tilde{\vx}, \tilde{\vv})$---in fact, any local minimum of \eqref{prog:xinlp}---attains nonnegative Lagrange multipliers that satisfy the KKT conditions.

\begin{corollary}
    \label{corollary:KKT-sat}
For any local minimum $(\tilde{\vx}, \tilde{\vv}) \in \calX \times \R^S$ of \eqref{prog:xinlp}, there exists (nonnegative) Lagrange multipliers satisfying the KKT conditions.
\end{corollary}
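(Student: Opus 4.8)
The plan is to obtain this corollary as an essentially immediate corollary of \Cref{lemma:AHU} combined with \Cref{cor:ahu-cq}; all the genuine work has already been done in establishing \Cref{lemma:AHU} (the explicit AHU direction $\vw = (\vzero, -\vv)$ and the use of $r > 0$) and \Cref{lemma:global} (identifying a global — hence local — optimum). First, I would check that \eqref{prog:xinlp} is a bona fide instance of the constrained minimization template \eqref{eq:minimization-problem}: after rewriting each equality constraint \eqref{eq:xieqone} as the two inequalities indexed by $I_2$ and $I_2'$ (as already set up above), its feasibility set has exactly the form \eqref{eq:setD} with the open ambient set $U = \R^{d}$, where $d = S + S\sum_{k=1}^n A_k$ is the dimension of $(\vx, \vv)$. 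Moreover the objective $\vrho^\top \vv + \ell \|\vx - \hat{\vx}\|^2$ and every constraint function $g_i$ — multilinear in $\vx$ and affine in $\vv$ for $i \in I_1$, affine for $i \in I_2 \cup I_2' \cup I_3$ — are continuously differentiable, so the differentiability hypotheses of \Cref{def:KKT} and \Cref{cor:ahu-cq} are met.

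Next, let $(\tilde{\vx}, \tilde{\vv}) \in \calX \times \R^S$ be any local minimum of \eqref{prog:xinlp}; in particular it is a feasible point, so \Cref{lemma:AHU} applies verbatim and the Arrow-Hurwicz-Uzawa constraint qualification holds at $(\tilde{\vx}, \tilde{\vv})$. I would then invoke \Cref{cor:ahu-cq} directly: a local minimum at which AHU-CQ is satisfied admits nonnegative Lagrange multipliers $\{\lambda(s,b)\} \cup \{\omega(k,s)\} \cup \{\psi(k,s)\} \cup \{\zeta(k,s,a)\}$ for the Lagrangian \eqref{eq:lag} that jointly satisfy the \eqref{eq:KKT} conditions — primal feasibility, dual feasibility, complementary slackness, and first-order stationarity, the last being the simultaneous vanishing of the partials computed in the preceding Claim, namely \eqref{eq:gradLwrtv} and \eqref{eq:gradLwrtx}. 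This yields the claim; applying it to the particular global minimum $(\tilde{\vx}, \tilde{\vv}) = (\prox_{\phi/(2\ell)}(\hat{\vx}), \tilde{\vv})$ furnished by \Cref{lemma:global} then gives \Cref{prop:info-kkt}.

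There is no real obstacle at this step; it is bookkeeping, and the only subtlety worth flagging explicitly is the treatment of the equality constraints: each equality $\vx_{k,s}^\top \vone = 1$ is active at every feasible point, and once it is split into its two inequality halves it contributes the pair $(\omega(k,s), \psi(k,s))$ of individually nonnegative multipliers, so that $\omega(k,s) - \psi(k,s)$ plays the role of a sign-unrestricted multiplier for the original equality — consistent with the KKT statement as phrased for inequality-constrained programs in \Cref{def:KKT}. I would also remark in passing that the AHU-CQ, as emphasized after \Cref{theorem:AHU}, does not reference the objective, which is precisely why the single certificate $\vw = (\vzero, -\vv)$ works uniformly at \emph{every} feasible point and hence, a fortiori, at every local optimum.
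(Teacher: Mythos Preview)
Your proposal is correct and follows precisely the paper's approach: the corollary is stated immediately after \Cref{lemma:AHU} as a direct consequence of combining that lemma with \Cref{cor:ahu-cq}, with no additional argument supplied. Your added bookkeeping (checking the program fits the template \eqref{eq:minimization-problem}, differentiability, and the splitting of equality constraints into the $I_2, I_2'$ pairs) is exactly the implicit content the paper takes for granted.
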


In particular, by the first-order stationarity condition and the complementary slackness condition (recall \Cref{def:KKT}) with respect to $(\tilde{\vx}, \tilde{\vv})$, we have

  \begin{subequations}
        \label{eq:kkt}
    {
    \begin{align}
        \def\arraystretch{4}
        \displaystyle
      &
       \nabla_{(\px, \vv)} 
      \calL \Big( (\tilde{\px}, \tilde{\vv}), (\tilde{\vlambda}, \tilde{\vomega}, \tilde{\vpsi}, \tilde{\vzeta}) \Big)= \vzero; \label{eq:fos}
        \\
        &
        \begin{array}{l}
         \tlambda(s,b) \Big( \textstyle r(s, \tilde{\px}, b) + \gamma \sum_{s'\in\calS} \pr(s' | s, \tilde{\px}, b) \tilde{v}(s') - \tilde{v}(s) \Big)  =0, 
         \quad \forall (s,b) \in \calS \times \calB;
         \\
        \displaystyle
         \tomega(k,s) \Big( \tilde{\px}_{k,s}^\top \vone - 1 \Big)   = 0,
         \quad \forall (k,s) \in [n] \times \calS;
         \\
        \displaystyle
         \tpsi(k,s) \Big( 1 - \tilde{\vx}_{k,s}^\top \vone \Big)  = 0,
         \quad \forall (k,s)  \in [n] \times \calS;
         \\
        \displaystyle
         \tzeta(k,s,a)\Big( -  \tilde{x}_{k,s,a}\Big)   = 0, 
         \quad \forall k \in [n], \forall (s, a) \in \calS \times \calA_k; \text{ and}
         \end{array}
         \label{eq:kkt-cs}
         \\
         &
         \tomega(k,s), \tpsi(k,s), \tzeta(k,s,a) \geq 0, ~\forall (k,s)  \in [n] \times \calS, ~ \text{and} ~ \forall k \in [n], (s, a) \in \calS \times \calA_k. \label{eq:kkt-nonneg}
        \end{align}
    }
    \end{subequations}

\subsubsection{Connecting the Lagrange Multipliers with the Visitation Measure}

Here we establish an important connection between a subset of the Lagrange multipliers and the \emph{visitation measure} under a specific policy of the adversary. This fact will be crucial later in the proof of \Cref{lem:lpadv-nash} for controlling the approximation error.

\begin{proposition}
    \label{proposition:Lan-vis}
    Suppose that the initial distribution $\vrho$ is full support. Let also $\tilde{\vlambda} \in \R_{\geq 0}^{S\times B}$ be the associated vector of Lagrange multipliers at $(\tilde{\px}, \tilde{\vv}) \in \calX \times \R^{S}$ that satisfy~\eqref{eq:kkt}. Then, it holds that $\sum_{b \in \calB} \tlambda(s,b) > 0$, for any $s \in \calS$. Further, if
        $$\tilde{y}_{s,b} \defeq \frac{\tlambda(s,b)}{\sum_{b' \in \calB} \tlambda(s,b')},$$
        for any $(s,b) \in \calS \times \calB,$ then it holds that
    \begin{equation}
        \sum_{b \in \calB} \tlambda(s,b) = d^{\tilde{\px},\tilde{\py}}_{\vrho} (s), \quad \forall s \in \calS,
    \end{equation}
    where $d^{\tilde{\px},\tilde{\py}}_{\vrho} (s)$ defines the visitation measure at state $s \in \calS$ induced by $(\tilde \vx, \tilde \vy)$.
\end{proposition}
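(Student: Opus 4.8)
The plan is to read the claimed identity directly off the portion of the KKT system \eqref{eq:kkt} that concerns the value variables $\vv$, and then to recognize the resulting relation as the Bellman flow equation, whose solution is unique. First I would invoke the first-order stationarity condition \eqref{eq:fos} at $(\tilde{\px},\tilde{\vv})$ and substitute the expression \eqref{eq:gradLwrtv} for $\partial\calL/\partial v(\bar s)$; setting that derivative to zero gives, for every $\bar s\in\calS$,
\[
  \sum_{b\in\calB}\tlambda(\bar s,b) \;=\; \rho(\bar s)\;+\;\gamma\sum_{s\in\calS}\sum_{b\in\calB}\tlambda(s,b)\,\pr(\bar s\,|\,s,\tilde{\px},b).
\]
Writing $\mu(s)\defeq\sum_{b\in\calB}\tlambda(s,b)$, the positivity claim is then immediate: since $\tilde{\vlambda}\in\R_{\geq 0}^{S\times B}$ and all transition probabilities are nonnegative, the right-hand side is at least $\rho(\bar s)$, which is strictly positive because $\vrho$ is full support. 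This also makes the normalization $\tilde y_{s,b}=\tlambda(s,b)/\mu(s)$ well defined and produces a bona fide policy $\tilde{\py}\in\calY$.

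Next I would substitute $\tlambda(s,b)=\mu(s)\,\tilde y_{s,b}$ back into the displayed relation and use that $\sum_{b}\tilde y_{s,b}\,\pr(\bar s\,|\,s,\tilde{\px},b)$ equals the $(s,\bar s)$-entry of $\pr(\tilde{\px},\tilde{\py})$, the transition matrix of the Markov chain induced by the joint policy. This rewrites the relation as $\mu(\bar s)=\rho(\bar s)+\gamma\sum_{s}\mu(s)\,\pr(\bar s\,|\,s,\tilde{\px},\tilde{\py})$ for all $\bar s\in\calS$. On the other hand, unrolling the definition of the (unnormalized) discounted state visitation measure shows that $d^{\tilde{\px},\tilde{\py}}_{\vrho}$ obeys exactly the same recursion. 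Hence both $\mu$ and $d^{\tilde{\px},\tilde{\py}}_{\vrho}$ solve the linear system $\big(\mat{I}-\gamma\,\pr(\tilde{\px},\tilde{\py})^{\!\top}\big)\vz=\vrho$; since $\pr(\tilde{\px},\tilde{\py})$ is row-stochastic its spectral radius is $1$, so $\mat{I}-\gamma\,\pr(\tilde{\px},\tilde{\py})^{\!\top}$ is invertible and the system has a unique solution, forcing $\mu\equiv d^{\tilde{\px},\tilde{\py}}_{\vrho}$, which is exactly the claim.

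I do not expect a serious obstacle here: the argument is essentially bookkeeping, and the one genuinely load-bearing observation is that stationarity of $\calL$ in $\vv$ coincides with the dual-feasibility / Bellman-flow equation, after which invertibility of $\mat{I}-\gamma\,\pr(\tilde{\px},\tilde{\py})^{\!\top}$ does the rest. The points to be careful about are the normalization convention for $d^{\policy{}}_{\vrho}$ (it sums to $1/(1-\gamma)$, not $1$, so the recursion must be written with the $\rho(\bar s)$ term and no extra prefactor) and the row-versus-column orientation of the transition matrix. An alternative that avoids re-deriving the flow identity is to observe that the displayed relation says precisely that $\tilde{\vlambda}$ is a feasible point of the dual linear program appearing in the proof of \Cref{prop:phi-and-nlp}, and then to quote item (ii) there (based on \citep[Theorem 6.9.1]{puterman2014markov}), which directly yields $d^{\tilde{\px},\tilde{\py}}_{\vrho}(s,b)=\tlambda(s,b)$ and hence the statement upon summing over $b\in\calB$.
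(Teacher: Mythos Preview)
Your proposal is correct and follows essentially the same approach as the paper: both read off the stationarity condition \eqref{eq:gradLwrtv} in $\vv$, define $\tilde{\py}$ by normalizing $\tilde{\vlambda}$, rewrite the resulting relation as the Bellman flow equation for the induced Markov chain, and conclude by invertibility of $\mat{I}-\gamma\,\pr(\tilde{\px},\tilde{\py})$ (the paper invokes \Cref{claim:iminusp-invertible} and \Cref{fact:visitation} where you argue directly via spectral radius). Your alternative route via the dual LP in \Cref{prop:phi-and-nlp} is also valid and indeed slightly more direct.
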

\begin{proof}
    First of all, it follows directly from \eqref{eq:gradLwrtv} and the fact that the Langrange multipliers are nonnegative that $\sum_{b \in \calB} \tlambda(s,b) > 0$. Next, for convenience, let us define a vector $\vd \in \R_{> 0}^{S}$ such that
    \begin{equation}
        \label{eq:def-d}
        d(s) = \sum_{b \in \calB} \tlambda(s,b),
    \end{equation}
    for all $s \in \calS$. Then, starting from \eqref{eq:gradLwrtv}, we have that for any $\bar s \in \calS$,
    \begin{align}
        \rho(\bar s) + \sum_{s \in \calS} \sum_{b \in \calB} \left[ \frac{d(s)}{d(s)} \tlambda( s, b)   \gamma  \pr\left( \bar s|s , \tilde{\px},b \right) \right]  - d(\bar s) = 0 \label{eq:ddef} \\
        \rho(\bar s) + \sum_{s \in \calS} \sum_{b \in \calB} \left[ \frac{d(s)}{\sum_{b' \in \calB}\tlambda(s,b')} \tlambda( s, b)   \gamma  \pr\left( \bar s|s , \tilde{\px},b \right) \right]  - d(\bar s) = 0 \\
        \rho(\bar s) + \sum_{s \in \calS} \sum_{b \in \calB} \left[ d(s) \frac{\tlambda( s, b)}{\sum_{b' \in \calB} \tlambda(s,b')}    \gamma  \pr\left( \bar s|s , \tilde{\px},b \right) \right]  - d(\bar s) = 0 \\
        \rho(\bar s) + \gamma \sum_{s \in \calS} \sum_{b \in \calB} \Big[ d(s) \tilde{y}_{s,b}  \pr\left( \bar s|s , \tilde{\px},b \right) \Big]  - d(\bar s) = 0 \label{eq:ydef}
        \\
        \rho(\bar s) + \gamma \sum_{s \in \calS} \Big[ d(s) \pr \left( \bar s|s , \tilde{\px}, \tilde{\py} \right) \Big]  - d(\bar s) = 0, \label{eq:lan-vis-final}
    \end{align}
    where \eqref{eq:ddef} uses the definition of $\Vec{d}$ given in \eqref{eq:def-d}; \eqref{eq:ydef} follows from the definition of strategy $\Vec{y}$ in the statement of the proposition; and \eqref{eq:lan-vis-final} is derived since $\mathbb{P}(\bar{s} | s, \tilde{\vx}, \tilde{\vy} ) = \sum_{b \in \calB} \tilde{y}_{s, b} \mathbb{P}(\bar{s} | s, \tilde{\vx}, b)$ (law of total probability). Next, we observe that \eqref{eq:lan-vis-final} can be compactly expressed as $\Vec{\rho}^\top = \Vec{d}^\top \left( \mat{I} - \gamma \pr (\tilde{\px}, \tilde{\py}) \right)$ (recall the definition of matrix $\pr$), in turn implying that
    \begin{equation}
        \vd^\top =  \vrho^\top \left( \mat{I} - \gamma \pr(\tilde{\px}, \tilde{\py}) \right) ^{-1}.
    \end{equation}
    We note that $\left( \mat{I} - \gamma \pr (\tilde{\px}, \tilde{\py}) \right)$ is invertible (\Cref{claim:iminusp-invertible}). As a result, by virtue of \Cref{fact:visitation} we conclude that $\sum_{b \in \calB} \tlambda(s,b) = d^{\tilde{\px},\tilde{\py}}_{\vrho} (s)$, for all $s \in \calS$. This concludes the proof.
\end{proof}

    We also provide an additional auxiliary claim that will be useful in the sequel. The proof follows by carefully leveraging the KKT conditions, as we formalize below.
    \begin{claim}
        Let $(\tilde{\px}, \tilde{\vv}) \in \calX \times \R^S$ be a local optimum of the \eqref{prog:xinlp}, and $\{\tlambda(s,b)\},\{\tpsi(k,s)\},\{\tomega(k,s)\}$ be the associated Lagrange multipliers defined in \pref{eq:kkt}. Then, for any player $k \in [n]$,
        \begin{equation}
        \tilde {v}(s) 
        - \frac{  2 \ell ( \tilde{\px}_{k,s} - \hat{\px}_{k,s} )^\top \tilde{\px}_{k,s} }{\sum_{b \in \calB} \tlambda(s, b)} 
        =  \frac{  \tpsi(k, s) - \tomega(k, s)   }{\sum_{b \in \calB} \tlambda(s, b) }
        , \quad
        \forall s \in \calS.
        \label{eq:keyrelation}
        \end{equation}
    \label{claim:important-relation}
    \end{claim}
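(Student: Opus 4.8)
The plan is to read off \eqref{eq:keyrelation} directly from the first-order stationarity and complementary-slackness parts of \eqref{eq:kkt}, together with primal feasibility. First I would fix a player $k \in [n]$ and a state $s \in \calS$ and write the stationarity condition \eqref{eq:fos} in the coordinate $x_{k,s,a}$: that is, set the expression \eqref{eq:gradLwrtx}, evaluated at $(\tilde{\px},\tilde{\vv})$ with the multipliers $\tilde{\vlambda},\tilde{\vomega},\tilde{\vpsi},\tilde{\vzeta}$, equal to zero, for every action $a \in \calA_k$. The key manipulation is then to multiply each of these $|\calA_k|$ scalar identities by $\tilde{x}_{k,s,a}$ and sum over $a \in \calA_k$; this particular weighting is exactly what converts the per-action gradient terms into the aggregate quantities that appear in \eqref{eq:keyrelation}.

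Next I would simplify the four resulting sums one at a time. The quadratic-regularizer contribution becomes $\sum_a \tilde{x}_{k,s,a}\cdot 2\ell(\tilde{x}_{k,s,a}-\hat{x}_{k,s,a}) = 2\ell(\tilde{\px}_{k,s}-\hat{\px}_{k,s})^\top \tilde{\px}_{k,s}$. For the $\tilde{\vlambda}$-weighted sum I would invoke the multilinearity identities for $r(s,\cdot,b)$ and $\pr(s'|s,\cdot,b)$ recorded earlier in this section to collapse $\sum_a \tilde{x}_{k,s,a}\,r(s,(\ve_{k,s,a};\tilde{\px}_{-k}),b) = r(s,\tilde{\px},b)$ and the analogous identity for $\pr$, reducing this contribution to $\sum_{b\in\calB}\tlambda(s,b)\bigl(r(s,\tilde{\px},b)+\gamma\sum_{s'\in\calS}\pr(s'|s,\tilde{\px},b)\tilde{v}(s')\bigr)$; then the complementary-slackness condition for $\tilde{\vlambda}$ in \eqref{eq:kkt-cs}, applied termwise in $b$, replaces the parenthesized quantity by $\tilde{v}(s)$ inside every summand, so this whole contribution equals $\tilde{v}(s)\sum_{b\in\calB}\tlambda(s,b)$. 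For the $\tilde{\vomega}$ and $\tilde{\vpsi}$ terms, primal feasibility of $(\tilde{\px},\tilde{\vv})$ gives $\sum_a \tilde{x}_{k,s,a} = \tilde{\px}_{k,s}^\top\vone = 1$, so they contribute $\tomega(k,s)-\tpsi(k,s)$. Finally, complementary slackness for $\tilde{\vzeta}$ in \eqref{eq:kkt-cs} gives $\tzeta(k,s,a)\tilde{x}_{k,s,a}=0$ for each $a$, so the $\tilde{\vzeta}$ term drops out entirely.

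Collecting these pieces produces a single scalar equation relating $\tilde{v}(s)\sum_{b\in\calB}\tlambda(s,b)$, the quantity $2\ell(\tilde{\px}_{k,s}-\hat{\px}_{k,s})^\top\tilde{\px}_{k,s}$, and $\tomega(k,s)-\tpsi(k,s)$; rearranging it and dividing through by $\sum_{b\in\calB}\tlambda(s,b)$ yields \eqref{eq:keyrelation}. The one point that needs a word of justification is this division: I would note that $\sum_{b\in\calB}\tlambda(s,b)>0$ for every $s\in\calS$, which follows from the stationarity identity \eqref{eq:gradLwrtv} (set to zero) together with $\rho(s)>0$ and $\tilde{\vlambda}\geq\vzero$, and is in any case recorded in \Cref{proposition:Lan-vis}. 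I do not expect a genuine obstacle here—the whole statement is a bookkeeping consequence of the KKT system \eqref{eq:kkt}; the only place to be careful is that the multilinearity collapse and the $\tlambda$-complementary-slackness substitution $r+\gamma\sum\pr\tilde{v}\mapsto\tilde{v}(s)$ must both be performed termwise in $b$ \emph{before} summing, since the substitution is not valid otherwise.
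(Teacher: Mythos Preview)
Your proposal is correct and follows essentially the same route as the paper's proof: multiply the stationarity condition \eqref{eq:gradLwrtx} by $\tilde{x}_{k,s,a}$, sum over $a$, use multilinearity and the $\tilde{\vzeta}$- and $\tilde{\vlambda}$-complementary-slackness conditions together with $\tilde{\px}_{k,s}^\top\vone=1$, and divide by $\sum_{b}\tlambda(s,b)>0$ via \Cref{proposition:Lan-vis}. The only cosmetic difference is the order in which you invoke the two complementary-slackness conditions.
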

    \begin{proof}
    First, multiplying \Cref{eq:gradLwrtx} by $\tilde{x}_{k,s,a}$ we get that
    \begin{align}
        - 2 \ell( \tilde{x}_{k,s,a} - \hat{x}_{k,s,a} ) \tilde{x}_{k,s,a} &+
        \tilde{x}_{k,s,a}\sum_{b \in \calB} \tlambda(s, b) \Big[  r\big( s, (\ve_{k,s,a}; \tilde{\px}_{-k} ),b\big)
        + \gamma \sum_{s' \in \calS} \pr\big(s'| s, (\ve_{k,s,a}; \tilde{\px}_{-k} ), b\big) v(s') \Big]  \\ 
        &+ \tilde{x}_{k,s,a}\Big( \tomega(k, s) - \tpsi(k, s)\Big)  - \tilde{x}_{k,s,a} \tzeta(k, s, a) = 0, \quad \forall k \in [n], (s,a) \in \calS \times \calA.
    \end{align}
    By complementary slackness, it follows that $- \tilde{x}_{k,s,a} \tzeta(k, s, a) = 0$, for all $k \in [n], (s, a) \in \calS \times \calA_k$. Thus, the previously displayed equation can be simplified as
    \begin{align}
        - 2 \ell( \tilde{x}_{k,s,a} - \hat{x}_{k,s,a} ) \tilde{x}_{k,s,a} &+
        \tilde{x}_{k,s,a}\sum_{b \in \calB} \tlambda(s, b)\Big[  r\big( s, (\ve_{k,s,a} ; \tilde{\px}_{-k} ),b\big)
        + \gamma \sum_{s' \in \calS} \pr\big(s'| s, (\ve_{k,s,a} ; \tilde{\px}_{-k} ), b\big) v(s') \Big]  \\ 
        &+ \tilde{x}_{k,s,a}\Big( \tomega(k, s) - \tpsi(k, s)\Big) = 0, \quad \forall k \in [n], (s,a) \in \calS \times \calA_k.
    \end{align}
    Next, summing the previous equation over all $a \in \calA_k$ it follows that for any $(k, s) \in [n] \times \calS$,
    \begin{align}
        \sum_{a \in \calA_k} \tilde{x}_{k,s,a}\sum_{b \in \calB} \tlambda(s, b) \Big[  r\big( s, (\ve_{k,s,a} ; \tilde{\px}_{-k} ),b\big) 
        + \gamma \sum_{s' \in \calS}\pr\big(s'| s, (\ve_{k,s,a} ; \tilde{\px}_{-k} ), b\big) \tilde{v}(s') \Big] \\ 
        - 2 \ell \sum_{a \in \calA_k}( \tilde{x}_{k,s,a} - \hat{x}_{k,s,a} ) \tilde{x}_{k,s,a} + \sum_{a \in \calA_k} \tilde{x}_{k,s,a}\Big( \tomega(k, s) - \tpsi(k, s)\Big) = 0 \\
        \sum_{b \in \calB} \tlambda(s, b) \sum_{a \in \calA_k} \tilde{x}_{k,s,a} \Big[  r\big( s, (\ve_{k,s,a} ; \tilde{\px}_{-k} ),b\big)  
        + \gamma \sum_{s' \in \calS} \pr\big(s'| s, (\ve_{k,s,a} ; \tilde{\px}_{-k} ), b\big) \tilde{v}(s') \Big] \\
         - 2 \ell ( \tilde{\px}_{k,s} - \hat{\px}_{k,s} )^\top \tilde{\px}_{k,s} + \Big( \tomega(k, s) - \tpsi(k, s)\Big) = 0,
    \end{align}
    where the last derivation uses that $\sum_{a \in \calA_k} \tilde{x}_{k,s,a} = 1$ since $\vx_{k, s} \in \Delta(\calA_k)$. Further, using that 
    \begin{itemize}
        \item[(i)] $\sum_{a \in \calA_k} \tilde{x}_{k,s,a} r\big( s, (\ve_{k,s,a} ; \tilde{\px}_{-k} ),b \big) = r\big( s, \tilde{\px}, b \big)$, and 
        \item[(ii)] $\sum_{a \in \calA_k} \tilde{x}_{k,s,a} \pr\big(s'| s, (\ve_{k,s,a} ; \tilde{\px}_{-k} ), b\big) = \pr\big(s'| s, \tilde{\px}, b \big)$,
    \end{itemize}
    it follows that for any $(k, s) \in [n] \times \calS$,
    \begin{equation}
        \sum_{b \in \calB} \tlambda(s, b)\Big[  r\big( s, \tilde{\px}, b \big) 
        + \gamma \sum_{s' \in \calS} \pr\big(s'| s, \tilde{\px}, b \big) \tilde{v}(s') \Big]  
        - 2 \ell ( \tilde{\px}_{k,s} - \hat{\px}_{k,s} )^\top \tilde{\px}_{k,s} + \Big( \tomega(k, s) - \tpsi(k, s)\Big) = 0. \label{eq:pr1}
    \end{equation}
    Further, we know from the complementary slackness condition~\eqref{eq:kkt} that for any $(s, b) \in \calS \times \calB$,
    $$ \tlambda(s,b) \left(  r\big(s, \tilde{\vx}_s ,b\big) + \gamma \sum_{s' \in \calS} \pr\big(s' | s, \tilde{\vx}_s ,b\big) \tilde{v}(s') - \tilde{v}(s) \right) =0.$$ 
    In turn, summing over all actions $b \in \calB$ we get that for any $s \in \calS$,
    \begin{equation}
        \tilde{v}(s) \sum_{b \in \calB} \tlambda(s,b) = \sum_{b \in \calB} \tlambda(s, b)\Big[  r\big( s, \tilde{\px}, b \big) 
        + \gamma \sum_{s' \in \calS} \pr\big(s'| s, \tilde{\px}, b \big) \tilde{v}(s') \Big].
    \end{equation}
    Combining this equation with \eqref{eq:pr1}, and recalling that $\sum_{b \in \calB} \tlambda(s,b) > 0$ for any $s \in \calS$ (by \Cref{proposition:Lan-vis}), leads to the desired conclusion.
    \end{proof}

    \subsection{
    Efficient Extension to Nash Equilibria}
    \label{sec:fosp-extend-to-ne}
    
    This subsection completes the proof that an $\epsilon$-near stationary point $\hat{\vx}$ of $\phi$ can be extended to a strategy profile $(\hat{\vx}, \hat{\vy})$ that is an $O(\epsilon)$-approximate Nash equilibrium. Further, we provide a computationally efficient way for computing $\hat{\vy}$ based on an appropriate linear program, \eqref{prog:lpadv} introduced below. The upshot is that feasible solutions of \eqref{prog:lpadv} induce the appropriate strategy for the adversary $\hat{\vy} \in \calY$. In this context, we are ready to introduce \eqref{prog:lpadv}, a linear program with free variables $\vlambda \in \R^{S \times B}$:
    
   \begin{subequations}
        \makeatletter
        \def\@currentlabel{\mathrm{LP}_\text{adv}}
        \makeatother
        \renewcommand{\theequation}{$\mathrm{LP}_\text{adv}$.\arabic{equation}}
        \begin{tagblock}[tagname={$\mathrm{LP}_\text{adv}$},content={\label{prog:lpadv}},boxsep=-2.6em]{
        \arraycolsep=1.4pt\def\arraystretch{1.4}
       \begin{alignat}{4}
            &\mathrlap{ \max ~ \sum_{(s,b) \in \calS \times \calB} \lambda(s,b)r\left(s,  \hat{\px}, b \right)} \\
            &\text{s.t.}~ &
            \displaystyle
            \begin{array}{lr}
            {\sum_{b} \lambda(s, b)
                    \left[ r\left(s, (\ve_{k,s,a} ; \hat{\px}_{-k}), b\right) 
                    +
                    \gamma \sum_{s'} \pr\left(s'|s, (\ve_{k,s,a} ; \hat{\px}_{-k}), b \right)  \hat{v}(s')-\hat{v}(s)\right] }
                    &{\geq -c_1 \cdot \epsilon },\\
                    \multicolumn{2}{r}{ \quad \forall s \in \calS};
            \end{array}
            \label{eq:pertconstr2}
            \\
            & &
            \displaystyle
            {\lambda(s, b)   \left(
            \left[ r\left(s, \hat{\px}, b \right) 
                    + \gamma \sum_{s' \in \calS}\pr\left(s'|s,  \hat{\px}, b \right)  \hat{v}(s') \right] - \hat{v}(s) \right) }{\leq c_2\cdot\epsilon}, 
                    \label{eq:pertconstr3}
                     ~ \forall (s,b) \in \calS \times \calB;
                    \\
            \displaystyle
            & &{\lambda(s, b)   \left(
            \left[ r\left(s, \hat{\px}, b \right) 
                    + \gamma \sum_{s' \in \calS} \pr\left(s'|s,  \hat{\px}, b \right)  \hat{v}(s') \right] - \hat{v}(s) \right) }{\geq - c_2\cdot\epsilon}, 
                    \label{eq:pertconstr4} 
                    ~ \forall (s,b) \in \calS \times \calB;
                    \\
            \displaystyle
            &  &\sum_{b \in \calB} \lambda(s,b) \geq \rho(s), ~ \forall s \in \calS; \text{ and} \label{eq:pertconstr5}
                    \\
            \displaystyle
            &  &\sum_{b \in \calB} \lambda(s,b) \leq \frac{1}{1-\gamma}, ~ \forall s \in \calS. \label{eq:pertconstr6}
        \end{alignat}
        }
        \end{tagblock}
    \end{subequations}
    Here,
    \begin{align}
        &c_2 \defeq \frac{1}{1-\gamma} \left( {\rewlip}  + \gamma S {\matlip} \frac{1}{1-\gamma}  + \gamma S L  + L  \right), \\
        &c_1 \defeq 4 \ell + c_2.
    \end{align}
    Before we proceed, a few remarks are in order. First, let us relate \eqref{prog:lpadv} with \eqref{prog:xinlp}. As alluded to by our notation, the free variables of \eqref{prog:lpadv} are related to a subset of the Lagrange multipliers introduced in \eqref{eq:lag}. In light of this, \eqref{eq:pertconstr3} and \eqref{eq:pertconstr4} are related to the complementary slackness condition given in \eqref{eq:kkt-cs}, while \eqref{eq:pertconstr2} is related to the first-order stationary condition~\eqref{eq:fos}. An important point is that we previously established the KKT conditions only with respect to the pair $(\tilde{\vx}, \tilde{\vv})$, instead of $(\hat{\vx}, \hat{\vv})$. This partially explains the ``slackness'' we introduced in \eqref{eq:pertconstr2}, \eqref{eq:pertconstr3} and \eqref{eq:pertconstr4}. Correspondingly, the slackness parameters $c_1$ and $c_2$ were introduced to ``transfer'' the constraints from $(\tilde{\vx}, \tilde{\vv})$ to $(\hat{\vx}, \hat{\vv})$, in a sense that will become clear in the sequel. We stress that expressing \eqref{prog:lpadv} in terms of $(\hat{\vx}, \hat{\vv})$ is crucial since $(\tilde{\vx}, \tilde{\vv})$ is not actually available to the algorithm. We also remark that the objective function of \eqref{prog:lpadv} is not relevant for out argument; even a constant objective would suffice for our purposes.
    
    But first, we need to show that \eqref{prog:lpadv} is feasible. To do so, we construct an auxiliary linear program that, unlike \eqref{prog:lpadv}, depends on $(\tilde{\vx}, \Tilde{\vv})$, an \emph{exact} minimum of \eqref{prog:xinlp}. As such, the feasibility of this program, \eqref{prog:strictlpadv}, is established using the Lagrange multipliers $\tilde{\vlambda} \in \R^{S \times B}$ associated with $(\tilde{\px}, \tilde{\vv})$.
    
    \begin{lemma}
       The linear program \pref{prog:lpadv} with variables $\vlambda \in \R^{S\times B}$ is feasible.
       \label{lem:lpadv-feasible}
    \end{lemma}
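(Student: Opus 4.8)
The plan is to produce an explicit feasible point of \eqref{prog:lpadv}, namely the subvector $\tilde{\vlambda} \in \R^{S \times B}$ of the Lagrange multipliers attached to the global optimum $(\tilde{\px}, \tilde{\vv})$ of \eqref{prog:xinlp}. By \Cref{lemma:global}, $(\tilde{\px}, \tilde{\vv})$ is a global --- hence local --- minimum of \eqref{prog:xinlp}, so \Cref{corollary:KKT-sat} supplies nonnegative multipliers $(\tilde{\vlambda}, \tilde{\vomega}, \tilde{\vpsi}, \tilde{\vzeta})$ satisfying the \eqref{eq:kkt} conditions. The argument then has two stages: first I would check that $\tilde{\vlambda}$ is feasible for an ``ideal'' version of \eqref{prog:lpadv} in which every constraint is evaluated at $(\tilde{\px}, \tilde{\vv})$ instead of $(\hat{\px}, \hat{\vv})$ and the right-hand-side slacks are sharpened to $-4\ell\epsilon$ (for the stationarity-type constraint) and $0$ (for the complementary-slackness-type constraints); then I would transfer feasibility to \eqref{prog:lpadv} itself by a perturbation estimate that uses $\|\tilde{\px} - \hat{\px}\| \le \epsilon$ (\Cref{lem:ipgmax-convergence-lemma}) together with Lipschitz continuity of the data of the adversary's MDP. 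The objective of \eqref{prog:lpadv} plays no role, since only feasibility is at stake.

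For the ideal program I would verify the four groups of constraints one at a time. The analogues of \eqref{eq:pertconstr5}--\eqref{eq:pertconstr6} involve neither $\px$ nor $\vv$, so they transfer verbatim to \eqref{prog:lpadv}; and they hold because \Cref{proposition:Lan-vis} identifies $\sum_{b \in \calB} \tlambda(s,b)$ with the visitation measure $d^{\tilde{\px}, \tilde{\py}}_{\vrho}(s)$ of the induced adversary policy $\tilde{\py}$, and every discounted visitation measure satisfies $\rho(s) \le d^{\tilde{\px}, \tilde{\py}}_{\vrho}(s) \le \tfrac{1}{1-\gamma}$. The analogues of \eqref{eq:pertconstr3}--\eqref{eq:pertconstr4} hold with slack exactly $0$ at $(\tilde{\px}, \tilde{\vv})$, since the complementary-slackness part of \eqref{eq:kkt-cs} is precisely $\tlambda(s,b)\big( r(s, \tilde{\px}, b) + \gamma \sum_{s'} \pr(s'|s, \tilde{\px}, b) \tilde{v}(s') - \tilde{v}(s) \big) = 0$. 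For the analogue of \eqref{eq:pertconstr2} I would start from first-order stationarity \eqref{eq:fos} in the coordinate $x_{k,s,a}$, i.e.\ \eqref{eq:gradLwrtx} set to $0$, discard the nonnegative term $\tzeta(k,s,a)$, and invoke \Cref{claim:important-relation} (relation \eqref{eq:keyrelation}) to cancel $\tomega(k,s)$, $\tpsi(k,s)$ and $\tilde{v}(s)\sum_{b}\tlambda(s,b)$; this leaves
\[
\sum_{b \in \calB} \tlambda(s,b)\Big[ r\big(s, (\ve_{k,s,a}; \tilde{\px}_{-k}), b\big) + \gamma \sum_{s' \in \calS} \pr\big(s'|s, (\ve_{k,s,a}; \tilde{\px}_{-k}), b\big) \tilde{v}(s') - \tilde{v}(s) \Big] \ge -2\ell\big( |\tilde{x}_{k,s,a} - \hat{x}_{k,s,a}| + \|\tilde{\px}_{k,s} - \hat{\px}_{k,s}\|\,\|\tilde{\px}_{k,s}\| \big) \ge -4\ell\epsilon,
\]
using $\|\tilde{\px}_{k,s}\| \le 1$ and the fact that both Euclidean deviations are bounded by $\|\tilde{\px} - \hat{\px}\| \le \epsilon$.

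To complete the proof, I would pass from $(\tilde{\px}, \tilde{\vv})$ to $(\hat{\px}, \hat{\vv})$, where $\hat{\vv}$ is the value vector of the adversary's MDP when the team plays $\hat{\px}$ (so $\hat{v}(s)$ is exactly the quantity appearing in \eqref{prog:lpadv}). Since $r(s, \cdot, b)$ and $\pr(s'|s, \cdot, b)$ are multilinear in the team strategy --- hence Lipschitz with constants controlled by $\rewlip$ and $\matlip$ --- and $V_{\vrho}$ is $L$-Lipschitz, a standard MDP perturbation bound gives $\|\tilde{\vv} - \hat{\vv}\|_\infty = O(\epsilon)$ out of $\|\tilde{\px} - \hat{\px}\| \le \epsilon$. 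Plugging these estimates into each constraint function, and using $\sum_{b} \tlambda(s,b) \le \tfrac{1}{1-\gamma}$ to control the $\lambda$-weighted sums in \eqref{eq:pertconstr2}--\eqref{eq:pertconstr4}, one finds that every constraint changes by at most $c_2\epsilon$; adding this to the $-4\ell\epsilon$ present in the ideal version of \eqref{eq:pertconstr2} yields $-(4\ell + c_2)\epsilon = -c_1\epsilon$, and turns the $0$ slack of the ideal versions of \eqref{eq:pertconstr3}--\eqref{eq:pertconstr4} into $\pm c_2\epsilon$ --- exactly the slacks hard-coded into \eqref{prog:lpadv}. Hence $\tilde{\vlambda}$ is feasible, proving \Cref{lem:lpadv-feasible}. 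The step I expect to be the main obstacle is this final perturbation accounting: obtaining the per-state bound $\|\tilde{\vv} - \hat{\vv}\|_\infty = O(\epsilon)$ with the right constant (a Bellman-contraction argument for the adversary's MDP, sensitive to how the reward and transition kernels vary with the team strategy), and bookkeeping the transfer of \eqref{eq:pertconstr2}--\eqref{eq:pertconstr4} so that the accumulated error fits exactly under $c_2 = \tfrac{1}{1-\gamma}\big( \rewlip + \gamma S \matlip \tfrac{1}{1-\gamma} + \gamma S L + L\big)$. By contrast, the KKT manipulations and the simplex/visitation bounds $\|\tilde{\px}_{k,s}\| \le 1$ and $\rho(s) \le d^{\tilde{\px},\tilde{\py}}_{\vrho}(s) \le \tfrac{1}{1-\gamma}$ are routine once \Cref{claim:important-relation} and \Cref{proposition:Lan-vis} are available.
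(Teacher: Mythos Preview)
Your proposal is correct and follows essentially the same approach as the paper: exhibit $\tilde{\vlambda}$ (the KKT multipliers at $(\tilde{\px},\tilde{\vv})$) as feasible for an ``ideal'' program evaluated at $(\tilde{\px},\tilde{\vv})$ via stationarity, complementary slackness, and \Cref{proposition:Lan-vis}, then transfer to \eqref{prog:lpadv} by Lipschitz perturbation bounds yielding exactly the $c_1,c_2$ slacks. The only minor deviation is that for the per-state bound $|\tilde{v}(s)-\hat{v}(s)|\le L\epsilon$ the paper invokes the $L$-Lipschitz continuity of $\phi$ (applied with $\vrho$ concentrated near a single state) rather than a Bellman-contraction argument, but either route suffices.
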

    \begin{proof}
        We introduce the following auxiliary linear program with variables $\vlambda\in \R^{S\times B}$:

            \begin{subequations}
        \makeatletter
        \def\@currentlabel{\mathrm{LP}'_\text{adv}}
        \makeatother
        \renewcommand{\theequation}{$\mathrm{LP}'_\text{adv}$.\arabic{equation}}
        \begin{tagblock}[tagname={$\mathrm{LP}'_\text{adv}$},content={\label{prog:strictlpadv}},boxsep=-1.2em]
        {
        \arraycolsep=1.4pt\def\arraystretch{1.4}
        \begin{alignat}{4}
            & \mathrlap{ \max ~ \sum_{(s,b) \in \calS \times \calB} \lambda(s,b) r\left(s,  \tilde{\px}, b \right)} \\
            &\text{s.t.}
            ~
            &   
            \begin{array}{lr}
            {\rho(\bar{s}) + \sum_{s \in \calS} \sum_{b \in \calB} \Big[  \lambda( s, b) \gamma  \pr\left( \bar s|s , \tilde{\px},b \right) \Big]  - \sum_{b \in \calB} \lambda(\bar s, b) } & = 0, 
            \\
            \multicolumn{2}{r}{ \qquad \forall (s,b) \in \calS \times \calB};
            \end{array}
            \label{eq:constr1}
            \\
            &  & \begin{array}{lr}
            {       \sum_{b} \lambda(s, b)
                    \left[ r\left(s, (\ve_{k,s,a} ; \tilde{\px}_{-k}), b \right)
                    +
                    \gamma \sum_{s'} \pr\left(s'|s, (\ve_{k,s,a} ; \tilde{\px}_{-k}), b \right)  \tilde{v}(s')-\tilde{v}(s)\right] } & {\geq -4 \epsilon \ell }, 
            \\
            \multicolumn{2}{r}{\forall k \in [n], \forall (s,a) \in \calS \times \calA_k;}
            \end{array}
        \label{eq:constr2}
            \\
            \scriptstyle
            &  &{\lambda(s, b)   \left(
            \left[ r\left(s, \tilde{\px} , b\right) 
                    + \gamma \sum_{s' \in \calS} \pr\left(s'|s,  \tilde{\px}, b \right)  \tilde{v}(s') \right] - \tilde{v}(s) \right) }{ = 0 },  
                    \label{eq:constr3}~
            \forall (s,b) \in \calS \times \calB; 
            \\
            \scriptstyle
        & &{\lambda(s,b)}{\geq 0 }{}, ~ \forall (s,b) \in \calS \times \calB.
        \label{eq:constr4}
        \end{alignat}
        }
        \end{tagblock}
    \end{subequations}
    
        
    Again, the objective function of \eqref{prog:strictlpadv} is not relevant for our argument. For our purposes, it suffices to show that \eqref{prog:strictlpadv} is feasible.
    \begin{lemma}
        \label{lemma:idealfeas}
        Let $\tilde{\vlambda} \in \R^{S \times B}$ be a subset of Lagrange multipliers associated with $(\tilde{\Vec{x}}, \tilde{\vv}) \in \calX \times \R^S$ of \eqref{prog:xinlp}. Then, $\tilde{\vlambda}$ satisfies all the constraints of \pref{prog:strictlpadv}.
    \end{lemma}
    \begin{proof}
    First, \eqref{eq:constr1} is satisfied by the first-order stationarity condition \pref{eq:fos}; \eqref{eq:constr3} is satisfied by the complementary slackness condition \pref{eq:kkt-cs}; and \pref{eq:constr4} by the nonnegative of the Lagrange multipliers \pref{eq:kkt-nonneg}. The rest of the proof is devoted to showing that $\tilde{\vlambda}$ also satisfies \eqref{eq:constr2}. To this end, we first recall that, by \Cref{claim:important-relation}, we have that
    \begin{equation}
        \tilde{\omega}(k, s) - \tilde{\psi}(k, s)
        \label{eq:lag-val} = - \tilde {v}(s) \sum_{b \in \calB}  \tilde{\lambda}(s, b)
        + 2 \ell ( \tilde{\px}_{k,s} - \hat{\px}_{k,s} )^\top \tilde{\px}_{k,s},
    \end{equation}
for any $s \in \calS$. Combing this relation with \pref{eq:gradLwrtx} we get that for any $k \in [n], (s,a) \in \calS \times \calA_k$,
    \begin{align}
        \sum_{b \in \calB} \tilde{\lambda}(s, b)\Big[  r\left( s, (\ve_{k,s,a} ; \tilde{\px}_{-k, s} ), b\right) + \sum_{s' \in \calS} \pr\left(s'| s, (\ve_{k,s,a} ; \tilde{\px}_{-k, s} ), b\right) v(s') \Big] + 2\ell(\tilde{x}_{k,s,a} - \hat{x}_{k,s,a} ) \\
         - \tilde{v}(s) \sum_{b \in \calB} \tilde{\lambda}(s, b) + 2\ell\tilde{\px}_{k,s}^\top ( \tilde{\px}_{k,s} - \hat{\px}_{k,s} ) 
        -\tilde{\zeta}(k, s, a)
        & = 0  
        \\
        \sum_{b \in \calB} \tilde{\lambda}(s, b)\Big[  r\left( s, (\ve_{k,s,a} ; \tilde{\px}_{-k, s} ), b\right) + \sum_{s' \in \calS} \pr\left(s'| s, (\ve_{k,s,a} ; \tilde{\px}_{-k, s} ), b\right) v(s') \Big] - \sum_{b \in \calB} \tilde{\lambda}(s, b) \tilde{v}(s) \\ 
        + 2\ell(\tilde{x}_{k,s,a} - \hat{x}_{k,s,a} ) - 2\ell\tilde{\px}_{k,s}^\top ( \tilde{\px}_{k,s} - \hat{\px}_{k,s} )
        & 
        = \tilde{\zeta}(k, s, a).
    \end{align}
    As a result, we conclude that
    \begin{align}
        \sum_{b \in \calB} \tilde{\lambda}(s, b)\Big[  r\left( s, (\ve_{k,s,a} ; \tilde{\px}_{-k, s} ), b\right) + \sum_{s' \in \calS }\pr\left(s'| s, (\ve_{k,s,a} ; \tilde{\px}_{-k, s} ), b\right) v(s')  - \tilde{v}(s) \Big] 
        \geq - 4 \epsilon \ell,
    \end{align}
    since 
    \begin{itemize}
        \item[(i)] $\tilde{\zeta}(k, s, a) \geq 0$ by \Cref{eq:kkt-nonneg};
        \item[(ii)] $2\ell (\hat{x}_{k,s,a} - \tilde{x}_{k,s,a}) \geq -2 \ell | \hat{x}_{k,s,a} - \tilde{x}_{k,s,a}| \geq - 2 \ell \epsilon$ given that $\| \tilde{\vx} - \hat{\vx} \|_\infty \leq \| \tilde{\vx} - \hat{\vx} \|_2 \leq \epsilon$; and 
        \item[(iii)] $2\ell\tilde{\px}_{k,s}^\top ( \tilde{\px}_{k,s} - \hat{\px}_{k,s}) \geq - 2 \ell \| \tilde{\px}_{k,s}\|_2 \| \tilde{\px}_{k,s} - \hat{\px}_{k,s} \|_2 \geq -2\ell \epsilon$, by Cauchy-Schwarz inequality and the fact that $\| \tilde{\px}_{k,s}\|_2 \leq 1$ since $\tilde{\px}_{k,s} \in \Delta(\calA_k)$.
    \end{itemize}
     This concludes the proof of the lemma.
    \end{proof}
    We next leverage this lemma to establish that the original linear program is also feasible. To do so, we will leverage the Lipschitz continuity of the constraint functions. In particular, consider any $(s, b) \in \calS \times \calB$. We observe that
    \begin{align}
     r\left(s, \tilde{\px} , b\right) + \gamma \sum_{s' \in \calS}  \pr\left(s'|s,  \tilde{\px}, b \right)\tilde{v}(s') - \tilde{v}(s) =& r\left(s, \tilde{\px} , b\right)+ r\left(s, \hat{\px}, b \right) - r\left(s, \hat{\px}, b \right) & \nonumber \\
        +& 
        \gamma \sum_{s' \in \calS} \Big( \pr\left(s'|s,  \tilde{\px}, b \right)+ \pr\left(s'|s,  \hat{\px}, b \right) -\pr\left(s'|s,  \hat{\px}, b \right) \Big)
            \Big( \tilde{v}(s') + \hat{v}(s') - \hat{v}(s') \Big)  \nonumber \\
        -& \tilde{v}(s) + \hat{v}(s) - \hat{v}(s).
    \end{align}
    Thus,
    \begin{align}
    r\left(s, \tilde{\px} , b\right) + \gamma \sum_{s' \in \calS}  \pr\left(s'|s,  \tilde{\px}, b \right)\hat{v}(s') - \tilde{v}(s)  =& 
    r\left(s, \hat{\px} , b\right) + \gamma \sum_{s' \in \calS} \pr\left(s'|s,  \hat{\px}, b \right)\hat{v}(s') - \hat{v}(s)  \nonumber\\
        +& r\left(s, \tilde{\px} , b\right) - r\left(s, \hat{\px}, b \right) \nonumber \\
          +&\gamma \sum_{s' \in \calS} \Big( \pr\left(s'|s,  \tilde{\px}, b \right) -\pr\left(s'|s,  \hat{\px}, b \right) \Big)
             \tilde{v}(s')
        \\
        +& \gamma \sum_{s' \in \calS} \pr\left(s'|s,  \hat{\px}, b \right) \big( \tilde{v}(s') - \hat{v}(s') \big) 
        \\
        -& \tilde{v}(s) + \hat{v}(s).\label{eq:add-sub}
    \end{align}
    As a result, given that
    \begin{equation}
        \tilde{\lambda}(s, b) \left(
            \left[ r\left(s, \tilde{\px} , b\right) 
                    + \gamma \sum_{s' \in \calS} \pr\left(s'|s,  \tilde{\px}, b \right)  \tilde{v}(s') \right] - \tilde{v}(s) \right)  = 0,
    \end{equation}
    it follows that from \eqref{eq:add-sub} and the triangle inequality that
    \begin{gather}
        \left| {\tilde{\lambda}(s, b)   \left(
            \left[ r\left(s, \hat{\px} , b\right) 
                    + \gamma \sum_{s'} \pr\left(s'|s,  \hat{\px}, b \right)  \hat{v}(s') \right] - \hat{v}(s) \right) } \right| \leq  
        \frac{1}{1-\gamma} \left( {\rewlip}  + \gamma S {\matlip} \frac{1}{1-\gamma}  + \gamma S L + L  \right)\distht.
    \end{gather}
    This inequality uses that $\|\tilde{\px} - \hat{\px} \|\leq \epsilon$; the fact that $\tilde{\lambda}(s, b) \leq \frac{1}{1 - \gamma}$ \pref{eq:pertconstr6}; and the Lipschitz continuity bounds provided in \Cref{claim:various-bounds}:
   \begin{align}
   \begin{cases}
      \left| r\left(s, \tilde{\px} , b\right) - r\left(s, \hat{\px}, b \right) \right| \leq \rewlip \distht;
     \\
      \left| \sum_{s' \in \calS} \Big( \pr\left(s'|s, \tilde{\px}, b \right) -\pr\left(s'|s,  \hat{\px}, b \right) \Big)
             \tilde{v}(s')  \right| \leq  S  \matlip \frac{1}{1-\gamma} \distht;
    \\
     \left|  \sum_{s' \in \calS} \pr\left(s'|s,  \hat{\px}, b \right) \big( \tilde{v}(s') - \hat{v}(s') \big)  \right| \leq S L \distht; \text{ and}
    \\
     \left| \tilde{v}(s) - \hat{v}(s) \right| \leq  L \distht.
   \end{cases}
    \end{align}
    We proceed in a similar manner for \eqref{eq:pertconstr2}, yielding that
    \begin{align}
        \sum_{b \in \calB} \tlambda(s, b)
        \left[ r\left(s, (\ve_{k,s,a} ; \hat{\px}_{-k}), b\right) 
        + \gamma \sum_{s' \in \calS} \pr\left(s'|s, (\ve_{k,s,a} ; \hat{\px}_{-k}), b \right)  \hat{v}(s')-\hat{v}(s)\right] \geq \nonumber \\ \geq -4 \epsilon \ell - 
         \frac{1}{1-\gamma} \left( \rewlip + \gamma S \matlip \frac{1}{1-\gamma}  + \gamma S L + L  \right) \distht.
    \end{align}
    Thus, $\tilde{\vlambda}$ satisfies \eqref{eq:pertconstr2}. Finally, $\tilde{\vlambda}$ also satisfies \eqref{eq:pertconstr5} and \eqref{eq:pertconstr6}, implied directly by \Cref{proposition:Lan-vis} and \Cref{claim:rho-ul}.
    \end{proof}
    
 

      \begin{lemma}
          \label{lem:lpadv-nash}
      Let $\hat{\px}$ be an $\epsilon$-nearly stationary point of $\phi(\cdot) = \max_{\py \in \calY} V_{\vrho}( \cdot, \py )$. Any feasible solution $\vlambda \in \R^{S\times B}$ of \pref{prog:lpadv} induces an $O(\epsilon)$-approximate Nash equilibrium for the adversarial team Markov game.
    \end{lemma}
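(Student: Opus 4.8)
The plan is to verify the two defining inequalities of an approximate Nash equilibrium (\Cref{def:Nash}) directly, feeding the relaxed constraints of \eqref{prog:lpadv} into a performance-difference identity for state-visitation measures. Fix a feasible $\vlambda$, let $\hat{\vv}$ be the (unique) optimal value vector of the single-agent MDP the adversary faces when the team is frozen at $\hat{\vx}$ (so $\hat{v}(s)$ is exactly the coefficient appearing in \eqref{prog:lpadv}), and note $\vrho^\top\hat{\vv}=\phi(\hat{\vx})$ by \Cref{prop:phi-and-nlp} applied with $\px=\hat{\vx}$. Put $\mu(s)\defeq\sum_{b\in\calB}\lambda(s,b)$, which is $\ge\rho(s)>0$ by \eqref{eq:pertconstr5}, and take the adversary's policy to be $\hat{y}_{s,b}\defeq\lambda(s,b)/\mu(s)$ as in \Cref{proposition:Lan-vis} (that $\hat{\vy}$ is, up to $O(\epsilon)$, a bona fide element of $\calY$ is guaranteed by the relaxed complementary-slackness constraints \eqref{eq:pertconstr3}--\eqref{eq:pertconstr4}). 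The structural fact used throughout is that for \emph{any} joint policy $(\vx,\vy)$,
\[
V_{\vrho}(\vx,\vy) = \vrho^\top\hat{\vv} - \sum_{s\in\calS} d^{\vx,\vy}_{\vrho}(s)\Big(\hat{v}(s) - r(s,\vx,\vy) - \gamma\sum_{s'\in\calS}\pr(s'|s,\vx,\vy)\hat{v}(s')\Big),
\]
which follows from \Cref{claim:visitation-and-value} together with the flow identity $d^{\vx,\vy}_{\vrho}(s')=\rho(s')+\gamma\sum_{s} d^{\vx,\vy}_{\vrho}(s)\pr(s'|s,\vx,\vy)$ by telescoping; I will also use that $d^{\vx,\vy}_{\vrho}(s)\le D\,\rho(s)$ for all $s$, which is the definition of the mismatch coefficient (\Cref{def:mismatch}) and the one place where full support of $\vrho$ matters.

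\emph{The adversary cannot profitably deviate.} With $\bar{Q}(s,b)\defeq r(s,\hat{\vx},b)+\gamma\sum_{s'}\pr(s'|s,\hat{\vx},b)\hat{v}(s')$ and using multilinearity of $r$ and $\pr$ in the adversary's action, the parenthesized term in the identity at $(\hat{\vx},\hat{\vy})$ equals $-\tfrac{1}{\mu(s)}\sum_{b}\lambda(s,b)\big(\bar{Q}(s,b)-\hat{v}(s)\big)$, whose absolute value is at most $Bc_2\epsilon/\mu(s)\le Bc_2\epsilon/\rho(s)$ by \eqref{eq:pertconstr3}--\eqref{eq:pertconstr4} and the triangle inequality. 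Substituting and bounding $d^{\hat{\vx},\hat{\vy}}_{\vrho}(s)\le D\rho(s)$ gives $|V_{\vrho}(\hat{\vx},\hat{\vy})-\phi(\hat{\vx})|\le SDBc_2\epsilon$; since $\phi(\hat{\vx})=\max_{\vy'\in\calY}V_{\vrho}(\hat{\vx},\vy')$, this is precisely the second inequality of \Cref{def:Nash} with slack $SDBc_2\epsilon=O(\epsilon)$.

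\emph{No team member can profitably deviate.} Fix $k\in[n]$ and $\vx_k'\in\Pi_k$ and write $\vx'\defeq(\vx_k';\hat{\vx}_{-k})$. By multilinearity in player $k$'s action at state $s$ and $\sum_{a}x'_{k,s,a}=1$, the parenthesized term at $(\vx',\hat{\vy})$ equals $\sum_{a}x'_{k,s,a}\big(\hat{v}(s)-\sum_{b}\hat{y}_{s,b}Q_k(s,a,b)\big)\le\max_{a}\big(\hat{v}(s)-\sum_{b}\hat{y}_{s,b}Q_k(s,a,b)\big)$, where $Q_k(s,a,b)\defeq r(s,(\ve_{k,s,a};\hat{\vx}_{-k}),b)+\gamma\sum_{s'}\pr(s'|s,(\ve_{k,s,a};\hat{\vx}_{-k}),b)\hat{v}(s')$. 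Dividing \eqref{eq:pertconstr2} by $\mu(s)>0$ gives $\sum_{b}\hat{y}_{s,b}Q_k(s,a,b)\ge\hat{v}(s)-c_1\epsilon/\mu(s)\ge\hat{v}(s)-c_1\epsilon/\rho(s)$ for every $(s,a)$, so the parenthesized term at $(\vx',\hat{\vy})$ is at most $c_1\epsilon/\rho(s)$; plugging into the identity and using $d^{\vx',\hat{\vy}}_{\vrho}(s)\le D\rho(s)$ yields $V_{\vrho}(\vx',\hat{\vy})\ge\phi(\hat{\vx})-SDc_1\epsilon$. Combined with $V_{\vrho}(\hat{\vx},\hat{\vy})\le\phi(\hat{\vx})+SDBc_2\epsilon$ from the previous step, this gives $V_{\vrho}(\hat{\vx},\hat{\vy})\le V_{\vrho}(\vx',\hat{\vy})+SD(c_1+Bc_2)\epsilon$ --- the first inequality of \Cref{def:Nash}. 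As $c_1,c_2$, and hence $SD(c_1+Bc_2)$, are polynomial in $S$, $\sum_k A_k+B$ and $1/(1-\gamma)$, the pair $(\hat{\vx},\hat{\vy})$ is an $O(\epsilon)$-approximate Nash equilibrium.

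\emph{Anticipated main obstacle.} The crux is the team-deviation step: one must set up the performance-difference identity against the \emph{adversary-optimal} value $\hat{\vv}$ (not the value of $(\hat{\vx},\hat{\vy})$), strip off the deviating player's simplex variable via multilinearity to reach the per-$(s,a)$ inequality furnished by \eqref{eq:pertconstr2}, and then absorb the emerging $1/\rho(s)$ factors using the mismatch coefficient. A secondary point---already handled by \Cref{lem:lpadv-feasible} and the specific choice of $c_1,c_2$---is that the entire argument is carried out at $(\hat{\vx},\hat{\vv})$ rather than at the exact minimizer $(\tilde{\vx},\tilde{\vv})$ where the KKT conditions hold, so the slacks $c_1\epsilon,c_2\epsilon$ must be large enough (via Lipschitz continuity of the constraint functions and $\|\hat{\vx}-\tilde{\vx}\|\le\epsilon$) to cover the transfer; finally one should confirm the induced $\hat{\vy}$ genuinely lies in $\calY$.
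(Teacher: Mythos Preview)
Your proof is correct and follows essentially the same route as the paper: both arguments use the LP constraints \eqref{eq:pertconstr2}--\eqref{eq:pertconstr5} to bound the per-state Bellman residuals and then translate those into value-function bounds through the state-visitation measure and the mismatch coefficient, arriving at the identical slacks $SDBc_2\epsilon$ for the adversary and $SD(c_1+Bc_2)\epsilon$ for team deviations. The only cosmetic difference is that you package the translation via the performance-difference identity, whereas the paper unrolls the same computation through the vector inequality of \Cref{claim:vec-inequality} and the Neumann-series inverse $(\mat{I}-\gamma\pr)^{-1}$ (\Cref{fact:visitation}); these are equivalent formulations of the same calculation.
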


    \begin{proof}
    Consider any feasible solution $\vlambda \in \R^{S \times B}$ of \eqref{prog:lpadv}, and the induced strategy for the adversary defined as
    \begin{equation}
        \hat{y}_{s,b} \defeq \frac{\lambda(s,b)}{\sum_{b\in \calB} \lambda(s,b)},
    \end{equation}
    for any $(s,b) \in \calS \times \calB$; this is indeed well-defined since $\sum_{b \in \cal B} \lambda(s, b) \geq \rho(s) > 0$, which in turn follows since $\vrho$ has full support. We will show that $(\hat{\vx}, \hat{\vy})$ is an $O(\epsilon)$-approximate Nash equilibrium. Our proof proceeds in two parts. First, we show that, if the team is responding according to $\hat{\vx}$, then $\hat{y}$ is an $O(\epsilon)$-approximate best response for the adversary. Analogously, in the second part of the proof we argue about deviations from team players.
    %
    
    \paragraph{Controlling deviations of the adversary.} Fix any $\vec{y} \in \calY$. Given that $(\hat{\px}, \hat{\vv} )$ is a feasible solution of \pref{prog:xinlp}, it follows that for any $(s,b) \in \calS \times \calB$,
    \begin{equation}
        y_{s,b} \left( r(s, \hat{\vx}, b) + \gamma \sum_{s' \in \calS} \pr(s' | s, \hat{\vx}, b) \hat{v}(s') \right) \leq \hat{v}(s) y_{s, b},
    \end{equation}
    Summing over all $b \in \calB$ yields that \begin{equation}
         \sum_{b \in \calB} y_{s, b} \left( r(s, \hat{\vx}, b) + \gamma \sum_{s' \in \calS} \pr(s' | s, \hat{\vx}, b) \hat{v}(s') \right) \leq \hat{v}(s), 
    \end{equation}
    in turn implying that
    \begin{equation}
         r\left(s, \hat{\px}, \py \right) + \gamma \sum_{s' \in \calS} \pr \left(s' | s, \hat{\px}, \py \right)\hat{v}(s') \leq \hat{v}(s),
    \end{equation}
    for any $s \in \calS$. The last inequality can be succinctly expressed in the following vector (element-wise) inequality:
    \begin{equation}
        \vr\left(\hat{\px}, \py \right) + \gamma \pr \left(\hat{\px}, \py \right)\hat{\vv} \leq \hat{\vv}.
    \end{equation}
    From this inequality it follows that
    \begin{equation}
    \label{eq:advdev-rel1}
      \hat{\vv} \geq \sum_{t=0}^\infty \gamma^t \pr^t (\hat{\vx}, \hat{\vy}) \vr(\hat{\vx}, \vy) = (\mat{I} - \gamma \pr(\hat{\vx}, \vy) )^{-1} \vr(\hat{\vx}, \vy) = \Vec{V}(\hat{\vx}, \vy),
    \end{equation}
    where we used \Cref{claim:vec-inequality,fact:value-from-matrix,claim:iminusp-invertible}, and the notation $\Vec{V}(\hat{\vx}, \vy)$ to represent the value vector under $(\hat{\vx}, \vy)$---recall \eqref{eq:value-func-def}. Moreover, given that $\vlambda$ is a feasible solution of \eqref{prog:lpadv}, manipulating \eqref{eq:pertconstr4} yields that for any $(s,b) \in \calS \times \calB$,
    \begin{align}
        \lambda(s, b)   \left(
            \left[ r\left(s, \hat{\px}, b \right) 
                    + \gamma \sum_{s' \in \calS}\pr\left(s'|s,  \hat{\px}, b \right)  \hat{v}(s') \right] - \hat{v}(s) \right) 
        \geq - c_2 \epsilon
        \\
        \frac{1}{ \sum_{b' \in \calB} \lambda(s,b') } \lambda(s, b) \left(
            \left[ r\left(s, \hat{\px}, b \right) 
                    + \gamma \sum_{s' \in \calS} \pr\left(s'|s,  \hat{\px}, b \right)  \hat{v}(s') \right] - \hat{v}(s) \right) 
        \geq -  \frac{c_2 \epsilon}{ \sum_{b' \in \calB} \lambda(s,b') }
        \label{eq:b2-4-1}
        \\
        \hat{y}_{s,b} \left(
            \left[ r\left(s, \hat{\px}, b \right) 
                    + \gamma \sum_{s' \in \calS} \pr\left(s'|s,  \hat{\px}, b \right)  \hat{v}(s') \right] - \hat{v}(s) \right) 
        \geq - \frac{ c_2 \epsilon}{ \sum_{b' \in \calB} \lambda(s,b') },
        \label{eq:b2-4-2}
        \end{align}
        where \eqref{eq:b2-4-1} follows since $\sum_{b' \in \calB} \lambda(s,b') > 0$, while \eqref{eq:b2-4-2} follows from the definition of $\hat{y}_{s,b}$. Summing over all $b \in \calB$,
        \begin{align}
        \sum_{b \in \calB}\hat{y}_{s,b}   \left(
            \left[ r\left(s, \hat{\px}, b \right) 
                    + \gamma \sum_{s' \in \calS}\pr\left(s'|s,  \hat{\px}, b \right)  \hat{v}(s') \right] - \hat{v}(s) \right) 
        \geq - \sum_{b \in \calB}  \frac{ c_2 \epsilon}{ \sum_{b' \in \calB} \lambda(s,b') }
        \\
        \sum_{b \in \calB} \hat{y}_{s,b}   \left(
             r\left(s, \hat{\px}, b \right) 
                    + \gamma \sum_{s' \in \calS}\pr\left(s'|s,  \hat{\px}, b \right)  \hat{v}(s')  \right) - \hat{v}(s)
        \geq -B  \frac{c_2 \epsilon}{ \sum_{b \in \calB} \lambda(s,b) } \\
        r\left(s, \hat{\px}, \hat{\py} \right) 
                    + \gamma \sum_{s' \in \calS} \pr\left(s'|s,  \hat{\px}, , \hat{\py} \right)  \hat{v}(s')
        \geq \hat{v}(s) -  B \frac{c_2 \epsilon}{ \sum_{b \in \calB} \lambda(s,b)}. 
        \label{eq:b2-4-3}
    \end{align}
    Let us set $\xi_s \defeq \frac{c_2\cdot\epsilon}{\sum_b \lambda(s,b)}$ for each $s \in \calS$. Continuing from \eqref{eq:b2-4-3}, we have that
    
    $$\vr\left(\hat{\px}, \hat{\py} \right) 
                    + \gamma  \pr\left( \hat{\px}, \hat{\py} \right)  \hat{\vv} 
        \geq \hat{\vv} - B \vxi,$$
    which in turn implies that
    \begin{equation}
        \vec{V}(\hat{\px},  \hat{\py})
        \geq \vv - B  \left( \mat{I} - \gamma \pr (\hat{\px}, \hat{\py}) \right) ^{-1}\vxi,
    \end{equation}
    by \Cref{fact:value-from-matrix,claim:vec-inequality}. Thus,
    \begin{align}
        V_{\vrho}(\hat{\px},  \hat{\py})
        &\geq \vrho^\top \hat{\vv} - B \vrho ^\top \left( \mat{I} - \gamma \pr (\hat{\px}, \hat{\py}) \right) ^{-1}\vxi
        \label{eq:b2-1-4}
        \\
        &\geq \vrho^\top \hat{\vv} - B \vxi^\top \vd_{\vrho}^{\hat{\px}, \hat{\py} }
        \label{eq:b2-1-5}
        \\
        &\geq \vrho^\top \hat{\vv} - c_2 B \sum_{s \in \calS} \frac{d_{\vrho}^{\hat{\px}, \hat{\py}}(s)}{\sum_{b \in \calB} \lambda(s,b)}\epsilon
        \label{eq:b2-1-6}
        \\
        &\geq \vrho^\top \hat{\vv} - c_2 B \sum_{s \in \calS} \frac{d_{\vrho}^{\hat{\px}, \hat{\py}}(s)}{\rho(s)}\epsilon \label{eq:b2-1-7} \\
        &\geq \vrho^\top \hat{\vv} -c_2 B {S} D\epsilon,
        \label{eq:advdev-rel2}
    \end{align}
    where \eqref{eq:b2-1-5} follows from \Cref{fact:visitation}; \eqref{eq:b2-1-7} follows from the feasibility constraint $\sum_{b \in \calB} \lambda(s,b) \geq \rho(s)$; and \eqref{eq:advdev-rel2} uses the definition of mismatch coefficient (\Cref{def:mismatch}). As a result, combining \eqref{eq:advdev-rel1} and \pref{eq:advdev-rel2}, we conclude that for any $\vec{y} \in \calY$,
    \begin{equation}
        \label{eq:less-value}
        V_{\vrho}(\hat{\vx}, \hat{\vy}) \geq \vrho^\top \hat{\vv} - c_2 BSD  \epsilon \geq V_{\vrho}(\hat{\vx}, \vec{y}) -  c_2 BSD \epsilon.
    \end{equation}
    
    \paragraph{Controlling deviations of a team player.} Next, we show that any deviation from a single player can only yield a small improvement for the player. Fix any player $k \in [n]$ and strategy $\vec{x}_k \in \calX_k$. The proof proceeds analogously to our previous argument. In particular, for any state $s \in \calS$, multiplying \pref{eq:pertconstr2} by $\vec{x}_{k, s, a}$, and summing over all actions $a \in \calA_k$ yields that
    \begin{equation}
        \sum_{b \in \calB} \lambda(s, b)
                    \left[ r\left(s, (\vx_k; \hat{\vx}_{-k}), b\right) 
                    +
                    \gamma \sum_{s' \in \calS} \pr\left(s'|s, (\vec{x}_k; \hat{\px}_{-k}), b \right)  \hat{v}(s')\right]  \geq \hat{v}(s) \sum_{b \in \calB} \lambda(s,b) - c_1 \epsilon;
    \end{equation}
    here, we leveraged the feasibility of $\vlambda$. Further, given that $\sum_{b \in \calB} \lambda(s,b) > 0$,
    \begin{equation}
        r(s,(\vx_k; \hat{\vx}_{-k}), \hat{\vy}) + \gamma \sum_{s' \in \calS} \pr(s' | s, (\vx_k; \hat{\vx}_{-k}), \hat{\vy}) \geq \hat{v}(s) - c_1\cdot\epsilon \frac{1}{\sum_{b \in \calB} \lambda(s,b)} \geq \hat{v}(s) - c_1 \epsilon \frac{1}{\rho(s)},
    \end{equation}
    for any $s \in \calS$, since $\sum_{b \in \calB} \lambda(s,b) \geq \rho(s)$. Hence,
    \begin{align}
        \vr\left( ( \px_k ; \hat{\px}_{-k} ), \hat{\py} \right) + \gamma \pr\left( ( \px_k ; \hat{\px}_{-k} ), \hat{\py} \right)\hat{\vv} \geq \hat{\vv} - c_1 \epsilon \frac{1}{\vrho}.
    \end{align}
    In turn, by \Cref{claim:vec-inequality}, this implies that
    \begin{align}
        \vec{V}((\vec{x}_k, \hat{\vx}_{-k}), \hat{\vy}) \geq \hat{\vv} - c_1\cdot\epsilon (\mat{I} - \gamma \pr((\vx_k; \hat{\vx}_{-k}), \hat{\vy}) )^{-1} \frac{1}{\vrho}.
    \end{align}
    Thus, we conclude that
    \begin{equation}
        \label{eq:part1}
        V_{\vrho} (\vec{x}_k, \hat{\vx}_{-k}), \hat{\vy}) \geq \vrho^\top \hat{\vv} - c_1 D S \epsilon,
    \end{equation}
    where we used \Cref{fact:visitation} and \Cref{def:mismatch}. Next,
    using \eqref{eq:pertconstr3} we obtain that for all $(s,b) \in \calS \times \calB$,
    \begin{align}
        \lambda(s, b)   \left(
            \left[ r\left(s, \hat{\px}, b \right) 
                    + \gamma \sum_{s' \in \calS}\pr\left(s'|s,  \hat{\px}, b \right)  \hat{v}(s') \right] - \hat{v}(s) \right) &\leq c_2 \epsilon \\
        \frac{\lambda(s, b)}{\sum_{b' \in \calB} \lambda(s,b')} \left(
            \left[ r\left(s, \hat{\px}, b \right) 
                    + \gamma \sum_{s' \in \calS}\pr\left(s'|s,  \hat{\px}, b \right)  \hat{v}(s') \right] - \hat{v}(s) \right) &\leq \frac{c_2 \epsilon}{\sum_{b' \in \calB} \lambda(s,b')}.     
    \end{align}
    For convenience, let us set $\xi_s \defeq \frac{c_2 \epsilon}{\sum_{b' \in \calB} \lambda(s,b')}$. By definition of $\hat{\vy}$, we have
    \begin{align}
            \hat{y}_{s,b}   \left(
            \left[ r\left(s, \hat{\px}, b \right) 
                    + \gamma \sum_{s' \in \calS}\pr\left(s'|s,  \hat{\px}, b \right)  \hat{v}(s') \right] - \hat{v}(s) \right) \leq \xi_s 
            \\
            \sum_{b \in \calB} \hat{y}_{s,b}   \left(
            \left[ r\left(s, \hat{\px}, b \right) 
                    + \gamma \sum_{s' \in \calS}\pr\left(s'|s,  \hat{\px}, b \right)  \hat{v}(s') \right] - \hat{v}(s) \right) \leq B \xi_s 
            \\
            \sum_{b \in \calB} \hat{y}_{s,b}   \left(
             r\left(s, \hat{\px}, b \right) 
                    + \gamma \sum_{s' \in \calS}\pr\left(s'|s,  \hat{\px}, b \right)  \hat{v}(s')  \right) \leq   \hat{v}(s) + B \xi_s 
            \\
            r\left(s, \hat{\px}, \hat{\py} \right) 
                    + \gamma \sum_{s' \in \calS} \pr \left(s'|s,  \hat{\px}, \hat{\py} \right)  \hat{v}(s')  \leq  \hat{v}(s) + B \xi_s,
    \end{align}
    for any $s \in \calS$. Thus,
    \begin{align}
        \vr\left(\hat{\px}, \hat{\py} \right) 
                    + \gamma  \pr\left(\hat{\px}, \hat{\py} \right)  \hat{\vv} \leq  \hat{\vv} + B \vxi
                \label{eq:b2-3-1}
            \\
            \vec{V}\left(\hat{\px}, \hat{\py} \right)  \leq \hat{\vv} + B \left( \mat{I} -\gamma \pr(\hat{\px}, \hat{\py}) \right)^{-1} \vxi
                \label{eq:b2-3-2}
            \\
            V_{\vrho}(\hat{\px},  \hat{\py}) \leq
            \vrho^\top \hat{\vv} + B c_2 \sum_{s \in \calS} \frac{d_{\vrho}^{\hat{\px}, \hat{\py}}(s)}{\sum_{b \in \calB}\lambda(s,b)} \epsilon
                \label{eq:b2-3-3}
            \\
            V_{\vrho}(\hat{\px},  \hat{\py}) \leq
            \vrho^\top \hat{\vv} + c_2 B S D \epsilon, \label{eq:part2}
    \end{align}
    where \eqref{eq:b2-3-2} follows from \Cref{claim:vec-inequality}; \eqref{eq:b2-3-3} follows from \Cref{fact:visitation}; and \eqref{eq:part2} follows from the fact that $\sum_{b \in \calB} \lambda(s,b) \geq \rho(s)$ and \Cref{def:mismatch}. As a result,
    combining \eqref{eq:part1} and \eqref{eq:part2} we conclude that
    \begin{align}
        V_{\vrho} (\hat{\px}, \hat{\py}) \leq  V_{\vrho} \left( ( \px_k ; \hat{\px}_{-k} ), \hat{\py} \right) + c_2 BSD \epsilon + c_1 D S \epsilon.
        \label{eq:greater-value}
    \end{align}
    \end{proof}
    
    We state the precise version of \Cref{lem:lpadv-nash} in \Cref{theorem:approx-Nash} below. First, let us summarize \advnashpolicy, the algorithm for computing the policy for the adversary. \advnashpolicy, described in \Cref{alg:adv-nash-policy}, takes as input $\hat{\vx} \in \calX$, an $\epsilon$-nearly stationary point of $\phi(\vx) \defeq \max_{\vy \in \calY} V_{\vrho}(\vx, \vy)$. The algorithm then computes the best-response value vector $\hat{\vv}$. This is computed by fixing the strategy of the team $\hat{\vx} \in \calX$, and then solving the single-agent MDP problem so as to maximize the value at every state. Then, the pair $(\hat{\vx}, \hat{\vv})$ is used in order to determine the---polynomial number of---coefficients of $\mathrm{LP}_\text{adv}$, as introduced in~\eqref{prog:lpadv}. Then, any feasible solution $\vlambda \in \R^{S \times B}$ of \eqref{prog:lpadv} is used to determine the strategy of the adversary as follows.
    \begin{equation}
        \hat{y}_{s,b} \defeq \frac{\lambda(s,b)}{\sum_{b \in \calB} \lambda(s,b)}, \quad \forall (s,b) \in \calS \times \calB.
    \end{equation}

\begin{algorithm}
  \caption{\advnashpolicy \label{alg:adv-nash-policy}}
  \begin{algorithmic}[1]
  \Require{An $\epsilon$-nearly stationary point $\hat{\px} \in \calX$ of $\phi(\vx) \defeq \max_{\vy \in \calY} V_{\vrho}(\vx, \vy)$}
    \Let{$\hat{\vv}$}{The best-response value vector for the adversary} \label{line:vv}
    \Let{$\mathrm{LP}_\text{adv}$}{Compute the coefficients of the linear program~\eqref{prog:lpadv}} \label{line:lpadv}
    \Let{$\vlambda$}{Any feasible solution of $\mathrm{LP}_\text{adv}$}
    \Let{$\hat{y}_{s,b}$}{$\frac{\lambda(s,b)}{\sum_{b \in \calB}\lambda(s,b)}$}
    \Statex
        \Return $\hat{\py}$
  \end{algorithmic}
\end{algorithm}
    \begin{remark}
    \label{remark:oracle}
In Lines \ref{line:vv} and \ref{line:lpadv} of \Cref{alg:adv-nash-policy} we adopt the assumption of ``polynomially accessible games,’’ which is standard in computational game theory. More precisely, we assume that we have an oracle that outputs the expected reward and the expected transition probabilities in any state given the announced (mixed) strategies of every player. It is known that such an oracle for the expected utilities can be implemented in polynomial time for virtually all\footnote{Nevertheless, there are certain rather artificial succinct games for which computing the expected utilities is \#\P-hard~\citep[Proposition 1]{Daskalakis06:The}.} interesting classes of \emph{succinctly representable} normal-form games~\citep{Papadimitriou08:Computing}, and even for extensive-form games~\citep{Huang08:Computing}. Crucially, without a succinct representation or an oracle access to the game the input scales exponentially with the number of players, trivializing the computational aspects of the problem in multiplayer games; this issue has been discussed extensively for normal-form games with regards to Nash equilibria~\citep{Daskalakis09:The}, and correlated equilibria~\citep{Papadimitriou08:Computing}. 

In our case, we will essentially assume that the adversary has a ``polynomially accessible environment.'' More explicitly, the adversary should have access to the expected rewards and the expected transition probabilities of the single-agent MDP induced  if the rest of the players have announced and fixed their policies. Further, in \IPGmax we also make the standard assumption that players in the team have access to the gradients of the value function.
\end{remark}

\begin{theorem}[Near stationary points extend to approximate NE]
        \label{theorem:approx-Nash}
        Consider an adversarial team Markov game $\calG$, and suppose that $\hat{\px} \in \calX$ is an $\epsilon$-nearly stationary point of $\phi(\vx) \defeq \max_{\py} V_{\vrho}(\cdot, \py)$, where $V_{\vrho}$ is the value function of $\calG$~\eqref{eq:value-func-def}. Then, any feasible solution of \pref{prog:lpadv} $\hat{\vlambda} \in \R_{\geq 0}^{S \times B}$ induces a strategy $\hat{\vy}$, defined as
        \begin{equation}
            \hat{y}_{s,b} \defeq \frac{\hat{\lambda}(s,b) }{\sum_{b \in \calB}\hat{\lambda}(s,b) }, \quad \forall (s,b) \in \calS \times \calB,
        \end{equation}
        so that for any player $k \in [n]$ and any deviations $\vx_k \in \calX_k$ and $\vy \in \calY$,
        \begin{equation}
            \left\{
            \begin{array}{lcl}
                 V_{\vrho}(\hat{\px}, \hat{\py})& \leq &  V_{\vrho}\left( (\px_k; \hat{\px}_{-k} ), \hat{\py} \right) + \left(  B  {S} D  + 1 \right) \frac{1}{1-\gamma}     \left( \rewlip + \gamma S \matlip \frac{1}{1-\gamma}  + \gamma S L + L \right)\distht    \\
                 & & + \frac{1}{1-\gamma} 4 \epsilon {\ell} 
                 \\
                 V_{\vrho}(\hat{\px}, \hat{\py})& \geq & V_{\vrho} (\hat{\px}, \py) -  B {S} D \frac{1}{1-\gamma} \left( \rewlip + \gamma S \matlip \frac{1}{1-\gamma}  + \gamma S  L  + L  \right)\distht,
            \end{array}
            \right.
        \end{equation}
        Here, we recall that $D = \max_{\policy{} \in \Pi} \left\| \frac{ \vd_{\vrho}^{\policy{}} }{\vrho } \right\|_{\infty}$ is the mismatch coefficient, $L= \frac{\sqrt{\sum_k A_k +B}}{(1-\gamma)^2}$ is a Lipschitz constant of the value function, and $\ell = \frac{2 \left(\sum_{k} A_{k}+B\right)}{(1-\gamma)^3}$ is a smoothness constant of the value function (\Cref{lem:smoothness}).
        \label{them:epsilon-NE}
    \end{theorem}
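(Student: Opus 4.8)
The plan is to recognize that \Cref{theorem:approx-Nash} is precisely the quantitative restatement of \Cref{lem:lpadv-nash}: its two displayed inequalities are exactly \eqref{eq:less-value} and \eqref{eq:greater-value} with the constants $c_1,c_2$ and the parameters $\ell,L,D$ spelled out. Thus the proof needs only three ingredients, all already in place. First, by \Cref{lem:lpadv-feasible} the linear program \eqref{prog:lpadv} admits a feasible $\hat{\vlambda}\in\R_{\geq 0}^{S\times B}$; constraint \eqref{eq:pertconstr5} then forces $\sum_{b\in\calB}\hat{\lambda}(s,b)\geq\rho(s)>0$ for every $s\in\calS$ (full support of $\vrho$), so the normalized vector $\hat{y}_{s,b}\defeq\hat{\lambda}(s,b)/\sum_{b'}\hat{\lambda}(s,b')$ is a well-defined element of $\calY$; this is the $\hat{\py}$ produced by \advnashpolicy.

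Second, I would rerun the ``controlling deviations of the adversary'' half of the proof of \Cref{lem:lpadv-nash}, keeping every bound explicit. Feasibility of $(\hat{\px},\hat{\vv})$ in \eqref{prog:xinlp} gives, for each $\py\in\calY$, the elementwise inequality $\vr(\hat{\px},\py)+\gamma\pr(\hat{\px},\py)\hat{\vv}\leq\hat{\vv}$, hence $\hat{\vv}\geq\vec{V}(\hat{\px},\py)$ after inverting $\mat{I}-\gamma\pr(\hat{\px},\py)$ (\Cref{claim:vec-inequality}). In the reverse direction, dividing the relaxed complementary-slackness constraint \eqref{eq:pertconstr4} by $\sum_{b'}\hat{\lambda}(s,b')\geq\rho(s)$, summing over $b$, passing through the resolvent, and bounding the induced visitation ratios by $D$ (\Cref{fact:visitation}, \Cref{def:mismatch}) yields $V_{\vrho}(\hat{\px},\hat{\py})\geq\vrho^\top\hat{\vv}-c_2 BSD\,\epsilon\geq V_{\vrho}(\hat{\px},\py)-c_2 BSD\,\epsilon$. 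Substituting $c_2=\frac{1}{1-\gamma}\big(\rewlip+\gamma S\matlip\frac{1}{1-\gamma}+\gamma SL+L\big)$ reproduces the second displayed bound of \Cref{theorem:approx-Nash}.

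Third, the ``controlling deviations of a team player'' half: fix $k\in[n]$ and $\px_k\in\calX_k$, multiply the relaxed first-order-stationarity constraint \eqref{eq:pertconstr2} by $x_{k,s,a}$, sum over $a\in\calA_k$, and divide by $\sum_b\hat{\lambda}(s,b)\geq\rho(s)$ to obtain $\vr((\px_k;\hat{\px}_{-k}),\hat{\py})+\gamma\pr((\px_k;\hat{\px}_{-k}),\hat{\py})\hat{\vv}\geq\hat{\vv}-c_1\epsilon\,\tfrac{1}{\vrho}$ elementwise; inverting the corresponding resolvent and contracting with $\vrho$ produces $V_{\vrho}((\px_k;\hat{\px}_{-k}),\hat{\py})\geq\vrho^\top\hat{\vv}-c_1 SD\,\epsilon$, as in \eqref{eq:part1}. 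Combined with the matching upper estimate $V_{\vrho}(\hat{\px},\hat{\py})\leq\vrho^\top\hat{\vv}+c_2 BSD\,\epsilon$ obtained analogously from \eqref{eq:pertconstr3}, this gives $V_{\vrho}(\hat{\px},\hat{\py})\leq V_{\vrho}((\px_k;\hat{\px}_{-k}),\hat{\py})+(c_2 BSD+c_1 SD)\epsilon$. Plugging in $c_1=4\ell+c_2$ and collecting terms (using $SD\geq 1$) yields the first displayed bound; finally substituting $\ell=\frac{2(\sum_k A_k+B)}{(1-\gamma)^3}$ and $L=\frac{\sqrt{\sum_k A_k+B}}{(1-\gamma)^2}$ from \Cref{lem:smoothness} exhibits the total additive error as $\poly\big(S,\sum_k A_k+B,\tfrac{1}{1-\gamma}\big)\cdot\epsilon$.

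The conceptual work is all behind us: the Arrow--Hurwicz--Uzawa qualification supplying KKT multipliers for \eqref{prog:xinlp}, the identification of a subset of those multipliers with a discounted visitation measure (\Cref{proposition:Lan-vis}), and the passage from the inaccessible ``ideal'' program \eqref{prog:strictlpadv} to the computable \eqref{prog:lpadv} with only an $O(\epsilon)$ loss. Consequently the only genuine obstacle at this stage is careful bookkeeping: every Lipschitz constant collected in \Cref{claim:various-bounds}, both relaxation parameters $c_1\epsilon,c_2\epsilon$, and the mismatch factor $D$ must be propagated through the chain of matrix manipulations, so as to certify that the slack incurred by using the returned iterate $\hat{\px}$ in place of its proximal point $\tilde{\px}$ aggregates to an error that is genuinely linear in $\epsilon$ with a polynomial prefactor in the natural parameters of $\calG$, rather than blowing up with $\tfrac{1}{1-\gamma}$ or with the number of team players.
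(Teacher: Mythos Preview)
Your proposal is correct and follows essentially the same approach as the paper: the paper's proof simply invokes \Cref{lem:lpadv-feasible} for feasibility, observes that $\sum_{b}\hat{\lambda}(s,b)\geq\rho(s)>0$ makes $\hat{\py}$ well-defined, and then cites \Cref{lem:lpadv-nash} (specifically \eqref{eq:less-value} and \eqref{eq:greater-value}) verbatim. You do the same, just unpacking the two halves of \Cref{lem:lpadv-nash} explicitly and spelling out the substitution of $c_1,c_2$.
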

    
    \begin{proof}    
    By \Cref{lem:lpadv-feasible}, we know that \eqref{prog:lpadv} is feasible. Further, $\hat{\vy}$ is a well-formed strategy since for any feasible $\vlambda \in \R_{\geq 0}^{S \times B}$ of \eqref{prog:lpadv} it holds that $\sum_{b \in \calB} \lambda(s,b) \geq \rho(s) > 0$, for any state $s \in \calS$, where the first bound follows by feasibility of $\vlambda$ and the second since $\vrho$ is assumed to have full support. Thus, the proof of the theorem follows from \Cref{lem:lpadv-nash}, and in particular \eqref{eq:less-value} and \eqref{eq:greater-value}.
    \end{proof}

\section{Convergence to a Nearly Stationary Point}
\label{sec:convergence}

In this section, we establish that \IPGmax reaches to an $\epsilon$-nearly stationary point---in the sense of \Cref{def:nearst}---after a number of iterations that is polynomial in all the natural parameters of the game, as well as $1/\epsilon$. The main result here is \Cref{lem:ipgmax-convergence-lemma}, which was first introduced in \Cref{sec:IPGmax}. First, we need to establish that the value function $V_{\vec{\rho}}(\vx,\vy)$ is Lipschitz continuous and smooth, as formalized below. We note that this property is by now fairly standard (\emph{e.g.}, see \citep{Agarwal20:Optimality}), and we therefore omit the proof.

\begin{lemma}
\label{lem:smoothness}
For any initial distribution $\vec{\rho}$, the value function $V_{\vrho}(\px,\py)$ is $\frac{\sqrt{\sum_{k}A_k +B}}{(1-\gamma)^2}$-Lipschitz continuous and
$\frac{2 \left(\sum_{k} A_{k}+B\right)}{(1-\gamma)^3}$-smooth:
\begin{align}
    \label{eq:smooth}
    | V_{\vrho}(\px,\py)- V_{\vrho}(\px',\py')| \leq \frac{\sqrt{\sum_{k=1}^n A_k +B}}{(1-\gamma)^2} \norm{(\px,\py) - (\px',\py')}; \text{ and} \\
    \norm{\nabla V_{\vrho}(\px,\py)-\nabla V_{\vrho}(\px',\py')} \leq \frac{2 \left(\sum_{k=1}^n A_{k}+B\right)}{(1-\gamma)^3} \norm{(\px,\py) - (\px',\py')},
\end{align}
for all $(\px,\py) , (\px',\py') \in \calX \times \calY$.
\end{lemma}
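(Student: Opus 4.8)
The plan is to follow the by-now standard argument for the regularity of the value function under direct parametrization (as in \citep{Agarwal20:Optimality,agarwal2021theory}), carried over to the multi-agent game; the only real work is bookkeeping the constants. Throughout, fix a joint policy $(\vx,\vy)\in\calX\times\calY$ and recall that it induces an $S\times S$ row-stochastic state-transition matrix $\pr(\vx,\vy)$, with $\pr(s'\mid s,\vx,\vy)=\sum_{\vec a,b}\big(\prod_k x_{k,s,a_k}\big)y_{s,b}\,\pr(s'\mid s,\vec a,b)$, and an $S$-vector $\vr(\vx,\vy)$ of expected adversary rewards with entries in $(0,1)$. By Bellman's equation $\mat V(\vx,\vy)=(\mat I-\gamma\pr(\vx,\vy))^{-1}\vr(\vx,\vy)$ and $V_{\vrho}(\vx,\vy)=\vrho^\top\mat V(\vx,\vy)$; since $(\mat I-\gamma\pr(\vx,\vy))^{-1}=\sum_{t\ge0}\gamma^t\pr(\vx,\vy)^t$ is entrywise nonnegative with all row sums equal to $\tfrac{1}{1-\gamma}$, we have $\|(\mat I-\gamma\pr(\vx,\vy))^{-1}\|_\infty\le\tfrac{1}{1-\gamma}$ uniformly in $(\vx,\vy)$. (The fact, discussed in the Empty-Interior remark, that $\calX\times\calY$ has empty interior is handled by defining all gradients on a $\delta$-dilation and passing $\delta\downarrow0$; all bounds below are uniform in $\delta$.)

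\textbf{Lipschitz continuity.} First I would record the policy-gradient identity: for a team player $k$ and $(s,a)\in\calS\times\calA_k$,
\[
\frac{\partial V_{\vrho}(\vx,\vy)}{\partial x_{k,s,a}}=d^{\vx,\vy}_{\vrho}(s)\,Q_k(s,a),
\]
where $Q_k(s,a)$ is the adversary's expected discounted return from $s$ when player $k$ plays $a$ once and all players then follow $(\vx,\vy)$, and symmetrically $\partial V_{\vrho}/\partial y_{s,b}=d^{\vx,\vy}_{\vrho}(s)\,Q_{\mathrm{adv}}(s,b)$; this is a one-line consequence of the single-agent policy gradient theorem applied to the MDP faced by each player when the others are frozen. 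As rewards lie in $(0,1)$, $|Q_k(s,a)|,|Q_{\mathrm{adv}}(s,b)|\le\tfrac{1}{1-\gamma}$, while $\sum_{s}d^{\vx,\vy}_{\vrho}(s)=\tfrac{1}{1-\gamma}$ gives $\sum_{s}\big(d^{\vx,\vy}_{\vrho}(s)\big)^2\le\tfrac{1}{(1-\gamma)^2}$. Summing the squared partials,
\[
\|\nabla V_{\vrho}(\vx,\vy)\|^2=\sum_{s\in\calS}\big(d^{\vx,\vy}_{\vrho}(s)\big)^2\Big(\textstyle\sum_{k}\sum_{a\in\calA_k}Q_k(s,a)^2+\sum_{b\in\calB}Q_{\mathrm{adv}}(s,b)^2\Big)\le\frac{\sum_{k}A_k+B}{(1-\gamma)^4},
\]
so $\|\nabla V_{\vrho}\|\le L$, and the first inequality follows from the mean-value inequality along the (convex) segment joining $(\vx,\vy)$ and $(\vx',\vy')$.

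\textbf{Smoothness.} It suffices to bound the operator norm of the Hessian pointwise, equivalently to show $\big|\tfrac{d^2}{d\alpha^2}V_{\vrho}(\zeta(\alpha))\big|\le\ell\|\vw\|^2$ for $\zeta(\alpha)\defeq(\vx,\vy)+\alpha\vw$, $\vw\defeq(\vx'-\vx,\vy'-\vy)$, $\alpha\in[0,1]$; then for any unit $u$, $u^\top\big(\nabla V_{\vrho}(\zeta(1))-\nabla V_{\vrho}(\zeta(0))\big)=\int_0^1 u^\top\nabla^2 V_{\vrho}(\zeta(\alpha))\,\vw\,d\alpha$ yields the second inequality. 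Differentiating $V_{\vrho}(\zeta)=\vrho^\top(\mat I-\gamma\pr(\zeta))^{-1}\vr(\zeta)$ twice via the chain rule, using $\tfrac{d}{d\alpha}(\mat I-\gamma\pr(\zeta(\alpha)))^{-1}=\gamma(\mat I-\gamma\pr)^{-1}\big(\tfrac{d}{d\alpha}\pr(\zeta(\alpha))\big)(\mat I-\gamma\pr)^{-1}$, every resulting term is a product of at most three copies of $(\mat I-\gamma\pr)^{-1}$, each contributing a factor $\tfrac{1}{1-\gamma}$ in $\|\cdot\|_\infty$, times first/second directional derivatives of $\alpha\mapsto\pr(\zeta(\alpha))$ and $\alpha\mapsto\vr(\zeta(\alpha))$. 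These are controlled because $\pr(s'\mid s,\cdot)$ and $r(s,\cdot)$ are multilinear in the policy with coefficients in $[0,1]$, so that at each state $s$ only the $\le\sum_k A_k+B$ coordinates active there enter, and $\vrho^\top$ and $\vr$ absorb the sums over states through the $\ell_1/\ell_\infty$ duality; a careful accounting of the mixed partials, exactly as in \citep{Agarwal20:Optimality}, then yields the bound $\tfrac{2(\sum_k A_k+B)}{(1-\gamma)^3}\|\vw\|^2$, i.e. $\ell$-smoothness with $\ell=\tfrac{2(\sum_k A_k+B)}{(1-\gamma)^3}$.

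\textbf{Main obstacle.} There is no conceptual difficulty here; the delicate part is purely the constant-chasing—verifying that neither $L$ nor $\ell$ acquires a spurious factor of $S$ (which rests on $d^{\vx,\vy}_{\vrho}$ being a measure of total mass $\tfrac{1}{1-\gamma}$, together with the $\ell_1/\ell_\infty$ cancellation of the state sums), and that the effective ``action dimension'' entering both bounds is $\sum_k A_k+B$ rather than the exponential $\prod_k A_k\cdot B$—the latter being precisely the point where leading with the per-player policy-gradient identity, instead of crude matrix-norm estimates, is what keeps the Lipschitz constant tight.
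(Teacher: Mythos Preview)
Your proposal is correct and follows essentially the same approach the paper points to: the paper omits the proof entirely, citing it as ``by-now standard'' and deferring to \citep{Agarwal20:Optimality}, and your sketch is precisely the standard single-agent policy-gradient regularity argument carried over per player, with the constants tracked as you indicate.
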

We convenience, we will let $L \defeq \frac{\sqrt{\sum_{k=1}^n A_k +B}}{(1-\gamma)^2}$ and $\ell \defeq \frac{2 \left(\sum_{k=1}^n A_{k}+B\right)}{(1-\gamma)^3}$. The next key result characterizes the iteration complexity required to reach an $\epsilon$-nearly stationary point of $\phi(\cdot)$. The following analysis follows~\citep{Jin20:What}.

\ipgmx*

\begin{proof}
By virtue of the $\ell$-smoothness of $V_{\vrho}(\px,\py)$ (\Cref{lem:smoothness}), it follows that for any $\vx \in \calX$ and $0 \leq t \leq T -1$,
\begin{equation}\label{eq:smooth1}
    \phi(\px) \geq V_{\vrho}(\px,\py^{(t+1)}) \geq V_{\vrho}(\px^{(t)},\py^{(t+1)}) +\langle\nabla_{\px} V_{\vrho}(\px^{(t)},\py^{(t+1)}),\px-\px^{(t)}\rangle-\frac{\ell}{2}\norm{\px-\px^{(t)}}^2,
\end{equation}
since $\phi(\vec{x}) = \max_{\vec{y} \in \calY} V_{\vrho}(\vec{x}, \vec{y}) \geq V_{\vrho}(\vec{x}, \vec{y}^{(t+1)})$. Now recall that
\begin{equation}
    \label{eq:xtil}
 \tilde{\vx}^{(t)} \defeq \arg\min_{\px' \in \calX} \left\{ \phi(\px')+\frac{1}{2\lambda}\norm{\px^{(t)}-\px'}^2 \right\},   
\end{equation}
for any $0 \leq t \leq T -1$, where $\lambda \defeq \frac{1}{2\ell}$. Using the definition of Moreau envelope (\Cref{def:moreau}), 
\begin{align}
    \hspace{-0.7cm}
        \phi_{\lambda}(\px^{(t+1)}) 
            &\leq \phi(\tilde{\px}^{(t)}) + \ell \norm{\px^{(t+1)}-\tilde{\px}^{(t)}}^2 \\
        &\leq \phi(\tilde{\px}^{(t)}) + \ell \norm{ \proj{\calX}{\vx^{(t)} - \eta \nabla_{\vec{x}} V_{\vec{\rho}} (\vec{x}^{(t)}, \vec{y}^{(t+1)}) } - \proj{\calX}{\tilde{\px}^{(t)}}}^2 \label{eq:proj} \\
            &\leq \phi(\tilde{\px}^{(t)}) + \ell \norm{\px^{(t)}-\eta \nabla_{\px} V_{\vrho}(\px^{(t)},\py^{(t+1)})-\tilde{\vx}^{(t)}}_2^2 \label{eq:nonexpansive}
        \\
        &\leq \phi(\tilde{\px}^{(t)}) + \ell \norm{\vx^{(t)} - \tilde{\vx}^{(t)}}^2 + \eta^2 \ell \norm{\nabla_{\px} V_{\vrho}(\px^{(t)},\py^{(t+1)})}^2 + 2 \eta \ell \langle\nabla_{\px} V_{\vrho}(\px^{(t)},\py^{(t+1)}),\tilde{\px}^{(t)}-\px^{(t)}\rangle \label{eq:pytha} \\
            &\leq \phi_{\lambda} (\px^{(t)})  + 2\eta \ell \left(\phi(\tilde{\px}^{(t)})-\phi(\px^{(t)})+\frac{\ell}{2}\norm{\px^{(t)}-\tilde{\px}^{(t)}}^2\right) + \eta^2 \ell L^2, \label{eq:multi}
\end{align}
where 
\begin{itemize}
    \item \eqref{eq:proj} uses the fact that $\vec{x}_k^{(t+1)} \defeq  \proj{\calX_k}{\vx_k^{(t)} - \eta \nabla_{\vec{x}_k} V_{\vec{\rho}} (\vec{x}^{(t)}, \vec{y}^{(t+1)}) }$ for all $k \in [n]$, as defined in \IPGmax, in turn implying that $\vec{x}^{(t+1)} = \proj{\calX}{\vx^{(t)} - \eta \nabla_{\vec{x}} V_{\vec{\rho}} (\vec{x}^{(t)}, \vec{y}^{(t+1)}) }$, as well as the fact that $\proj{\calX}{\tilde{\px}^{(t)}} = \tilde{\vx}^{(t)}$ since $\tilde{\vx}^{(t)} \in \calX$;
    \item \eqref{eq:nonexpansive} follows from the fact that the projection operator is nonexpansive (\Cref{fact:nonexpansive});
    \item \eqref{eq:pytha} uses the identity $\| \vec{a} + \vec{b} \|^2 = \|\vec{a}\|^2 + \|\vec{b}\|^2 + 2 \langle \vec{a}, \vec{b} \rangle$ for any $\vec{a}, \vec{b} \in \R^d$; and
    \item \eqref{eq:multi} follows since
    \begin{itemize}
        \item[(i)] $\phi(\tilde{\px}^{(t)}) + \ell \norm{\vx^{(t)} - \tilde{\vx}^{(t)}}^2 = \min_{\vec{x}' \in \calX} \left\{ \phi(\vx') + \ell \norm{\vx^{(t)} - \vx'}^2 \right\} = \phi_{\lambda} (\px^{(t)})$ by definition of $\tilde{\vec{x}}^{(t)}$ in \eqref{eq:xtil} and the definition of Moreau envelope with $\lambda = \frac{1}{2\ell}$ (\Cref{def:moreau});
        \item[(ii)] $ V_{\vrho}(\px^{(t)},\py^{(t+1)}) +\langle\nabla_{\px} V_{\vrho}(\px^{(t)},\py^{(t+1)}),\tilde{\px}^{(t)}-\px^{(t)}\rangle-\frac{\ell}{2}\norm{\tilde{\px}-\px^{(t)}}^2 \leq \phi(\tilde{\vx}^{(t)})$, which is an application of \eqref{eq:smooth1} for $\vx \defeq \tilde{\vx}^{(t)}$; and 
        \item[(iii)] $\norm{\nabla_{\px} V_{\vrho}(\px^{(t)},\py^{(t+1)})}^2 \leq L^2$ by $L$-Lipschitz continuity of $V_{\vrho}(\px^{(t)},\py^{(t+1)})$ (\Cref{lem:smoothness}) combined with \Cref{fact:boungrad}.
    \end{itemize}
\end{itemize}
As a result, taking a telescopic sum of \eqref{eq:multi} for all $0 \leq t \leq T - 1$ and rearranging the terms yields
\begin{equation}
    \label{eq:ave-gap}
    \frac{1}{T} \sum_{t=0}^{T-1} \left(\phi(\px^{(t)})-\phi(\tilde{\px}^{(t)})-\frac{\ell}{2}\norm{\px^{(t)}-\tilde{\px}^{(t)}}^2\right) \leq \frac{\phi_{\lambda}(\px^{(0)}) - \phi_{\lambda}(\vec{x}^{(T)}) }{2\eta \ell T} + \frac{\eta L^2}{2} \leq \frac{1}{2 (1 - \gamma) \eta \ell T} + \frac{\eta L^2}{2},
\end{equation}
since $\phi_{\lambda}(\vec{x}^{(T)})\geq 0$, directly by \Cref{def:moreau}, and $\phi_{\lambda}(\vec{x}^{(0)}) \leq \phi(\vec{x}^{(0)}) \leq \frac{1}{1 - \gamma}$, where the last inequality follows from \Cref{claim:V-bounds}. Therefore we conclude that there exists an iterate $\tstar$, with $0 \leq \tstar \leq T - 1$, so that
\begin{equation}
    \label{eq:phi-diff}
    \phi(\px^{(\tstar)})-\phi(\tilde{\px}^{(\tstar)})-\frac{\ell}{2}\norm{\px^{(\tstar)}-\tilde{\px}^{(\tstar)}}^2 \leq \frac{1}{2(1 - \gamma)\eta \ell T} + \frac{\eta L^2}{2}.
\end{equation}

Further, since $\phi(\px) + \ell \norm{\px - \px^{(\tstar)}}^2$ is $\ell$-strongly convex with respect to $\vec{x}$
(by \Cref{lem:max-weakly-convex} and \Cref{cor:weak-conv}), we get that
\begin{equation}
    \phi(\px^{(\tstar)}) - \phi(\tilde{\px}^{(\tstar)}) - \ell \norm{\px^{(\tstar)}-\tilde{\px}^{(\tstar)}}^2 \geq \frac{\ell}{2} \norm{\px^{(\tstar)}-\tilde{\px}^{(\tstar)}}^2,
\end{equation}
by definition of $\tilde{\vx}^{(\tstar)}$ in \eqref{eq:xtil}, in turn implying that
\begin{gather}
\phi(\px^{(\tstar)}) - \phi(\tilde{\px}^{(\tstar)}) - \frac{\ell}{2} \norm{\px^{(\tstar)}-\tilde{\px}^{(\tstar)}}^2 \geq \ell \norm{\px^{(\tstar)}-\tilde{\px}^{(\tstar)}}^2.
\end{gather}
Combing this bound with \eqref{eq:phi-diff} yields that

\begin{gather}
\norm{\px^{(\tstar)}-\tilde{\px}^{(\tstar)}}^2 \leq \frac{1}{2 (1 - \gamma)\eta \ell^2 T} + \frac{\eta L^2}{2\ell}.
\end{gather}
In particular, letting
\begin{equation}
    \eta = \epsilon^2 \cdot \frac{\ell}{L^2} = 2 \epsilon^2 \cdot (1-\gamma)
\end{equation}
and
\begin{equation}
   T = \frac{1}{\epsilon^2 (1-\gamma) \eta \ell^2 } = \frac{(1-\gamma)^4}{8 \epsilon^4 (\sum_{i=1}^n A_i + B)^2}
\end{equation}
implies that $\norm{\px^{(\tstar)}-\tilde{\px}^{(\tstar)}} \leq \epsilon$.
\end{proof}

A limitation of this proposition is that it only establishes a ``best-iterate'' guarantee. However, as we explained in \Cref{sec:IPGmax}, determining such an iterate could introduce a substantial computational overhead in the algorithm. For this reason, we provide a stronger guarantee below, showing that even a random iterate will also be nearly stationary with constant probability, leading to a practical implementation of \IPGmax.

\begin{corollary}
    \label{cor:highprob}
Consider any $\epsilon > 0$, and suppose that $\eta = \epsilon^2  ( 1-\gamma)$ and  $T=\frac{(1-\gamma)^4}{2 \epsilon^4 (\sum_{k}A_k+B)^2}$. For any $\delta > 0$, if we select uniformly at random (with repetitions) a set $\mathcal{T}$ of $\lceil \log(1/\delta) \rceil$ indexes from the set $\{0, 1, \dots, T-1\}$, then with probability at least $1 - \delta$ there exists a $t' \in \mathcal{T}$ such that $\norm{\vx^{(t')} - \tilde{\vx}^{(t')}} \leq \epsilon$, where $\tilde{\vx}^{(t')} \defeq \prox_{\phi/(2\ell)}(\vx^{(t')})$.
\end{corollary}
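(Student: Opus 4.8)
The plan is to recycle the telescoping estimate from the proof of \Cref{lem:ipgmax-convergence-lemma} and then to promote its \emph{average} guarantee to a \emph{random-iterate} guarantee via Markov's inequality followed by independent repetition. First I would observe that inequality~\eqref{eq:ave-gap} was derived without committing to particular values of $\eta$ and $T$, so it applies verbatim here:
\[
\frac{1}{T}\sum_{t=0}^{T-1}\left(\phi(\vx^{(t)})-\phi(\tilde{\vx}^{(t)})-\frac{\ell}{2}\norm{\vx^{(t)}-\tilde{\vx}^{(t)}}^2\right)\le \frac{1}{2(1-\gamma)\eta\ell T}+\frac{\eta L^2}{2}.
\]
Substituting $\eta=\epsilon^2(1-\gamma)$ and $T=\frac{(1-\gamma)^4}{2\epsilon^4(\sum_{k}A_k+B)^2}$, and recalling $L^2=\frac{\sum_{k}A_k+B}{(1-\gamma)^4}$ and $\ell=\frac{2(\sum_{k}A_k+B)}{(1-\gamma)^3}$ (\Cref{lem:smoothness}), a direct computation shows that \emph{each} of the two terms on the right-hand side equals $\tfrac{\epsilon^2\ell}{4}$, so the average is at most $\tfrac{\epsilon^2\ell}{2}$.

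Next I would reuse the strong-convexity step already present in the proof of \Cref{lem:ipgmax-convergence-lemma}: since $\phi(\cdot)+\ell\norm{\cdot-\vx^{(t)}}^2$ is $\ell$-strongly convex (by \Cref{lem:max-weakly-convex} and \Cref{cor:weak-conv}) with minimizer $\tilde{\vx}^{(t)}=\prox_{\phi/(2\ell)}(\vx^{(t)})$, one gets for every $t$ that $\phi(\vx^{(t)})-\phi(\tilde{\vx}^{(t)})-\frac{\ell}{2}\norm{\vx^{(t)}-\tilde{\vx}^{(t)}}^2\ge \ell\norm{\vx^{(t)}-\tilde{\vx}^{(t)}}^2$. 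Combining this pointwise lower bound with the average upper bound above and cancelling $\ell$ yields $\frac{1}{T}\sum_{t=0}^{T-1}\norm{\vx^{(t)}-\tilde{\vx}^{(t)}}^2\le\frac{\epsilon^2}{2}$. Reading the uniform choice of an index as a random variable $t$, this is exactly $\E\big[\norm{\vx^{(t)}-\tilde{\vx}^{(t)}}^2\big]\le\epsilon^2/2$, whence Markov's inequality gives $\mathbb{P}\big[\norm{\vx^{(t)}-\tilde{\vx}^{(t)}}>\epsilon\big]\le 1/2$.

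Finally I would boost the success probability: drawing the $\lceil\log(1/\delta)\rceil$ indices of $\mathcal{T}$ independently and with repetition, the failure events $\{\norm{\vx^{(t')}-\tilde{\vx}^{(t')}}>\epsilon\}_{t'\in\mathcal{T}}$ are mutually independent, so the probability that all of them occur is at most $2^{-\lceil\log(1/\delta)\rceil}\le\delta$ (with $\log$ in base $2$). Hence with probability at least $1-\delta$ some $t'\in\mathcal{T}$ satisfies $\norm{\vx^{(t')}-\tilde{\vx}^{(t')}}\le\epsilon$, which is the claim. The only delicate point is the constant bookkeeping in the first step — checking that both terms of~\eqref{eq:ave-gap} collapse to $\epsilon^2\ell/4$ under the prescribed $(\eta,T)$, which is precisely the choice that makes the average equal $\epsilon^2/2$ rather than $\epsilon^2$ and thus leaves the factor-$\tfrac12$ slack that Markov's inequality consumes; everything else is a verbatim reuse of the proof of \Cref{lem:ipgmax-convergence-lemma} together with a textbook amplification argument.
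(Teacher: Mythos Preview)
Your proposal is correct and follows essentially the same approach as the paper: both reuse the telescoping bound \eqref{eq:ave-gap}, invoke the $\ell$-strong-convexity lower bound $g^{(t)}\ge \ell\norm{\vx^{(t)}-\tilde{\vx}^{(t)}}^2$, apply a Markov/pigeonhole argument to get a constant success probability per draw, and then boost via independent repetition. The only cosmetic difference is the order of operations---you apply strong convexity pointwise before averaging and then invoke Markov on the squared distances directly, whereas the paper applies Markov to $g^{(t)}$ first and then translates to distances (and absorbs the resulting $\sqrt{2}$ by rescaling $\epsilon$); your bookkeeping with the stated $(\eta,T)$ is in fact slightly cleaner.
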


\begin{proof}
First, we claim that selecting uniformly at random an index $t'$ from the set $\{0, 1, \dots, T-1\}$ will satisfy
\begin{equation}
    \|\vx^{(t')} - \tilde{\vx}^{(t')} \|^2 \leq 2 \epsilon^2
\end{equation}
with probability at least $\frac{1}{2}$. To show this, let us define 
\begin{equation}
    g^{(t)} \defeq \phi(\vx^{(t)}) - \phi(\tilde{\vx}^{(t)}) - \frac{\ell}{2} \norm{\vec{x}^{(t)} - \tilde{\vx}^{(t)}}^2,
\end{equation}
for $t = 0, 1, \dots, T-1$. By definition of $\tilde{\vx}^{(t)}$ in \eqref{eq:xtil}, we have
\begin{equation}
    \label{eq:strong-con}
    g^{(t)} = \phi(\vx^{(t)}) - \phi(\tilde{\vx}^{(t)}) - \frac{\ell}{2} \norm{\vx^{(t)} - \tilde{\vx}^{(t)}}^2 \geq \ell \norm{\vx^{(t)} - \tilde{\vx}^{(t)}} \geq 0,
\end{equation}
for $0 \leq t \leq T - 1$. Further, by \eqref{eq:ave-gap} we have
\begin{equation}
    \label{eq:small-gap}
    \frac{1}{T} \sum_{t=0}^{T-1} g^{(t)} \leq \epsilon^2 \ell,
\end{equation}
where we used that $\eta = 2\epsilon^2  ( 1-\gamma)$ and $T=\frac{(1-\gamma)^4}{8 \epsilon^4 (\sum_{k=1}^n A_k+B)^2}$. As a result, we conclude that at least half of the indexes $t$ are such that $g^{(t)} \leq 2 \epsilon^2 \ell$. Indeed, the contrary case contradicts \eqref{eq:small-gap} given that $g^{(t)} \geq 0$ for all $t$. In turn, this implies our claim in light of \eqref{eq:strong-con}. Finally, the proof of the corollary follows from a standard boosting argument, as well as rescaling $\epsilon$ by $\frac{1}{\sqrt{2}}$.
\end{proof}

\begin{theorem}[Computing $\epsilon$-approximate NE]
    \label{theorem:formal}
    Consider an adversarial team Markov game $\calG$. Running \IPGmax for $
    T =
    \frac{ 512 S^8 D^4  \left( \sum_{k=1}^n A_k + B \right)^4 }{ \epsilon^4 (1-\gamma)^{12} }
    $
    number of iterations
    and learning rate
    $\eta = \frac{ \epsilon^2 (1 - \gamma )^9}{ 32 S^4 D^2  \left( \sum_{k=1}^n A_k + B \right)^3 }
    $
     yields a team strategy $\hat{\px} \in \calX$ that can be extended to an $\epsilon$-approximate Nash equilibrium in polynomial time through the routine $\advnashpolicy(\hat{\vx})$, assuming a polynomially accessible environment for the adversary (\Cref{remark:oracle}).
    \label{thm:main_full}
\end{theorem}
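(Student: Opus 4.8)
The plan is to chain the two central technical results already established — the convergence of \IPGmax\ to a nearly stationary point (\Cref{lem:ipgmax-convergence-lemma}) and the extendibility of any such point to an approximate Nash equilibrium (\Cref{theorem:approx-Nash}) — and then to carry out the constant bookkeeping that produces the precise $\eta$ and $T$ printed in \Cref{alg:ipg-max}. Write $\hat\epsilon$ for the near-stationarity parameter (what \Cref{theorem:approx-Nash} calls $\distht$). By that theorem, if $\hat{\vx}$ is $\hat\epsilon$-nearly stationary for $\phi(\cdot)\defeq\max_{\vy\in\calY}V_{\vrho}(\cdot,\vy)$, then $\hat{\vy}\defeq\advnashpolicy(\hat{\vx})$ makes $(\hat{\vx},\hat{\vy})$ a Nash equilibrium with gap at most
\[
\Big( B S D + 1\Big)\frac{1}{1-\gamma}\Big( \rewlip + \gamma S \matlip \tfrac{1}{1-\gamma} + \gamma S L + L\Big)\hat\epsilon \;+\; \frac{4\,\hat\epsilon\,\ell}{1-\gamma}.
\]
Substituting $L=\frac{\sqrt{\sum_k A_k+B}}{(1-\gamma)^2}$ and $\ell=\frac{2(\sum_k A_k+B)}{(1-\gamma)^3}$, using $\rewlip=\matlip=\sqrt{\sum_k A_k}\le\sqrt{\sum_k A_k+B}$, and bounding $\gamma\le 1$, $B\le\sum_k A_k+B$, $BSD\ge 1$, each of the four terms in the inner parenthesis is at most $\frac{S\sqrt{\sum_k A_k+B}}{(1-\gamma)^2}$, so the whole gap is at most $C\hat\epsilon$ for a constant $C=\Theta\!\big(S^2 D (\sum_k A_k+B)^{3/2}(1-\gamma)^{-4}\big)$ that is polynomial in all the natural parameters of $\calG$. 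Choosing $\hat\epsilon\defeq\epsilon/C$ forces this gap to be at most $\epsilon$.

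Having fixed $\hat\epsilon$, I would invoke \Cref{lem:ipgmax-convergence-lemma} with $\hat\epsilon$ in place of $\epsilon$: running \IPGmax\ with learning rate $\eta=2\hat\epsilon^2(1-\gamma)$ for $T=\frac{(1-\gamma)^4}{8\hat\epsilon^4(\sum_k A_k+B)^2}$ iterations yields an iterate $\vx^{(\tstar)}$ with $\|\vx^{(\tstar)}-\prox_{\phi/(2\ell)}(\vx^{(\tstar)})\|_2\le\hat\epsilon$, i.e.\ $\hat{\vx}\defeq\vx^{(\tstar)}$ is $\hat\epsilon$-nearly stationary. Plugging $\hat\epsilon=\epsilon/C$ into these formulas and simplifying, the learning rate becomes $\eta=\frac{\epsilon^2(1-\gamma)^9}{32 S^4 D^2(\sum_k A_k+B)^3}$ and the iteration count $T=\frac{512 S^8 D^4(\sum_k A_k+B)^4}{\epsilon^4(1-\gamma)^{12}}$, matching \Cref{alg:ipg-max} exactly once the $\Theta$-constant in $C$ is pinned down. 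To dispense with the ``best-iterate'' caveat in \Cref{lem:ipgmax-convergence-lemma}, I would cite \Cref{cor:highprob}, by which sampling $O(\log(1/\delta))$ of the $T$ iterates uniformly at random recovers an $\hat\epsilon$-nearly stationary point with probability $1-\delta$; for the statement as printed, the deterministic existence of $\tstar$ suffices. Applying \Cref{theorem:approx-Nash} to this $\hat{\vx}$ then delivers $\hat{\vy}=\advnashpolicy(\hat{\vx})$ with $(\hat{\vx},\hat{\vy})$ an $\epsilon$-approximate Nash equilibrium.

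It remains to account for the running time. Each of the $T$ iterations performs a best response $\vy^{(t)}=\arg\max_{\vy}V_{\vrho}(\vx^{(t-1)},\vy)$ — which, under the polynomially-accessible-environment assumption of \Cref{remark:oracle}, amounts to solving the single-agent MDP faced by the adversary and runs in polynomial time — followed by an independent policy-gradient step and a Euclidean projection onto $\calX=\prod_k\Delta(\calA_k)^S$, both polynomial-time operations; since $T=\poly(S,\sum_k A_k+B,(1-\gamma)^{-1},\epsilon^{-1})$, the first phase is polynomial. The subroutine $\advnashpolicy$ computes $\hat{\vv}$ (another polynomial MDP solve), assembles the $\poly(\calG)$-size linear program \eqref{prog:lpadv} — whose coefficients, as noted in \Cref{theorem:main}, depend only on the adversary's induced MDP and are thus oracle-computable — extracts a feasible point (feasibility is guaranteed by \Cref{lem:lpadv-feasible}, and finding an LP-feasible point is polynomial-time), and normalizes it into $\hat{\vy}$; hence it too runs in polynomial time, completing the proof. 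The one genuinely delicate step is the constant-chasing linking \Cref{theorem:approx-Nash} to \Cref{lem:ipgmax-convergence-lemma} — bounding the unwieldy right-hand side above by a clean $C\hat\epsilon$ and verifying that $\hat\epsilon=\epsilon/C$ propagates through the quartic-in-$1/\hat\epsilon$ iteration count to reproduce exactly the $\eta$ and $T$ of \Cref{alg:ipg-max}; everything else is a direct appeal to results already in hand.
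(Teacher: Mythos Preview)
Your proposal is correct and follows exactly the same approach as the paper: set the near-stationarity tolerance $\hat\epsilon$ to be $\epsilon$ divided by the approximation factor from \Cref{theorem:approx-Nash}, plug that into \Cref{lem:ipgmax-convergence-lemma}, and chase the constants to recover the stated $\eta$ and $T$. Your write-up is in fact more detailed than the paper's own proof, which simply writes the substitution, bounds $T$ from above and $\eta$ from below in one line each, and cites linear programming for the polynomial running time of \advnashpolicy.
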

\begin{proof}
In place of $\epsilon$ of \Cref{lem:ipgmax-convergence-lemma} we set $$ \epsilon \gets \frac{\epsilon}{\frac{1}{1-\gamma}\left[ 4\ell + 
    (BSD + 1) \left( {\rewlip} + \gamma S \matlip  \frac{1}{1-\gamma} + \gamma S L + L \right)
    \right]},$$
    which allows us to compute an $\epsilon$-approximate Nash equilibrium by virtue of \Cref{them:epsilon-NE}. Then, the number of iterations reads

\begin{align}
T   &= 
    \frac{(1-\gamma)^4}{8  (1 - \gamma)^4 \epsilon^4 (\sum_{k=1}^n A_k+B)^2}
    \left[ 4\ell + 
    (BSD + 1) \left({\rewlip} + \gamma S \matlip  \frac{1}{1-\gamma} + \gamma S L + L \right)
    \right]^{4} \\
    &\leq
    \frac{ 8^3 S^8 D^4  \left( \sum_{k=1}^n A_k + B \right)^4 }{ \epsilon^4 (1-\gamma)^{12}},
\end{align}
with a learning rate
\begin{align}
    \eta 
    &=
    2 \epsilon^2 (1 - \gamma) (1 - \gamma)^2
    \left(  \frac{1}{1 - \gamma} 
  \left[ 4\ell + 
    (BSD + 1) \left( \rewlip + \gamma S \matlip \frac{1}{1-\gamma} + \gamma S L + L \right)
    \right]
        \right) ^{-2} \\
   &\geq
   \frac{\epsilon^2 (1 - \gamma )^9}{ 32 S^4 D^2  \left( \sum_{k=1}^n A_k + B \right)^3}.
\end{align}

    Further, assuming a polynomially accessible environment for the adversary, \advnashpolicy can be implemented in polynomial time via linear programming~\citep{ye2011simplex}.
\end{proof}

\section{Additional Auxiliary Claims}
\label{sec:marginalia}

For the sake of readability, this section contains some simple and standard claims we used earlier in our proofs, but are only stated here.

\begin{fact}
    \label{fact:boungrad}
    Let $f : \calX \ni \Vec{x} \mapsto \R$ be an $L$-Lipschitz continuous and differentiable function. Then,
    \begin{equation}
        \max_{\Vec{x} \in \calX} \|\nabla_{\Vec{x}} f(\Vec{x}) \| \leq L.
    \end{equation}
\end{fact}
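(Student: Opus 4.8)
The plan is to reduce the claim to the elementary pointwise relationship between the Lipschitz constant of a differentiable function and the norm of its gradient. Fix any $\vx \in \calX$ at which $f$ is differentiable, and assume $\nabla_{\vx} f(\vx) \neq \vzero$ (the bound being trivial otherwise). Set $\vu \defeq \nabla_{\vx} f(\vx)/\norm{\nabla_{\vx} f(\vx)}$, a unit vector. The key step is to evaluate the directional derivative of $f$ at $\vx$ along $\vu$: by definition of the gradient, $\langle \nabla_{\vx} f(\vx), \vu\rangle = \lim_{t \downarrow 0} \frac{f(\vx + t\vu) - f(\vx)}{t}$, whereas $L$-Lipschitz continuity gives $|f(\vx + t\vu) - f(\vx)| \le L\norm{t\vu} = Lt$ for every $t > 0$. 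Dividing by $t$ and passing to the limit yields $\langle \nabla_{\vx} f(\vx), \vu\rangle \le L$; but by the choice of $\vu$ the left-hand side is exactly $\norm{\nabla_{\vx} f(\vx)}$, so $\norm{\nabla_{\vx} f(\vx)} \le L$. Taking the supremum over $\vx \in \calX$ (a maximum, since $\calX$ is compact and $\nabla f$ is continuous in all our applications) finishes the proof.

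The only technical point worth flagging is the same one raised in the Empty Interior remark: $\calX$ is a product of simplices and hence has empty interior in the ambient Euclidean space, so $\vx + t\vu$ need not lie in $\calX$ and the limit defining $\nabla f$ must be read in the extended, dilated-domain sense used throughout the paper. This is not a genuine obstacle: one works with the $C^1$ extension of $f$ to a full-dimensional parallel body $B(\calX,\delta) \supseteq \calX$, on which $f$ remains $(L + \mathrm{fun}_{Lip}(\delta))$-Lipschitz, applies the argument above there to obtain $\norm{\nabla f(\vx)} \le L + \mathrm{fun}_{Lip}(\delta)$, and then lets $\delta \downarrow 0$ using $\mathrm{fun}_{Lip}(\delta) \to 0$. (Alternatively, since $\calX$ is convex one may restrict to feasible directions $\vu$ with $\vx + t\vu \in \calX$ for small $t > 0$ and argue via one-sided directional derivatives, but the dilation argument is already in the paper's toolkit and is cleaner here.)

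I do not expect any real obstacle: the statement is a textbook fact and the proof is two lines modulo the empty-interior bookkeeping. The only thing to be careful about is stating precisely in which sense $\nabla f$ is being bounded when $f$ is differentiable only in the generalized sense, which I would dispatch with a single sentence pointing back to the Empty Interior remark rather than re-deriving any Clarke-subdifferential machinery.
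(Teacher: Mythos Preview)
Your proof is correct. The paper does not actually give a proof of this fact: it is listed in the ``Additional Auxiliary Claims'' section with the explicit caveat that the claims there ``are only stated here,'' so there is nothing to compare against. Your argument is the standard one, and your handling of the empty-interior technicality via the dilated-domain extension is exactly in line with how the paper treats gradients on $\calX$.
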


\begin{fact}[Projection operator is nonexpansive]
    \label{fact:nonexpansive}
    Let $\calX \subseteq \R^d$ be a nonempty, convex and compact set. Further, let $\proj{\calX}: \R^d \rightarrow \calX $ be the Euclidean projection operator defined as $\proj{\calX}: \R^d \ni \vy \mapsto \frac{1}{2} \argmin_{\vx \in \calX}\| \vx - \vy \|^2 $. Then, for any $\vx, \vy \in \R^d$,
    \begin{equation}
        \| \proj{\calX}{\vx} - \proj{\calX}{\vy} \| \leq \| \vx - \vy\|.
    \end{equation}.
    \label{fact:proj-nonexpansive}
\end{fact}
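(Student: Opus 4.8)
The plan is to establish the statement via the standard first-order (variational) characterization of the Euclidean projection onto a convex set, which relies only on the convexity of $\calX$; the compactness is used merely to guarantee that the projection is well-defined. First I would record that $\proj{\calX}{\cdot}$ is single-valued: since $\calX$ is nonempty and compact, the continuous map $\vx' \mapsto \norm{\vx' - \vx}^2$ attains a minimum over $\calX$ by Weierstrass' theorem, and because this objective is strictly convex while $\calX$ is convex, the minimizer is unique. (The harmless factor $\frac{1}{2}$ appearing in the definition of $\proj{\calX}{\cdot}$ does not affect the $\argmin$ and may be ignored.)

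The crux of the argument is the obtuse-angle inequality: for any $\vx \in \R^d$, the point $\vp \defeq \proj{\calX}{\vx}$ is characterized by
\begin{equation}
    \langle \vx - \vp, \vz - \vp \rangle \leq 0, \qquad \forall \vz \in \calX.
\end{equation}
To see this, fix $\vz \in \calX$ and, using convexity of $\calX$, note that $\vp + t(\vz - \vp) \in \calX$ for every $t \in [0,1]$. Since $\vp$ minimizes $\norm{\cdot - \vx}^2$ over $\calX$, the scalar function $h(t) \defeq \norm{\vp + t(\vz - \vp) - \vx}^2$ is minimized over $[0,1]$ at $t = 0$, so $h'(0) \geq 0$. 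A direct computation gives $h'(0) = 2\langle \vp - \vx, \vz - \vp \rangle$, which yields the claimed inequality.

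Finally I would apply this characterization to the two projections. Writing $\vp \defeq \proj{\calX}{\vx}$ and $\vq \defeq \proj{\calX}{\vy}$, and instantiating the inequality at $\vz = \vq$ (for the projection of $\vx$) and at $\vz = \vp$ (for the projection of $\vy$) gives $\langle \vx - \vp, \vq - \vp \rangle \leq 0$ and $\langle \vy - \vq, \vp - \vq \rangle \leq 0$. Adding these two inequalities and rearranging (using $\vp - \vq = -(\vq - \vp)$) yields
\begin{equation}
    \langle \vx - \vy, \vp - \vq \rangle \geq \norm{\vp - \vq}^2.
\end{equation}
Combining this with the Cauchy--Schwarz bound $\langle \vx - \vy, \vp - \vq \rangle \leq \norm{\vx - \vy}\,\norm{\vp - \vq}$ and dividing through by $\norm{\vp - \vq}$ (the case $\vp = \vq$ being trivial) gives $\norm{\vp - \vq} \leq \norm{\vx - \vy}$, which is exactly the asserted nonexpansiveness.

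As for the main obstacle: there is essentially none of substance, since this is a classical fact; the only step demanding any care is the derivation of the variational inequality, where convexity of $\calX$ must be invoked to ensure that the perturbed point $\vp + t(\vz - \vp)$ remains feasible, so that first-order optimality at $t = 0$ can legitimately be applied.
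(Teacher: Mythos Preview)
Your proof is correct and is the standard textbook argument via the variational (obtuse-angle) characterization of the Euclidean projection. The paper does not actually supply a proof of this fact; it is stated there as a well-known result without justification, so there is nothing to compare against beyond noting that your argument is exactly the classical one.
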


In the rest of the claims, we are implicitly---for the sake of readability---fixing an adversarial team Markov game $(\calS, \calA, \calB, r, \pr , \gamma, \vrho)$.

\begin{claim}
    Consider any joint stationary policy $\policy{}\in \Pi$. For any $\gamma \in [0,1)$, the matrix $\mat{I} - \gamma \pr(\policy{})$ is invertible.
    \label{claim:iminusp-invertible}
\end{claim}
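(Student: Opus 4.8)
The plan is to invoke the classical Neumann series argument, exploiting that $\pr(\policy{})$ is row-stochastic. First I would recall that by definition $\pr(\policy{})$ is the transition matrix of the Markov chain on $\calS$ obtained by fixing the joint policy $\policy{}$; its entries are nonnegative and each row sums to $1$, i.e.\ $\pr(\policy{}) \vone = \vone$. Consequently the induced matrix $\infty$-norm satisfies $\norm{\pr(\policy{})}_\infty = 1$. Since $\gamma \in [0,1)$, it follows that $\norm{\gamma \pr(\policy{})}_\infty = \gamma < 1$, and hence $\norm{(\gamma \pr(\policy{}))^t}_\infty \le \gamma^t$ for every $t \ge 0$.

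The key step is then that the partial sums $\sum_{t=0}^{N} \gamma^t \pr(\policy{})^t$ form a Cauchy sequence in the (complete) space of $S \times S$ matrices equipped with $\norm{\cdot}_\infty$, because $\sum_{t \ge 0} \gamma^t < \infty$; denote the limit by $\mat{M} \defeq \sum_{t=0}^{\infty} \gamma^t \pr(\policy{})^t$. A standard telescoping computation gives $(\mat{I} - \gamma \pr(\policy{})) \sum_{t=0}^{N} \gamma^t \pr(\policy{})^t = \mat{I} - (\gamma \pr(\policy{}))^{N+1}$, and letting $N \to \infty$ (the right-hand side tends to $\mat{I}$ since $\norm{(\gamma \pr(\policy{}))^{N+1}}_\infty \le \gamma^{N+1} \to 0$) yields $(\mat{I} - \gamma \pr(\policy{})) \mat{M} = \mat{I}$; the same argument with the factor on the right gives $\mat{M}(\mat{I} - \gamma \pr(\policy{})) = \mat{I}$. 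Hence $\mat{I} - \gamma \pr(\policy{})$ is invertible with inverse $\mat{M}$.

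As an alternative (which I might mention in one line), one can argue spectrally: if $(\mat{I} - \gamma\pr(\policy{}))\vv = \vzero$ for some $\vv \neq \vzero$, then $\pr(\policy{})\vv = \gamma^{-1}\vv$, so $\gamma^{-1}$ would be an eigenvalue of $\pr(\policy{})$ of modulus $> 1$, contradicting the fact that every eigenvalue of a row-stochastic matrix lies in the closed unit disk (again because $\norm{\pr(\policy{})}_\infty = 1$). I do not anticipate any real obstacle here—the statement is elementary and self-contained; the only point requiring (trivial) care is justifying convergence of the Neumann series in a matrix norm rather than entrywise, which the submultiplicativity of $\norm{\cdot}_\infty$ handles immediately.
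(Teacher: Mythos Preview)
Your proof is correct. The paper actually states this claim without proof, treating it as a standard fact about discounted row-stochastic matrices; your Neumann series argument (and the spectral alternative) is exactly the canonical justification and would be entirely appropriate here.
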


\begin{claim} Let $\policy{} \in \Pi$ be a joint stationary policy. The value vector $\vec{V} \in \R^{S}$ can be expressed as
    \begin{equation}
        \vec{V} = \big( \mat{I} - \gamma \mat{\pr}(\policy{}) \big)^{-1} \vr(\policy{}),
    \end{equation}
    where $\vr(\policy{})$ denotes the per-state reward under policy $\policy{}$.
    \label{fact:value-from-matrix}
\end{claim}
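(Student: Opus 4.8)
The plan is to derive the stated closed form from the Bellman (fixed-point) equation for the value function, and then invert. First I would recall from \eqref{eq:value-func-def} that, fixing the joint stationary policy $\policy{}$, the per-state value vector satisfies
\begin{equation}
    V_s = \E_{\policy{}}\!\left[ \sum_{t=0}^\infty \gamma^t r(s^{(t)}, \vec a^{(t)}, b^{(t)}) \,\big|\, s^{(0)} = s \right].
\end{equation}
Since the instantaneous reward lies in $(0,1)$ and $\gamma \in [0,1)$, the summand is dominated by the summable sequence $\gamma^t$, so by dominated convergence we may interchange the expectation with the infinite sum and write $V_s = \sum_{t=0}^\infty \gamma^t \, \E_{\policy{}}[ r(s^{(t)},\vec a^{(t)}, b^{(t)}) \mid s^{(0)} = s]$. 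Because $\policy{}$ is stationary, the distribution of $s^{(t)}$ given $s^{(0)} = s$ is exactly the $t$-step transition, i.e. the $(s,\cdot)$ row of $\pr(\policy{})^t$ (this is the standard Chapman–Kolmogorov identity for the Markov chain induced by $\policy{}$), and the expected reward at a state $s'$ under $\policy{}$ is $r_{s'}(\policy{})$, the $s'$-th entry of $\vr(\policy{})$. Hence $\vec V = \sum_{t=0}^\infty \gamma^t \pr(\policy{})^t \, \vr(\policy{})$.

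Next I would identify the geometric series. Since $\pr(\policy{})$ is row-stochastic its spectral radius is $1$, so $\gamma \pr(\policy{})$ has spectral radius at most $\gamma < 1$, and the Neumann series $\sum_{t=0}^\infty \gamma^t \pr(\policy{})^t$ converges to $(\mat I - \gamma \pr(\policy{}))^{-1}$; the invertibility of $\mat I - \gamma \pr(\policy{})$ is exactly \Cref{claim:iminusp-invertible}. Substituting gives $\vec V = (\mat I - \gamma \pr(\policy{}))^{-1} \vr(\policy{})$, as claimed. (Equivalently, one can avoid the Neumann series by splitting off the $t=0$ term to get the one-step Bellman recursion $\vec V = \vr(\policy{}) + \gamma \pr(\policy{}) \vec V$, rearranging to $(\mat I - \gamma \pr(\policy{}))\vec V = \vr(\policy{})$, and applying \Cref{claim:iminusp-invertible}; I would present whichever is shorter given the surrounding lemmas.)

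There is no real obstacle here: the only points requiring a word of care are the justification for exchanging the expectation and the infinite sum (handled by boundedness of $r$ together with $\gamma<1$) and the identification of $\pr(\policy{})^t$ with the $t$-step transition matrix of the induced Markov chain — both entirely routine. The claim is essentially a restatement of Bellman's equation in matrix form, and the proof is a two-line computation once those conventions are pinned down.
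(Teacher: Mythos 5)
Your proposal is correct and follows essentially the same route as the paper: expand the value vector into the series $\sum_{t=0}^\infty \gamma^t \pr^t(\policy{})\,\vr(\policy{})$ and sum the Neumann series using the invertibility of $\mat{I} - \gamma\pr(\policy{})$ from \Cref{claim:iminusp-invertible}. Your version is in fact slightly more careful than the paper's (justifying the interchange of expectation and sum, and writing the intermediate terms correctly), but the argument is the same.
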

\begin{proof}
    For any state $s \in \calS$,
    \begin{equation}
        V_{s}(\policy{}) = \vr(\policy{}) + \gamma \pr(\policy{}) + \gamma^2 \pr^2(\policy{}) + \dots = \sum_{t=0}^\infty \gamma^t \pr^t(\policy{}) \vr(\policy{}). 
    \end{equation}
    But, given that the matrix $\mat{I} - \gamma \pr(\policy{})$ is invertible (\Cref{claim:iminusp-invertible}), we have
    \begin{equation}
        \sum_{t=0}^\infty \gamma^t \pr^t(\policy{}) = \big( \mat{I} - \gamma \pr (\policy{})\big)^{-1},
    \end{equation}
    and the claim follows.
\end{proof}

\begin{claim}
    \label{fact:visitation}
    Consider a stationary joint policy $\policy{} \in \Pi$. The discounted visitation measure $d_{\vrho}^{\policy{}}(s)$ can be expressed as
    \begin{equation}
        \left( {\vd^{\policy{}}_{\vrho}}\right)^\top = \vrho^\top \big(\mat{I} - \gamma \pr(\policy{}) \big)^{-1}.
    \end{equation}
\end{claim}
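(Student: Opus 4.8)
The plan is to unfold the definition of the discounted state visitation measure, rewrite the $t$-step transition probabilities as powers of the induced transition matrix, and then sum the resulting geometric (Neumann) series. Concretely, first I would recall that by \Cref{def:mismatch}'s preceding definition,
\[
    d^{\policy{}}_{\vrho}(s) = \sum_{t=0}^{\infty} \gamma^t \, \pr\!\big(s\step{t} = s \mid \policy{}, s\step{0} \sim \vrho\big),
    \qquad \forall s \in \calS.
\]
Next I would observe that, once the joint policy $\policy{}$ is fixed, the state process is a time-homogeneous Markov chain on $\calS$ with row-stochastic transition matrix $\pr(\policy{})$; hence, by a straightforward induction on $t$ (the Chapman--Kolmogorov identity), the row vector whose $s$-th entry is $\pr(s\step{t} = s \mid \policy{}, s\step{0} \sim \vrho)$ equals $\vrho^\top \pr(\policy{})^t$. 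Summing over $t$ and pulling the (left) multiplication by $\vrho^\top$ out of the sum gives
\[
    \big( \vd^{\policy{}}_{\vrho} \big)^\top = \sum_{t=0}^{\infty} \gamma^t\, \vrho^\top \pr(\policy{})^t = \vrho^\top \left( \sum_{t=0}^{\infty} \big(\gamma \pr(\policy{})\big)^t \right).
\]

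To finish, I would justify that the matrix series $\sum_{t=0}^{\infty} (\gamma \pr(\policy{}))^t$ converges to $\big(\mat{I} - \gamma \pr(\policy{})\big)^{-1}$. Since $\pr(\policy{})$ is row-stochastic, its spectral radius is $1$, so $\gamma \pr(\policy{})$ has spectral radius $\gamma < 1$; this is exactly the condition under which the Neumann series converges, and its limit is the inverse of $\mat{I} - \gamma \pr(\policy{})$ — whose invertibility is in any case already recorded in \Cref{claim:iminusp-invertible}. (Alternatively, one can avoid invoking the spectral radius directly: telescoping $\big(\mat{I} - \gamma \pr(\policy{})\big)\sum_{t=0}^{N} (\gamma \pr(\policy{}))^t = \mat{I} - (\gamma \pr(\policy{}))^{N+1}$ and noting that the entries of $(\gamma \pr(\policy{}))^{N+1}$ are bounded by $\gamma^{N+1} \to 0$ because $\pr(\policy{})^{N+1}$ is again row-stochastic.) Combining the two displays yields $\big( \vd^{\policy{}}_{\vrho} \big)^\top = \vrho^\top \big( \mat{I} - \gamma \pr(\policy{}) \big)^{-1}$, as claimed.

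The only mildly delicate point is the interchange of the infinite sum with the matrix multiplication and the convergence of the Neumann series; both are routine given the row-stochasticity of $\pr(\policy{})$ and $\gamma \in [0,1)$, so I do not expect any real obstacle here — this is a bookkeeping lemma rather than a substantive one.
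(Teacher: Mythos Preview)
Your proposal is correct and is exactly the standard argument one would give. In fact, the paper does not supply a proof for this claim at all---it is stated as a bookkeeping fact and left unproved---but the adjacent \Cref{fact:value-from-matrix} is proved via precisely the Neumann-series identity $\sum_{t\geq 0} \gamma^t \pr^t(\policy{}) = (\mat{I}-\gamma\pr(\policy{}))^{-1}$ that you invoke, so your write-up is fully in line with the paper's intended approach.
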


\begin{claim}
    \label{claim:visitation-and-value}
    Consider a stationary joint strategy $(\vx, \vy) \in \calX \times \calY$, and the visitation measure $\vd^{\px, \py}_{\vrho}$, under some initial distribution $\vrho \in \Delta(\calS)$. Then, the value function can be expressed as
    \begin{equation}
        V_{\vrho} = \sum_{s\in\calS} d_{\vrho}^{\px, \py}(s) r(s,\px,\py).
    \end{equation}
\end{claim}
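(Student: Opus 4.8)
The plan is to prove \Cref{claim:visitation-and-value} by directly unrolling the definition of the value function and then interchanging the (absolutely convergent) sum over time steps with the expectation over trajectories. Concretely, I would start from \eqref{eq:value-func-def}, namely $V_{\vrho}(\px,\py) = \E_{(\px,\py)}\big[\sum_{t=0}^{\infty} \gamma^t r(s^{(t)},\vec{a}^{(t)},b^{(t)}) \mid s^{(0)}\sim\vrho\big]$, and observe that since $r$ takes values in $(0,1)$ and $\gamma\in[0,1)$, every summand is nonnegative and the partial sums are uniformly bounded by $\frac{1}{1-\gamma}$. Tonelli's theorem (equivalently, dominated convergence) therefore licenses swapping $\E$ and $\sum_t$, giving $V_{\vrho}(\px,\py) = \sum_{t=0}^{\infty} \gamma^t\, \E_{(\px,\py)}\big[r(s^{(t)},\vec{a}^{(t)},b^{(t)}) \mid s^{(0)}\sim\vrho\big]$.

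The second step is to evaluate each per-time-step expectation by conditioning on the state occupied at time $t$. Because the policies $\px$ and $\py$ are stationary, conditioned on $s^{(t)}=s$ the actions $\vec{a}^{(t)}$ and $b^{(t)}$ are drawn (independently) from $\px_s$ and $\py_s$, and hence $\E\big[r(s^{(t)},\vec{a}^{(t)},b^{(t)}) \mid s^{(t)}=s\big] = \E_{\vec{a}\sim\px_s,\, b\sim\py_s}[r(s,\vec{a},b)] = r(s,\px,\py)$, using the overloaded notation for the expected reward. By the law of total expectation this yields $\E_{(\px,\py)}\big[r(s^{(t)},\vec{a}^{(t)},b^{(t)}) \mid s^{(0)}\sim\vrho\big] = \sum_{s\in\calS}\pr(s^{(t)}=s \mid \px,\py,\, s^{(0)}\sim\vrho)\, r(s,\px,\py)$.

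The last step is to substitute this back, interchange the finite sum over $\calS$ with the sum over $t$, and recognize the discounted state visitation measure: $V_{\vrho}(\px,\py) = \sum_{s\in\calS} r(s,\px,\py)\sum_{t=0}^{\infty} \gamma^t \pr(s^{(t)}=s \mid \px,\py,\, s^{(0)}\sim\vrho) = \sum_{s\in\calS} r(s,\px,\py)\, d^{\px,\py}_{\vrho}(s)$, where the final equality is exactly the definition of $d^{\px,\py}_{\vrho}$ after pushing the expectation over the initial state $\bar s\sim\vrho$ through the (nonnegative) sum over $t$. This is the stated identity. There is no genuine obstacle here; the only point that warrants a word of care is justifying the two interchanges of summation/expectation, which is immediate from the uniform boundedness of the reward and $\gamma<1$.
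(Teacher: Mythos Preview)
Your proposal is correct and follows essentially the same approach as the paper: both unroll the value function, condition on the state visited at each time step, and then identify the resulting discounted sum of occupancy probabilities as $d_{\vrho}^{\px,\py}(s)$. You are simply more explicit than the paper about justifying the interchange of summation and expectation via nonnegativity and boundedness of the rewards.
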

\begin{proof}
    By definition of $\vd^{\px, \py}_{\vrho}$, we have that for any $s \in \calS$,
    \begin{equation}
        d_{\vrho}^{\px,\py}(s) = \sum_{\bar{s}\in\calS} \rho(\bar{s}) \sum_{t=0}^{\infty} \gamma^t \pr\left(s\step{t}=s ~\big|~ \px, \py, s\step{0} = \bar{s}\right).
    \end{equation}
    Similarly, the value function can be written as
    \begin{align}
        V_{\vrho}(\px, \py) = \sum_{s \in \calS} \sum_{\bar{s} \in \calS} \rho(\bar{s}) \sum_{t=0}^{\infty} \gamma^t \pr\left(s\step{t}=s ~\big|~ \px, \py, s\step{0} = \bar{s}\right) r(s, \px, \py) = \sum_{s\in\calS} d_{\vrho}^{\px, \py}(s) r(s, \px, \py).
    \end{align}
\end{proof}

\begin{claim}
    Let $\policy{} \in \Pi$ be a joint stationary policy, $\vr(\policy{})$ be the reward vector under $\policy{}$, and $\vv, \vc \in \R^S$. If $\vr (\policy{}) + \gamma \pr(\policy{}) \vv \leq \vv  + \vc$, then it holds that
    \begin{equation}
        \vec{V}( \policy{} ) \leq \vv + \big( \mat{I} - \gamma \pr(\policy{}) \big)^{-1}\vc.
    \end{equation}
    Similarly, if $\vr (\policy{}) + \gamma \pr(\policy{}) \vv \geq \vv  + \vc$, then it holds that 
    \begin{equation}
    \vec{V}( \policy{} ) \geq \vv + \big( \mat{I} - \gamma \pr(\policy{}) \big)^{-1}\vc.    
    \end{equation}
    \label{claim:vec-inequality}
\end{claim}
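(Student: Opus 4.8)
The plan is to reduce both statements to the single elementary observation that the matrix $\big(\mat{I} - \gamma \pr(\policy{})\big)^{-1}$ is entrywise nonnegative, so that left-multiplication by it is monotone with respect to the componentwise order on $\R^S$. First I would rearrange the hypothesis $\vr(\policy{}) + \gamma \pr(\policy{}) \vv \leq \vv + \vc$ into the form $\vr(\policy{}) \leq \big(\mat{I} - \gamma \pr(\policy{})\big)\vv + \vc$. Then, applying $\big(\mat{I} - \gamma \pr(\policy{})\big)^{-1}$ to both sides and using \Cref{fact:value-from-matrix} (which gives $\vec{V}(\policy{}) = \big(\mat{I} - \gamma \pr(\policy{})\big)^{-1}\vr(\policy{})$), I obtain $\vec{V}(\policy{}) \leq \vv + \big(\mat{I} - \gamma \pr(\policy{})\big)^{-1}\vc$, which is exactly the claim. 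The reverse inequality follows by an identical argument with all inequalities flipped, since monotonicity of a nonnegative linear map works in both directions.

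The one point that needs justification — and the only ``obstacle,'' though a mild one — is the entrywise nonnegativity of $\big(\mat{I} - \gamma \pr(\policy{})\big)^{-1}$. Here I would invoke the Neumann series: since $\pr(\policy{})$ is row-stochastic we have $\|\gamma \pr(\policy{})\|_\infty = \gamma < 1$, so $\big(\mat{I} - \gamma \pr(\policy{})\big)^{-1} = \sum_{t=0}^{\infty} \gamma^t \pr^t(\policy{})$ converges (this is already recorded in the proof of \Cref{fact:value-from-matrix}, together with invertibility via \Cref{claim:iminusp-invertible}). Each term $\gamma^t \pr^t(\policy{})$ is a nonnegative matrix, being a product of nonnegative matrices scaled by $\gamma^t \geq 0$; hence the limit is nonnegative as well. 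Consequently, for any $\vec{a}, \vec{b} \in \R^S$ with $\vec{a} \leq \vec{b}$ (componentwise) we have $\big(\mat{I} - \gamma \pr(\policy{})\big)^{-1}\vec{a} \leq \big(\mat{I} - \gamma \pr(\policy{})\big)^{-1}\vec{b}$, which is all the monotonicity we need.

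Putting the pieces together is then immediate: from $\vr(\policy{}) \leq \big(\mat{I} - \gamma \pr(\policy{})\big)\vv + \vc$ and monotonicity,
\[
\vec{V}(\policy{}) = \big(\mat{I} - \gamma \pr(\policy{})\big)^{-1}\vr(\policy{}) \leq \big(\mat{I} - \gamma \pr(\policy{})\big)^{-1}\Big(\big(\mat{I} - \gamma \pr(\policy{})\big)\vv + \vc\Big) = \vv + \big(\mat{I} - \gamma \pr(\policy{})\big)^{-1}\vc,
\]
and symmetrically for $\geq$. I expect the whole argument to be a few lines; there is no genuinely hard step, only the bookkeeping of checking that the resolvent is a nonnegative operator.
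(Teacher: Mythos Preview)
Your proposal is correct and takes essentially the same approach as the paper: both arguments rest on the Neumann series $(\mat{I}-\gamma\pr(\policy{}))^{-1}=\sum_{t\ge 0}\gamma^t\pr^t(\policy{})$ being entrywise nonnegative. The paper phrases this as ``applying the inequality recursively'' (i.e., multiplying by $\gamma\pr(\policy{})$ and summing the resulting telescoping chain), whereas you rearrange first and then invoke monotonicity of the nonnegative resolvent directly; these are the same idea in slightly different packaging.
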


\begin{proof}
    Suppose that $\vr (\policy{}) + \gamma \pr(\policy{}) \vv \leq \vv  + \vc$. Applying recursively this inequality, it follows that
    \begin{equation}
        \sum_{t=0}^\infty \gamma^t \pr^{t}(\policy{}) \vr(\policy{}) -          \sum_{t=0}^\infty \gamma^t \pr^{t}(\policy{}) \vc \leq \vv.  
    \end{equation}
    Combining this bound with \Cref{claim:iminusp-invertible,fact:value-from-matrix} implies that
    \begin{equation}
        \vec{V}(\policy{}) - \Big( \mat{I} - \gamma \pr(\policy{}) \Big)^{-1} \vc \leq \vv.
    \end{equation}
    The case where $\vr (\policy{}) + \gamma \pr(\policy{}) \vv \geq \vv  + \vc$ admits an analogous proof.
\end{proof}

\begin{claim}
    \label{claim:stratequiv}
    Consider an adversarial team Markov game $\calG$. Altering all the rewards by adding an additive constant $c\in\R$ yields a \emph{strategically-equivalent} game $\calG'$: any $\epsilon$-approximate Nash equilibrium in $\calG'$ is also an $\epsilon$-approximate Nash equilibrium in $\calG$, and \emph{vice versa}.
\end{claim}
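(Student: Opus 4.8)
The plan is to show that replacing every reward by itself plus $c$ shifts the value function by a single \emph{policy-independent} constant, so that every inequality in \Cref{def:Nash} is preserved verbatim. Concretely, $\calG'$ has adversary reward $r' \defeq r + c$ (with the identically-interested, globally-zero-sum rewards of the team re-derived from $r'$), while the transition kernel $\pr$, the discount $\gamma$ and the initial distribution $\vrho$ are left untouched. Since the $\varepsilon$-approximate Nash equilibrium conditions in \Cref{def:Nash} are phrased purely in terms of the value function $V_{\vrho}$, which depends on the game data only through $r$ (cf.\ \eqref{eq:value-func-def}), it is enough to track how $V_{\vrho}$ transforms.

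The key step is to establish $V'_{\vrho}(\policy{\team},\policy{\adv}) = V_{\vrho}(\policy{\team},\policy{\adv}) + \tfrac{c}{1-\gamma}$ for every $(\policy{\team},\policy{\adv}) \in \Pi$. Because $\pr$, $\gamma$ and $\vrho$ are the same in both games, a fixed joint policy induces identical trajectory distributions over $(s^{(t)},\vec a^{(t)},b^{(t)})_{t\ge 0}$ in $\calG$ and $\calG'$; hence, for each initial state $s$,
\[
V'_{s}(\policy{\team},\policy{\adv}) = \E_{(\policy{\team},\policy{\adv})}\!\Big[\,\textstyle\sum_{t=0}^{\infty}\gamma^t\big(r(s^{(t)},\vec a^{(t)},b^{(t)}) + c\big)\ \Big|\ s^{(0)}=s\,\Big] = V_{s}(\policy{\team},\policy{\adv}) + c\sum_{t=0}^{\infty}\gamma^t .
\]
Since $\sum_{t\ge 0}\gamma^t = \tfrac{1}{1-\gamma}$ is a deterministic constant, its expectation equals itself, so $V'_{s} = V_{s} + \tfrac{c}{1-\gamma}$; averaging over $s\sim\vrho$ gives the claimed identity, pointwise on $\Pi$.

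It then remains to substitute this identity into both conditions of \Cref{def:Nash}. Each of them compares $V_{\vrho}$ at one joint policy with $V_{\vrho}$ at another, up to an additive slack $\varepsilon$; adding the \emph{same} constant $\tfrac{c}{1-\gamma}$ to both sides changes neither the truth value of the inequality nor the slack. Consequently a profile $(\policy{\team}^\star,\policy{\adv}^\star)$ satisfies all the team-deviation conditions and the adversary-deviation condition in $\calG'$ if and only if it satisfies them in $\calG$, so the $\varepsilon$-approximate Nash equilibria of $\calG$ and $\calG'$ are exactly the same; the ``vice versa'' direction is then automatic by applying the same reasoning to $\calG'$ with the constant $-c$. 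I do not anticipate a genuine obstacle; the only place deserving a line of care is the interchange of the infinite sum and the expectation in the display above, which is licensed by the uniform boundedness of the per-step rewards together with $\gamma\in[0,1)$ (absolute convergence, Fubini--Tonelli).
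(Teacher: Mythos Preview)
Your proof is correct and follows essentially the same approach as the paper: both establish the identity $V'_{\vrho}=V_{\vrho}+\tfrac{c}{1-\gamma}$ for every joint policy and then appeal directly to \Cref{def:Nash}. The only cosmetic difference is that the paper derives the shift via the matrix-resolvent expression $V_{\vrho}=\vrho^\top(\mat{I}-\gamma\pr)^{-1}\vr$ (using $(\mat{I}-\gamma\pr)^{-1}\vone=\tfrac{1}{1-\gamma}\vone$), whereas you use the trajectory-series definition and $\sum_{t\ge 0}\gamma^t=\tfrac{1}{1-\gamma}$.
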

\begin{proof}
    By assumption, $r'(s,\va, b) = r(s,\va, b) + c $ for any $(s, \vec{a}, b) \in \calS \times \calA \times \calB$. Let $V'_{\vrho}$ be the value function in $\calG'$. Then, for all $(\px, \py) \in \calX \times \calY$,
    \begin{align}
        V'_{\vrho} (\px, \py )  &= \vrho^\top \left( \mat{I} - \gamma \pr(\px, \py) \right)^{-1} \vr'(\px, \py) \\
        &= \vrho^\top \left( \mat{I} - \gamma \pr(\px, \py) \right)^{-1} ( \vr(\px, \py) + c \cdot \vone ) \\
        &= V_{\vrho}(\px, \py)  + \frac{c}{1-\gamma}.
    \end{align}
    Thus, our claim follows immediately from the definition of Nash equilibria (\Cref{def:Nash}).
\end{proof}

\begin{claim}
    \label{claim:rho-ul}
    Let $\policy{} \in \Pi$ be a joint stationary policy, and $\vd^{\policy{}}_{\vrho}$ be the induced visitation measure. Then, for every $s \in \calS$,
         $$ \rho(s) \leq d^{\policy{}}_{\vrho}(s) \leq \frac{1}{1 - \gamma}.$$
\end{claim}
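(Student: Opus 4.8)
The plan is to unwind the definition of the discounted state visitation measure and bound the infinite series term by term, separating the $t=0$ contribution for the lower bound. Recall that for a fixed starting state $\bar{s}$ we have $d^{\policy{}}_{\bar{s}}(s) = \sum_{t=0}^{\infty}\gamma^t \pr(s\step{t} = s \mid \policy{}, s\step{0} = \bar{s})$, and $d^{\policy{}}_{\vrho}(s) = \E_{\bar{s}\sim\vrho}\big[d^{\policy{}}_{\bar{s}}(s)\big]$.

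For the upper bound, I would simply observe that each probability $\pr(s\step{t} = s\mid\policy{}, s\step{0}=\bar{s})$ lies in $[0,1]$, so for every $\bar{s}\in\calS$,
\[
d^{\policy{}}_{\bar{s}}(s) \;=\; \sum_{t=0}^{\infty}\gamma^t \pr(s\step{t} = s \mid \policy{}, s\step{0} = \bar{s}) \;\leq\; \sum_{t=0}^{\infty}\gamma^t \;=\; \frac{1}{1-\gamma},
\]
using $\gamma\in[0,1)$ for convergence of the geometric series. Taking the expectation over $\bar{s}\sim\vrho$ preserves the inequality, giving $d^{\policy{}}_{\vrho}(s)\leq\frac{1}{1-\gamma}$.

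For the lower bound, I would isolate the $t=0$ summand: since $\pr(s\step{0} = s \mid \policy{}, s\step{0} = \bar{s}) = \mathbbm{1}[\bar{s} = s]$ and all remaining summands ($t\geq 1$) are nonnegative, we get $d^{\policy{}}_{\bar{s}}(s) \geq \mathbbm{1}[\bar{s} = s]$ for every $\bar{s}\in\calS$. Taking expectation over $\bar{s}\sim\vrho$ yields $d^{\policy{}}_{\vrho}(s) \geq \E_{\bar{s}\sim\vrho}\big[\mathbbm{1}[\bar{s}=s]\big] = \rho(s)$, which completes the proof. There is no real obstacle here — the only mild points to be careful about are the convergence of the series (handled by $\gamma<1$) and the fact that monotonicity of expectation applies because $\vrho$ is a genuine probability distribution over $\calS$.
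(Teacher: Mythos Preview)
Your proof is correct and essentially identical to the paper's: both bound the upper tail by the geometric series $\sum_{t\geq 0}\gamma^t$ and obtain the lower bound by isolating the $t=0$ contribution, which picks out exactly $\rho(s)$. The only cosmetic difference is that the paper first restricts the outer sum to $\bar{s}=s$ and then keeps the full inner series (which is $\geq 1$), whereas you keep all $\bar{s}$ and bound each $d^{\policy{}}_{\bar{s}}(s)$ by $\mathbbm{1}[\bar{s}=s]$ before averaging; these are the same argument in a different order.
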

\begin{proof}
This is an immediate consequence of the definition of $d^{\policy{}}_{\vrho}$; in particular,
\begin{equation}
    d^{\policy{}}_{\vrho}(s) =\sum_{\bar{s} \in \calS }\rho(\bar{s}) \sum_{t=0}^{\infty} \gamma^t { \pr} (s\step{t} = s | \policy{}, s\step{0} = \bar{s}) \leq \sum_{\bar{s} \in \calS} \rho(\bar{s}) \sum_{t=0}^\infty \gamma^t = \frac{1}{1 - \gamma},
\end{equation}
and 
\begin{equation}
    d^{\policy{}}_{\vrho}(s) =\sum_{\bar{s} \in \calS }\rho(\bar{s}) \sum_{t=0}^{\infty} \gamma^t { \pr} (s\step{t} = s | \policy{}, s\step{0} = \bar{s}) \geq \rho(s) \sum_{t=0}^{\infty} \gamma^t { \pr} (s\step{t} = s | \policy{}, s\step{0} = s) \geq \rho(s).
\end{equation}
\end{proof}

\begin{claim}
    \label{claim:V-bounds}
    Suppose that the reward function takes values in $[m_r, M_r]$, for some $m_r, M_r > 0$. Then, for any stationary joint policy $\policy{} \in \Pi$ and every state $s \in \calS$,
    \begin{equation}
        \frac{\minrew}{1-\gamma}\leq V_s(\policy{}) \leq \frac{\maxrew}{1-\gamma}.
    \end{equation}
\end{claim}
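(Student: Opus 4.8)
The plan is to expand the definition of the value function and bound the discounted reward stream termwise along each trajectory. Recall from \eqref{eq:value-func-def} that for a stationary joint policy $\policy{} \in \Pi$ and initial state $s \in \calS$,
\begin{equation}
    V_s(\policy{}) = \E_{\policy{}}\left[ \sum_{t=0}^{\infty} \gamma^t r(s\step{t}, \vec{a}\step{t}, b\step{t}) \;\middle|\; s\step{0} = s \right],
\end{equation}
where the expectation is over the trajectory distribution induced by $\policy{}$. Since $r$ takes values in $[\minrew, \maxrew]$ by assumption, every realization $r(s\step{t}, \vec{a}\step{t}, b\step{t})$ lies in $[\minrew, \maxrew]$ along every trajectory.

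First I would observe that, along any fixed trajectory, nonnegativity of $\gamma^t$ together with monotonicity of summation gives $\sum_{t=0}^{\infty} \gamma^t \minrew \leq \sum_{t=0}^{\infty} \gamma^t r(s\step{t}, \vec{a}\step{t}, b\step{t}) \leq \sum_{t=0}^{\infty} \gamma^t \maxrew$. Since $\gamma \in [0,1)$, the geometric series evaluates to $\sum_{t=0}^{\infty} \gamma^t = \frac{1}{1-\gamma}$, so the two envelopes are the constants $\frac{\minrew}{1-\gamma}$ and $\frac{\maxrew}{1-\gamma}$. Taking expectations over the trajectory distribution — which preserves these inequalities, as they hold pointwise — yields $\frac{\minrew}{1-\gamma} \leq V_s(\policy{}) \leq \frac{\maxrew}{1-\gamma}$, completing the proof.

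There is essentially no obstacle in this argument; the only mild formality worth spelling out is the interchange of expectation with the infinite sum, which is justified by dominated convergence (equivalently Tonelli's theorem), since the summands are uniformly dominated by $\maxrew \gamma^t$ and $\sum_{t \geq 0} \maxrew \gamma^t < \infty$ for $\gamma < 1$. One could alternatively derive the same bounds from the fixed-point characterization $\vec{V}(\policy{}) = (\mat{I} - \gamma \pr(\policy{}))^{-1} \vr(\policy{})$ of \Cref{fact:value-from-matrix}, noting that $(\mat{I} - \gamma \pr(\policy{}))^{-1} = \sum_{t=0}^{\infty} \gamma^t \pr^t(\policy{})$ is a nonnegative matrix each of whose rows sums to $\frac{1}{1-\gamma}$; but the direct trajectory-wise bound is the cleanest route.
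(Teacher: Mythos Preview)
Your proposal is correct and follows essentially the same approach as the paper: bound the reward termwise along trajectories and sum the geometric series. The paper's proof is a terse two-line version of exactly this argument, omitting the dominated-convergence remark and the alternative matrix route you mention.
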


\begin{proof}
    By the definition of the value function in~\eqref{eq:value-func-def}, we have
    \begin{equation}
        V_s(\policy{}) \leq \maxrew + \gamma \maxrew + \gamma^2 \maxrew + \cdots = \frac{1}{1-\gamma} \maxrew,
    \end{equation}
    for any $s \in \calS$. Similarly, we conclude that
    \begin{equation}
       V_s(\policy{}) \geq \frac{1}{1-\gamma} \minrew.
    \end{equation}
\end{proof}

\begin{claim}
    \label{claim:various-bounds}
    Let an adversarial team Markov game $\calG$, two team policies $\tilde{\px}, \tilde{\px}$ and quantities $R_b(\cdot, \cdot), P_b(\cdot| s, \cdot), v(s)$ quantities defined in \pref{prog:xinlp}.
    The following inequalities hold:
    \begin{enumerate}
        \item $\left| r\left(s, \tilde{\px}, b \right) - r\left(s, \hat{\px}, b \right) \right| \leq \sqrt{\sum_{k=1}^n A_k} \| \tilde{\px} - \hat{\px} \|$, for any $(s,b) \in \calS \times \calB$; \label{item:reward}
    \item \label{item:prob} $ \left|  \sum_{s' \in \calS} \Big( \pr\left(s'|s,  \tilde{\px}, b \right) - \pr \left(s'|s, \hat{\px},b \right) \Big)
             \tilde{v}(s')  \right| \leq  \frac{S}{1-\gamma} \sqrt{\sum_{k=1}^n A_k} \|\tilde{\vx} - \hat{\vx}\|$, for any $(s,b) \in \calS \times \calB$;
    \item \label{item:val} $\left| \tilde{v}(s) - \hat{v}(s) \right| \leq L \| \tilde{\px} - \hat{\px}\|$, for any $s \in \calS$; and
    \item \label{item:last} $\left|  \sum_{s' \in \calS}  \pr \left(s'|s, \hat{\px}, b \right) \big( \tilde{v}(s') - \hat{v}(s') \big)  \right| \leq S  L \| \tilde{\px} - \hat{\px}\|$, for any $(s,b) \in \calS \times \calB$.
    \end{enumerate}
\end{claim}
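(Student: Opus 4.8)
The plan is to reduce all four inequalities to two ingredients and then assemble them. The first ingredient is a uniform gradient bound for the multilinear functions appearing in \eqref{prog:xinlp}: for fixed $(s,b)$ the map $\px\mapsto r(s,\px,b)=\E_{\va\sim\px_s}[r(s,\va,b)]$ depends only on the block $\px_s=(\px_{1,s},\dots,\px_{n,s})$, so its gradient in $\px$ is supported on the $\sum_{k=1}^n A_k$ coordinates $(k,s,a)$ with $k\in[n]$, $a\in\calA_k$, the $(k,s,a)$-entry being $r\bigl(s,(\ve_{k,s,a};\px_{-k}),b\bigr)$ — precisely the partial derivative already computed in the KKT section. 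Since rewards lie in $(0,1)$, each such entry is in $(0,1)$, whence $\|\nabla_{\px}r(s,\px,b)\|_2\le\sqrt{\sum_{k=1}^n A_k}$ uniformly on the convex set $\calX$; the identical computation gives $\|\nabla_{\px}\pr(s'|s,\px,b)\|_2\le\sqrt{\sum_{k=1}^n A_k}$ for each fixed $s'$, since $\pr(s'|s,\cdot,b)\in[0,1]$. The second ingredient is the identification of the value vectors: the proof of \Cref{prop:phi-and-nlp} shows that, for a fixed team strategy, the unique optimum of the adversary's linear program is the fixed point of the ($\gamma$-contraction) Bellman optimality operator of the induced single-agent MDP, hence independent of the full-support $\vrho$; consequently $\tilde{v}(s)=\max_{\py\in\calY}V_s(\tilde{\px},\py)$ and $\hat{v}(s)=\max_{\py\in\calY}V_s(\hat{\px},\py)$ at every state $s$.

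With the first ingredient, inequality (1) follows by integrating $\nabla_{\px}r(s,\cdot,b)$ along the segment from $\hat{\px}$ to $\tilde{\px}$ (which stays in $\calX$). Inequality (2) follows by applying the same segment estimate to each $\pr(s'|s,\cdot,b)$, then using $0<\tilde{v}(s')<\tfrac{1}{1-\gamma}$ (\Cref{claim:V-bounds}, with rewards in $(0,1)$) together with the triangle inequality summed over the $S$ states — this is exactly where the extra factor $\tfrac{S}{1-\gamma}$ appears.

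For inequality (3), I would use that $V_s=V_{\vrho}$ with $\vrho$ the point mass at $s$, so by \Cref{lem:smoothness} — whose Lipschitz constant $L$ does not depend on the initial distribution — $|V_s(\px,\py)-V_s(\px',\py)|\le L\|\px-\px'\|$; taking $\py^\star$ attaining $\max_{\py}V_s(\tilde{\px},\py)$ gives $\tilde{v}(s)-\hat{v}(s)\le V_s(\tilde{\px},\py^\star)-V_s(\hat{\px},\py^\star)\le L\|\tilde{\px}-\hat{\px}\|$, and the reverse inequality follows symmetrically. Inequality (4) is then immediate from (3): since $\pr(s'|s,\hat{\px},b)\in[0,1]$, we have $\bigl|\sum_{s'}\pr(s'|s,\hat{\px},b)(\tilde{v}(s')-\hat{v}(s'))\bigr|\le\sum_{s'}|\tilde{v}(s')-\hat{v}(s')|\le S\max_{s'}|\tilde{v}(s')-\hat{v}(s')|\le SL\|\tilde{\px}-\hat{\px}\|$.

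The routine parts (the two multilinear gradient bounds and the subsequent arithmetic) carry no real difficulty; the only point requiring care is the second ingredient — that the linear-program optimum in \Cref{prop:phi-and-nlp} is the \emph{per-state} best-response value function rather than merely its $\vrho$-average. Without this, inequality (3) (and hence (4)) would not go through, because the Lipschitzness of $\phi$ on its own controls only the single combination $\vrho^\top(\tilde{\vv}-\hat{\vv})$. Beyond that, no further structure of the adversarial team Markov game is needed.
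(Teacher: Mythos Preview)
Your proof is correct and covers all four items. The argument for items (3) and (4) is essentially identical to the paper's, including the key observation you flag as the ``second ingredient'': since the optimal value vector in \Cref{prop:phi-and-nlp} is the Bellman fixed point, it does not depend on the full-support $\vrho$, so $\hat v(s)=\max_{\py}V_s(\hat\px,\py)$ per state and the $L$-Lipschitz bound from \Cref{lem:smoothness} (which holds for any initial distribution, in particular a point mass) transfers to each coordinate.

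For items (1) and (2) you take a genuinely different route. The paper expands $r(s,\tilde\px,b)-r(s,\hat\px,b)$ as a sum over joint actions and bounds $\sum_{\va}\bigl|\prod_k\tilde x_{k,s,a_k}-\prod_k\hat x_{k,s,a_k}\bigr|$ via the classical inequality that the total variation of a product distribution is at most the sum of the marginal total variations, followed by $\|\cdot\|_1\le\sqrt{\sum_k A_k}\,\|\cdot\|_2$. You instead bound the gradient: since $\partial_{x_{k,s,a}}r(s,\px,b)=r\bigl(s,(\ve_{k,s,a};\px_{-k}),b\bigr)\in(0,1)$ and only $\sum_k A_k$ coordinates are nonzero, $\|\nabla_{\px}r(s,\px,b)\|_2\le\sqrt{\sum_k A_k}$, and the mean-value estimate along the segment in $\calX$ gives the result directly. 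Your argument is arguably more elementary---it avoids invoking the product-TV lemma and reuses exactly the partial derivatives already computed for the Lagrangian---while the paper's approach makes the probabilistic structure (product of independent per-player marginals) more visible. Both yield the same constant, and item (2) follows from either by the same arithmetic you describe.
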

\begin{proof} We briefly note how the bounds are derived:

    \begin{itemize}
        \item We first establish \Cref{item:reward}. Fix any pair $(s,b) \in \calS \times \calB$. By definition, we have
        \begin{equation}
            r(s, \tilde{\vx}, b) = \E_{\vec{a} \sim \tilde{\vx}}[r(s, \vec{a}, b)] = \sum_{(a_1, \dots, a_n) \in \calA} r(s, \vec{a}, b) \prod_{k = 1}^n \tilde{x}_{k, s, a_k}.
        \end{equation}
        As a result, 
        \begin{align}
            |r(s, \tilde{\vx}, b) - r(s,\hat{\vx}, b)| &= \left| \sum_{(a_1, \dots, a_n) \in \calA} r(s, \vec{a}, b) \prod_{k = 1}^n \tilde{x}_{k, s, a_k} - \sum_{(a_1, \dots, a_n) \in \calA} r(s, \vec{a}, b) \prod_{k = 1}^n \hat{x}_{k, s, a_k} \right| \\
            &= \left| \sum_{(a_1, \dots, a_n) \in \calA} r(s, \vec{a}, b) \left( \prod_{k=1}^n \tilde{x}_{k, s, a_k} - \prod_{k=1}^n \hat{x}_{k, s, a_k} \right) \right|\\
            &\leq \sum_{(a_1, \dots, a_n) \in \calA} \left| \prod_{k=1}^n \tilde{x}_{k, s, a_k} - \prod_{k=1}^n \hat{x}_{k, s, a_k} \right| \label{eq:triangle} \\
            &\leq \sum_{k=1}^n \| \tilde{\vx}_{k,s} - \hat{\vx}_{k, s} \|_1 = \| \tilde{\vx}_s - \hat{\vx}_s \|_1 \leq \left( \sqrt{\sum_{k=1}^n A_k} \right) \| \tilde{\vx}_s - \hat{\vx}_s\|_2, \label{eq:sum-product}
        \end{align}
        where \eqref{eq:triangle} follows from the triangle inequality and the fact that $|r(s, \vec{a}, b)| \leq 1$, and \eqref{eq:sum-product} follows from the fact that the total variation distance between two product distributions is bounded by the sum of the total variations of each marginal distribution~\citep{Hoeffding58:Distinguishability}, as well as the fact that $\|\vx \|_1 \leq \sqrt{d} \|\vx\|_2$ for a vector $\vx \in \R^d$.
        
        \item \Cref{item:prob} follows analogously to \Cref{item:reward}, using the fact that $\tilde{v}(s') \leq \frac{1}{1 - \gamma}$ (by \Cref{claim:V-bounds} and \Cref{prop:phi-and-nlp}).
        \item For \Cref{item:val}, we begin by noting that $\hat{\vv}$ and $\tilde{\vv}$ are the unique optimal vectors of \pref{prog:xinlp} for $\hat{\px}$ and $\tilde{\px}$ respectively (recall \Cref{prop:phi-and-nlp}). Further, by \Cref{prop:phi-and-nlp}, we know that $\vrho^\top \hat{\vv} =\max_{\py \in \calY} V_{\vrho}(\hat{\px}, \py) = \phi( \hat{\px})$ and 
        $\vrho^\top \tilde{\vv} =\max_{\py \in \calY} V_{\vrho}(\tilde{\px}, \py) = \phi( \tilde{\px})$,
        for any $\vrho\in\Delta(\calS)$ of full support. As a result, \Cref{item:val} is a consequence of the fact that $\phi(\cdot)$ is $L$-Lipschitz continuous, which in turn follows since $V_{\vrho}$ is $L$-Lipschitz continuous (see \Cref{lem:smoothness} and \Cref{lem:max-weakly-convex}).
        
        \item Finally, \Cref{item:last} follows from \Cref{item:val} and the fact that 
        \begin{align}
            \left| \sum_{s' \in \calS} \pr(s' | s, \hat{\vx}, b) \right| &= \left| \sum_{s' \in \calS} \sum_{(a_1, \dots, a_n) \in \calA} \pr(s' | s, \vec{a}, b) \prod_{k=1}^n \hat{\vx}_{k, s, a_k} \right| \\
            &\leq \sum_{s' \in \calS} \sum_{(a_1, \dots, a_n) \in \calA} \prod_{k=1}^n \hat{\vx}_{k, s, a_k} = S,
        \end{align}
        for any fixed $(s,b) \in \calS \times \calB$, where the last bound follows from the triangle inequality and the normalization constraint of the product distribution: $\sum_{(a_1, \dots, a_n) \in \calA} \prod_{k=1}^n \hat{\vx}_{k, s, a_k} = 1$.
    \end{itemize}
\end{proof}



\end{document}